\theoremstyle{thmstyleone}
\newtheorem{theorem}{Theorem} 
\newtheorem{lemma}{Lemma} 
\newtheorem{claim}{Claim} 
\newtheorem{corollary}{Corollary}
\newtheorem{definition}{Definition} 
\newtheorem{observation}{Observation} 
\newtheorem{insight}{Insight}
\newcommand{\RR}{\mathbb{R}}
\newcommand{\eps}{\varepsilon}
\newcommand{\opt}{{\text{opt}}}
\newcommand{\OPT}{{\text{OPT}}}
\newcommand{\norm}[1]{|| #1 ||}
\newcommand{\abs}[1]{\left.|#1|\right.}
\newcommand{\tw}{\textwidth}
\newcommand{\boundellipse}[3]
{(#1) ellipse (#2 and #3)
}
\newcommand{\calI}{{\mathcal I}}
\newcommand{\calO}{{\mathcal O}}
\newcommand{\E}{\mathbb{E}}
\begin{document}

\author{\fnm{Benyamin} \sur{Ghaseminia}}\email{ghasemin@ualberta.ca}

\author[1]{\fnm{Mohammad R.} \sur{Salavatipour}}\email{mrs@ualberta.ca}

\affil[]{\orgdiv{Department of Computing Science}, \orgname{University of Alberta}, \city{Edmonton}, \country{Canada}}
\affil[1]{Funded by NSERC}



\title[A PTAS for TSP with Neighbourhoods Over Parallel Line Segments]{A PTAS for Travelling Salesman Problem with Neighbourhoods Over
Parallel Line Segments of Similar Length\footnote{An extended abstract of this paper is to appear in proceedings of SoCG 2025. Most of this work appeared in the M.Sc. thesis of the first author \cite{Ben}.}}
\keywords{Approximation Scheme, TSP Neighbourhood, Parallel Line Segments}
\abstract{
    We consider the Travelling Salesman Problem with Neighbourhoods (TSPN) on the Euclidean plane ($\RR^2$) and present a Polynomial-Time Approximation Scheme (PTAS) when the neighbourhoods are parallel line segments with lengths between $ [1, \lambda] $ for any constant value $ \lambda \ge 1 $.
    
    In TSPN (which generalizes classic TSP), each client represents a set (or neighbourhood) of points in a metric 
    and the goal is to find a minimum cost TSP tour that visits
    at least one point from each client set. In the Euclidean setting, each neighbourhood is a region on the plane.
    TSPN is significantly more difficult than classic TSP even in the Euclidean setting, as it captures group TSP.
    
    A notable case of TSPN is when each neighbourhood is a line segment. Although there are PTASs for when
    neighbourhoods are fat objects (with limited overlap), TSPN over line segments is \textbf{APX}-hard even if all
    the line segments have unit length. For parallel (unit) line segments, the best approximation factor is $3\sqrt2$ from more than two decades ago \cite{DM03}.
    
    The PTAS we present in this paper settles the approximability of this case of the problem. Our algorithm finds a $ (1 + \eps) $-factor approximation for an instance of the problem for $n$ segments with lengths in $ [1,\lambda] $ in time $ n^{O(\lambda/\eps^3)} $.   
}

\maketitle

\newpage
\tableofcontents
\newpage

\section{Introduction}
The Travelling Salesman Problem (TSP) is one of the most fundamental and well-studied problems in combinatorial optimization due to its wide range of applications. In TSP, one is given a set of points in a metric space and the goal is to find a (closed) tour (or walk) of minimum length visiting all the points. 
For several decades, the classic algorithm by Christofides \cite{Christofides} and independently by Serdyukov \cite{Serdyukov78} which implies a $\frac{3}{2}$-approximation was the best-known approximation for TSP until a recent result in \cite{KKG22} which shows a slight improvement.
	Several generalizations (or special cases) of TSP have been studied as well, the most notable is when the points are given in fixed dimension Euclidean space. Arora and Mitchell \cite{Arora98,mitchell1999guillotine} presented different PTASs for (fixed dimension) Euclidean TSP. There have been many papers that have extended these results. Arkin and Hassin \cite{ArkinH94} introduced the notion of TSP with neighbourhoods (TSPN). 

	An instance of TSPN
	is a set of neighbourhoods (or regions) given in a metric space, and the goal is to find a minimum length (or cost) tour that visits all these regions. Each region can be a single point or could be defined by a subset of points. They gave several $O(1)$-approximations for the geometric settings
	where each region is some well-defined shape on the plane, e.g. disks, and parallel unit length segments. Several papers have studied TSPN for various classes of neighbourhoods and under different metrics.
	
	TSPN is much more difficult than TSP in general and in special cases, just as group Steiner tree is much more difficult than Steiner tree (one can consider each neighbourhood as a group/set from which at least one point needs to be visited). In group Steiner tree or group TSP, one is given a metric along with groups of terminals (each group is a finite set). The goal is to find a minimum cost Steiner tree (or a tour) that contains (or visits) at least one terminal from each group. TSPN generalizes group TSP by allowing infinite size groups. Using the hardness result for group Steiner tree \cite{HK03}, it follows that general TSPN is hard to approximate within a factor better than $\Omega(\log^{2-\eps} n)$ for any $\eps>0$ even on tree metrics. The algorithms for group Steiner tree on trees in \cite{Garg98}, and embedding of metrics onto tree metrics in \cite{FRT04}, imply an $O(\log^3 n)$-approximation for TSPN in general metrics.  
	Unlike Euclidean TSP (which has a PTAS), TSPN is \textbf{APX}-hard on the Euclidean plane (i.e. $\RR^2$) \cite{Berg05}. 
	The special case when each region is an arbitrary finite set of points in the Euclidean plane (group TSP) has no constant
	approximation \cite{SafraS03} and the problem remains \textbf{APX}-hard even when each region consists of exactly two points \cite{DO08}.
	
	Focusing on Euclidean metrics, most of the earlier work have studied the cases where the regions (or objects) are {\em fat}. Roughly speaking, it usually means the ratio of the smallest enclosing circle to the largest circle fitting inside the object is bounded. There are some work on when regions are {\em not} fat, most notably when the regions are (infinite) lines or line segments or in higher dimensions when they  are hyperplanes. For the case of infinite line segments on $\RR^2$, the problem for $ n $ lines can be solved exactly in $O(n^4\log n)$ time by a reduction
	to the Shortest Watchman Route Problem (see \cite{DELM03,Jonsson02}). For the same setting,
	Dumitrescu and Mitchell \cite{DM03} presented a linear time $\frac{\pi}{2}$-approximation, which was improved to $\sqrt{2}$ by Jonsson \cite{Jonsson02} (again in linear time).
	For infinite lines in higher dimensions (i.e. $\geq 3$), the problem is proved to be \textbf{APX}-hard (see \cite{AKL022} and references there). For neighbourhoods being hyperplanes and dimension being $d\geq 3$, Dumitrescu and T\'{o}th \cite{DM16} present a constant factor approximation (which grows exponentially with $d$). For arbitrary $d$, they  present an $O(\log^3 n)$-approximation. For any fixed $d\geq 3$, authors of \cite{AF20} present a PTAS.
	
	For parallel (unit) line segments on the plane ($\RR^2$) Arkin and Hassin \cite{ArkinH94} presented a $(3\sqrt{2}+1)$-approximation which was improved to $3\sqrt{2}$ by \cite{DM03}, and this remains the best-known approximation for this case as far as we know for over two decades. Authors of \cite{EFS09} proved that TSPN for unit line segments (in arbitrary orientation) is \textbf{APX}-hard.
    In this paper, we settle the approximability of TSPN when regions are parallel line segments of similar length (unit length is a special case) and present a PTAS for it. 
	We first focus on the case of unit line segments and show how our result extends to when line segments have bounded length ratio. This is in contrast with the \textbf{APX}-hardness of \cite{EFS09} for unit line segments with arbitrary orientation. Our result also implies a $(2+\eps)$-approximation
    for the case where the segments are axis-parallel (i.e. can be both vertical and horizontal).

\subsection{Related Work}
The work on TSPN is extensive, we list a subset of the  most notable and relevant work here and refer to the references of them for earlier works. All works listed are for $\RR^2$ metric.
Arkin and Hassin \cite{ArkinH94} presented constant factor approximations for several TSPN cases including when the neighbourhoods are parallel unit-length line segments (with ratio $3\sqrt{2}+1$). Very recently, PTASs were proposed for the case of unit disks and unit squares in \cite{soda25}.

Mata and Mitchell \cite{MM95} presented $O(\log n)$-approximation for general connected polygonal neighbourhoods.
Mitchell \cite{Mitchell07} gave a PTAS the case that regions are {\em disjoint} fat objects. This built upon his earlier work on PTAS for Euclidean TSP \cite{mitchell1999guillotine}. However, it is still open whether the problem is \textbf{APX}-hard for disjoint (general) shapes.

Dumitrescu and Mitchell \cite{DM03} presented several results, including $O(1)$-approximation for TSPN where objects are connected regions of the same or similar diameter. This implies $O(1)$-approximation for line segments of the same length in arbitrary orientation. They also 
improved the bounds for cases of parallel segments of equal length, translates of a convex region, and translates of a connected
region. For parallel (unit) line segments they obtain a $3\sqrt{2}$-approximation which remained the best ratio for this class until now.

Feremans and Grigoriev \cite{FG05}, gave a PTAS for the case that regions are disjoint fat polygons of similar size.
A similar result was obtained by \cite{BodlaenderFGPSW09} which gave a PTAS for TSPN for disjoint fat regions of similar sizes.
Mitchell \cite{Mitchell10} presented an $O(1)$-approximation for planar TSPN with pairwise-disjoint connected neighbourhoods of any size or shape (see also \cite{Berg05}). Subsequently, \cite{EFMS05} gave $O(1)$-approximation
for discrete setting where regions are overlapping, convex, and fat with similar size.

These results were further generalized by Chan and Elbassioni \cite{ChanE11} who presented a QPTAS for fat weakly disjoint objects in doubling dimensions (this allows limited amount of overlapping of the objects). This was further improved to a PTAS for the same setting \cite{ChanHJ16}. 
The TSPN problem is even harder if the neighbourhoods are disconnected. See the surveys of Mitchell \cite{Mitchell00,Mitchell04}. 
If the metric is planar and each group (or neighbourhood) is the set of vertices around a face, \cite{Bateni16} presents a PTAS for the group Steiner tree and a $(2+\eps)$-approximation for TSPN.

\subsection{Our Results and Technique}
The main result of this paper is the following theorem.
\begin{theorem}\label{thm:main}
    Given a set of $n$ parallel (say vertical)
    line segments with lengths in $[1,\lambda]$ for a fixed $\lambda$
    as an instance of TSPN, there is an algorithm that finds a $(1+\eps)$-approximation solution in time $n^{O(\lambda/\eps^3)}$.
\end{theorem}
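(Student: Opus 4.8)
The plan is to follow the Arora/Mitchell dissection-and-dynamic-programming paradigm, adapted to the fact that the ``clients'' are vertical segments of length in $[1,\lambda]$ rather than points. First I would reduce to a bounded, discretized instance: by standard perturbation and scaling, assume all segment endpoints lie on an integer grid inside a bounding box of side length $O(n)$, losing only a $(1+\eps)$ factor (using that $\OPT = \Omega(1)$ once there are at least two segments, since any tour must have length at least the distance between two disjoint segments, and the diameter is $O(n)$ which can be brought down to $\poly(n/\eps)$ by merging far clusters). The key structural point, and what makes similar-length \emph{parallel} segments tractable where arbitrary-orientation unit segments are \textbf{APX}-hard, is that a near-optimal tour can be assumed to touch each segment at a point whose $y$-coordinate is one of $O(\lambda/\eps)$ canonical ``snapped'' heights on that segment; intuitively, moving the visiting point vertically by $O(\eps)$ along a length-$\ge 1$ segment perturbs the tour by $O(\eps)$ per segment, and a charging argument over the tour shows the total perturbation is $O(\eps)\cdot\OPT$. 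This replaces each segment by a small menu of candidate visiting points.

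Next I would set up a randomly shifted quadtree (dissection) over the bounding box down to unit-scale cells, together with a set of $m = O(\log(n)/\eps)$ equally spaced \emph{portals} on the boundary of every dissection square. The Patching Lemma (as in Arora) shows that at a cost of $O(\eps)\cdot\OPT$ in expectation over the random shift, there is a near-optimal tour that crosses each dissection edge only at portals and only $O(1/\eps)$ times. The extra ingredient here is that each segment, having length in $[1,\lambda]$, is intersected by only $O(\lambda)$ unit-scale cells and more importantly by $O(\log(\lambda)) = O_\lambda(1)$ dissection \emph{levels} in a controlled way; so when the DP processes a square, it only needs to remember, for the segments that are ``born'' inside that square, a bounded amount of information. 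The main subtlety is the bookkeeping for which clients have already been served: a segment may straddle the boundary between two sibling squares, so the DP state must record, for each such straddling segment, whether it has been visited in the part of the tour seen so far. Because a unit/near-unit vertical segment can straddle at most $O(1)$ of the boundary edges of any square at the relevant scales, the number of ``active'' straddling segments per square boundary is $O(m) = O(\log(n)/\eps)$, but — and this is the point requiring care — we must argue it is really $O(1/\eps)$ per portal-interval or otherwise polynomially bounded, so that the DP table has size $n^{O(\lambda/\eps^3)}$ rather than quasi-polynomial.

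The dynamic program then proceeds bottom-up over the quadtree. For each square $S$ and each ``interface'' specification — the set of portals used on $\partial S$, the pairing of these portals into sub-paths, and the served/unserved status of each segment straddling $\partial S$ — we compute the minimum length of a collection of paths inside $S$ realizing that interface and visiting (at one of its canonical snapped points) every segment lying strictly inside $S$ that is not deferred to an ancestor. Combining the tables of four children to get the table of their parent is a join over the $\poly(n)$-size space of compatible interfaces; correctness follows from the structure theorem (the near-optimal tour respects portals, bounded crossings, and snapped visiting points), and the running time is the number of squares ($O(n^2)$ or $O(n\log n)$ after compression) times the table size times the join cost, all of which multiply out to $n^{O(\lambda/\eps^3)}$. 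Finally I would derandomize by trying all $O(n^2)$ shift vectors, or equivalently the $\poly(n)$ distinct dissections, and returning the best tour. The hard part will be the second and third paragraphs fused together: proving that the ``visited so far'' status for straddling segments can be tracked with only polynomially many states — i.e. that one never needs to remember super-constantly many independent segment-statuses at a single square — since a naive count would blow the table up to $2^{\Theta(m)}$ and only give a QPTAS; the resolution is to exploit that near-unit-length vertical segments crossing a horizontal dissection edge of length $\ge 1$ can be charged to $O(1/\eps)$ portal sub-intervals, and that a segment visited by the tour is visited inside one specific leaf cell, letting the status be propagated locally rather than globally.
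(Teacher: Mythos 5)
Your outer framework --- randomly shifted dissection, portals, the patching lemma, and a bottom-up DP over the quadtree --- matches the paper's, but the two genuinely hard steps are left unresolved, and the sketches you offer in their place do not work.

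First, the ``snapped heights'' step is not justified and I do not believe it as stated. You claim that moving each visiting point vertically by $O(\eps)$ costs $O(\eps)\cdot\opt$ in total, but a tour visits $n$ segments, so a per-segment perturbation of $\Theta(\eps)$ can add $\Theta(n\eps)$, which can be far larger than $\opt$ when the segments are packed into a $\Theta(\sqrt n)\times\Theta(\sqrt n)$ box. The paper never discretizes visiting heights; instead it shows that an optimum visits each segment only at a straight point, a tip (break point), or a pure reflection point (Observation~\ref{obs:legs}, Lemma~\ref{obs:break-on-tip}), and a pure reflection point is uniquely determined once the segment and the two neighbouring ``anchor'' points are fixed --- this is what makes the inner DP's guessing of ``large legs'' finite without any snapping.

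Second, you correctly identify the crux --- tracking served/unserved status of segments that straddle dissection boundaries --- but your count is wrong and your proposed fix is a placeholder rather than an argument. You assert that only $O(m)=O(\log n/\eps)$ segments straddle a square boundary; in fact a single horizontal dissecting line can be crossed by $\Theta(n)$ of the segments (take all $n$ unit segments intersecting $y=0$), so a naive served/unserved state is $2^{\Theta(n)}$, and ``charging to $O(1/\eps)$ portal sub-intervals'' does not reduce the number of independent boolean decisions the DP must remember. The paper's resolution is entirely different and is its main technical contribution: (i) at a $(1+O(\eps))$ loss it \emph{drops} every segment that crosses a horizontal dissecting line, replacing each by a requirement that certain portals on that line be visited, and later extends the solution back to cover the dropped segments (Lemma~\ref{lemma:horizontal-cover-lines}); this is what makes sibling sub-squares truly independent. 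Doing so forces (ii) stopping the dissection at cells of side $\Theta(\rho/\eps)$ rather than unit scale, which in turn requires (iii) the bounded-shadow structure theorem (Theorem~\ref{thm:onestrip-shadow}, proved via Lemmas~\ref{lemma:zig-zag}--\ref{lemma:overlapping-loops-ladders}) so that a base cell admits a sweep-line inner DP whose state --- the $O(1/\eps^2)$ large legs crossing the sweep line, each specified by $O(1/\eps)$ segments thanks to the bounded pure-reflection-sequence property (Lemma~\ref{lemma:reflections}) --- is of polynomial size. None of this machinery appears in your proposal, and without it I do not see how the DP avoids an exponential or quasi-polynomial blowup.
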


The algorithm we present is randomized but can be easily derandomized (like the PTAS for classic TSP).
To simplify the presentation, we give the proof for the case of unit line segments first.

This problem generalizes the classic (point) TSP (at a loss of $(1+\eps)$ factor). Given a point TSP instance, scale the plane so that 
the minimum distance between the points is at least $1/\eps$; call this instance $\calI$. Obtain instance $\calI'$ for line TSP by placing a vertical unit line segment over each point. It is easy to see that the optimum solutions of $\calI$ and $\calI'$ differ by at most an $\epsilon$ factor.

The difficult cases for line TSP are when the line segments are not too far apart (for e.g. they  can be packed in a box
of size $O(\sqrt{n})$ or smaller). There are two key ingredients to our proof that we explain here.
One may try to adapt the hierarchical decomposition by Arora \cite{Arora98} 
to this setting. Following that hierarchical decomposition,
the first issue is that some line segments might be crossing the horizontal dissecting lines, and so we don't have independent sub-instances, and it is not immediately clear in which subproblem these crossing segments must be covered by the tour. Note that 
the number of line segments crossing a dissecting line can be large.
Our first insight is the following:

\begin{insight}
    At a loss of $(1+\eps)$, we can drop the line segments crossing horizontal dissecting lines and instead requiring a subset of portals of each square (of the dissection) to be visited, provided 
    we continue the quad-tree decomposition until each square has size $\Theta(1/\eps)$.
\end{insight}

In other words, assuming all the squares in the decomposition have height at least $\Omega(1/\eps)$, then at a small loss we can show that a solution for the modified instance where line segments on the boundary of the squares are dropped, can be extended to a solution for the original instance. So, proving this property  allows us to work with the hierarchical (quad-tree) decomposition until squares of size $\Theta(1/\eps)$. This can be proved by a proper packing argument. But then we need to be able to solve instances where the height of the sub-problem instance is bounded by $O(1/\eps)$.

Let's define the notion of shadow of a solution (or in general, shadow of a collection of paths on the plane) as the maximum number of times a vertical sweeping line $\Gamma$ (that moves from left to right) intersects any of these paths. Our second insight is the following:

\begin{insight}
    If we consider a window that is a horizontal strip of height $O(h)$ and move this window 
    vertically anywhere over an optimum solution, then the shadow of the parts of optimum visible in this strip is at most $O(h)$.
\end{insight}

In other words, one expects that in the base case of the decomposition (where squares have height $\Theta(1/\eps)$), the shadow is bounded by $O(1/\eps)$. Despite our efforts, proving this appears to be more difficult than thought and it seems there are examples where, even in the unit length segments, the shadow may be large (see Figure \ref{fig:insight}). We were not able to prove this nor come up with an explicit counterexample.
However, we are able to prove the following slightly weaker version that still allows us to prove the final result:

\begin{figure}[h]
    \centering
    \includegraphics[width=0.3\textwidth]{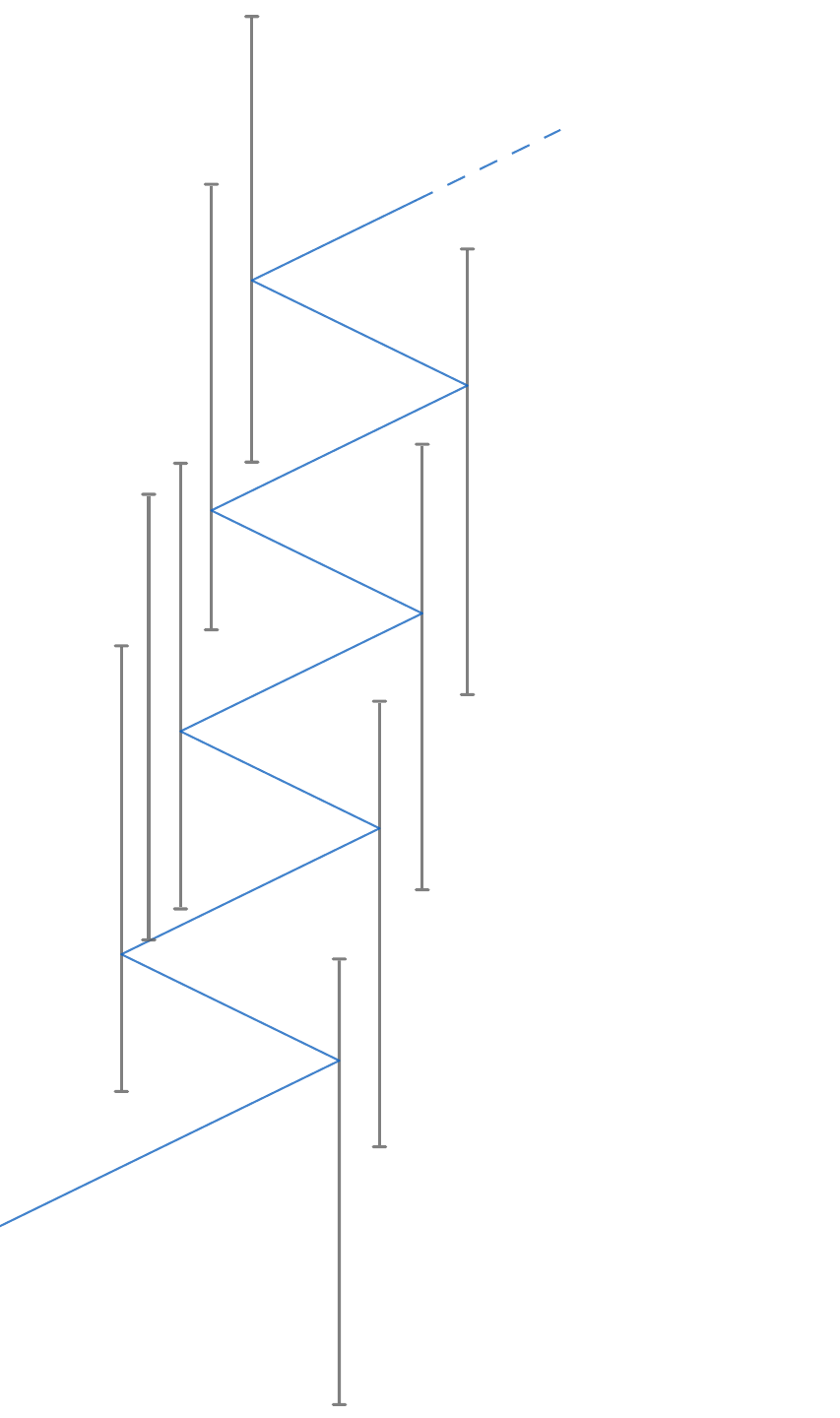}
    \caption{A potential arrangement of line segments where the solution has a large shadow}\label{fig:insight}
\end{figure}

\setcounter{insight}{1}
\begin{insight}[Revised]
    There is a $(1+\eps)$-approximate solution such that the shadow of any strip of height $h$ over that solution is bounded by $O(f(\eps)\cdot h)$ for some function $f(\cdot)$.
\end{insight}

The proof of this insight forms the bulk of our work. We characterize specific structures that would be responsible for having a large shadow in a solution and show how we can modify the solution to a $(1+\eps)$-approximate one with shadow $O(1/\eps)$.
Consider the unit line segments case, and suppose $\opt$ is the cost of an optimum solution.

\begin{theorem}\label{thm:onestrip-shadow}
    Given any $\eps>0$, there is a solution $\calO'$ of cost at most $(1+\eps)\cdot\opt$ such that in any  strip of height 1, the shadow of $\calO'$  is $O(1/\eps)$.
\end{theorem}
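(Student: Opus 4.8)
The plan is to start from an arbitrary optimum solution $\opt$, restricted to a fixed strip $S$ of height $1$, and repeatedly remove local structures that force the sweeping line $\Gamma$ to cross many tour edges. Within the strip, partition the $x$-axis into unit-length intervals (columns), so that the strip is carved into unit squares $Q_1, Q_2, \dots$; we will bound the shadow column by column and amortize the cost of our modifications against the portion of $\opt$ lying in that column. The key accounting idea is that any unit square through which the solution passes many times contains a large amount of tour length — if $\Gamma$ crosses $\opt$ at least $k$ times inside a single column, then the pieces of $\opt$ in and near that column have total length $\Omega(k)$ (each such crossing sub-path either spans the full width of the square, contributing $\Omega(1)$, or turns back, which a short-cutting/merging argument can also charge $\Omega(1)$ to). So if we could afford to ``pay'' a constant per unit of length, we could delete all but $O(1/\eps)$ of these crossings. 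The loss budget is $\eps\cdot\opt$, which is exactly $\eps$ times the total length, so we are allowed to charge $O(\eps)$ per unit length of deleted structure — hence the target $O(1/\eps)$ bound on the shadow.

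The main steps I would carry out, in order: (1) Fix the strip and its decomposition into unit squares; classify the edges of $\opt$ within the strip by which squares they touch. (2) Identify ``heavy'' squares where $\Gamma$ crosses $\opt$ more than $C/\eps$ times for a suitable constant $C$, and show (via the length argument above) that $\opt$ spends $\Omega(1/\eps)$ length locally at each heavy square, so heavy squares are ``rare'' relative to $\opt$ in a charging sense. (3) In each heavy square, replace the bundle of $\Omega(1/\eps)$ strands crossing $\Gamma$ by a constant number of representative strands: route the endpoints of the deleted strands to a few canonical crossing points along the square's boundary, merging near-parallel portions into one shared detour; this costs $O(1)$ per deleted strand but we only delete strands in excess of $O(1/\eps)$, so per heavy square the cost is $O(\text{local length})$, and globally $O(\eps\cdot\opt)$ after scaling constants. (4) Verify the modification keeps the solution feasible: each unit line segment that was originally visited must still be touched — this is where parallelism and the height-$1$ restriction help, since a vertical segment intersecting the strip is ``short'' in the strip and can be re-covered by the representative strand routed through its column. (5) Conclude that after the modification the shadow in $S$ is $O(1/\eps)$, and since the strip $S$ was arbitrary, the same bound holds for every height-$1$ strip.

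The hard part will be Step~(3)–(4): showing that the rerouting can be done \emph{simultaneously} for all heavy squares without the detours interfering or cascading (a detour added to fix one column must not blow up the shadow in a neighbouring column), and showing that feasibility is genuinely preserved — i.e. that after collapsing $\Omega(1/\eps)$ strands into $O(1)$ representatives, every line segment previously stabbed by one of the deleted strands is still stabbed. I expect to handle interference by making all detours hug the square boundaries (so they contribute $O(1)$ to the shadow of at most two columns) and by processing columns in a fixed order while maintaining the invariant that the running shadow stays $O(1/\eps)$; and to handle feasibility by exploiting that the deleted strands, being nearly parallel and confined to a height-$1$ strip, all stab line segments whose strip-portions lie in a bounded set of columns, so a single representative strand per column suffices. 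A secondary subtlety is the amortization bookkeeping: the same length of $\opt$ might be ``near'' two adjacent heavy columns, so I would charge each modification to a slightly enlarged neighbourhood and absorb the constant-factor double counting into the $O(\cdot)$.
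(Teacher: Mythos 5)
Your central accounting claim in step~(2) --- that $k$ crossings of the sweep line $\Gamma$ inside a single unit column force $\Omega(k)$ length of $\OPT$ ``in and near that column'' --- is false, and this is precisely the obstacle the paper's much more elaborate argument is designed to get around. A strand that turns back inside a column does so at a reflection point, and its contribution to the length is only $O(d)$ where $d$ is the horizontal distance from $\Gamma$ to that reflection; since segments are only required to have \emph{distinct} $x$-coordinates (not to be separated by any fixed gap), $d$ can be arbitrarily small. A sink or zig-zag (in the paper's terminology) with $k$ reflections packed tightly around $\Gamma$ can therefore have shadow $k$ while the local length is $o(k)$, so your per-column budget has nothing to charge against. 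The paper explicitly states it could \emph{not} prove that the unmodified optimum has bounded per-strip shadow (Insight~2 versus its Revised form); it proves only the weaker structural claim that a high-shadow sink/zig-zag can be \emph{rerouted} at cost charged to the length $\Psi$ of the \emph{entire} sink/zig-zag, not the local column. That charging works because the horizontal displacements $d_m$ along a sink or zig-zag are monotone (Lemma~\ref{obs:x-order}), yielding an averaging argument that all but $O(1/\eps)$ of them are at most $\eps\Psi$. This monotonicity structure is invisible at column resolution, so a naive grid-based charging scheme cannot recover it.

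A second gap is in steps~(3)--(4): replacing a bundle of strands by ``a few canonical crossing points'' and claiming coverage is preserved because the strands are ``nearly parallel'' is not sound. The strands in a sink or zig-zag turn, are not near-parallel, and a vertical segment meets the strip in a vertical \emph{interval}, so a representative strand placed in the right column can still miss a segment in the $y$-direction. The paper's fix is not a per-column representative but an envelope construction (the paths $\Gamma_0$ and $\Gamma_m$ in Section~\ref{sec:lemma-shadow-of-sink/zigzag}): the replacement path traces the lower envelope of the discarded portion so it still hits every bottom segment, and a single surviving path $P_{m_0+1}$ lies above all discarded paths so it still hits every top segment. This uses the above/below ordering of consecutive sub-paths (Lemma~\ref{obs:reflection-paths-above-below}), not column locality. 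Finally, your plan does not address the other half of the paper's argument: even once each sink/zig-zag has bounded shadow, you still need a bound on the number of loops and ladders that can overlap at one vertical line (Lemma~\ref{lemma:overlapping-loops-ladders}), which requires a separate structural argument and rerouting.
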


Proof of this theorem appears in Subsection \ref{sec:thm-onestrip-shadow}.
We will show that this near-optimum solution has in fact further structural properties that 
allow us to solve the bounded height cases at the base cases of the hierarchical decomposition using a Dynamic Program (DP) which, later on, is referred to as the inner DP. 
Proof of Theorem \ref{thm:onestrip-shadow} is fairly long and involves multiple steps that gradually proves structural properties for specific configurations. 

To get a very high level idea of the proof of this theorem, consider some fixed optimum 
solution $\OPT$. We decompose the problem into horizontal sections by drawing horizontal lines, called {\em cover-lines} that are 1-unit apart. The region of the plane between two consecutive cover-lines is called a {\em strip}. Note that each line segment crosses one cover-line (or might touch exactly two consecutive cover-lines). Let's consider one strip, say $S$, and consider the intersection of $\OPT$ with this strip. This intersection looks like a collection of paths that enter/exit this strip. 
We define the shadow of this strip similarly: consider a hypothetical vertical sweep line that moves left to right along the x-axis, the maximum number of intersections of this sweep line with these pieces of $\OPT$ restricted to $S$ is the shadow in $S$. We show that we can modify $\OPT$ to a near-optimum solution of cost at most $(1+\eps)$ times that of $\OPT$ so that the shadow in each strip is at most $O(1/\eps)$. To prove this, we show that there are certain potential structures that can cause $\OPT$ having a large shadow in a strip, one of which we call a zig-zag (see Figure \ref{fig:fig2}).
We show that we can modify $\OPT$ (at a small increase to the cost) so that the shadow becomes bounded along each zig-zag (or similar structures).

\subsubsection{Case Study: Restricted Special Instances}

To get an intuition of our approach, consider a special case where the given instance of the problem fits within a bounding box of height at most $ 3 $. 
If we draw a horizontal cover-line going through the bottom-most point of the top-most segment, and draw a second cover-line 1 unit below the previous one, then one can see that the entire optimum solution should fit within this strip of height 1 between these two cover-lines.
Every segment of the instance intersects with at least one of the two horizontal lines. We call the segments intersecting the top horizontal line {\em top segments}, and the rest {\em bottom segments}.
Any optimum solution can be partitioned into two subpaths: One subpath from the left-most point on the optimum solution towards the right-most point on the solution (call $P_1$), and one subpath the opposite way (call $P_2$), where $P_1$ is always "above" $P_2$
and is responsible of covering the top segments, and $P_2$ is responsible for covering the bottom segments. In this case the shadow of the solution is always 2.
This is similar to the classic bitonic TSP tour which can be easily solved using a dynamic program (where there is a sweeping line moving left to right and keeps track of the end-points of these paths $P_1$ and $P_2$).

\begin{figure}[h]
    \centering
    \includegraphics[width=.45\linewidth]{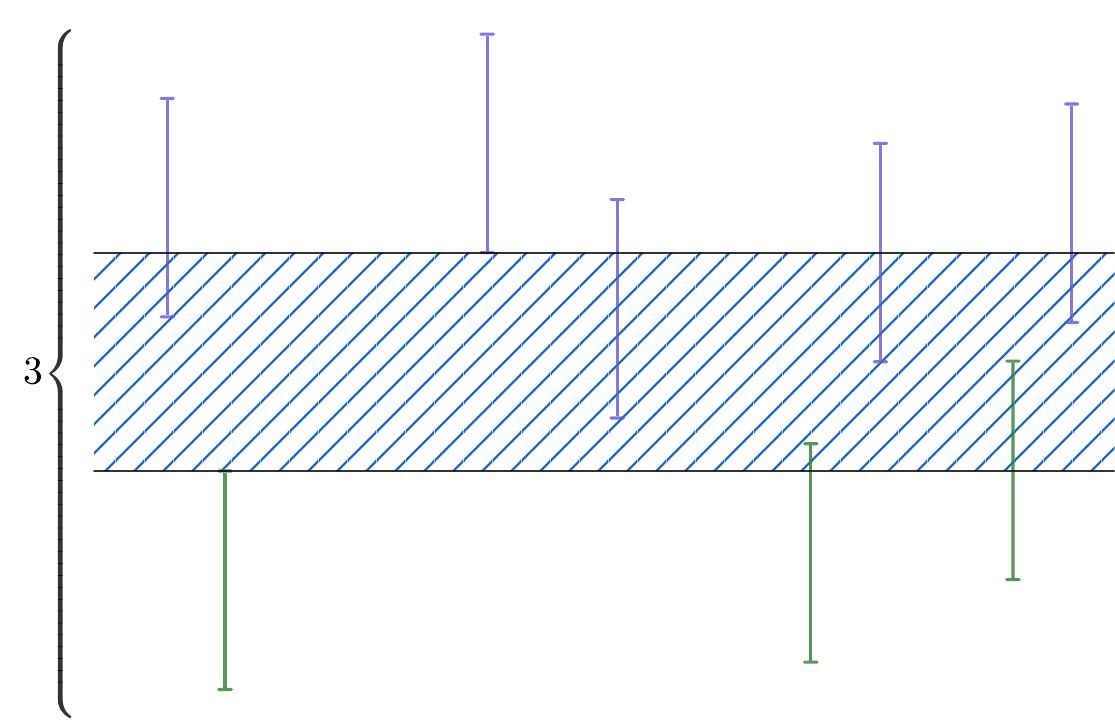}
    \hfill
    \includegraphics[width=.45\linewidth]{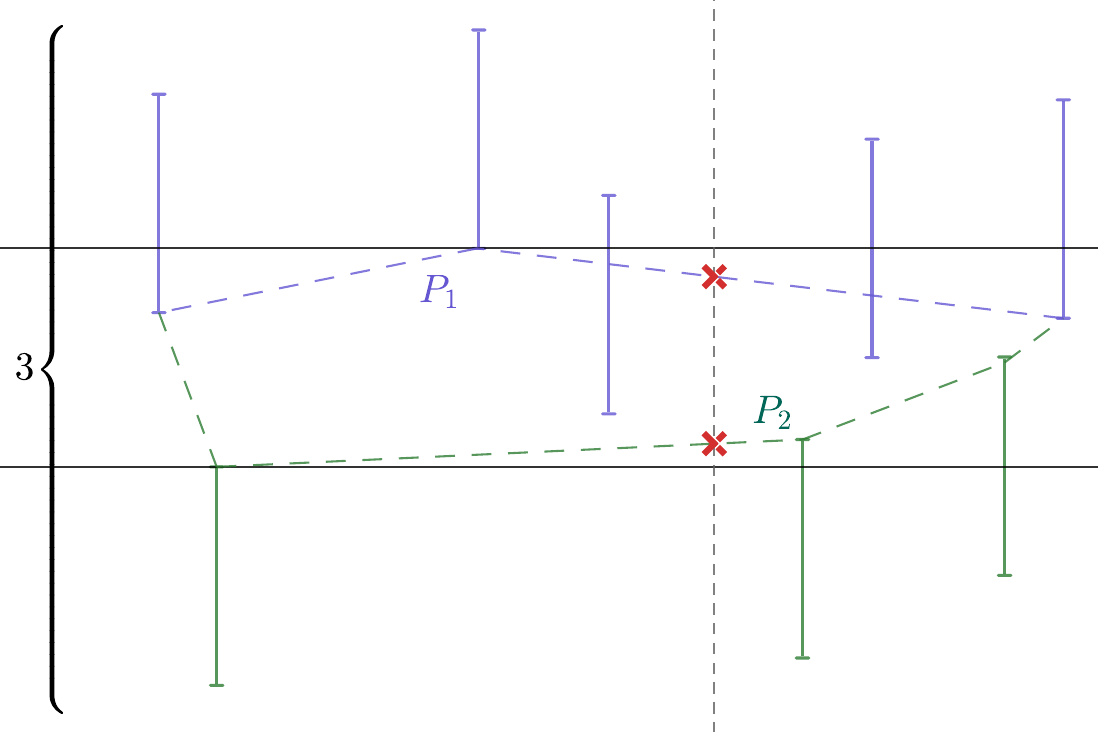}\caption{An optimum solution to these special case instances intersects with any vertical line at most twice}
    \label{fig:height3-shadow2}
\end{figure}

When the height of the bounding box is larger than 3, we still draw
horizontal cover-lines that are 1-unit apart.
In Figure \ref{fig:height5} you can see an instance with a bounding box of height 5 and a solution for it.

\begin{figure}[h]
    \centering
    \includegraphics[width=.5\tw]{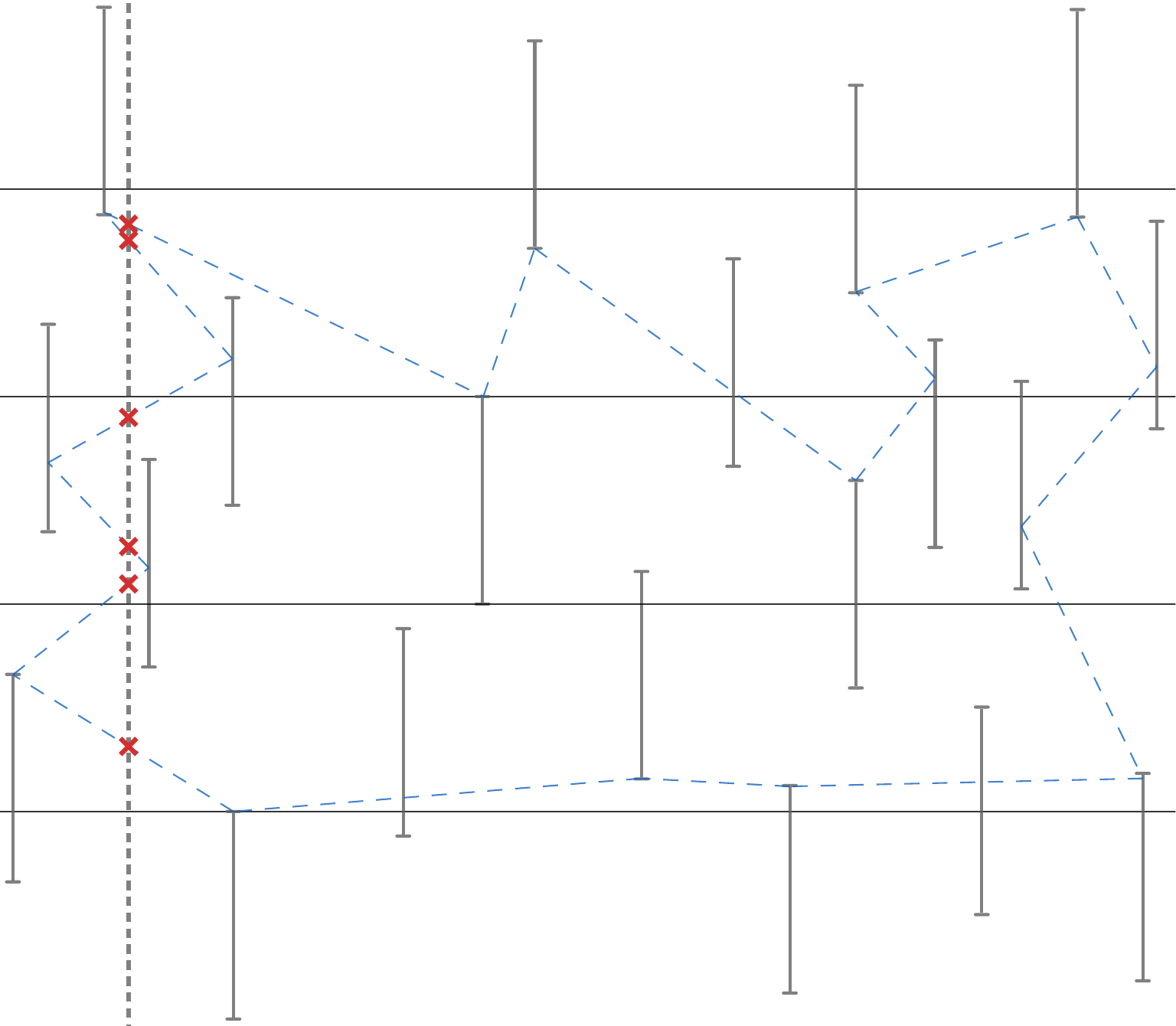}
    \caption{An example of an instance of the problem with bounding box of height at most 5. We have at most 3 strips in this case and maximum shadow of the given solution is 6}
    \label{fig:height5}
\end{figure}

If we can prove that the overall shadow of an optimum solution is bounded in every strip, then we can create an extension of the bitonic tour solution to obtain a dynamic program with low complexity (i.e. constant number of subpaths), to then find an exact solution by combining the subpaths we find.


Unfortunately, it might be the case that the shadow of an optimum solution is much larger than a function of the height of the bounding box (i.e. the number of strips). Such configurations like Figure \ref{fig:insight} might happen, leading to large shadows. We will show that such configurations can be changed to another solution with a bounded shadow at a small increase in cost.

\subsubsection*{Organization of the paper:} 

We start by some preliminaries in the next subsection. 
In Section \ref{sec:properties}, we first list the properties of an optimum solution (in Subsection \ref{sec:opt-prop}), then the properties of a structured near-optimum solution (in Subsection \ref{sec:near-optimum-property}). In these two sections we list four major lemmas (Lemmas \ref{lemma:zig-zag} to \ref{lemma:overlapping-loops-ladders}) that are used
to prove Theorem \ref{thm:onestrip-shadow}.
We then prove Theorem \ref{thm:onestrip-shadow} in Subsection \ref{sec:thm-onestrip-shadow}. 

We describe the main algorithm in Section \ref{sec:main-alg},
which includes the outer DP and inner DP. The proof of our main theorem appears in Subsection \ref{sec:main-theo-proof}.

Section \ref{sec:proofs} has details of the proofs of the major lemmas (for properties of a near-optimum solution), as well as the proof of Theorem \ref{thm:exact-answer}.
Finally, we conclude our results in Section \ref{sec:conclusion} and discuss further extensions.

\subsection{Preliminaries}\label{sec:prem}
Suppose we are given $n$ vertical line segments $s_1,\ldots,s_n$ of lengths in the range $[1,\lambda]$ for some constant $\lambda \ge 1$, and the top and bottom points of each $s_i$ are denoted by $s^t_i$ and $s^b_i$, respectively. These end-points are also called {\em tips} of the segment. For any point $p$, let $ x(p) $ and $ y(p) $ denote the $x$ and $y$-coordinates of $p$, respectively. Similarly, for any segment or vertical line $s$, let $ x(s) $ denote its $x$-coordinate.
For two points $p,q$, we use $||pq||$ to denote the Euclidean distance between them.
A feasible (TSP) tour is specified by a sequence of points where each of these points is on one of the segments of the instance and each line segment has at least one such point, and the tour visits these points consecutively using straight lines. The line that connects two consecutive points in a tour is called a {\em leg} of the tour.
In our problem, the goal is to find a TSP tour of minimum total length. As mentioned earlier, we focus on the case where all the line segments have length 1 and then show how the proof easily extends to the setting where they have lengths in $[1,\lambda]$.
Fix an optimum solution, which we refer to by $\OPT$ and use
$\opt$ to refer to its cost. Our goal is to show the existence of a near-optimum (i.e. $(1+\eps)$-approximate) structured solution that allows us to find it using dynamic programming.

First, we show at a small loss we can assume all the line segments have different $x$-coordinates.
We assume that the minimal bounding box of these line segments has length $L$ and height $H$. For now, assume $ H > 3 $ (case of $H\leq 3$ is easier, see Theorem \ref{thm:exact-answer}). Let $ B = \max \{L, H-2\} $. So $\opt\geq 2B$; we can also assume $B\leq \frac{n}{\eps}$, because otherwise $\opt\geq 2n/\eps$ and if we consider an arbitrary point on each line segment (say the lower tip) and use a PTAS for the classic TSP for these points, then it will be a PTAS for our original instance as well (because we pay at most an extra +2 for each line for a total of $2n$ which is $O(\eps\cdot\opt)$).
For a given $ \epsilon > 0 $, consider a grid on the plane with side length $ \frac{\epsilon B}{n^2}$. Now move each line segment (parallel to the $y$-axis) so that the lower tip of each $s_i $ is moved to the nearest grid point where there is no other line segment $s_i$ with that $x$-coordinate. By doing this, all the segments will have different $x$-coordinates and each segment would move at most $\frac{\sqrt 2}2\cdot\frac{\eps B}n < \frac{\epsilon B}n $, and in total, all segments would move at most a distance of $\epsilon B$. So the optimum value of the new instance has cost at most $(1+\eps)\cdot\opt$.
For simplicity of notations, from now on we assume the original instance has this property and let $\OPT$ (and $\opt$) refer to an optimum (and its value) of this modified instance.

\section{Properties of a Structured Near-Optimum Solution}\label{sec:properties}
We start by presenting some properties of an optimum solution and then some structural properties for a near 
optimum that allow us  to  prove Theorem \ref{thm:onestrip-shadow}.
\subsection{Properties of an Optimum Solution}\label{sec:opt-prop}

We start by stating some lemmas that give a better understanding of the geometrical properties of an optimum solution, and later build up the proof one of the major lemmas in Subsection \ref{sec:lemma-zigzag} 
(Lemma \ref{lemma:zig-zag}) from these properties.

One special instance of the problem is when there is a single horizontal line that crosses all the input segments.
This special case can be detected and solved easily. Otherwise, any optimum solution will visit at least
3 points that are not colinear. In such cases, like in the classic (point) TSP \cite{Arora98}, we can assume the optimum
does not cross itself, i.e. there are no two legs of the optimum $\ell$ (between points $p,q$) and $\ell'$ (between points $p',q'$) that intersect, as otherwise removing these two and adding the pair of $pq',p'q$ or $pp',qq'$ will be a feasible solution of smaller cost.

\begin{definition}\label{def:shadow}
    Given a collection $\mathcal  P$ of paths on the plane and a vertical line at point $x_0\in\RR$, the \textbf{shadow} at $x_0$ is the number of legs of the paths in $\mathcal  P$
    that have an intersection with the vertical line at $x_0$. The shadow of a given range $ [a, b] $ is defined to be the maximum shadow of values $ x_0 \in [a, b]$.
\end{definition}

Note that if a solution is self-crossing, the operation of uncrossing (which reduces the cost) does not increase the shadow

Suppose the sequence of points of $\OPT$ is $p_1,p_2,\ldots,p_\sigma$ and the straight lines connecting these points (i.e. legs of $\OPT$) are $\ell_1,\ell_2,\ldots,\ell_\sigma$
where $\ell_i$ connects two points $p_i,p_{i+1}$ (with $p_{\sigma+1}=p_1$), and each $s_i$ has at least one point $p_j$ on it. We consider $\OPT$ oriented in this order, i.e. going from $p_i$ to $p_{i+1}$. 
Since all segments have distinct $x$-coordinates, we can assume
	no two consecutive points $p_i,p_{i+1}$ are on the same line segment by short-cutting (so no leg $\ell_i$ is vertical) and all points $p_i$ on distinct line segments have distinct $x$-coordinates.

As mentioned before, let the length of the sides of the minimal bounding box of an instance of the problem be $ L\times H $. 
The proof of the following theorem, which has the same setting as our toy example in the introduction, appears in Subsection \ref{sec:exact-answer}, where we use some of the definitions and lemmas given throughout this section. 
\begin{theorem}\label{thm:exact-answer}
    If $ H \le 3 $, then the shadow of an optimum solution is at most 2.
\end{theorem}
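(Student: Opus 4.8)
The plan is to show that when $H \le 3$, an optimum (hence non-self-crossing) solution $\OPT$ behaves like a bitonic tour, so that every vertical line meets it in exactly two legs. First I would set up the cover-line picture described in the introduction's case study: draw a horizontal line $\Gamma_{\tope}$ through the bottom tip of the topmost segment, and a second line $\Gamma_{\bot}$ exactly one unit below it. Since $H \le 3$ and every segment has length $\ge 1$, every input segment intersects $\Gamma_{\tope}$ or $\Gamma_{\bot}$ (a segment whose top tip is below $\Gamma_{\tope}$ has its top tip within the top $H - (H-2) = $ one unit... more carefully: the topmost tip lies on $\Gamma_{\tope}$ or above within the box, and any segment not meeting $\Gamma_{\tope}$ lies strictly below it, so its top tip is within distance $\le H - 1 \le 2$ of $\Gamma_{\tope}$, forcing it to cross $\Gamma_{\bot}$, which sits $1$ unit below $\Gamma_{\tope}$; the length-$\ge 1$ hypothesis makes the arithmetic close exactly). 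Call the segments meeting $\Gamma_{\tope}$ the \emph{top segments} and the others the \emph{bottom segments}. The key observation to extract is that $\OPT$ can be taken to lie entirely in the closed strip bounded by $\Gamma_{\tope}$ and $\Gamma_{\bot}$: any point of $\OPT$ above $\Gamma_{\tope}$ or below $\Gamma_{\bot}$ can be pushed vertically onto the nearer of the two lines without losing feasibility (a top segment is still hit on $\Gamma_{\tope}$, a bottom segment still hit on $\Gamma_{\bot}$) and without increasing length, since vertical projection onto a strip is $1$-Lipschitz on each leg.

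Next I would argue the bitonic structure. Let $p_\ell$ and $p_r$ be the leftmost and rightmost vertices visited by $\OPT$ (well-defined and unique in $x$-coordinate by the preliminaries). Orienting $\OPT$, split it at $p_\ell$ and $p_r$ into two arcs $P_1, P_2$. I claim that, after the uncrossing operation (which by the remark following Definition \ref{def:shadow} does not increase the shadow and does not increase cost), one of the arcs stays weakly above the other throughout $[x(p_\ell), x(p_r)]$. This is where I expect the main obstacle: ruling out the possibility that $\OPT$ "doubles back" in $x$, i.e. that $P_1$ is not $x$-monotone. If some arc were not $x$-monotone, a vertical line would cut it at least twice within that arc, and combined with the other arc we would get shadow $\ge 3$ somewhere; I need to show this cannot happen in an optimum when $H \le 3$. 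The argument I would use: if a sweep line at $x_0$ meets $\OPT$ in $\ge 4$ legs, then among the intersection points there are (by a parity/ordering argument on the cyclic tour) two consecutive-in-$x$ intersection points $a, b$ on $\OPT$ with no segment tip between them in the vertical slab; the portion of $\OPT$ between $a$ and $b$ can then be replaced by the straight vertical-ish chord, or more robustly, the two "innermost" strands can be shortcut past each other, strictly reducing length while keeping feasibility — contradicting optimality. The bounded height $H \le 3$ is what guarantees there is enough vertical room that this shortcut does not skip any segment that was only reachable via the removed detour.

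Once $x$-monotonicity of both arcs is established, the theorem follows immediately: at any $x_0 \in [x(p_\ell), x(p_r)]$ exactly one leg of $P_1$ and one leg of $P_2$ cross the vertical line, giving shadow exactly $2$; outside $[x(p_\ell), x(p_r)]$ the shadow is $0$. I would also handle the degenerate case flagged in the text — if a single horizontal line crosses all input segments, $\OPT$ degenerates to a back-and-forth along that line and the shadow is $2$ (or $1$ at the extreme ends, hence $\le 2$) — and the near-degenerate case where $p_\ell$ or $p_r$ coincides with a tip, which only affects boundary $x$-values and never raises the count above $2$. The only quantitative input used is $H \le 3$ together with segment lengths in $[1,\lambda]$ with $\lambda \ge 1$; no $\eps$-loss is incurred, consistent with the theorem being an exact structural statement.
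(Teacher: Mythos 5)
The proposal correctly identifies the target structure (a bitonic decomposition of $\OPT$ into an upper arc $P_1$ and a lower arc $P_2$), and the overall framing matches the paper's. However the proposal has a genuine gap exactly where you flag it: the step that rules out extra reflections (equivalently, establishes $x$-monotonicity of each arc) is only sketched, and the sketch does not close. The claim that "if a sweep line meets $\OPT$ in $\ge 4$ legs, the two innermost strands can be shortcut past each other, strictly reducing length while keeping feasibility" is not justified and, as stated, is not true. Shortcutting two strands of a single closed tour changes its cycle structure and may disconnect it; even setting that aside, the removed detour may be the only part of $\OPT$ visiting some segment, and nothing in the proposal prevents this. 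The parenthetical remark that ``$H \le 3$ guarantees there is enough vertical room'' is an assertion, not an argument, and the phrase ``consecutive-in-$x$ intersection points'' is confused (all the intersection points with a fixed vertical sweep line share the same $x$; you presumably mean consecutive in $y$, but then the shortcut you describe is even less clearly feasible).

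The paper's proof makes no appeal to a length-reduction shortcut at all. After identifying $p^l, p^r$ and the arcs $P_1, P_2$, it uses Lemma \ref{lem:nested-vertical-loops} to show $p^l$ is a right reflection (and $p^r$ a left one), Lemma \ref{obs:reflection-paths-above-below} to show $P_1$ stays above $P_2$ on $I=[x(p^l),x(p^r)]$, and then Observation \ref{cor:path-above-top-segment} to conclude $P_1$ covers all top segments and $P_2$ all bottom segments. It then supposes for contradiction that $P_1$ has a reflection $r$ other than $p^l,p^r$; by Lemma \ref{obs:consecutive-reflections} it is a right reflection, by Lemma \ref{obs:reflection} it must be on a top segment (a bottom segment would be intersected by $P_2$), and then a path concatenation plus Lemma \ref{claim:top-segments-on-right} yields a contradiction. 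In other words, the crux is a chain of combinatorial lemmas about reflection points, not an improvement argument. To repair your proposal you would need to supply an argument of this kind at the flagged spot; without it the proof is incomplete. Two smaller points: the ``push $\OPT$ vertically into the strip'' preprocessing you propose is not needed for the paper's argument and, as described, its feasibility preservation requires justification (you need that every segment either crosses a cover-line or spans the strip, which holds here but should be stated); and the degenerate case (all segments crossed by one horizontal line, shadow $\le 2$ trivially) is handled correctly and matches the paper's separate treatment of $H \le 2$.
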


By proving this special case, we show we can instead focus on the cases that $H$ is large. So from now on, we assume that $ H > 3 $. Our first main goal is to prove Theorem \ref{thm:onestrip-shadow}. 

\begin{definition}\label{def:right-left}
    Given a segment $s$ and a leg $\ell$ incident to a point on $s$, we say
	$\ell$ is to the \textbf{left} of $s$ if $ \ell $ is entirely in the subplane $x \leq x(s)$; and
	$ \ell $ is to the \textbf{right} of $s$ if $ \ell $ is entirely in the subplane $x \ge x(s)$. 
\end{definition}
Since there are no vertical legs, there is no leg that is both to the left and to the right of a segment of the instance at the same time.
Consider any segment $s_i$  and suppose that $\ell_{j},\ell_{j+1}$ are the two legs of $\OPT$ with common end-point $p_j$ that is on $s_i$. 
Let $ s_i^t $ and $ s_i^b $ denote the top and the bottom tips of $s_i$. 
We consider 3 possible cases for the location of $p_j$ and the arrangement of $\ell_j,\ell_{j+1}$. Informally, one possibility is that the two legs $\ell_{j},\ell_{j+1}$ form a straight line that crosses $s_i$ at $p_j$; one possibility is that the two legs are touching $s_i$ at one of its tips (i.e. $p_j=s_i^t$ or $p_j=s_i^b$) such that one is to the left and one is to the right of $s_i$ and they don't make a straight line, and the third possibility is that the two legs $\ell_j,\ell_{j+1}$ are on the same side (both left or both right) of $s_i$.

\begin{observation}\label{obs:legs}
 Consider any segment $s_i$ (with top/bottom points $s_i^t$,$s_i^b$) and suppose that $\ell_{j},\ell_{j+1}$ are the two legs of $\OPT$ with common end-point $p_j$ that is on $s_i$. Then either:
	\begin{itemize}
		\item The subpath of $\OPT$ going through $p_{j-1},p_{j},p_{j+1}$ forms a straight line (i.e. $ \angle \ell_j p_j \ell_{j+1} = \pi $), and $\ell_j, \ell_{j+1}$ are on two sides (left/right) of $s_i$; then we call $p_j$ a \textbf{straight point}, or
	    \item $p_j$ is a tip of $s_i$ (i.e. $p_j=s_i^t$ or $p_j=s_i^b$), 
		$\angle \ell_j p_j \ell_{j+1} \ne \pi $ and $\ell_j$ and $\ell_{j+1}$ are on two sides of $s_i$ (one left and one right); in this case $p_j$ is called a \textbf{break point}, or
        \item both $\ell_j,\ell_{j+1}$ are on the left or on the right of $s_i$; in this case $p_j$ is called a \textbf{reflection point}.
	\end{itemize}
\end{observation}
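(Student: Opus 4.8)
The plan is to exhaust the cases according to which side of the vertical line $\Lambda_i := \{x = x(s_i)\}$ through $s_i$ each of the two legs $\ell_j,\ell_{j+1}$ at $p_j$ lies on, and then, in the one nontrivial case, to use a local exchange together with optimality of $\OPT$ to pin down the configuration. First I would record the elementary fact that makes the side-classification meaningful: a leg incident to $p_j\in\Lambda_i$ has its other endpoint on a \emph{different} input segment, hence at an $x$-coordinate different from $x(s_i)$; since no leg of $\OPT$ is vertical, this leg --- apart from the point $p_j$ --- lies strictly on one side of $\Lambda_i$, so in the terminology of Definition~\ref{def:right-left} it is \emph{either} to the left of $s_i$ \emph{or} to the right of $s_i$, never both. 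The same holds for $\ell_{j+1}$.

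Next I would dispose of the easy case: if $\ell_j$ and $\ell_{j+1}$ fall on the same side of $s_i$ (both left, or both right), then $p_j$ is by definition a reflection point, which is the third bullet. The remaining case is that the two legs lie on opposite sides; by the left--right symmetry I may assume $\ell_j$ is to the left and $\ell_{j+1}$ to the right, so that $x(p_{j-1}) < x(s_i) < x(p_{j+1})$. Here I would consider the function $f(q) = ||p_{j-1}q|| + ||q\,p_{j+1}||$ for $q$ on $\Lambda_i$. Since neither $p_{j-1}$ nor $p_{j+1}$ lies on $\Lambda_i$, $f$ is strictly convex on $\Lambda_i$, and its unique minimizer over $\Lambda_i$ is the point $q^\ast$ where the segment $\overline{p_{j-1}p_{j+1}}$ crosses $\Lambda_i$ (this crossing exists because the two endpoints are strictly on opposite sides). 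Optimality of $\OPT$ now forces $p_j$ to be a minimizer of $f$ on the sub-interval $s_i \subseteq \Lambda_i$: otherwise, replacing $p_j$ by such a minimizer gives a feasible tour (the new point is still on $s_i$) of strictly smaller length. By strict convexity this minimizer is unique, so $p_j$ is it.

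Finally I would read off the two sub-cases. If $q^\ast \in s_i$, then $p_j = q^\ast$ lies on the segment $\overline{p_{j-1}p_{j+1}}$, so $p_{j-1},p_j,p_{j+1}$ are colinear and $\angle \ell_j p_j \ell_{j+1} = \pi$, while $\ell_j,\ell_{j+1}$ remain on opposite sides of $s_i$; thus $p_j$ is a straight point (first bullet). If $q^\ast \notin s_i$, then $q^\ast$ lies on $\Lambda_i$ beyond one tip of $s_i$, and since $f$ is strictly convex with minimum at $q^\ast$ it is strictly monotone along $s_i$, so its minimizer over $s_i$ is the tip of $s_i$ nearer to $q^\ast$; hence $p_j \in \{s_i^t,s_i^b\}$, and since $p_j \neq q^\ast$ the three points are not colinear, i.e.\ $\angle \ell_j p_j \ell_{j+1} \neq \pi$, so $p_j$ is a break point (second bullet). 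This exhausts all possibilities; the three types are moreover mutually exclusive, since the first two require the legs on opposite sides and the third requires them on the same side, while the first and second are separated by whether $\angle \ell_j p_j \ell_{j+1}$ equals $\pi$.

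The only step needing genuine care is the opposite-sides case: one must avoid assuming that $p_j$ lies in the relative interior of $s_i$, and the function $f$ together with its strict convexity is introduced precisely to handle uniformly the possibility that the ``straightening'' point $q^\ast$ falls outside $s_i$ --- which is exactly the situation that pushes $p_j$ onto a tip and creates a break point rather than a straight point. Everything else is immediate from the absence of vertical legs and from the optimality of $\OPT$, both already in place.
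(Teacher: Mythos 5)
Your proof is correct and takes essentially the same approach the paper indicates: classify $p_j$ by which side(s) of the vertical line through $s_i$ the two incident legs lie on, and in the opposite-sides case use local optimality (which the paper expresses as ``slide $p_j$ to shorten'' and which you phrase, more carefully, via strict convexity of the detour length along $\Lambda_i$) to force $p_j$ either to the colinear straightening point $q^\ast$ or, if $q^\ast$ falls outside $s_i$, to the nearer tip. The paper states Observation~\ref{obs:legs} without a proof environment and offers only the one-line remark that follows it; your convexity argument is a rigorous spelling-out of exactly that remark.
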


For the case of a reflection point $p_j$ with two legs $\ell_j,\ell_{j+1}$, if both
legs are to the left of the segment it is called a {\em left reflection} point and otherwise it is a {\em right reflection} point.
Also note that if $\ell_j,\ell_{j+1}$ are on the two sides of $s_i$ and $\angle \ell_i p_j \ell_{i+1} \not= \pi $, then $p_j$ must be a tip, or else we could move $p_j$ slightly up or down and reduce the length of $\OPT$ (see Figure \ref{fig:straight-line}).
\begin{figure}[H]
    \centering
    \includegraphics[width=.29\textwidth]{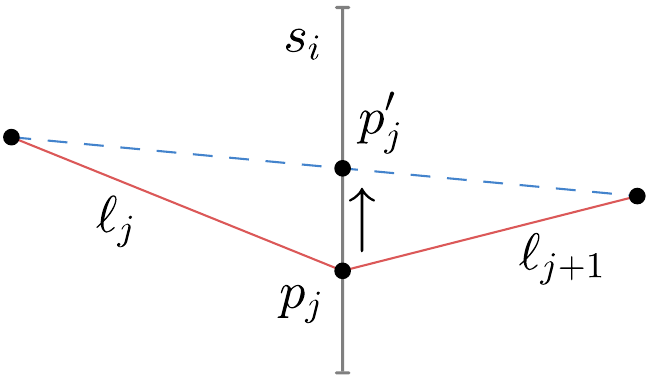}
    \caption{If $p_j$ isn't a tip of $s_i$, then $\ell_j, \ell_{j + 1}$ must be collinear}
    \label{fig:straight-line}
\end{figure}

\begin{lemma}\label{obs:further-point-reflection}
    If $P$ is a subpath of $\OPT$ with end-points $p,q$ where both are to the right of a vertical line $\Gamma$, and if $P$ crosses $\Gamma$, then
    the left-most point on $P$ to the left of $\Gamma$ is a right reflection point (symmetric statement holds for opposite directions).
\end{lemma}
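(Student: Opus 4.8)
\textbf{Proof plan for Lemma \ref{obs:further-point-reflection}.}

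The plan is to argue by contradiction, using the classification of points on $\OPT$ from Observation \ref{obs:legs} together with the no-vertical-legs and no-self-crossing assumptions. Let $P$ be the subpath with end-points $p,q$ lying in the closed half-plane $x \ge x(\Gamma)$, and suppose $P$ crosses $\Gamma$, so the set of points of $P$ strictly to the left of $\Gamma$ is nonempty. Let $w$ be the left-most such point; since $P$ is a polygonal path and the $x$-coordinate of a point on a leg varies linearly, $w$ is a vertex of $P$, i.e. $w = p_j$ for some $j$, and it lies on some segment $s_i$ with $x(s_i) = x(w) < x(\Gamma)$. Because $w$ is strictly left of $\Gamma$ and $p,q$ are not, $w$ is an interior vertex of $P$ (not an end-point), so it has two incident legs $\ell_j, \ell_{j+1}$ on $P$.

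The key observation is that both legs incident to $w$ must go weakly to the right of $s_i$: since $w$ is the left-most point of $P$, every other point of $P$ — in particular every point on $\ell_j$ and $\ell_{j+1}$ — has $x$-coordinate at least $x(w) = x(s_i)$, so each of $\ell_j, \ell_{j+1}$ lies in the half-plane $x \ge x(s_i)$ and hence is to the right of $s_i$ in the sense of Definition \ref{def:right-left}. First I would rule out the degenerate possibility that $w$ is a ``straight point'': if $\angle \ell_j w \ell_{j+1} = \pi$ then $\ell_j, \ell_{j+1}$ would be collinear and, by Observation \ref{obs:legs}, lie on the two sides of $s_i$, contradicting that both are to the right (a leg cannot be both to the left and to the right of a segment since there are no vertical legs). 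Likewise $w$ cannot be a break point, since a break point also has its two legs on opposite sides of $s_i$. Therefore, by the trichotomy of Observation \ref{obs:legs}, $w = p_j$ is a reflection point, and since both its legs are to the right of $s_i$ it is a right reflection point, which is exactly the claim. The symmetric statement (for $p,q$ to the left of $\Gamma$ and the right-most point of $P$ to the right of $\Gamma$) follows by reflecting the plane about a vertical axis.

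I do not anticipate a serious obstacle here; the only point requiring a little care is the claim that the left-most point of $P$ is attained at a vertex and that this vertex is genuinely interior to $P$ (so that it has two incident legs on $P$ and Observation \ref{obs:legs} applies). The first follows because along each leg the $x$-coordinate is an affine function, so on the finite polygonal path $P$ the minimum $x$-coordinate is attained at one of the finitely many breakpoints; the second follows because $p$ and $q$ both satisfy $x \ge x(\Gamma) > x(w)$, so neither end-point can be a left-most point. One should also note, as the excerpt already does, that the no-self-crossing and short-cutting assumptions are without loss of generality for the structural analysis, so invoking Observation \ref{obs:legs} is legitimate.
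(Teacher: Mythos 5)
Your proof is correct and follows essentially the same approach as the paper's: identify the left-most point of $P$, note that both incident legs must lie in the half-plane to its right, and conclude via Observation \ref{obs:legs} that it is a right reflection point. You spell out a few steps the paper leaves implicit (that the left-most point is an interior vertex, and the explicit elimination of the straight-point and break-point cases), but the underlying argument is the same.
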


\begin{proof}
    Let $r$ (on segment $s$) be the left-most point $P$ visits, so both subpaths $P_{pr}=p\to r$ and $P_{qr}=q\to r$ are entirely to the right of $r$, in particular the two legs $\ell^-$ and $\ell^+$ of $P$ incident to $r$ (which are the last two legs of the subpaths $P_{pr},P_{qr}$) must be on the right of $s$ which implies that $r$ is a right reflection point.
\end{proof}

\begin{definition}\label{def:ascending}
    Consider an arbitrary reflection point $r$ on a segment $s$. Let the two legs of $\OPT$ incident to $r$ visited before and after $r$ (on the orientation of $\OPT$) be $ \ell^- $ and $ \ell^+ $, respectively.
    $\ell^-$ is said to be \textbf{on top of} $\ell^+$  if all the points of $\ell^-$ have larger $y$-coordinate than all of the points of $\ell^+$. 
    In this case we also call $\ell^-$ the \textbf{upper leg} and $\ell^+$ the \textbf{lower leg}.
    Also, in this case, $r$ is called a \textbf{descending} reflection point.
    If $\ell^+ $ is on top of $\ell^- $, then $r$ is called an \textbf{ascending} reflection point.
\end{definition}

\begin{definition}
    If $\ell_j,\ell_{j+1}$ are two legs incident to a reflection point $p$ on a segment $s$, if 
    the angle between $\ell_j$ and $s$ is the same as the angle between $\ell_{j+1}$ and $s$ (i.e. $\ell_{j+1}$ is like the reflection of a ray $\ell_j$ on mirror $s$) then $p$ is called a \textbf{pure reflection} point.
\end{definition}

\begin{lemma}\label{obs:break-on-tip}
    Any reflection point that is not a tip of a segment is a pure reflection point.
\end{lemma}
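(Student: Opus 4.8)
The plan is to use the classical mirror-reflection (shortest-path-touching-a-line) argument, leveraging the fact that a reflection point which is not a tip lies in the relative interior of its segment and can therefore be slid freely along the segment while keeping the tour feasible.

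First I would fix notation. Let $p_j$ be a reflection point on a segment $s$, with incident legs $\ell_j$ (the leg $p_{j-1}p_j$) and $\ell_{j+1}$ (the leg $p_jp_{j+1}$); by definition of a reflection point both legs lie on the same side of $s$, say both to the left, so $p_{j-1}$ and $p_{j+1}$ lie in the half-plane $x\le x(s)$. Since there are no vertical legs, in fact $x(p_{j-1})<x(s)$ and $x(p_{j+1})<x(s)$ strictly. Because $p_j$ is assumed not to be a tip of $s$, it lies in the relative interior of $s$; moreover, since all visited points have pairwise distinct $x$-coordinates, $p_{j-1}$ and $p_{j+1}$ are distinct from $p_j$ and do not lie on the vertical line $\Gamma$ through $s$.

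Next comes the key step. Reflect $p_{j+1}$ across $\Gamma$ to obtain a point $p_{j+1}'$; since $p_{j+1}$ is strictly to the left of $\Gamma$, the point $p_{j+1}'$ is strictly to its right, so $p_{j-1}$ and $p_{j+1}'$ lie on opposite sides of $\Gamma$. For any $q$ on $\Gamma$ we have $\|qp_{j+1}\|=\|qp_{j+1}'\|$, so minimizing the total leg length $g(q):=\|p_{j-1}q\|+\|qp_{j+1}\|$ over $q\in\Gamma$ is the same as minimizing $\|p_{j-1}q\|+\|qp_{j+1}'\|$; this function is convex along $\Gamma$, and by the triangle inequality its unique minimizer is the single point $q^{\ast}$ where the straight segment $p_{j-1}p_{j+1}'$ crosses $\Gamma$. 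Now I would invoke optimality: since $p_j$ is in the relative interior of $s$, sliding it slightly along $s$ leaves the tour feasible (the segment $s$ is still visited) and changes only the lengths of $\ell_j$ and $\ell_{j+1}$; hence optimality of $\OPT$ forces $p_j$ to be a local — and therefore, by convexity of $g$ along $\Gamma$, the global — minimizer of $g$ on $\Gamma$, i.e. $p_j=q^{\ast}$.

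Finally, at $q^{\ast}$ the segment $p_{j-1}p_{j+1}'$ is straight, so the angle it makes with $\Gamma$ on the $p_{j-1}$ side equals the angle on the $p_{j+1}'$ side (vertical angles); reflecting $p_{j+1}'$ back to $p_{j+1}$ across $\Gamma$, the angle between $\ell_{j+1}$ and $\Gamma$ equals the angle between $\ell_j$ and $\Gamma$. Since $s\subseteq\Gamma$, this is precisely the statement that the angle between $\ell_j$ and $s$ equals the angle between $\ell_{j+1}$ and $s$, i.e. $p_j$ is a pure reflection point; the case where both legs are to the right of $s$ is symmetric. I do not anticipate a genuine obstacle: the only points needing care are (i) justifying that the slide keeps the tour feasible and stays inside the relative interior of $s$ — which is exactly where the hypothesis "not a tip" enters — and (ii) ensuring $g$ has a well-defined unique minimizer on $\Gamma$, which relies on the distinct-$x$-coordinate normalization so that $p_{j-1},p_{j+1}\notin\Gamma$ and $p_j$ differs from both.
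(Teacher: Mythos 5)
Your proof is correct and rests on the same idea as the paper's: since $p_j$ lies in the relative interior of $s$, it can be slid along $s$ without losing feasibility, so optimality forces $p_j$ to minimize $\|p_{j-1}q\|+\|qp_{j+1}\|$ over $q$ on the line through $s$, which by the mirror-reflection principle happens exactly at the equal-angle point. The paper states this more tersely (``slide slightly up or down and one direction decreases cost''), but the underlying argument is the same; your version just makes the reflection-and-convexity reasoning explicit.
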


\begin{proof}
    Suppose $p_j$ is a reflection point on $s_i$ and is not a tip of it. If the two legs $\ell_j,\ell_{j+1}$ don't have the same angle with $s_i$, then we
    can move $p_j$ along $s_i$ slightly up or down and one of the moves will decrease the cost of $\OPT$, a contradiction.
\end{proof}

We say a subpath contains a reflection point $p_j$ if $p_j$ is not the start or end vertex of the subpath (i.e. both legs of incident to $p_j$ belong to that subpath.)

\begin{lemma}\label{obs:shadow-increase}
    If a sweeping vertical line $\Gamma$ moves left to right on the $ x $-axis, the only values of $x$ for which the shadow at $\Gamma$ changes will be when $\Gamma$ hits a reflection point on that $x$-coordinate. Specifically, this means that any subpath of $\OPT$ that doesn't contain a reflection point, must have a shadow of 1 throughout its length. 
\end{lemma}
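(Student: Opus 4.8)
The plan is to regard the shadow as a function $\phi$ of the position $x_0$ of the vertical sweep line and to show that $\phi$ is locally constant except possibly at $x$-coordinates of reflection points. First I would reduce everything to a local analysis at the vertices $p_1,\dots,p_\sigma$ of $\OPT$. Since no leg is vertical, a leg $\ell_i=p_ip_{i+1}$ meets the line $\{x=x_0\}$ (in exactly one point) precisely when $x_0$ lies in the closed interval spanned by $x(p_i)$ and $x(p_{i+1})$; hence the contribution of $\ell_i$ to $\phi$ is the indicator of that interval, which is locally constant away from its two endpoints $x(p_i),x(p_{i+1})$. Because all $p_j$ have pairwise distinct $x$-coordinates, near a fixed value $x(p_j)$ the only legs whose contribution can change are the two legs incident to $p_j$, while every other leg contributes a constant to $\phi$ on a neighbourhood of $x(p_j)$. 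In particular $\phi$ is constant between consecutive vertex $x$-coordinates, so it suffices to examine the pair of incident legs at each $p_j$.

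Next I would run the case analysis furnished by Observation~\ref{obs:legs}. Fix $p_j$ on segment $s_i$ with incident legs $\ell^-,\ell^+$. If $p_j$ is a straight point or a break point, one of $\ell^-,\ell^+$ lies entirely to the left of $s_i$ and the other entirely to the right, so for $x_0$ just below $x(p_j)$ exactly one of the pair (the left leg) is met and for $x_0$ just above exactly one (the right leg) is met: the one-sided limits of $\phi$ at $x(p_j)$ agree and $\phi$ does not change there (it merely attains a harmless momentary extra $+1$ at the single point $x_0=x(p_j)$, where both legs touch the sweep line at $p_j$). If $p_j$ is a right reflection point, both $\ell^-,\ell^+$ lie in $\{x\ge x(p_j)\}$ and, being non-degenerate, extend strictly to the right of $x(p_j)$, so the pair contributes $0$ just below $x(p_j)$ and $2$ just above; thus $\phi$ jumps up by $2$, and symmetrically it jumps down by $2$ at a left reflection point. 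Combined with the reduction above, this establishes the first assertion: the shadow can change only at $x$-coordinates occupied by reflection points.

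For the second assertion, let $P$ be a subpath of $\OPT$ containing no reflection point, i.e.\ no interior vertex of $P$ is a reflection point. If some interior vertex $p_k$ on a segment $s$ were a local $x$-minimum along $P$, both legs of $P$ at $p_k$ would run from $p_k$ into $\{x> x(s)\}$ and hence lie in $\{x\ge x(s)\}$, making $p_k$ a right reflection point, a contradiction; an interior local $x$-maximum would likewise be a left reflection point. Since the vertices have distinct $x$-coordinates, $P$ has no interior local $x$-extremum, so $x$ is strictly monotone along $P$; therefore for every $x_0$ strictly between the $x$-coordinates of the endpoints of $P$ (and distinct from the finitely many interior-vertex coordinates) the sweep line meets exactly one leg of $P$, so $P$ has shadow $1$. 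I do not expect a genuine obstacle here: the only points requiring care are invoking the distinct-$x$-coordinate assumption to decouple the local analysis so that only the two legs at $p_j$ matter, and being precise about the meaning of ``the shadow changes'' at a vertex so as to separate the harmless momentary blip at straight and break points from the true $\pm2$ jumps at reflection points.
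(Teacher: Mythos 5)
Your proof is correct and follows essentially the same approach as the paper's: both rest on the case analysis in Observation~\ref{obs:legs}, noting that at a straight or break point the two incident legs lie on opposite sides of the segment (so the shadow is unchanged across that $x$-coordinate), whereas at a reflection point both legs lie on the same side (so the shadow jumps by $\pm 2$), and hence a reflection-free subpath has strictly monotone $x$-coordinate and shadow $1$. The only difference is that you spell out the local-analysis reduction and the ``momentary blip'' at straight/break points, which the paper leaves implicit.
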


\begin{proof}
    According to Observation \ref{obs:legs}, we can see that straight points or break points will always contribute 1 to the shadow of $\Gamma$. But reflection points, depending on which direction the sweeping line moves, will either increase or decrease the shadow by 2.
    If a path doesn't contain any reflections, it means that it can only contain straight points or break points, meaning its shadow throughout its length will be equal to 1.
\end{proof}	

\begin{definition}\label{def:path-above-below}
    Let $P_1$ and $P_2$ be any two subpaths of $\OPT$. We say $P_1$ is \textbf{above} $P_2$ in range $I = [x_0, x_1]$ if for every vertical line $\Gamma$ with $x(\Gamma) \in I$, the top-most intersection of
    $\Gamma$ with these two paths is a point on $P_1$. We say $P_2$ is \textbf{below} $P_1$ if the bottom-most intersection of $\Gamma$ with $P_1, P_2$ is a point on $P_2$.
    Similarly, we say $L_1$ is to the \textbf{left} of $L_2$ in range $I' = [y_0, y_1]$ if for every horizontal line $\Lambda$ with $y(\Lambda) \in I'$, the left-most intersection point of $\Lambda$ with $L_1,L_2$ (i.e. one with the least $x$ value) always belongs to $L_1$. We say $L_2$ is to the \textbf{right} of $L_1$ if the right-most intersection of $\Lambda$ is with $L_2$.
\end{definition}
\begin{figure}[H]
    \centering
    \includegraphics[width=.4\textwidth]{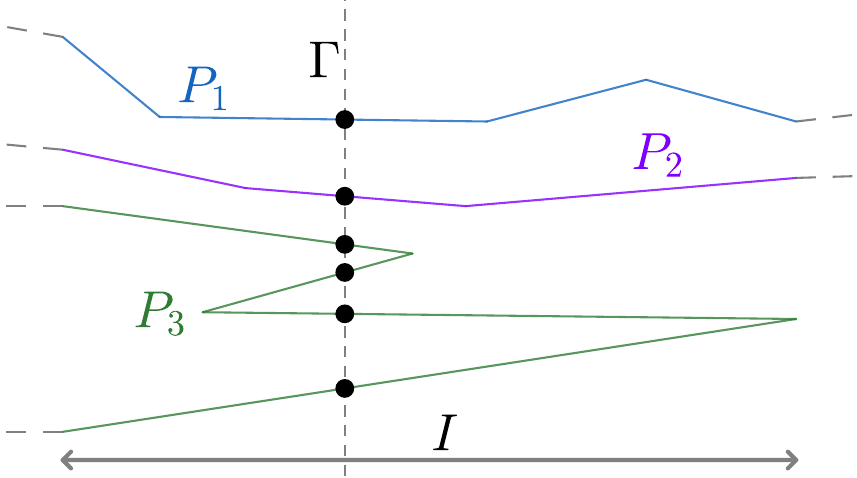}
    \caption{
    In range $I$, $P_1$ is above $P_2, P_3$, and $P_2$ is above $P_3$}
    \label{fig:above-below-paths}
\end{figure}

\begin{lemma}\label{obs:change-in-x} 
For any distinct points $p_j$ and $p_{j'}$ on $\OPT$, following $\OPT$ according to its orientation, either the path from $p_j$ to $p_{j'}$ or the path from $p_{j'}$ to $p_j$ must contain at least one reflection point. 
\end{lemma}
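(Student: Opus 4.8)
The plan is to argue by contradiction. Suppose that, following the orientation of $\OPT$, neither the sub-path $P_1$ from $p_j$ to $p_{j'}$ nor the sub-path $P_2$ from $p_{j'}$ to $p_j$ contains a reflection point. Then the interior of each of $P_1,P_2$ consists only of straight points and break points, so by Lemma \ref{obs:shadow-increase} each of $P_1,P_2$ has shadow $1$ along its entire length: every vertical line meets each of them at most once. Since there are no vertical legs, the $x$-coordinate is strictly monotone on each leg, and (again by Lemma \ref{obs:shadow-increase}) it can reverse direction only at a reflection point; hence $x$ is strictly monotone along each of $P_1$ and $P_2$.

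Next I would eliminate the degenerate sub-case $x(p_j)=x(p_{j'})$: a path on which $x$ is monotone and whose endpoints have equal $x$-coordinate has constant $x$, i.e.\ it is a single vertical leg, contradicting the absence of vertical legs. So $x(p_j)\ne x(p_{j'})$; assume $x(p_j)<x(p_{j'})$ without loss of generality. Then $x$ increases along $P_1$ and decreases along $P_2$, so both arcs --- and hence all of $\OPT=P_1\cup P_2$ --- are confined to the vertical strip $x(p_j)\le x\le x(p_{j'})$. Consequently $p_j$ and $p_{j'}$ are the unique leftmost and rightmost points of $\OPT$; the two legs of $\OPT$ at $p_j$ both lie in the half-plane $x\ge x(p_j)$, so by Observation \ref{obs:legs} $p_j$ is a right reflection point and, symmetrically, $p_{j'}$ a left reflection point, and (since shadow $\le 2$ everywhere, cannot vanish inside a connected strip, and changes by $\pm 2$ only at reflection points) these are the only reflection points of $\OPT$. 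In other words $\OPT$ would be a bitonic tour: an upper $x$-monotone chain $A$ lying above a lower $x$-monotone chain $B$, the two chains meeting only at $p_j$ and $p_{j'}$ and not crossing (as $\OPT$ is non-self-crossing), with shadow exactly $2$.

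The remaining task --- and the main obstacle --- is to rule out such a bitonic optimum, for which I would invoke the standing assumptions that $H>3$ and that no single horizontal line meets all the input segments. Let $s^{\uparrow}$ (resp.\ $s^{\downarrow}$) be an input segment whose top tip (resp.\ bottom tip) attains the top $y_{\max}$ (resp.\ bottom $y_{\min}$) of the bounding box, so $y_{\max}-y_{\min}=H>3$. Since $\OPT$ visits $s^{\uparrow}$ and meets the vertical line $x=x(s^{\uparrow})$ at most twice, its higher crossing there lies on $A$ at height at least $y_{\max}-1$; symmetrically, at $x=x(s^{\downarrow})$ the chain $B$ dips to height at most $y_{\min}+1$. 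Exploiting this forced vertical spread of $A$ above $B$, together with the facts that $A$ and $B$ share the endpoints $p_j,p_{j'}$ and that every break point of $A$ or $B$ sits at a tip of some segment (Observation \ref{obs:legs}), I would exhibit a local rerouting of $\OPT$ that strictly shortens it, contradicting optimality. I expect this final rerouting argument --- genuinely excluding the bitonic optimum --- to be where the real work lies; the reduction to the bitonic case in the first two paragraphs is routine. (Alternatively, as the introductory case study indicates, a bitonic $\OPT$ can be recovered exactly by a simple left-to-right dynamic program, so one may equivalently regard the lemma as concerning the remaining, non-bitonic instances.)
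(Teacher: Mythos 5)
Your first two paragraphs reproduce the paper's own argument (a sub-path with no interior reflection has monotone $x$-coordinate) and in fact refine it, deducing that the contradiction hypothesis forces $\OPT$ to be a bitonic tour whose only two reflection points are $p_j$ and $p_{j'}$ themselves. The paper's published proof is essentially your first paragraph followed by the bare assertion that $P_2$ must decrease in $x$ and hence contain a reflection; it does not pause over the edge case you isolate, namely that the two turnarounds may land exactly at $p_j$ and $p_{j'}$, which are excluded from ``containing'' by the text's definition of that word.

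Your paragraph three is where the genuine gap lies, and it cannot be filled the way you propose. Bitonic optima do exist under the standing assumptions $H>3$ and no single covering horizontal line: take segments at $x=0,1,2,3$ with $y$-ranges alternating between $[0,1]$ and $[5,6]$; the optimum visits $(0,1),(2,1),(3,5),(1,5)$, is bitonic with cost $4+2\sqrt{17}$, and its only two reflection points are $(0,1)$ and $(3,5)$. For that pair of $p_j,p_{j'}$ the lemma as literally stated is false, and no rerouting argument can exclude the configuration --- there is nothing to shorten. The paper nevertheless stays sound because every downstream invocation (Lemma \ref{obs:consecutive-reflections}, Claim \ref{claim:included-reflections}) only ever uses the unconditional fact from your first two paragraphs --- a sub-path with no interior reflection is $x$-monotone, equivalently a sub-path with non-monotone $x$ contains an interior reflection. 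The correct repair is to state the lemma as that fact, or to add the hypothesis that at least one of $p_j,p_{j'}$ is not itself a reflection point; trying to rule out bitonic optima is a dead end.
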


\begin{proof}
    Without loss of generality, assume $ x(p_j) < x(p_{j'})$, and following the orientation of $\OPT$ starting from $p_j$, suppose the path from $p_j$ to $p_{j'}$  does not contain any reflection points (or the statement of lemma holds). According to Observation \ref{obs:legs}, the $x$-coordinate of points on 
    $\OPT$ will not decrease if and only if the path contains only straight points or break points. The path from $p_{j'}$ to $p_j$ has to have a decrease in the $ x $-coordinate, due to $ x(p_{j'}) > x(p_j)$, which is only possible if there is a reflection in this part of the path.
\end{proof}

\begin{lemma}\label{obs:reflection-paths-above-below}
    Let $r_j$ be any reflection point on $\OPT$, say it is a right reflection point, with incident legs $\ell_i, \ell_{i + 1}$. Without loss of generality, assume that $\ell_i$ is above $\ell_{i + 1}$. Take any two subpaths $P_1$ and $P_2$ of $\OPT$ both starting at $r_j$ with shadow of $1$ such that $\ell_{i}\in P_1$ and $\ell_{i + 1}\in P_2$. 
    If there is a vertical line $\Gamma$ with $x(\Gamma) > x(r_j)$ crossing both $P_1$ and $P_2$, then $P_1$ will be above $P_2$ in range $I = [x(r_j), x(\Gamma)]$.
\end{lemma}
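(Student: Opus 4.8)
The plan is to argue by contradiction: suppose $P_1$ is not above $P_2$ on all of $I = [x(r_j), x(\Gamma)]$. Since both $P_1$ and $P_2$ start at the common point $r_j$, and just to the right of $r_j$ the leg $\ell_i$ of $P_1$ lies strictly above the leg $\ell_{i+1}$ of $P_2$ (because $r_j$ is a right reflection point with $\ell_i$ on top of $\ell_{i+1}$, and both legs emanate to the right of the segment carrying $r_j$), there is an initial sub-range $[x(r_j), x(r_j)+\delta]$ on which $P_1$ is strictly above $P_2$. If $P_1$ fails to be above $P_2$ throughout $I$, then by a continuity/intermediate-value argument there is some smallest $x^\ast \in (x(r_j), x(\Gamma)]$ at which the top-most intersections of the sweep line with $P_1$ and $P_2$ coincide — i.e. $P_1$ and $P_2$ meet at a point $q$ with $x(q) = x^\ast$. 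Here I would lean on Lemma \ref{obs:shadow-increase}: since $P_1$ and $P_2$ each have shadow $1$, each is "monotone" in $x$ — following either path from $r_j$ the $x$-coordinate is non-decreasing (it contains only straight points and break points, no reflection points), so the sweep line meets each of $P_1, P_2$ in exactly one leg for every $x \in (x(r_j), x^\ast)$, and the notion of "the intersection of $\Gamma$ with $P_i$" is unambiguous. So on $(x(r_j), x^\ast)$ the paths are disjoint with $P_1$ strictly above $P_2$, and at $x^\ast$ they touch.

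Next I would derive a contradiction from the existence of such a meeting point $q$. The key observation is that $P_1$ and $P_2$ are both subpaths of $\OPT$, and $\OPT$ is non-self-crossing (established in the paragraph before Definition \ref{def:shadow}). Two subpaths of a non-self-crossing closed tour that share the endpoint $r_j$ cannot share another point $q \ne r_j$ unless one is an initial segment of the other or they cross — either of which contradicts the setup (they are distinct subpaths realizing the two distinct legs $\ell_i \ne \ell_{i+1}$ at $r_j$, and a crossing contradicts non-self-crossing). More carefully: at $q$ the two monotone paths touch from the "$P_1$ above, $P_2$ below" configuration; because their $x$-coordinates are both strictly increasing through $q$ and they were strictly separated vertically just to the left of $q$, immediately to the right of $q$ they must either cross (so $P_2$ goes above $P_1$) or separate again in the same order — but the latter means $q$ is a point where $P_1 = P_2$ with the same local structure, forcing the two paths to coincide on a neighbourhood of $q$, hence (by following $\OPT$) to coincide globally, contradicting $\ell_i \ne \ell_{i+1}$. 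In the former case the crossing of two subpaths of $\OPT$ contradicts that $\OPT$ is non-self-crossing. Either way we get a contradiction, so no such $x^\ast$ exists and $P_1$ is above $P_2$ throughout $[x(r_j), x(\Gamma)]$. (The hypothesis that $\Gamma$ crosses both $P_1$ and $P_2$ guarantees $[x(r_j), x(\Gamma)]$ is inside the common $x$-extent of both paths, so "above" is well-defined on all of $I$.)

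The step I expect to be the main obstacle is making the "continuity / there is a first meeting point" argument fully rigorous, since $P_1$ and $P_2$ are polygonal paths and one must handle the case where they share a whole sub-leg rather than a single transversal point, and the case where a break point or a tip of a segment lands exactly on the other path. In all these degenerate cases the resolution is the same — a shared point together with $x$-monotonicity of both paths and the non-self-crossing property of $\OPT$ forces the two subpaths to agree, contradicting $\ell_i \ne \ell_{i+1}$ — but spelling this out cleanly (perhaps by first perturbing to general position, or by arguing directly about the cyclic order in which $\OPT$ visits points) is where the real work lies. A convenient alternative I would consider is to use Lemma \ref{obs:change-in-x}: any second common point $q$ would give two $x$-monotone, reflection-free subpaths between $r_j$ and $q$, one along $P_1$ and one along $P_2$; concatenating one with the reverse of the other yields a closed sub-walk of $\OPT$ with no reflection point, which by Lemma \ref{obs:shadow-increase} has shadow $1$ everywhere — impossible for a closed curve of positive extent unless it is degenerate, again forcing $P_1$ and $P_2$ to coincide near $r_j$.
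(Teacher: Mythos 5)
Your proposal is correct and takes essentially the same route as the paper's proof: both argue by contradiction, use Lemma \ref{obs:shadow-increase} to deduce that $P_1,P_2$ are reflection-free hence $x$-monotone, and then conclude from the order-swap (above at $r_j$, below at $\Gamma'$) that the two subpaths must cross, contradicting that $\OPT$ is non-self-crossing. The paper's version is more terse (it goes straight from the order swap to "they must intersect" without isolating a first meeting point $x^\ast$ or separately treating tangency), whereas you spend extra care on the degenerate touching cases; those cases do not actually threaten the conclusion under Definition \ref{def:path-above-below} (a shared top-most point is still "on $P_1$"), so the extra machinery is not needed, but it does not hurt.
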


\begin{proof}
    Note that for any vertical line $\Gamma'$ with $x(\Gamma') \in I$, both $P_1$ and $P_2$ will intersect with it.
    Now assume the contrary, that $P_1$ is not above $P_2$. This means for some vertical line $\Gamma'$ with $x(\Gamma') \in I$, there are points $p_1$ and $p_2$ on $\Gamma'$ such that $p_1\in P_1$, $p_2\in P_2$, and $y(p_2) > y(p_1)$. Since both $P_1$ and $P_2$ have a shadow of $1$, then using Lemma \ref{obs:shadow-increase}, we get that neither of them have a reflection point; this implies that the value of the $x$-coordinate on both $P_1$ and $P_2$ is monotone (or else there must be a reflection point). 
    Since $P_1$ travels from $r_j$ to $p_1$ and $P_2$ travels from $r_j$ to $p_2$, both are crossing the same vertical lines (at $x=x(r_j)$ and $\Gamma'$). 
    Now, because $\ell_i$ is above $\ell_{i + 1}$ but $p_1$ is below $p_2$, we conclude that $P_1$ and $P_2$ will intersect with each other in the area between the vertical lines $\Gamma'$ and $x = x(r_j)$. This is a contradiction, hence the lemma.
\end{proof}

\begin{lemma}\label{obs:consecutive-reflections}
    Among the set of points visited by $\OPT$ following its orientation, suppose $p_{j},p_{j'}$, $j < j'$ (on segments $s_i$, $s_{i'}$, respectively) are two consecutive reflection points (i.e. no other reflection point exists in between them). Then $p_j$ and 
    $p_{j'}$ cannot be both left or both right reflection points. Furthermore, if $s_i$ is to the left of $s_{i'}$ then $p_j$ is a
    right reflection and $p_{j'}$ is a left reflection (the opposite holds if $s_{i'}$ is to the left of $s_i$).
\end{lemma}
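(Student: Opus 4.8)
The plan is to analyse the subpath $P$ of $\OPT$ that runs, along the orientation, from $p_j$ to $p_{j'}$. Since $p_j$ and $p_{j'}$ are \emph{consecutive} reflection points, every interior vertex of $P$ is a straight point or a break point, so by Observation \ref{obs:legs} at each such vertex the two incident legs lie on opposite sides (one left, one right) of the corresponding segment. First I would use this to show that the $x$-coordinate is strictly monotone along $P$: if an interior vertex $p_k$ of $P$ lies on a segment $s_{i_k}$, then exactly one of its two legs is contained in $\{x\le x(s_{i_k})\}$ and the other in $\{x\ge x(s_{i_k})\}$, which — recalling $x(p_k)=x(s_{i_k})$ and that $p_{k-1},p_{k+1}$ lie on other segments, hence have distinct $x$-coordinates — forces $x(p_k)-x(p_{k-1})$ and $x(p_{k+1})-x(p_k)$ to have the same nonzero sign. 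Propagating this sign along $P$, and using that no leg is vertical (so $x$ is strictly monotone on each individual leg), gives that $x$ is strictly monotone along all of $P$. (Equivalently: $P$ contains no reflection point, so by Lemma \ref{obs:shadow-increase} it has shadow $1$ throughout, which is exactly $x$-monotonicity.)

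Next I would pin down the direction of this monotonicity from the type of $p_j$. The leg $\ell_{j+1}$ leaving $p_j$ is one of the two legs incident to the reflection point $p_j$, hence lies entirely on one side of $s_i$. If $p_j$ is a right reflection, then $\ell_{j+1}\subseteq\{x\ge x(s_i)\}$, so $x$ increases as we leave $p_j$ and therefore increases along all of $P$; consequently $x(s_i)=x(p_j)<x(p_{j'})=x(s_{i'})$, i.e.\ $s_i$ is to the left of $s_{i'}$. Symmetrically, if $p_j$ is a left reflection then $x$ decreases along $P$ and $s_{i'}$ is to the left of $s_i$. In particular, whether $p_j$ is a left or a right reflection is determined by which of $s_i,s_{i'}$ is the left one, and since the two segments have distinct $x$-coordinates this is a genuine exhaustive dichotomy.

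It then remains to read off the type of $p_{j'}$. Staying in the case where $p_j$ is a right reflection (so $x$ is increasing along $P$): the last leg $\ell_{j'}$ of $P$ arrives at $p_{j'}\in s_{i'}$ with increasing $x$-coordinate, hence $\ell_{j'}\subseteq\{x\le x(s_{i'})\}$, i.e.\ $\ell_{j'}$ is a leg of $p_{j'}$ lying to the left of $s_{i'}$; since $p_{j'}$ is a reflection point, its two incident legs are on the same side, so $p_{j'}$ is a left reflection. The other case is symmetric, giving $p_{j'}$ a right reflection. Combining the three steps: exactly one of $p_j,p_{j'}$ is a left reflection and the other a right reflection, which is the first assertion; and since ``$p_j$ is a right reflection'' is equivalent to ``$s_i$ is to the left of $s_{i'}$'', the ``furthermore'' part follows at once, with the opposite arrangement handled by the symmetric case.

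I do not expect a serious obstacle here. The only slightly delicate point is the first step — turning ``no reflection point strictly between $p_j$ and $p_{j'}$'' into genuine $x$-monotonicity of the connecting subpath $P$ — but with Observation \ref{obs:legs} in hand this is just a short sign-propagation argument (or a one-line appeal to Lemma \ref{obs:shadow-increase}). Everything after that is bookkeeping with Definition \ref{def:right-left} and the definitions of left/right reflection points, together with the fact that, because all segments have distinct $x$-coordinates, ``$s_i$ to the left of $s_{i'}$'' and ``$s_{i'}$ to the left of $s_i$'' form an exhaustive dichotomy matching the dichotomy on the type of $p_j$.
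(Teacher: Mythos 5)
Your proof is correct, and it rests on the same key fact the paper uses: the subpath $P$ of $\OPT$ between two consecutive reflection points contains only straight/break points, so $x$ is monotone along $P$. The paper's proof is organized as a case analysis by contradiction (assume both $p_j,p_{j'}$ are right reflections, trace the legs to exhibit a non-monotone $x$-pattern, invoke a hidden reflection, contradiction; then repeat for the other forbidden combinations), whereas you run the argument forward: establish monotonicity once, read off the direction of monotonicity from the side of $p_j$'s legs, and then read off the side of $p_{j'}$ from the direction of $\ell_{j'}$ together with the fact that a reflection point's two legs lie on the same side. This is the same geometric content presented as a direct derivation rather than as three separate contradictions, and the dichotomy you note (distinct $x$-coordinates force exactly one of ``$s_i$ left of $s_{i'}$'' / ``$s_{i'}$ left of $s_i$'') closes the biconditional cleanly. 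No gap.
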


\begin{proof}
    Without loss of generality, assume that $s_i$ is to the left of $s_{i'}$, meaning $x(p_j) < x(p_{j'})$.
    By way of contradiction, first suppose both $p_j$ and $p_{j'}$ are right reflection points, i.e. the two legs incident to $p_j$ ($\ell_j,\ell_{j+1}$)
    and the two legs incident to $p_{j'}$ ($\ell_{j'},\ell_{j'+1}$) are on the right of $s_i$ and right of $s_{i'}$, respectively.
    This means following the orientation on $\OPT$, along $\ell_j$ we have a decrease in $x$-coordinate, then
    following $\ell_{j+1}$ have an increase, then again following $\ell_{j'}$ have a decrease and following $\ell_{j'+1}$ have an increase. 
    So the value of the $x$-coordinate isn't monotone in the subpath of $\OPT$ from $p_j$ to $p_{j'}$ (excluding these two points themselves), because the legs $\ell_{j + 1}$ and $\ell_{j'}$ are visited in this path in this order. Similar to the proof in Lemma \ref{obs:change-in-x}, we see that this is only possible if there is a reflection point on this subpath, which contradicts the assumption that $p_j,p_{j'}$ are consecutive.
    Similar argument implies that we cannot have both $p_j, p_{j'}$ being left reflections
    or $p_j$ being a left reflection and $p_{j'}$ being a right reflection; otherwise, the leg after visiting $p_j$ will have decreasing $x$-value while it will have to visit $p_{j'}$ eventually, which has a larger $x$-value. So the path from $p_j$ to $p_{j'}$ must include another reflection point, again a contradiction.
\end{proof}

\begin{corollary}\label{cor:alter}
    Consecutive reflection points in $\OPT$ alternate between left reflections and right reflections.
\end{corollary}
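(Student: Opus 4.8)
The plan is to deduce the corollary almost immediately from Lemma \ref{obs:consecutive-reflections}, after first pinning down that ``left reflection'' and ``right reflection'' form a genuine dichotomy. By the third bullet of Observation \ref{obs:legs}, every reflection point $p_j$ of $\OPT$ has both incident legs $\ell_j,\ell_{j+1}$ on the same side of its segment $s_i$ --- either both in the halfplane $x\le x(s_i)$ or both in the halfplane $x\ge x(s_i)$ --- and since there are no vertical legs these two options are mutually exclusive. Hence each reflection point is labelled \emph{exactly one} of ``left reflection'' or ``right reflection''.

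Next I would enumerate the reflection points of $\OPT$ in the cyclic order in which they are encountered along the fixed orientation of $\OPT$, say $r_1, r_2, \dots, r_k$ (indices mod $k$), so that for each $i$ the points $r_i$ and $r_{i+1}$ are consecutive reflection points in the sense of Lemma \ref{obs:consecutive-reflections} (no reflection point lies strictly between them on $\OPT$). Applying that lemma to the pair $(r_i, r_{i+1})$ tells us they are \emph{not} both left reflections and \emph{not} both right reflections; combined with the dichotomy from the previous paragraph, exactly one of $r_i, r_{i+1}$ is a left reflection and the other is a right reflection. Since this holds for every consecutive pair around the cycle, the labels alternate along $\OPT$, which is precisely the statement of the corollary (and, as a byproduct of alternation being consistent around a cycle, $k$ is even).

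There is no real obstacle here: the only points that need care are (i) confirming the left/right labelling is exhaustive and exclusive for reflection points, which is handled by Observation \ref{obs:legs} plus the absence of vertical legs, and (ii) noting that ``consecutive'' in Lemma \ref{obs:consecutive-reflections} matches ``adjacent in the cyclic list of reflection points'', so the lemma applies to every neighbouring pair $r_i,r_{i+1}$ and the alternation propagates around the whole tour.
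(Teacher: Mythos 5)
Your proof is correct and takes essentially the same route as the paper: the corollary is an immediate consequence of Lemma~\ref{obs:consecutive-reflections} applied to every neighbouring pair in the cyclic list of reflection points, and you correctly flag the (easy but necessary) dichotomy that each reflection point is unambiguously left or right. The only minor caveat is the parenthetical claim that the number of reflection points must be even: alternation around a cycle does force this, but it is worth noting it is not needed for the corollary and relies on the tour actually closing up, which the cyclic indexing already handles.
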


\begin{lemma}\label{obs:reflection}
If a segment $s_i$ has a reflection point $p_j$, then it cannot have any other intersections with $\OPT$ (i.e. no other point $p_j'$ can be on $s_i$).
\end{lemma}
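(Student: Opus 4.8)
The plan is to prove this by a local exchange (optimality) argument — it does not even require the non-crossing property. First I would normalize $\OPT$ so that every segment $s_k$ of the instance carries at least one vertex $p_m$ of $\OPT$: whenever $\OPT$ meets some segment only along the interior of a leg, pick one such point and declare it a vertex, splitting that leg into two collinear legs. This changes neither the curve traced by $\OPT$ nor its length, so the normalized tour — which I keep calling $\OPT$, with vertices $p_1,\dots,p_\sigma$ — is still optimal, and $p_j$ is still a reflection point on $s_i$ (its two incident legs are sub-segments of the original ones, hence still on the same side of $s_i$). The same splitting move reduces the lemma to a statement about vertices: if $\OPT$ touched $s_i$ at some point $q\ne p_j$ that is not a vertex, then $q$ would lie in the interior of a leg $\ell_k$ with $k\notin\{j,j+1\}$ — because the legs $\ell_j,\ell_{j+1}$, lying in the closed half-plane $\{x\ge x(s_i)\}$ and being non-vertical, meet $s_i$ only at $p_j$ — and we could split $\ell_k$ at $q$. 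So it suffices to derive a contradiction from the assumption that $s_i$ carries the reflection point $p_j$ \emph{and} some other vertex $p_{j'}$.

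Next I would set up the perturbation. Without loss of generality $p_j$ is a right reflection point (Observation~\ref{obs:legs}), so $\ell_j=p_{j-1}p_j$ and $\ell_{j+1}=p_jp_{j+1}$ lie in $\{x\ge x(s_i)\}$. Since consecutive vertices lie on distinct segments and the segments have distinct $x$-coordinates, neither $p_{j-1}$ nor $p_{j+1}$ lies on $s_i$; together with $\ell_j,\ell_{j+1}\subseteq\{x\ge x(s_i)\}$ this forces $x(p_{j-1}),x(p_{j+1})>x(s_i)=x(p_j)$. Hence the chord $\overline{p_{j-1}p_{j+1}}$ lies strictly to the right of the vertical line through $s_i$, so $p_j$ is not on it and, by the triangle inequality, $||p_{j-1}p_j||+||p_jp_{j+1}||>||p_{j-1}p_{j+1}||$ strictly.

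Finally I would move $p_j$ and check feasibility. Relocate $p_j$ to the point $c$ of $\overline{p_{j-1}p_{j+1}}$ nearest to it, keeping every other vertex fixed, and call the result $\OPT^{*}$; then $||p_{j-1}c||+||cp_{j+1}||=||p_{j-1}p_{j+1}||<||p_{j-1}p_j||+||p_jp_{j+1}||$, so $\OPT^{*}$ is strictly shorter than $\OPT$. It is still feasible: each segment $s_k$ contains some vertex $p_m$ of $\OPT$; if $m\ne j$ then $p_m$ is unchanged in $\OPT^{*}$; and if $m=j$ then $s_k=s_i$ (again by distinctness of $x$-coordinates), which still contains the unmoved vertex $p_{j'}$. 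This contradicts the optimality of $\OPT$, so no vertex — and hence no point — of $\OPT$ other than $p_j$ lies on $s_i$. I expect the only delicate point to be this feasibility check: the reason moving $p_j$ cannot uncover another segment is precisely the normalization of the first paragraph (every segment carries a vertex) together with the distinct-$x$-coordinate assumption, which guarantees that $p_j$ lies on no segment except $s_i$.
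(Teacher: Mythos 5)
Your proof is correct and uses essentially the same shortcut argument as the paper: both bypass $p_j$ by the triangle inequality and rely on the other intersection point $p_{j'}$ to keep $s_i$ covered. The only difference is bookkeeping — the paper shortcuts an arbitrarily small $\delta$-neighborhood of $p_j$ (picking points $R^-\in\ell_j$, $R^+\in\ell_{j+1}$ at distance $\delta$) so that nothing else is disturbed, whereas you first normalize the tour so every covered segment carries a vertex and then shortcut the entire chord $\overline{p_{j-1}p_{j+1}}$; both devices for ensuring feasibility are valid.
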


\begin{proof}
    Assume otherwise, that a segment $s_i$ contains a reflection point $p_j$ with legs $ \ell_j$ and $\ell_{j+1}$, and another point $p_{j'}$ on $s_i$. We can by-pass $p_j$ locally and reduce the length of $\OPT$ which would be a contradiction.
    More specifically,
    let $ R^-\in \ell_j $ and $ R^+\in \ell_{j+1}$ be points on the legs that have a distance of $\delta>0$ from $p_j$.
    By replacing the subpath $R^-\to p_j\to R^+ $ with $ R^-\to R^+ $, the total cost of $\OPT $ will decrease, which gives us a contradiction. 
\end{proof}
\begin{figure}[H]
    \centering
    \includegraphics[width=.2\textwidth]{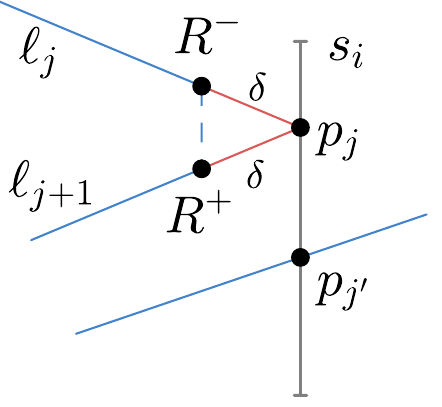}
    \caption{There can't be another $p_{j'}\in s_i$ if $p_j\in s_i$ is a reflection}
    \label{fig:reflection-shortcut}
\end{figure}

\label{sec:bounded-shadow}
We decompose the problem into horizontal {\em strips} by drawing some parallel horizontal lines, which we call {\em cover-lines}.
Starting from the bottom tip of the top-most segment, draw parallel horizontal lines that are 1-unit apart, these are our {\em cover-lines}. Each input segment is considered "covered" by the top-most (i.e. the first) cover-line that intersects it. Let's label these cover-lines by $C_1,C_2,\ldots$.

\begin{definition}[strip, top/bottom segments]\label{def:strip/top-bottom}
	The region of the plane between two consecutive cover-lines $C_\tau,C_{\tau+1}$ is called a \textbf{strip} and denoted by $S_\tau$. 
	We consider $C_\tau,C_{\tau+1}$ part of $S_\tau$ as well.
	The input line segments that are intersecting the top cover-line of $S_\tau$ ($C_\tau$) are called \textbf{top segments,} and the segments covered by the bottom cover-line ($C_{\tau + 1}$) are called \textbf{bottom segments} of the strip.  
\end{definition}

We show the near-optimum solution guaranteed by Theorem \ref{thm:onestrip-shadow} has more structural properties that will be defined later. Note that once we prove that theorem, it follows that if we restrict a solution to $h>1$ many strips, then the shadow is bounded by $O(h/\eps)$ as well.

For now, let us focus on an (arbitrary) strip $S_\tau$ and imagine we cut the plane along $C_\tau,C_{\tau+1}$ and look at the pieces of line segments of the instance left inside this strip, along with pieces of $\OPT$ inside $S_\tau$. 
Each top segment is now a partial segment in $S_\tau$ that has one end on $C_\tau$; and each bottom segment has one end on $C_{\tau+1}$.
Let $\OPT_\tau$ be the restriction of $\OPT$ to $S_\tau$. 
For each leg of $\OPT$ that intersects $C_\tau$ or $C_{\tau+1}$, we add a dummy point at the intersection(s) of that leg with $C_\tau$ and $C_{\tau+1}$ (so that the components of $ \OPT_\tau $ become consistent with our definition of legs).
So $\OPT_\tau$ can be seen as a collection of subpaths within $S_\tau$ (possibly along $C_\tau$ or $C_{\tau+1}$). Following the orientation of $\OPT$, each subpath of $\OPT_\tau$ is formed by it intersecting with $S_\tau$, traveling within $S_\tau$ (possibly along one of the cover-lines), and then exiting $S_\tau$. Using the dummy points added, each path in $\OPT_\tau$ is a subpath of $\OPT$ that is between two points on cover-lines (these are called the entry points of the path with the strip. A formal definition is provided below).

Recall Definition \ref{def:path-above-below} of paths being above or below each other. Having the definition of top/bottom segments, we get the following:

\begin{observation}\label{cor:path-above-top-segment}
    Consider $\OPT_\tau$, the restriction of $\OPT$ to any strip $S_\tau$. Take any two subpaths of $\OPT_\tau$ like $P_1$ and $P_2$ such that $P_1$ is above $P_2$ in some range $I$. If $s_t$ is any top segment in range $I$ that $P_2$ intersects with, then $P_1$ will also intersect with it. Similar statement holds for bottom segments if $P_2$ is below $P_1$.
\end{observation}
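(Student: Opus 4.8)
The plan is to prove Observation \ref{cor:path-above-top-segment} directly from the definitions of ``above/below'' (Definition \ref{def:path-above-below}) and of top/bottom segments (Definition \ref{def:strip/top-bottom}), using the simple geometric fact that a top segment $s_t$ of the strip $S_\tau$ reaches all the way up to the top cover-line $C_\tau$. First I would fix the range $I$ on which $P_1$ is above $P_2$, and fix a top segment $s_t$ with $x(s_t)\in I$ such that $P_2$ intersects $s_t$ at some point $q$. Let $\Gamma$ be the vertical line through $x(s_t)$; since $x(\Gamma)\in I$ and $P_2$ meets $\Gamma$ (at $q$), the definition of $P_1$ being above $P_2$ in $I$ gives a point $p\in P_1$ on $\Gamma$ with $y(p)\ge y(q)$ (in fact $p$ is the top-most intersection of $\Gamma$ with $P_1\cup P_2$, so $y(p)\ge y(q)$).

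Next I would pin down the vertical extent of $s_t$ inside the strip. Because $s_t$ is a top segment of $S_\tau$, it intersects the upper cover-line $C_\tau$, so within $S_\tau$ it is the vertical sub-segment of $\Gamma$ from its bottom tip up to the point $(x(s_t), y(C_\tau))$ on $C_\tau$; in particular every $y$-value between $y(q)$ and $y(C_\tau)$ lies on $s_t$. The key step is then to observe that the point $p\in P_1$ on $\Gamma$ satisfies $y(q)\le y(p)\le y(C_\tau)$: the lower bound is the ``above'' inequality just obtained, and the upper bound holds because $P_1\subseteq \OPT_\tau$ lives inside $S_\tau$, hence never rises above $C_\tau$. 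Therefore $p$ lies on $s_t$, which is exactly the claim that $P_1$ intersects $s_t$. The argument for bottom segments is completely symmetric: a bottom segment reaches down to the lower cover-line $C_{\tau+1}$, so if $P_1$ is below $P_2$ and $P_1$ meets a bottom segment at a point $q'$, the bottom-most intersection of the vertical line through $q'$ with $P_1\cup P_2$ is a point of $P_2$ with $y$-coordinate in $[y(C_{\tau+1}), y(q')]$, which therefore lies on that bottom segment.

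I do not anticipate a serious obstacle here; the only point requiring a little care is making explicit that subpaths of $\OPT_\tau$ are confined to the closed strip $S_\tau$ (so their $y$-coordinates are bounded between those of $C_{\tau+1}$ and $C_\tau$), which is immediate from the construction of $\OPT_\tau$ together with the dummy points added at crossings of cover-lines. One should also note the edge case where the relevant intersection point of $P_2$ with $s_t$ is itself on a cover-line (e.g. a dummy point at $C_\tau$, or the tip of $s_t$ on $C_\tau$); in that case $y(q)=y(C_\tau)$ forces $y(p)=y(C_\tau)$ as well, and $p$ still lies on $s_t$ (its top tip), so the conclusion holds. With these observations the proof is a short chain of inequalities, so I would present it in two or three sentences rather than as a long argument.
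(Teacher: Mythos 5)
Your proposal is correct. The paper states this as an Observation without supplying a proof, and your derivation is exactly the short argument that the authors evidently regard as immediate from Definitions \ref{def:path-above-below} and \ref{def:strip/top-bottom}: the ``above'' condition forces the top-most crossing of $\Gamma=\{x=x(s_t)\}$ to be a point $p\in P_1$ with $y(p)\ge y(q)$, and since $P_1\subseteq\OPT_\tau$ is confined to $S_\tau$ and $s_t$ reaches from $y(q)$ up to $C_\tau$ within the strip, $p$ must lie on $s_t$. Your handling of the boundary case $y(q)=y(C_\tau)$ is also fine, and the bottom-segment case is indeed symmetric. No gap.
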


\begin{definition}[entry points, loops, ladders]\label{def:loops/ladders}
	For each subpath $P_j$ of $\OPT_\tau$, let $e_j$ and $o_j$ be the first and last intersections of $P_j$ with the interior of $S_\tau$. Points $e_j$ and $o_j$ are called the \textbf{entry points} of $P_j$.\\
	If both $e_j$ and $o_j$ lie on the same cover-line (either $C_\tau$ or $C_{\tau+ 1}$), then $P_j$ is called a \textbf{loop}, otherwise it's called a \textbf{ladder}.
	If a subpath of $\OPT_\tau$ enters $S_\tau$ at $e_j$ on a cover-line and follows on that cover-line to point $o_j$ and exits the strip, it is a special case of loop that we refer to as a \textbf{cover-line loop}.
	
\end{definition}

\begin{figure}[H]
    \centering
    \includegraphics[width=.7\textwidth]{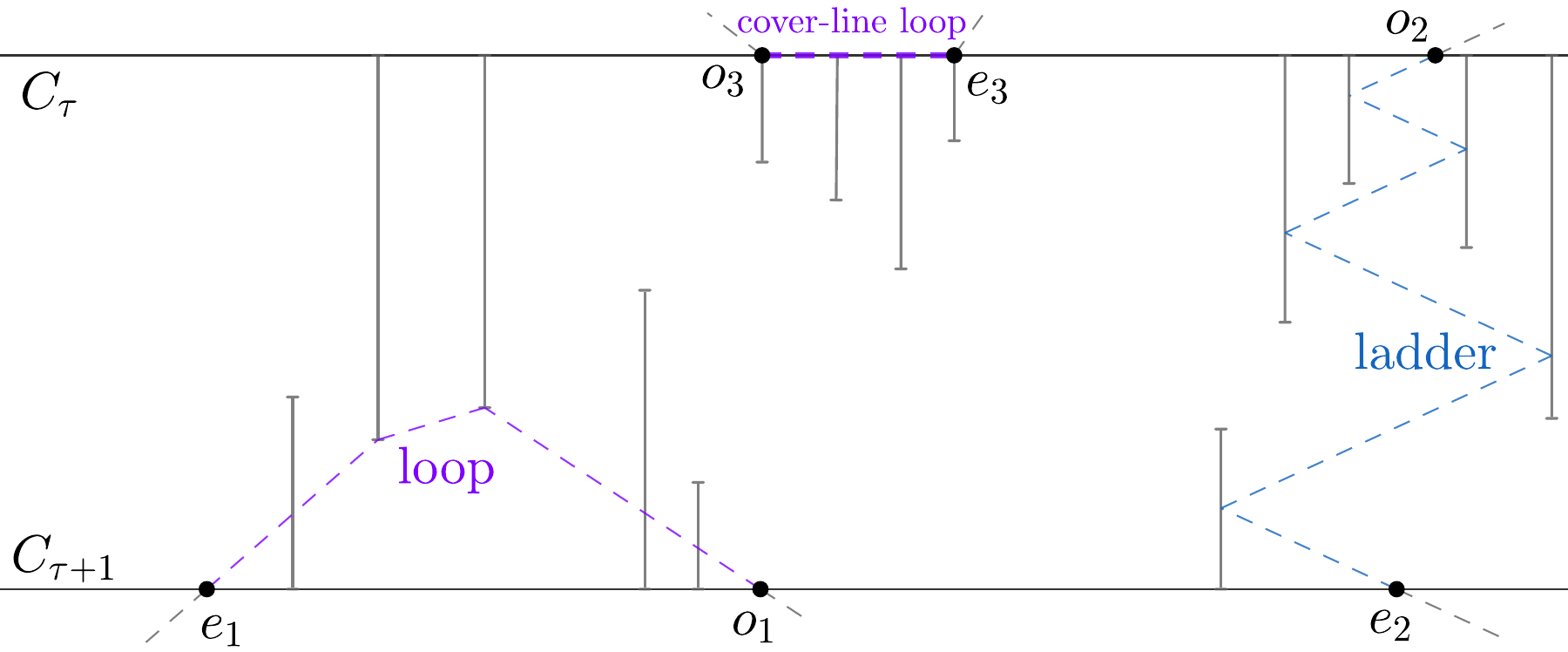}
    \caption{An example of loops and ladders in a strip $S_\tau$ (i.e. area between cover-lines $C_\tau, C_{\tau + 1}$)}
    \label{fig:loops-ladders}
\end{figure}

Since we're assuming $ H > 3 $ (see Theorem \ref{thm:exact-answer}), we get that $ \OPT $ is not limited to a single strip, and that it has to indeed enter and exit any given strip that it intersects with (i.e. there is no strip that $ \OPT $ completely lies inside it).

Note that if a path of $\OPT_\tau$ is a cover-line loop, i.e. a section of the line $C_\tau$ or $C_{\tau+1}$, then the entry points of that path must be the two end-points of this section. In other words, if for a cover-line loop of $\OPT_\tau$ the first point is $e_j$ on (say) $C_\tau$, and the last point is $o_j$ on $C_\tau$, then this subpath must be travelling straight from $e_j$ to $o_j$ without  any change of direction. This is true because otherwise, that cover-line loop would have to go back and forth on some portion on a cover-line, which is only possible if it's self-intersecting; but this is against our assumption that $ \OPT $ is not self-crossing.

The two structures defined below (called a zig-zag and a sink) are the two configurations that can cause a large shadow.
\begin{definition}[Zig-zag/Sink]\label{def:zig-zag}
	Consider any loop or ladder of  $\OPT_\tau$, call it $P$. Let $\mathcal R = r_{1},{r_{2}},\dots, {r_{t}}$ be the sequence of points of $P$ that are reflection points (indexed by the order they're visited). 
	Consider any maximal sub-sequence $r_j,r_{j+1},\ldots,r_q$ of 
	$ \mathcal R $ (with $ q \ge j + 1 $)
	such that all are ascending or all are descending reflections, and the segments
	containing them alternate between top and bottom segments; then the subpath 
	$P$ that starts at $r_j$ and ends at $r_q$ is called a \textbf{zig-zag}.\\
	If $r_j,r_{j+1},\ldots,r_q$ is a maximal sub-sequence of 
	$\mathcal R$ that are all ascending or all descending, and all belong to top segments or all belong to bottom segments; then the subpath $P$
	that starts at $r_j$ and ends at $r_q$ is called a \textbf{sink}
	(see Figure \ref{fig:fig2}).
\end{definition}
\begin{figure}[h]
 \centering
	\begin{subfigure}[t]{.45\tw}
		\centering
		\includegraphics[width=.4\textwidth]{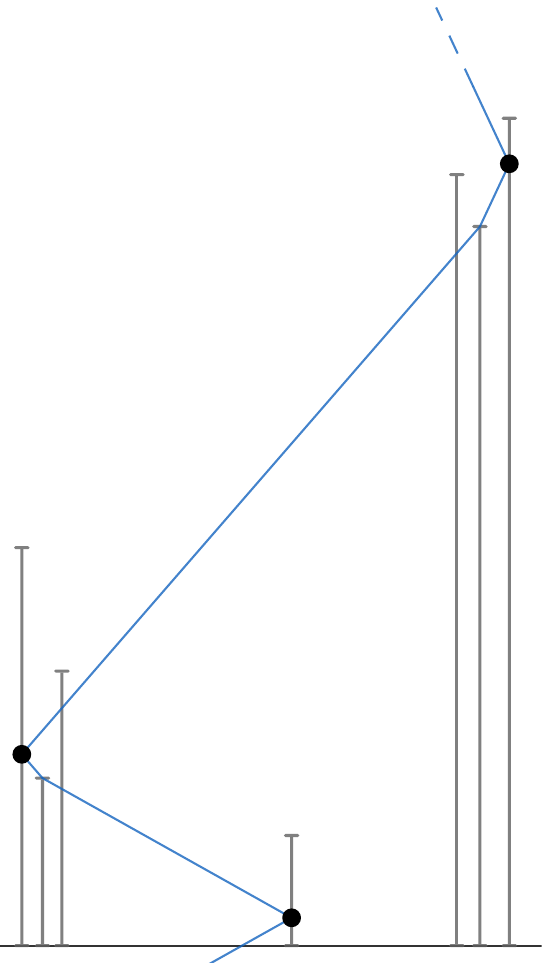}
		\caption{A sink}
	\end{subfigure}
	\begin{subfigure}[t]{.45\tw}
		\centering
		\includegraphics[width=.4\textwidth]{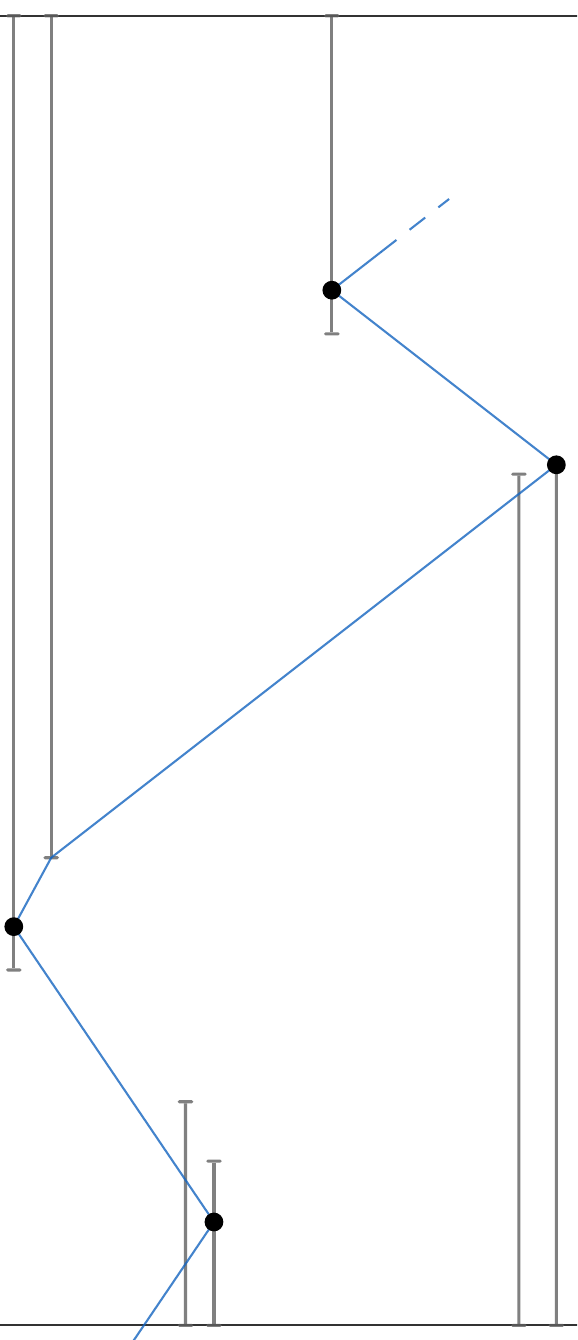}
		\subcaption{A zig-zag}
	\end{subfigure}
    
    \caption{Examples of sinks and zig-zags. The bold black dots represent the reflection points along these paths}\label{fig:fig2}
\end{figure}

Note that each zig-zag has at least two reflection points (or else it is a sink instead).
Also, using Corollary \ref{cor:alter}, the reflection points
in a zig-zag or sink alternate between left and right reflections.
The next lemma is used critically to show that very specific structures (made by zig-zags and sinks) are responsible for
having a large shadow along a ladder or loop in $\OPT_\tau$. Additionally, we can partition each ladder/loop into parts (subpaths), such that the shadow
of the ladder/loop is equal to the maximum shadow among these parts, and each part is a path that consists of up to three sinks and/or zig-zags. So the shadow of a loop/ladder is within $O(1)$ of the maximum shadow of zig-zag/sinks along it.

\begin{lemma}[\bf First major lemma]\label{lemma:zig-zag}
    Consider any strip $S_\tau$ and any ladder or loop $P\in \OPT_\tau$ within $S_\tau$. Suppose the sequence of reflection points of $P$ is $r_1,\ldots,r_q$.
    These reflection points can be partitioned into disjoint parts, say part
    $i$ consists of reflection points $r_{a_i},r_{a_i+1},\ldots,r_{a_j}$,
    where the subpath of $P$ from $r_{a_i}$ to $r_{a_j}$ is concatenation of up to three sections in the following order:
   \textbf{a)} 
    A sink
   , followed by
   \textbf{b)} 
    A zig-zag
   , followed by
   \textbf{c)} 
    A sink,
    where any of these three sections can possibly be empty, and the last reflection of a section is common with  the first reflection of the next section.
    Furthermore, for any vertical line $\Gamma$, there is at most one of these parts (of the partition) that intersects with it, i.e. the shadow of the ladder/loop is the maximum shadow among the parts plus 2.
\end{lemma}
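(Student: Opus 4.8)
The plan is to analyze the sequence of reflection points $r_1,\ldots,r_q$ of $P$ together with two labels on each: whether it is a left or right reflection, and whether it is ascending or descending. By Corollary \ref{cor:alter} the left/right labels already alternate, so the "interesting" events are the places where the ascending/descending label flips, and the places where the pattern of top-vs-bottom segments containing the reflection points changes (from strictly alternating to two consecutive of the same kind). The definition of a zig-zag says precisely that a maximal run with constant ascending/descending label and strictly alternating top/bottom membership is a zig-zag; the definition of a sink says a maximal run with constant ascending/descending label and constant top/bottom membership is a sink. So I would first argue that along $P$ we can greedily carve out maximal runs of constant ascending/descending type, and inside each such run, the top/bottom membership sequence decomposes into at most: an initial constant block (a sink), then an alternating block (a zig-zag), then a final constant block (a sink). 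The main small combinatorial claim to nail down here is that within one monotone (all-ascending or all-descending) run, once the top/bottom pattern stops alternating it cannot start alternating again in a way that would force a fourth section — intuitively, a long alternating middle surrounded by at most one constant block on each end. I expect this follows from a short case analysis on what a reflection point on a top vs.\ bottom segment does to the $y$-range of the incident legs, combined with Lemma \ref{obs:reflection-paths-above-below} and Observation \ref{cor:path-above-top-segment}; if the pattern were to break and re-form, one could find two monotone subpaths that must cross, contradicting that $\OPT$ is non-self-crossing.

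Next I would bound the number of parts and set up the partition so that consecutive parts share their boundary reflection point (this is just bookkeeping: when one maximal monotone run ends and the next begins, the ascending/descending label flips at a single reflection point, which we declare to be the last reflection of part $i$ and the first of part $i+1$). The content of the first sentence of the lemma is then exactly the structural decomposition just described.

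The second (and more delicate) part is the "furthermore" clause: for any vertical line $\Gamma$, at most one part meets $\Gamma$, so the shadow of $P$ equals (max shadow over the parts) $+O(1)$, stated in the lemma as $+2$. The plan here is to show that the $x$-extents of the parts are essentially nested/laminar or occur in a left-to-right order determined by the monotone structure. The key tool is Lemma \ref{obs:shadow-increase} (shadow only changes at reflection points, each reflection changing it by $\pm 2$) together with Lemma \ref{obs:reflection-paths-above-below}: within one monotone run, the two subpaths emanating from a reflection point stay ordered (one above the other) until they next turn, so the run sweeps out an interval of $x$-coordinates in a controlled way, and when the monotone label flips at the shared boundary point, the new part's $x$-interval is contained in (or abuts, at that shared point) the previous one's complement. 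I would make this precise by induction on the number of parts: assuming the claim for the first $i-1$ parts, the shared reflection point $r_{a_i}$ is an extreme ($x$-minimal or $x$-maximal) point of part $i$ relative to the earlier parts — this is where Lemma \ref{obs:further-point-reflection} (left-most/right-most point of a subpath with both endpoints on one side of $\Gamma$ is a reflection of the appropriate handedness) does the work — so part $i$ lives on one side of the vertical line through $r_{a_i}$ while parts $1,\ldots,i-1$ live on the other side, except possibly touching at that line. Summing the shadow contributions and accounting for the (at most two) legs incident to the shared boundary point that a vertical line through it could hit gives the "$+2$".

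I expect the main obstacle to be the "at most one part meets $\Gamma$" claim, specifically proving the laminar/left-to-right structure of the parts' $x$-intervals rigorously: one has to rule out a part doubling back across the $x$-range of an earlier part, which a priori a loop (as opposed to a ladder) could try to do. I would handle loops by using the non-self-crossing property of $\OPT$ and Observation \ref{cor:path-above-top-segment} to show that any such doubling-back would either create a crossing or force an extra reflection point that should have been counted, contradicting maximality of the runs. The rest — the structural decomposition into sink/zig-zag/sink — I expect to be a routine, if somewhat tedious, induction on the reflection sequence using the already-established alternation lemmas.
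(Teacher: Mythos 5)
Your plan for the first half (partition into maximal monotone runs, then within each run a sink/zig-zag/sink decomposition) matches the paper in outline. The claim you flag as needing a "short case analysis" is exactly what the paper isolates as Claim \ref{claim:bottom-bottom}: if two consecutive reflection points in a monotone run lie on bottom segments, then everything earlier in the run (through the lower leg of the first) also lies on bottom segments. Your intuition that this comes from non-self-crossing is right, though the enclosing-region argument via Lemma \ref{obs:reflection} and Observation \ref{cor:path-above-top-segment} is what makes it work, not a crossing of two monotone subpaths per se. This part of your proposal would likely go through with more care.

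The second half has a genuine gap. Your mechanism for "at most one part meets a vertical line $\Gamma$" is that the shared boundary reflection $r_{a_i}$ is an $x$-extreme point of part $i$, so the parts live on opposite sides of the vertical line through it. That claim is false. Take the transition in the paper's setup: $r_i$ (last reflection of one part) is a right ascending reflection and $r_{i+1}$ (first of the next) is a left descending one, with $x(r_i)<x(r_{i+1})$. The next reflection $r_{i+2}$ inside the new part is a right reflection with $x(r_{i+2})<x(r_{i+1})$ by Lemma \ref{obs:consecutive-reflections}, so the new part immediately pokes to the \emph{left} of $r_{i+1}$. Likewise the old part already has $r_i$ strictly to the left of $r_{i+1}$, so a vertical line just left of $r_{i+1}$ meets reflections of both parts. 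The vertical line through the transition reflection does not separate the parts, and Lemma \ref{obs:further-point-reflection} does not rescue this: it tells you extreme points of a path are reflections, not that a given boundary reflection is extreme for its part.

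The correct separator is not $r_{a_i}$ but the \emph{segment carrying the second-to-last reflection $r_{i-1}$ of the earlier part}. To locate it the paper first proves the ordering $x(r_i)<x(r_{i-1})<x(r_{i+1})$: the left inequality is Lemma \ref{obs:consecutive-reflections}, and the right one is obtained by applying Lemma \ref{lemma:in-between-reflections} (a tool your proposal never uses) to rule out $x(r_{i-1})>x(r_{i+1})$. A second application of that lemma pins down that $r_{i-1}$ is ascending and sits on a bottom segment $s_j$. Then Lemma \ref{claim:reflection} gives that the subpath $r_1\to r_{i-1}$ cannot cross to the right of $s_j$, and a separate area-enclosure argument (again relying on Lemma \ref{obs:reflection} and non-self-crossing, with a case split on whether the segment of $r_{i+1}$ is a top or a bottom segment) shows the subpath $r_{i+1}\to r_k$ cannot cross to the left of $s_j$. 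Only the transition subpath from $r_i$ to $r_{i+1}$, which has shadow $1$, crosses $s_j$; this is where the $+2$ in the lemma's conclusion comes from. Without identifying $s_j$ and establishing the three-way $x$-ordering, the laminar/left-to-right structure you assert for the parts' $x$-extents does not follow.

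One small bookkeeping note: the lemma requires the parts to be \emph{disjoint}, whereas your proposal makes consecutive parts share their boundary reflection; the paper starts part $i+1$ at the reflection after the last reflection of part $i$, and the shared-point role is played by the sections \emph{within} a part, not by the parts themselves.
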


The proof of this lemma is rather involved and appears in Subsection \ref{sec:zig-zag-lem}.
To give an idea of the proof, we essentially show that for any loop or ladder in any strip, the vertical line at which the largest shadow for that loop or ladder happens, can intersect with at most two sinks and a zig-zag. So the shadow of a loop or ladder is within $O(1)$ of the maximum shadow of the zig-zags and sinks along that.

\subsection{Properties of a Near-Optimum Solution}
\label{sec:near-optimum-property}
In this section we introduce some lemmas that describe structures of a near-optimum solution with small shadow. The next three major lemmas, namely Lemmas
\ref{lem:strip-bounded-shadow}, \ref{lemma:reflections}, and \ref{lemma:overlapping-loops-ladders} are the main lemmas used in the proof of Theorem \ref{thm:onestrip-shadow}. 
We prove these in Subsections \ref{sec:lemma-shadow-of-sink/zigzag}, \ref{sec:lemma-pure-reflection-size}, and \ref{sec:lemma-overlapping-loops/ladders}, respectively. We make alterations to an assumed optimum solution such that some new structural properties hold; we ensure that the alterations have a limited additional cost.

\begin{lemma}[\bf Second major lemma]\label{lem:strip-bounded-shadow}
    Consider $\OPT_\tau$ for an arbitrary strip $S_\tau$ and
    let $\opt_\tau$ be the total cost of $\OPT_\tau$. Given any 
    $\eps>0$, we can change $\OPT_\tau$ to a solution of cost at most $(1+O(\eps))\opt_\tau$ where the shadow of each zig-zag and sink is at most $O(1/\eps)$.
\end{lemma}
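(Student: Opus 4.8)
\textbf{Proof plan for Lemma \ref{lem:strip-bounded-shadow}.}

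The plan is to show that if a single zig-zag or sink has a large shadow, then it contains many reflection points packed into a horizontal window of bounded width (since the whole strip has bounded height $1$ and, by Lemma \ref{obs:shadow-increase}, each reflection changes the shadow by $2$). The key observation will be that the reflection points along a zig-zag or sink alternate between left and right reflections (Corollary \ref{cor:alter}) and lie on segments that alternate (for a zig-zag) between top and bottom segments, all while the sequence is monotone (all ascending or all descending). So a large shadow forces a ``dense'' bundle of nearly parallel legs crossing a narrow vertical window inside a height-$1$ strip. The first step is to make this quantitative: if the shadow of a zig-zag (or sink) $P$ at some vertical line $\Gamma$ is $k$, then there are $\Omega(k)$ reflection points of $P$ whose $x$-coordinates lie in an interval $J$ around $x(\Gamma)$ of width $O(1)$ (using that each reflected leg, being confined to height $1$ and to one side of its segment, has bounded horizontal extent; a leg whose shadow contribution reaches $\Gamma$ cannot have horizontal extent much more than a constant, otherwise either the strip height is exceeded or the cost is already large).

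The second step is the ``short-cutting'' / rerouting argument: inside the window $J \times [C_{\tau+1}, C_\tau]$ (a box of size $O(1) \times 1$), the near-optimum solution is paying at least $\Omega(k)$ for the $\Omega(k)$ legs crossing $\Gamma$, since each such leg has length $\ge$ some constant (a reflected leg goes from near one cover-line deflection and back, and distinct segments have distinct $x$-coordinates, so consecutive reflections are genuinely separated — actually length at least the vertical travel, which I will argue is $\Omega(1)$ for all but $O(1/\eps)$ of them, the short ones being exactly those we are allowed to keep). For the legs that are ``short'' (vertical travel $< \eps$ times something), there can be at most $O(1/\eps)$ of them before the accumulated length inside the window exceeds an $\eps$ fraction of what $\OPT$ must pay globally to traverse this region; and the remaining ``long'' legs each cost $\Omega(\eps)$-ish, so the portion of $\OPT_\tau$ inside the window has cost $\Omega(\eps k)$. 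We then replace this expensive tangle: the set of top segments and bottom segments that $P$ must cover within $J$ can be covered by a short detour — walk along (or just above) $C_\tau$ across $J$, dipping down to touch each required bottom segment, which costs $O(|J|) + O(1) = O(1)$ because each bottom segment tip reachable from near $C_\tau$ within a height-$1$ strip needs a detour of length $O(1)$ and, crucially, the dips to consecutive bottom segments whose tips are within the strip can be amortized so the whole comb has length $O(|J|)$. Rerouting $P$ this way over the window while splicing the two endpoints of $P$ at the boundary of $J$ back into the rest of $P$ adds $O(1)$ per window; we charge this to the $\Omega(\eps k)$ that $\OPT_\tau$ spent there, so as long as $k = \Omega(1/\eps)$ the swap is a net non-increase up to an $\eps\cdot(\text{local cost})$ term, and after it the shadow of $P$ in that window drops to $O(1)$.

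The third step is bookkeeping: apply the above to every zig-zag and sink whose shadow exceeds $c/\eps$, choosing windows so that the charged regions for different applications are essentially disjoint (using the ``at most one part intersects any vertical line'' structure of Lemma \ref{lemma:zig-zag}, or simply that a zig-zag/sink is itself a subpath with shadow-$1$ pieces between reflections so its expensive sub-windows tile it without overlap). Summing the $O(\eps \cdot \text{local cost})$ charges over all windows gives total added cost $O(\eps)\cdot\opt_\tau$, yielding a solution of cost $(1+O(\eps))\opt_\tau$ in which every zig-zag and sink has shadow $O(1/\eps)$. One subtlety to handle carefully: after rerouting we must re-verify that the modified $P$ is still a valid piece of a non-self-crossing tour and that we have not destroyed coverage of segments that were covered by $P$ \emph{outside} the window — this is fine because we only touch $P$ within $J$ and the comb we insert covers exactly the segments $P$ covered inside $J$.

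\textbf{Main obstacle.} The hard part will be the quantitative rerouting bound in step two: arguing that covering all the required top/bottom segments inside a width-$O(1)$, height-$1$ window with a connected detour costs only $O(1)$ (a comb/zig-zag of bounded total length), and simultaneously that $\OPT_\tau$ was genuinely forced to pay $\Omega(\eps k)$ inside that window rather than $o(\eps k)$ — i.e., ruling out the possibility that a large shadow is produced ``for free'' by many vanishingly short reflected legs that we are not otherwise allowed to delete. Pinning down the right length scale (how short is ``short'') so that the number of short legs is $O(1/\eps)$ while the long legs carry $\Omega(\eps k)$ of cost is the delicate balance, and it is where the $f(\eps)=O(1/\eps)$ bound (versus something worse) is won or lost.
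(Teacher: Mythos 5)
Your proposal takes a genuinely different route from the paper, and it has a gap at its foundation. You claim that a zig-zag or sink $P$ with shadow $k$ at a vertical line $\Gamma$ must have $\Omega(k)$ reflection points whose $x$-coordinates lie in a width-$O(1)$ window around $x(\Gamma)$, on the grounds that each reflected leg ``has bounded horizontal extent'' inside the height-$1$ strip. That is false: a leg from $(0,0)$ to $(100,0.5)$ lives happily inside a height-$1$ strip, and in fact the reflections along a zig-zag or sink \emph{spread apart}, not pack. Lemma~\ref{obs:x-order} makes this precise: the sequence of $x$-coordinates of, say, the even-indexed reflections is strictly monotone, so on at least one side of $\Gamma$ the distances to $\Gamma$ grow with no a priori bound. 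Your fallback clause ``or the cost is already large'' does not close the gap; that cost must be located and shown to absorb the modification, which is exactly the missing step. Likewise, your $O(1)$ bound on the covering comb rests on a hand-waved ``amortization'' of the dips that has no basis once the dips are not nested; in the worst case a comb that touches $\Theta(k)$ bottom segments at depth $\Theta(1)$ costs $\Theta(k)$, which you cannot afford.

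The paper handles precisely this obstacle by replacing your local, constant-width window with a \emph{global} averaging argument over the zig-zag/sink. Write $\Psi$ for its total length and $d_m$ for the horizontal distance of the $m$-th reflection from $\Gamma$. Each of the two subpaths incident to the $m$-th reflection travels horizontal distance at least $d_m$, so $2\sum_m d_m\le \Psi$; averaging gives that all but $\lceil 1/(2\eps)\rceil$ of the $d_m$'s are at most $\eps\Psi$, and by the monotonicity above those small-$d$ reflections form a contiguous prefix $r_j,\dots,r_{j+m_0}$. The reroute is done in a window of width $O(\eps\Psi)$ --- not $O(1)$ --- and it does \emph{not} invent fresh dips: the new paths $\Gamma_0,\Gamma_m$ are lower/upper envelopes of the original subpaths together with short horizontal segments, so the only genuinely new cost is those horizontal segments, totalling $O(\eps\Psi)$, charged to $\Psi$. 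The shadow of the rerouted prefix drops to $O(1)$, and the untouched suffix has only $O(1/\eps)$ reflections, giving shadow $O(1/\eps)$ overall. The ``delicate balance'' on the length scale that you flag as the hard part is exactly what the averaging over $d_m$ buys for free; without it, and without the envelope trick in place of a from-scratch comb, your plan does not go through as a small fix --- it would need to be replaced by the paper's argument.
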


The proof of this lemma appears in Subsection \ref{sec:lemma-shadow-of-sink/zigzag}. The following corollary immediately follows from Lemmas \ref{lemma:zig-zag} and \ref{lem:strip-bounded-shadow}:
\begin{corollary}\label{cor:loop-bounded-shadow}
    There is a $(1+\eps)$-approximate solution in which all loops/ladders have shadow $O(1/\eps)$.
\end{corollary}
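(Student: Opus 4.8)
The plan is to combine the two major lemmas strip by strip. First I would apply Lemma~\ref{lem:strip-bounded-shadow} independently to $\OPT_\tau$ for every strip $S_\tau$: this replaces $\OPT_\tau$ by a collection of subpaths $\OPT'_\tau$ of cost at most $(1+O(\eps))\,\opt_\tau$ in which every zig-zag and every sink has shadow $O(1/\eps)$. Since the modifications produced by Lemma~\ref{lem:strip-bounded-shadow} are internal to the strip and leave the entry points of the subpaths (which lie on the cover-lines $C_\tau,C_{\tau+1}$) unchanged, the pieces $\OPT'_\tau$ over all $\tau$ glue along the cover-lines into a single closed tour $\OPT'$. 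Feasibility is preserved because each input segment lies in some strip and Lemma~\ref{lem:strip-bounded-shadow} keeps that strip's restriction feasible. For the cost, the strips partition $\OPT$ (adding dummy points at cover-line crossings does not change lengths), so $\sum_\tau \opt_\tau = \opt$ and hence $\mathrm{cost}(\OPT') = \sum_\tau \mathrm{cost}(\OPT'_\tau) \le (1+O(\eps))\,\opt$; rescaling $\eps$ by a constant gives a $(1+\eps)$-approximate solution.

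Next I would normalize $\OPT'$ so that the structural machinery of Subsection~\ref{sec:opt-prop} applies to it. By the uncrossing operation (which only decreases cost and, by the remark following Definition~\ref{def:shadow}, does not increase the shadow) we may assume $\OPT'$ is non-self-crossing; by cost-nonincreasing local perturbations we may assume every non-tip reflection point is a pure reflection and that the visited points still have distinct $x$-coordinates. With these properties in hand, Lemma~\ref{lemma:zig-zag} applies to each $\OPT'_\tau$ just as it does to $\OPT_\tau$.

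Finally I would bound the shadow of each loop or ladder. Fix a strip $S_\tau$ and a loop or ladder $P \in \OPT'_\tau$. By Lemma~\ref{lemma:zig-zag} the reflection points of $P$ split into parts, each part being a concatenation of at most one sink, one zig-zag and one sink (with consecutive sections sharing a reflection point), and any vertical line meets at most one such part. Hence the shadow of $P$ is at most $2$ plus the maximum over parts of that part's shadow. A vertical line meeting a given part meets at most its three constituent sections, so the shadow of a part is at most the sum of the shadows of its at most three sinks and zig-zags, which is $O(1/\eps)$ by the first paragraph. Therefore every loop and ladder of $\OPT'$ has shadow $O(1/\eps)$, which is the assertion of the corollary.

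The main obstacle I anticipate lies in the gluing step of the first paragraph: one must verify that the modifications of Lemma~\ref{lem:strip-bounded-shadow}, applied to the different strips, are mutually consistent along the shared cover-lines — that they neither alter the set of entry points on $C_\tau$ nor change how the global tour threads through them — so that the per-strip outputs genuinely assemble into one feasible closed tour of the claimed cost. The normalization in the second paragraph (ensuring $\OPT'$ still satisfies the hypotheses under which Lemma~\ref{lemma:zig-zag} was proved) is a secondary issue handled by the standard cost-neutral local adjustments used for $\OPT$ itself.
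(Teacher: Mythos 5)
Your argument is correct and takes the same route the paper does: the corollary is asserted there to "immediately follow" from Lemma~\ref{lemma:zig-zag} and Lemma~\ref{lem:strip-bounded-shadow}, and your proposal simply spells out what that entails — apply the second lemma per strip, observe the modifications preserve entry points so the pieces glue, then invoke the first lemma's decomposition into at most three sink/zig-zag sections per part. Your flagged concerns (strip-wise gluing, and re-establishing the structural hypotheses of Lemma~\ref{lemma:zig-zag} on the modified tour via cost-neutral local perturbations) are legitimate points that the paper passes over silently, but they are handled exactly as you describe.
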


\begin{definition}\label{def:pure-reflections}
    Let $\mathcal R = p_i, p_{i+1},\ldots, p_q$ be any sequence of consecutive points in $\OPT$ such that $p_i$ and $p_{q}$ are reflection points. If none of $p_j$'s ($i<j<q$) in $\mathcal R$ is a tip of a segment, then $\mathcal R$ is called a \textbf{pure reflection sequence}.
\end{definition}
So each point in $\mathcal R$ is either a straight point or a pure reflection according to Lemma \ref{obs:break-on-tip}. 
The following lemma (whose proof appears in Section \ref{sec:lemma-pure-reflection-size}) shows the existence of a near-optimum with bounded size pure reflection sequences.

\begin{lemma}[\bf Third major lemma]\label{lemma:reflections}
    Consider $\OPT_\tau$ for an arbitrary strip $S_\tau$ and suppose the total length of legs of $\OPT_\tau$ is $\opt_\tau$.
    Given $\eps>0$, we can change $\OPT_\tau$ to a solution of cost at most $(1+\eps)\opt_\tau$ in which 
    the size of any pure reflection sequence is bounded by  $O(\frac1\epsilon)$.
\end{lemma}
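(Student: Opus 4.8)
The plan is to take a pure reflection sequence $\mathcal R = p_i,\dots,p_q$ in $\OPT_\tau$ that is long — say of size $\omega(1/\eps)$ — and show that its total length is large enough (compared to the "span" it covers) that we can afford to shortcut it with a controlled detour. By Lemma \ref{obs:break-on-tip}, every interior point of $\mathcal R$ is either a straight point or a pure reflection point, so the subpath $P=P(\mathcal R)$ behaves like a billiard trajectory that bounces off the (vertical) segments carrying its pure reflections while passing straight through the segments carrying its straight points. By Lemma \ref{obs:reflection}, each segment hosting a pure reflection of $\mathcal R$ is hit by $\OPT$ only at that point, so those segments are "used up" solely by $P$; this is what will let us replace $P$ without disturbing the rest of the tour's feasibility, except for the segments we must re-cover.

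First I would set up the geometry: orient $P$ and look at the sequence of reflection points $r_1,\dots,r_t$ inside $\mathcal R$ (a subsequence of the $p_j$'s). By Corollary \ref{cor:alter} these alternate left/right, so consecutive reflections force $P$ to reverse its horizontal direction; each such back-and-forth within a strip of height $1$ contributes a definite amount to the length of $P$. Concretely, between a right reflection and the following left reflection, $P$ travels leftward by at least the horizontal gap between the two hosting segments and then rightward again, and since $P$ lives inside a strip of height $O(1)$ the vertical budget is bounded, so the length of $P$ is at least $\Omega(t)$ once we account for the alternation — more precisely, length of $P$ is at least a constant times (number of reflections in $\mathcal R$) minus the net horizontal displacement between its endpoints. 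The second ingredient is a cheap replacement: I would replace $P$ by a path that goes from $p_i$ "monotonically" to $p_q$ while still touching every segment that $P$ was solely responsible for — e.g. walk along (or just below) one of the cover-lines of $S_\tau$, dropping down to touch each such segment and coming back, or use a bitonic-style detour. The cost of this replacement is at most the horizontal span of $\mathcal R$ plus $O(1)$ per segment visited in the "wrong" direction; but the key point is to bound the number of segments that are visited "out of monotone order," and that number is exactly controlled by the number of reflections. Balancing: if $\mathcal R$ has size more than $c/\eps$ for a suitable constant $c$, then length$(P) \ge$ (something) and the replacement saves at least an $\eps$-fraction of length$(P)$; applying this to a maximal collection of disjoint long pure reflection sequences and charging the savings against $\opt_\tau$ gives total cost $(1+\eps)\opt_\tau$ with all pure reflection sequences of size $O(1/\eps)$.

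The main obstacle I expect is making the "replacement is cheap" step rigorous: a naive monotone replacement that detours to each of the up-to-$t$ hosting segments could itself cost $\Omega(t)$, which would wipe out the savings. The resolution should be that the hosting segments of a long pure reflection sequence are themselves horizontally "bunched up" — because $P$ keeps bouncing among them inside a height-$O(1)$ strip, their $x$-coordinates lie in a window whose width is $O(\text{length}(P)/t)$ on average — so a single near-horizontal pass can sweep all of them at cost comparable to that window width plus $O(t)$ tiny vertical detours of total length $O(t \cdot (\text{strip height}))$, and we must show the $\eps$-fraction of length$(P)$ we are trying to save dominates this. I would handle this by partitioning $\mathcal R$ into blocks of $\Theta(1/\eps)$ consecutive reflections and arguing locally within each block: either the block already has small horizontal span (so we shortcut it directly and save the vertical zig-zag length, which is $\Omega(1)$ per reflection by the height bound), or it has large span, in which case length$(P)$ restricted to the block is $\Omega(\text{span}/\eps \cdot \eps) = \Omega(\text{span})$ and the monotone replacement over that span is a net win. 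Finally I would note, as the lemma's placement in the paper suggests, that this is applied strip-by-strip and combined with Corollary \ref{cor:loop-bounded-shadow}, and that the alterations on different strips/blocks are disjoint so the costs simply add.
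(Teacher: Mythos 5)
Your proposal diverges substantially from the paper's proof, and unfortunately the cornerstone of your argument does not hold. You want to \emph{replace} a long pure reflection sequence $P$ by a cheaper monotone path, and the whole plan rests on the claim that $\mathrm{length}(P)=\Omega(t)$ where $t$ is the number of reflections, so that an $\Omega(\eps)$-fraction of $\mathrm{length}(P)$ is large enough to pay for the detours. This lower bound is false. By Lemma \ref{lem:cons-ref}, all the reflections in a pure reflection sequence are ascending (or all descending), so the $y$-coordinate is \emph{monotone} along the whole subpath; the total vertical travel of $P$ is therefore bounded by the strip height, i.e.\ $O(1)$, not $\Omega(1)$ per reflection. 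If the hosting segments are horizontally very close (their minimum spacing after the preprocessing grid step is $\Theta(\eps B/n^2)$, which can be tiny), then $\mathrm{length}(P)$ can be $o(t)$ while $t$ is huge. Your ``either small span / or large span'' case split also leans on ``save the vertical zig-zag length, which is $\Omega(1)$ per reflection by the height bound'' --- this is precisely the quantity that monotonicity forces to be $O(1/t)$ per reflection on average, not $\Omega(1)$. So there is no reservoir of saved length to charge the replacement against.

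The paper's proof sidesteps this entirely by not trying to \emph{save} anything: it \emph{spends} a tiny extra amount instead. Group the reflections into blocks $G_j$ of $\sigma=\lceil 1/\eps\rceil$ consecutive reflections; for each $r_a$ let $d_a^{\pm}$ be its distance to the top/bottom tip of its segment, and let $D_j=\min_a\{d_a^+,d_a^-\}$ over $G_j$. Monotonicity of the $y$-coordinates (together with Lemma \ref{obs:reflection}, which forces $r_{a+2}$ to lie strictly above the segment of $r_a$) gives a telescoping bound $\sum_a d_a^{+} \le 2\cdot\mathcal G_j$ (and similarly for $d^-$), where $\mathcal G_j$ is the leg length of the block; so $D_j \le \tfrac{2\eps}{1-\eps}\mathcal G_j$ by averaging. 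Then one simply moves the single reflection achieving $D_j$ to the corresponding tip of its segment (still a feasible visit to that segment), at cost $\le 2D_j$. This introduces one ``break'' per block of $\sigma$ reflections, so every remaining pure reflection subsequence has size $\le 2\sigma=O(1/\eps)$, and the total added cost is $O(\eps)\sum_j\mathcal G_j=O(\eps\,\opt_\tau)$. The idea you would need to import is this per-block \emph{move-one-point-to-a-tip} perturbation with the averaging bound on $D_j$, in place of the global path replacement.
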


Our next goal is to show that for any vertical line, it can intersect at most $O(1)$ many loops or ladders of
$\OPT_\tau$ in a strip $S_\tau$. This together with the above corollary implies there is a $(1+\eps)$-approximate solution s.t. the shadow in each strip $S_\tau$ is $O(1/\eps)$.

\begin{definition}
    A collection of loops and or ladders are said to be \textbf{overlapping} with each other if there is a vertical line that intersects
    all of them.
\end{definition}

\begin{lemma}[\bf Fourth major lemma]\label{lemma:overlapping-loops-ladders}
    Consider $\OPT_\tau$, the restriction of $\OPT$ to any strip $S_\tau$. We can modify the solution (without increasing the shadow or the cost) such that
    there are at most $O(1)$ loops or ladders in $ \OPT_\tau $ that all are overlapping with each other. 
\end{lemma}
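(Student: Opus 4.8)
The plan is to fix a vertical line $\Gamma$ and argue that only a bounded number of loops/ladders of $\OPT_\tau$ can cross it, after performing bounded-cost modifications. First I would classify the loops/ladders crossing $\Gamma$ by their ``type'': a ladder has one entry point on $C_\tau$ and one on $C_{\tau+1}$, while a loop has both entry points on the same cover-line; a cover-line loop is a degenerate sub-case handled separately (it contributes only $O(1)$ to the shadow at any $\Gamma$ by the no-self-crossing discussion, and two cover-line loops on the same cover-line cannot overlap without crossing). So it suffices to bound the number of genuine loops and ladders overlapping at $\Gamma$. The key structural fact I would invoke is Corollary \ref{cor:loop-bounded-shadow}: each loop/ladder individually has shadow $O(1/\eps)$, so each contributes a bounded amount to the shadow at $\Gamma$; hence if we can bound the \emph{number} of overlapping loops/ladders by $O(1)$, the total shadow in the strip is $O(1/\eps)$ as desired. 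The real content is the bound on the number, not the shadow-per-piece.

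Next I would exploit that $\OPT$ does not self-cross together with Definition \ref{def:path-above-below} (paths being above/below each other). Since the loops/ladders crossing $\Gamma$ are disjoint subpaths of a single non-self-crossing closed curve, in a small neighborhood of $\Gamma$ they appear as a totally ordered stack of arcs (ordered by $y$-coordinate along $\Gamma$). For a loop with both entry points on, say, $C_\tau$: by Lemma \ref{obs:further-point-reflection} the bottom-most point of the loop (the point of the loop with smallest $y$, i.e.\ deepest into the strip) is a reflection point, and more importantly the loop must ``turn around'' inside the strip. I would argue that two loops both anchored on $C_\tau$ that overlap at $\Gamma$ must be nested (one entirely surrounds the other in the region between $\Gamma$'s neighborhood and $C_\tau$), again by the no-crossing property. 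Nesting of many loops forces a long alternating sequence of reflection points along an enclosing loop, or forces a top/bottom segment to be redundantly covered many times: here I would use Observation \ref{cor:path-above-top-segment}, which says if $P_1$ is above $P_2$ and $P_2$ hits a top segment then $P_1$ hits it too. So if we had many nested loops all anchored on $C_\tau$ passing near $\Gamma$, the top segment closest to $\Gamma$ that the innermost loop covers would be covered by \emph{all} of the outer loops too, which is wasteful; I would then describe a rerouting/shortcut operation that removes all but $O(1)$ of these nested loops — splicing an outer loop into the strip structure so it no longer needs to dip down near $\Gamma$ — at no increase in cost and no increase in shadow (the shortcut is a local uncrossing-type move, which by the remark after Definition \ref{def:shadow} never increases the shadow).

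The symmetric argument handles loops anchored on $C_{\tau+1}$. For ladders crossing $\Gamma$: a ladder goes from $C_\tau$ to $C_{\tau+1}$, and again ladders crossing $\Gamma$ are linearly ordered near $\Gamma$; I would show that between any two ``far apart'' ladders in this order there must be either a loop (already bounded) or a large shadow contribution forcing reflection structure we can simplify, or else the region between two ladders contains top/bottom segments that could be reassigned to a single ladder. The combinatorial bookkeeping — counting loops anchored above, loops anchored below, and ladders, and showing each class has $O(1)$ representatives overlapping at $\Gamma$, with a global argument that the modifications for one $\Gamma$ do not re-create overlaps elsewhere — is where I expect the main obstacle to lie. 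In particular, the trickiest point is ensuring the modifications are \emph{simultaneously} valid for all vertical lines (a move that reduces overlap at $\Gamma$ must not increase it at some $\Gamma'$); I would address this by doing the modification in a single global pass, processing maximal overlapping families and using that each shortcut strictly decreases a potential such as the total number of (loop, vertical-line-crossing) incidences, so the process terminates with the desired $O(1)$ bound while the cost is non-increasing throughout.
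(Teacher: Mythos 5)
Your overall strategy — fix a vertical line $\Gamma$, use the non-self-crossing property to linearly order the pieces of $\OPT_\tau$ near $\Gamma$, use Observation \ref{cor:path-above-top-segment} to detect redundant coverage, and then perform a cost-saving local reroute — is the right general direction and matches the spirit of the paper's proof. But the concrete structural picture you rely on has a genuine error, and the part you defer (``the combinatorial bookkeeping\ldots is where I expect the main obstacle to lie'') is in fact where nearly all the content of the paper's proof is.

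The error: you claim that two loops both anchored on $C_\tau$ that overlap at $\Gamma$ must be \emph{nested}, and then reason about ``many nested loops''. This is not the configuration that arises. The paper proves (its Lemma \ref{claim:loop-above-loop}) that if two overlapping loops on $C_\tau$ are nested, the inner one must be a \emph{cover-line loop}; and two cover-line loops on the same cover-line cannot overlap. So in fact long nested chains are impossible. The troublesome configuration is the opposite one: many non-cover-line loops $L_1,L_2,\dots$ anchored on $C_\tau$ with \emph{disjoint, ordered} entry-point intervals, all of which reach across $\Gamma$ from the same side (say from the left). The whole difficulty — and the explicit rerouting construction in the paper that replaces a middle loop $L_2$ by a cheap cover-line loop while splicing a detour onto $L_1$ — is about bounding this disjoint-interval case, not the nested case. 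Your ``innermost loop covers a top segment that all the outer loops must also cover'' argument doesn't run, because the ``outer'' loop here is not above the ``inner'' one: they are side by side, and one needs Lemma \ref{obs:further-point-reflection}, Lemma \ref{obs:reflection}, and an explicit feasibility argument that replacing $L_2$ keeps all exclusively-covered segments covered (the paper's Lemmas \ref{claim:top-segments-on-right} and \ref{claim:no-bottom-cover}).

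The ladder case is also substantially under-specified. The paper splits ladders crossing $\Gamma$ into Type 1 (both entry points on one side of $\Gamma$) and Type 2 (entry points on opposite sides), bounds Type 1 by two via a separate argument based on Lemma \ref{lem:nested-vertical-loops}, and for Type 2 presents an explicit replacement of $k\geq 6$ overlapping ladders by a few ladders plus cover-line loops, with a parity/Eulerian-graph argument to keep the tour connected. Your sketch (``either a loop, or a large shadow contribution, or else the region could be reassigned'') does not commit to any of this, and ``large shadow contribution forcing reflection structure we can simplify'' is not a step you can make precise without effectively redoing Lemmas \ref{lem:strip-bounded-shadow}--\ref{lemma:zig-zag}, which bound shadow \emph{per} loop/ladder, not the number of loops/ladders — you yourself note Corollary \ref{cor:loop-bounded-shadow} does not give the count. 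Your remark about using a potential function to make a single global pass terminate is a reasonable way to phrase what the paper does implicitly (each modification strictly decreases cost, so only finitely many apply, and shadow never increases at any $\Gamma'$), but it does not substitute for the missing structural analysis.
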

We prove this Lemma in Section \ref{sec:lemma-overlapping-loops/ladders} as it consists of multiple parts.

\subsection{Proof of Theorem \ref{thm:onestrip-shadow}}
\label{sec:thm-onestrip-shadow}
Assuming the correctness of main lemmas defined in this section (i.e. Lemmas \ref{lemma:zig-zag}, \ref{lem:strip-bounded-shadow}, \ref{lemma:reflections}, and \ref{lemma:overlapping-loops-ladders}), we can now prove Theorem \ref{thm:onestrip-shadow}.

If the height of the bounding box is at most 3, refer to Theorem \ref{thm:exact-answer}.
Otherwise, consider any strip $S_\tau$ (to be more precise, $S_\tau$ can be any arbitrary strip of height $1$ in the plane).
Using Lemma \ref{lem:strip-bounded-shadow} for parameter $ \eps_1 = \frac\eps2 $, there is a solution $\mathcal O''$ of cost at most $(1+\frac\eps2)\cdot\opt$
where the shadow of each sink and zig-zag is bounded by $O(\frac1{\eps/2}) = O(1/\eps)$.
By Lemma \ref{lemma:zig-zag}, each loop or ladder in $S_\tau$ has a shadow that is at most 3 times the maximum  shadow of a sink or zig-zag in it, plus two. So each loop or ladder has shadow $O(1/\eps)$.
Finally, Lemma \ref{lemma:overlapping-loops-ladders} shows that there can be at most $O(1)$ overlapping loops or
ladders in a strip. Thus, the overall shadow of $\mathcal O''$ in $S_\tau$ is bounded by $O(1/\eps)$.
Furthermore, we apply Lemma \ref{lemma:reflections} on $ O'' $ for parameter $ \eps_2 = \frac\eps{\eps + 2} $ to get a solution $ \mathcal O' $. This new solution has the property that with an additional cost of factor $ (1 + \frac\eps{\eps+ 2}) $ compared to $ \mathcal O'' $, the size of any pure
reflection sequence is bounded by $O(\frac{\eps + 2}{\eps}) = O(1/\eps)$. The total cost of $ \mathcal O' $ is at most 
\begin{align*}
    (1 + \eps_1)\cdot (1+\eps_2)\cdot \opt &= (1 + \tfrac\eps2) \cdot (1 + \tfrac\eps{\eps + 2})\cdot \opt\\
    &= (1 + \tfrac\eps2 + \tfrac\eps{\eps + 2} + \tfrac{\eps^2}{2(\eps + 2)})\cdot\opt\\
    &= (1 + \eps (\tfrac12 + \tfrac1{\eps + 2} + \tfrac\eps{2(\eps + 2)}))\cdot \opt\\
    &= (1 + \eps)\cdot \opt,
\end{align*}
resulting in the statement of the theorem. \qed

\vspace{.5cm}
\noindent \textbf{Note:} Although having a bound on the  length  of pure reflection sequences is not in the statement of the theorem, we use this extra property crucially in designing our DP to find a near-optimum solution.
        
\section{Main Algorithm and Reduction to Structured Bounded Height Instances}\label{sec:main-alg}
As mentioned in the introduction, we follow the paradigm of Arora \cite{Arora98} for designing a PTAS for classic Euclidean TSP with some modifications. We focus more on defining the modifications that we need to make to that algorithm. 

First, we describe the main algorithm and how it reduces the problem into a collection of 
instances with a constant-height bounding box. 
We show how those instances can be solved using another DP (referred to as the {\em inner DP}), and how we can combine the solutions for them using another DP (referred to as the {\em outer DP}) to find a near-optimum solution of the original instance.
Recall that in Subsection \ref{sec:prem}, we assumed the minimal bounding box of the instance has length $L$ and height $H$
and we defined $ B = \max \{L, H-2\} $, and also we can assume that $B\leq \frac{n}{\eps}$. We moved each line segment to be aligned with a grid point with side length $\frac{\epsilon B}{n^2}$ (at a loss of $(1+\eps)$ at approximation). Now, we scale the grid (as well as the line segments of the instance) by a factor of $\rho=\frac{4n^2}{\epsilon B}$ so that each grid cell has size 4. We obtain
an instance where each line segment has length $\rho$, all have even integer coordinates, any two segments are at least 4 units apart, and the bounding box has size $N=O(n^2/\eps)$. Let this new instance be $\mathcal I$. 
Note that if we define cover-lines as before but with a spacing of $\rho$,
all the arguments for the existence of a near-optimum solution with a bounded shadow in any strip (the area between two consecutive cover-lines) still hold. We will present a PTAS for this instance. It can be seen that this implies a PTAS for the original instance of the problem. From now on, we use $\OPT$ to refer to an optimum solution of instance
$\mathcal I$, and $\opt$ to refer to its cost. Note that since the bounding box has side length $N$, then $\opt\geq 2N$.

\subsection{Dissecting the Original Instance into Smaller Subproblems}\label{sec:dissection}

Similar to Arora's approach, we do the hierarchical dissectioning of the instance into nested squares using random axis-parallel dissectioning lines, and put portals at these dissecting lines. We continue this dissectioning process until the distances between horizontal (and so vertical) dissecting lines is $h\cdot\rho$ for $h=\lceil 1/\eps\rceil$. So at the leaf nodes of our recursive decomposition quad-tree, each square is $(h\cdot\rho)\times (h\cdot\rho)$, and the height of the decomposition is $\log (N/\rho h)=O(\log n)$ since $B\leq\frac{n}{\eps}$. We choose vertical dissecting lines only at odd $x$-coordinates so no line segment of the instance will be on a vertical dissecting line.

We define our cover-lines $C_\tau$ based on these horizontal dissecting lines carefully.
Consider the first (horizontal) dissecting line we choose, this will be a cover-line, and then moving in both up and down directions from this line, we draw horizontal lines that are $\rho$ apart. These will be all the cover-lines.
Label the cover-lines from the top to bottom by $C_1, C_2,\dots, C_\sigma$ in that order. As before, the smallest index $\tau$ such that $C_\tau$ crosses a line segment is the cover-line that "covers" that line segment. 

We partition the cover-lines into $h$ groups based on their indices: Group $G_j$ contains all those cover-lines with index $\tau$ where $j=\tau \pmod h$.
Let $G_{j*}$ be the group of cover-lines that includes the first horizontal dissecting line, and hence all the other horizontal dissecting lines as well.
The arguments for the case of unit-length line segments to show there is a near-optimum solution in which the shadow in each strip of height $1$ is $O(1/\eps)$ (Theorem \ref{thm:onestrip-shadow}), also imply the same for the scaled instance $\mathcal I$. Furthermore, if we consider $h$
consecutive strips, i.e. the area between two consecutive cover-lines in the same group $ G_j $,
then there is a near-optimum solution that has shadow $O(h/\eps)=O(1/\eps^2)$.
Our goal is to show that, at a $(1+\eps)$-factor loss, we can simply drop the line segments that are intersecting the horizontal dissecting lines (i.e. all those intersecting cover-lines in $G_{j^*}$) with appropriate 
consideration of portals (to be described). Removing the line segments that cross the dissecting lines allows us to decompose the instance into "independent" sub-instances that interact only via portals.

For each cover-line $C_\tau$, we define a set $B_\tau$ of disjoint (horizontal) {\em intervals} of length $\rho$ placed on it so that each line segment covered by $C_\tau$, is intersecting one of these intervals. To define $B_\tau$, move on $ C_\tau $, from left to right, start by placing the left corner of the first interval of $B_\tau$ on it at the intersection of the left-most segment covered by $C_\tau$; all the segments covered by $C_\tau$ intersecting this interval are considered "covered" by this interval. Next, pick the first segment to the right of the latest interval that is intersecting $C_\tau$, but not intersecting (and so not covered by) the previous intervals, and place the left point of the next interval of $B_\tau$ at that intersection (all the segments intersecting $C_\tau$ and this interval are now covered by this interval). Continue this process until all segments on $ C_\tau $ are covered by an interval (see Figure \ref{fig:s1-s2-s3-s4}). 
Let $\mathcal B=\cup_{\tau=1}^\sigma B_\tau$.

\begin{observation}\label{obs:box-apart}
    A segment covered by an interval of cover-line $C_\tau$ and another segment covered by an interval of cover-line $C_{\tau+2}$ are at least $\rho$ apart ($\tau\leq \sigma-2$). 
\end{observation}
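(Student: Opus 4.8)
The plan is a short vertical-localization argument that uses a coincidence of the scaled instance $\mathcal I$: every segment has length exactly $\rho$, which is precisely the spacing between consecutive cover-lines.

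The first step is a localization claim for a single segment. I would show that if a segment $s$ is covered by $C_\tau$, then its whole $y$-extent is squeezed into the band bounded above by $C_{\tau-1}$ and below by $C_{\tau+1}$; in particular $s$ lies weakly above $C_{\tau+1}$, i.e.\ $y(s^b)\ge y(C_{\tau+1})$. The upper confinement is immediate from the meaning of ``covered'': $C_\tau$ is the first (top-most) cover-line that $s$ meets, so $s$ does not reach $C_{\tau-1}$, whence $y(s^t)<y(C_{\tau-1})$ (when $\tau=1$ there is no cover-line above $C_1$ and this half is vacuous). The lower confinement is where the length coincidence is used: since $s$ meets $C_\tau$ we have $y(s^t)\ge y(C_\tau)$, and since $|s|=\rho=y(C_\tau)-y(C_{\tau+1})$ this gives $y(s^b)=y(s^t)-\rho\ge y(C_{\tau+1})$.

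The second step applies this twice and combines. Applied to the segment covered by $C_\tau$, every point of it has $y\ge y(C_{\tau+1})=y(C_{\tau+2})+\rho$, so that segment is at vertical distance at least $\rho$ from the cover-line $C_{\tau+2}$, hence from any interval of $B_{\tau+2}$ on it, hence from the interval that covers the other segment. For the segment covered by $C_{\tau+2}$: the cover-line $C_{\tau+1}$ has the smaller index, so this segment does not meet $C_{\tau+1}$, while it does meet $C_{\tau+2}$, which lies below $C_{\tau+1}$; therefore the whole segment lies strictly below $C_{\tau+1}$, i.e.\ all its points have $y<y(C_{\tau+1})=y(C_\tau)-\rho$, so it is at vertical distance more than $\rho$ from $C_\tau$ and its intervals, and the two segments sit on opposite sides of $C_{\tau+1}$. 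Since vertical distance is a lower bound on Euclidean distance, this yields the asserted $\rho$-separation; in particular the two segments are disjoint, occupying the disjoint pairs of strips $\{S_{\tau-1},S_\tau\}$ and $\{S_{\tau+1},S_{\tau+2}\}$ respectively.

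I do not expect a genuine obstacle: the entire content is the one-line inequality $y(s^b)\ge y(C_{\tau+1})$, which falls out of $|s|=\rho$ together with the cover-lines being $\rho$ apart. The only points that need a sentence of care are two degenerate configurations — a segment whose two tips land exactly on two consecutive cover-lines (the equality case of the localization bound) and a segment covered by the top-most cover-line $C_1$, which has no cover-line above it — and in neither is the conclusion affected.
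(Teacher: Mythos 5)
Your localization step is correct and is the right basic idea: after scaling, every segment has length exactly $\rho$, which equals the cover-line spacing, so a segment $s$ covered by $C_\tau$ satisfies $y(C_{\tau+1})\le y(s^b)<y(C_\tau)\le y(s^t)<y(C_{\tau-1})$. The problem is the final synthesis step, which proves something weaker than the claim. What you actually establish is that $s_1$ (covered by $C_\tau$) is at vertical distance at least $\rho$ from the cover-line $C_{\tau+2}$ and its intervals, and separately that $s_2$ (covered by $C_{\tau+2}$) is at vertical distance more than $\rho$ from $C_\tau$ and its intervals. Neither of these is a lower bound on the distance between $s_1$ and $s_2$: both segments can approach $C_{\tau+1}$ from opposite sides, and ``opposite sides of $C_{\tau+1}$'' gives only a positive gap, not a gap of $\rho$. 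Concretely, take $s_1$ with $y(s_1^b)=y(C_{\tau+1})$ (so $s_1$ spans from $C_{\tau+1}$ up to $C_\tau$ and is covered by $C_\tau$), take $s_2$ with $y(s_2^t)=y(C_{\tau+1})-\delta$ for tiny $\delta>0$ (so $s_2$ spans from just below $C_{\tau+1}$ to just below $C_{\tau+2}$ and is covered by $C_{\tau+2}$), and give them $x$-coordinates $4$ units apart, the minimum the scaled instance allows. The distance between them is then $\sqrt{16+\delta^2}$, far below $\rho\ge 4n$.

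This also exposes that the statement as written, with index gap $2$, is itself off. Its only use is in the proof of Lemma~\ref{lemma:box-size-opt-ineq}, where the cover-lines are grouped by $\tau\bmod 3$; what is needed there, and what is true, is the version for cover-lines whose indices differ by at least $3$. For that version your argument does close: $s_1$ on $C_\tau$ has all points at $y\ge y(C_{\tau+1})$, while $s_2$ on $C_{\tau'}$ with $\tau'\ge\tau+3$ lies strictly below $C_{\tau'-1}$, hence strictly below $C_{\tau+2}$, whose $y$-coordinate is $y(C_{\tau+1})-\rho$; the vertical gap between the two segments themselves then exceeds $\rho$. The gap in your write-up is the silent replacement of ``distance between the two segments'' by ``distance from each segment to the other's covering interval,'' and it is exactly this substitution that hides the fact that an index difference of $2$ is not enough.
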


\begin{lemma}\label{lemma:box-size-opt-ineq}
    $\displaystyle\opt\geq \frac{\rho\cdot |\mathcal B|}{6}.$
\end{lemma}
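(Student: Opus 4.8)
The plan is to prove a ``pay-for-coverage'' lower bound: for a constant fraction of the intervals of $\mathcal B$ I pick one point of $\OPT$, show that these chosen points are pairwise at Euclidean distance at least $\rho$, and then invoke the elementary fact that a closed tour passing through $m$ pairwise-$\rho$-separated points has length at least $m\rho$ (apply the triangle inequality to the $m$ sub-paths of $\OPT$ between cyclically consecutive chosen points).

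The one place that needs care is selecting the points so that the $\rho$-separation genuinely holds, since arbitrary covered segments lying in neighbouring intervals, or on neighbouring cover-lines, can be arbitrarily close. For $I\in B_\tau$ I therefore pick not an arbitrary covered segment but the segment $r_I$ whose intersection with $C_\tau$ was used by the greedy construction to place the \emph{left endpoint} of $I$; feasibility of $\OPT$ yields a point $v_I\in r_I$ visited by $\OPT$, and since $r_I$ is vertical we get $x(v_I)=x(r_I)=$ that left endpoint. Two properties of this choice do the work. Horizontally: on a single cover-line the greedy sweep always takes the segment defining the next interval strictly to the right of the current length-$\rho$ interval, so the left endpoints of the intervals of $B_\tau$ are pairwise more than $\rho$ apart, hence so are the $v_I$'s for $I\in B_\tau$. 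Vertically: since $C_\tau$ is the topmost cover-line meeting $r_I$ and each segment has length $\rho$ equal to the cover-line spacing, $r_I$ — and hence $v_I$ — lies in the horizontal strip $y\in[y(C_\tau)-\rho,\,y(C_\tau)+\rho)$ (this is the geometry behind Observation~\ref{obs:box-apart}); consequently, if $C_\tau$ and $C_{\tau'}$ have index difference at least $3$, the two strips are at least $\rho$ apart, so $v_I$ and $v_{I'}$ differ by at least $\rho$ in the $y$-coordinate.

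To conclude I would colour the cover-lines by index modulo $3$, so that two cover-lines of the same colour differ by a multiple of $3$, take the colour class $\mathcal C$ carrying the most intervals, and set $M=\sum_{\tau\in\mathcal C}|B_\tau|\ge |\mathcal B|/3$. By the two properties above the $M$ points $\{\,v_I : I\in B_\tau,\ \tau\in\mathcal C\,\}$ are pairwise at distance at least $\rho$ — same cover-line via the horizontal bound, different cover-lines of $\mathcal C$ via the vertical bound — so the tour bound gives $\opt \ge M\rho \ge \rho|\mathcal B|/3 \ge \rho|\mathcal B|/6$. The only loose ends are the degenerate cases $M\le 1$, which force $|\mathcal B|\le 3$ and follow directly from $\opt>2\rho$ (true because the standing assumption $H>3$ makes $\OPT$ have vertical extent exceeding $\rho$). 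I do not expect a real obstacle beyond getting the representative choice right so that both separations hold simultaneously.
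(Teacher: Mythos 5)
Your proof is correct and takes the same essential route as the paper's: pick one representative point of $\OPT$ per interval of $\mathcal B$, argue pairwise $\rho$-separation within a suitable colour class, and lower-bound the closed tour by the number of $\rho$-separated points times $\rho$. The one genuine refinement is your choice of representative. The paper picks an arbitrary segment covered by each interval, which forces an additional odd/even split of the intervals within each $B_\tau$ (two segments lying in \emph{adjacent} intervals of the same cover-line can be arbitrarily close in $x$), so the paper partitions into $2\times 3 = 6$ classes and obtains $\rho|\mathcal B|/6$. You instead pin $v_I$ to the segment $r_I$ whose hit on $C_\tau$ \emph{defines the left endpoint} of $I$; the greedy sweep places consecutive left endpoints more than $\rho$ apart, so the odd/even split is superfluous, only the modulo-$3$ colouring of cover-lines is needed, and you get the stronger $\opt\ge\rho|\mathcal B|/3$. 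Your treatment of the degenerate $M\le 1$ case via $H>3$ is a nice bit of hygiene that the paper elides. One peripheral remark: Observation~\ref{obs:box-apart} as printed speaks of cover-lines with index difference $2$, for which the two $y$-strips can in fact touch; the index-difference-$3$ bound you derive is the correct statement and is what the paper's own modulo-$3$ partition silently relies on.
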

\begin{proof}
For each $B_\tau$, let $i_1, i_2,\dots, i_\eta$ be the intervals on $ C_\tau $ ordered from left to right. Now partition $ B_\tau $ into $ O_\tau \cup E_\tau $ where
	$O_\tau$ consists of intervals $i_q$ with an odd $q$, and $E_\tau$ consists of those with even $q$'s. We also partition $C_\tau$'s into 3 groups based on the value of $\tau\pmod 3$.
	We get a partition of all intervals into 6 groups based on: Whether an interval on $C_\tau$ is in $O_\tau$ or $E_\tau$ (two choices), and what $\tau \pmod 3$ is (three choices). 
	Let $N_j$'s ($1\leq j\leq 6$) be the total number of intervals in these 6 parts. 
	Note that $\sum_{j=1}^6 N_j=|\mathcal B|$, and any two segments $s,s'$ covered by intervals from different groups are at least $\rho$ apart (if they there are 
	covered by intervals in the same cover-line then they are $\rho$ apart horizontally and if they covered by intervals in different cover-lines then by Observation \ref{obs:box-apart} they are at least $\rho$ apart).
	So an optimum solution for the instance that only contains segments covered by intervals of part $N_j$, must have cost at least $\rho N_j$ as it must have at least $N_j$ legs of size at least $\rho$. Since one of these parts has size at least $|\mathcal B|/6$, the statement follows.
\end{proof}

\begin{lemma}\label{lem:cover-line-group}
    For a $j$ chosen randomly from $[1..h]$, we have 
    \[
        \E[\; \rho\sum_{C_\tau\in G_j}|B_\tau|\; ]=O(\eps\cdot\opt).
    \]
\end{lemma}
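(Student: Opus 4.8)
This is essentially an averaging argument on top of Lemma \ref{lemma:box-size-opt-ineq}, so I expect it to be short. The first observation is that the groups $G_1,\dots,G_h$ were defined to partition the set of all cover-lines according to index modulo $h$ (where $h=\lceil 1/\eps\rceil$). Consequently, each cover-line $C_\tau$ lies in exactly one group, and therefore
\[
    \sum_{j=1}^{h}\ \sum_{C_\tau\in G_j}|B_\tau| \;=\; \sum_{\tau}|B_\tau| \;=\; |\mathcal B|.
\]
Since $j$ is chosen uniformly from $[1..h]$, the expectation in question is just the average of the $h$ inner sums, i.e.
\[
    \E\Big[\,\rho\sum_{C_\tau\in G_j}|B_\tau|\,\Big] \;=\; \frac{\rho\,|\mathcal B|}{h} \;\le\; \eps\cdot \rho\,|\mathcal B|,
\]
using $h\ge 1/\eps$.

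The second and final step is to invoke Lemma \ref{lemma:box-size-opt-ineq}, which gives $\rho\,|\mathcal B|\le 6\,\opt$. Substituting this bound yields
\[
    \E\Big[\,\rho\sum_{C_\tau\in G_j}|B_\tau|\,\Big] \;\le\; 6\,\eps\cdot\opt \;=\; O(\eps\cdot\opt),
\]
which is the claimed statement.

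\textbf{Main obstacle.} There is no real obstacle in this particular lemma; all the work has already been done in proving Lemma \ref{lemma:box-size-opt-ineq} (the packing/parity argument that relates $\rho|\mathcal B|$ to $\opt$ via the fact that intervals in the same class are $\rho$ apart). The only thing to be careful about here is that the partition of cover-lines into the $h$ groups is genuinely a partition, so that the inner sums telescope to $|\mathcal B|$ exactly, and that $h\ge 1/\eps$ so the averaging loses at most a factor $\eps$. Everything else is bookkeeping. (This lemma is then used to argue that, in expectation over the random choice of dissection/group offset, the total length charged to dropping the segments crossing the cover-lines in the chosen group $G_{j^*}$ is only an $O(\eps)$ fraction of $\opt$.)
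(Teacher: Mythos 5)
Your proof is correct and is essentially identical to the paper's: both average the partition $\{G_j\}$ of cover-lines to get $\E[\rho\sum_{C_\tau\in G_j}|B_\tau|]=\rho|\mathcal B|/h$, and both then apply Lemma \ref{lemma:box-size-opt-ineq} to bound $\rho|\mathcal B|\le 6\opt$, yielding $O(\eps\cdot\opt)$.
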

\begin{proof}
    For each $1 \le j \le h$, let $\mathcal B_j = \bigcup_{C_\tau\in G_j}{B_\tau}$. Using Lemma \ref{lemma:box-size-opt-ineq}, we have $\sum_{j = 1}^h \abs{\mathcal B_j} = \abs{\mathcal B} \le 6\cdot opt / \rho$. Now we obtain
    \begin{align*}
        \E[\rho \sum_{C_\tau\in G_j}\abs{B_\tau}] &= \rho\cdot \E[\sum_{C_\tau\in G_j}\abs{B_\tau}]\\
        &= \rho \cdot \E[\abs{\mathcal B_j}] \\
        &= \frac{\rho}{h}\cdot|\mathcal B|\\
        &\le \frac{\rho}{h} \cdot \frac{6\cdot opt}{\rho}  = O(\opt/h) = O(\eps\cdot \opt).
    \end{align*}
\end{proof}

Similar to Arora's scheme for TSP, for $m=O(\frac{1}{\eps}\log(N/\rho h))$, we place portals at all 4 corners of a square in the decomposition, plus an additional $m - 1$ equally distanced portals along each side (so a total of $4m$ portals on the perimeter of a square of the dissection). For simplicity, we assume $m$ is a power of 2 and at least $\frac{4}{\eps}\log(N/\rho h)$.
We say a tour is {\em portal respecting} if it crosses between two squares in our decomposition only via portals of the squares. A tour is {\em $r$-light} if it crosses the portals on each side of a square of the dissection at most $r$ times.

For classic (point) TSP, it can be shown that there is a near-optimum solution that is portal respecting and $r$-light for $r=O(1/\eps)$. Our goal is to
show a similar statement, except that we want the restriction of the tour to each "base" square of side length $O(h\cdot\rho)$ to have bounded (by $O(h/\eps)=O(1/\eps^2)$) shadow as well. We then show that we can find an optimum solution with a bounded shadow for the base cases using a DP. This will be our {\em inner DP}. We then show how the solutions of for the 4 sub-squares of a square in our decomposition can be combined into a solution for the bigger subproblem using the outer DP.

We show that at a small loss in approximation (i.e. $O(\eps\cdot\opt)$), we can drop all the line segments of input that are intersecting the horizontal dissecting lines (i.e. covered by a cover-line in group $G_{j^*}$), solve appropriate subproblems, and then extend the solutions to cover those dropped segments. This modification requires certain portals of each square in the decomposition to be visited in the solution for that square.

We show there is a feasible solution that visits all the remaining segments as well as the "required" portals, of total cost at most $(1+\eps)\cdot\opt$, and that such a solution can be extended to a feasible solution visiting all the segments of the original instance (i.e. including the ones that we dropped) at an extra cost of $O(\eps\cdot\opt)$.

\subsubsection[Dropping the Segments hitting the Dissecting Lines]{Dropping the segments intersecting horizontal dissecting lines}\label{subsec:dropping-horizontal}

This Subsection is dedicated to proving the following lemma:

\begin{lemma}\label{lemma:horizontal-cover-lines}
    Given instance $\mathcal I$, there is another instance $\mathcal I'$ that is obtained by removing all the segments that are crossing cover-lines in $G_{j^*}$ (i.e. intersecting horizontal dissecting lines),
    and instead some of the portals around (more precisely, the top and bottom sides of) 
    each square of quad-tree dissection are required to be covered (visited); such that there is a solution for 
    $\mathcal I'$ of cost at most $(1+O(\eps))\cdot\opt$, and such a solution can be extended to a feasible solution of $\mathcal I$ of cost at most $(1+O(\eps))\cdot\opt$. Furthermore, the shadow of the solution for $\mathcal I'$
    between any two consecutive cover-lines in $G_{j^*}$ is at most
    $4$ more than the shadow of $\OPT$ between those two lines.
\end{lemma}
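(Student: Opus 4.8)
The plan is to start from the fixed optimum $\OPT$ of $\mathcal I$ and obtain the claimed solution for $\mathcal I'$ by attaching to $\OPT$ one tiny detour per interval of $\mathcal B_{j^*}:=\bigcup_{C_\tau\in G_{j^*}}B_\tau$, charging everything against $\rho\,|\mathcal B_{j^*}|$, which is $O(\eps\cdot\opt)$ in expectation by Lemma~\ref{lem:cover-line-group}. Concretely, $\mathcal I'$ deletes from $\mathcal I$ every segment crossing a $G_{j^*}$-cover-line (equivalently, every segment covered by an interval of $\mathcal B_{j^*}$), and for each $C_\tau\in G_{j^*}$ and each $i\in B_\tau$ it declares one fixed portal $\pi_i$ lying on $C_\tau$ \emph{inside} $i$ to be required. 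Such a $\pi_i$ exists because the portal spacing along a horizontal dissecting-line side is $h\rho/m\le\rho=|i|$ (since $m\ge h$); distinct intervals have disjoint spans, so these portals are distinct, and they lie on the top/bottom sides of the quad-tree squares, as the statement asks.

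For the upper bound on $\mathcal I'$: for each $i\in\mathcal B_{j^*}$ pick a segment $s$ covered by $i$ and a point $p_i\in\OPT\cap s$. Then $x(p_i)=x(s)\in\mathrm{span}(i)$ (as $s$ crosses $C_\tau$ at a point of $i$) and $|y(p_i)-y(C_\tau)|\le\rho$ (a segment covered by $C_\tau$ lies in the $2\rho$-slab around $C_\tau$), so $\|p_i-\pi_i\|<2\rho$. Splice into $\OPT$ at $p_i$ a round trip that traverses $\overline{p_i\pi_i}$ out and back; the resulting tour $X$ still visits every remaining segment and additionally every $\pi_i$, hence is feasible for $\mathcal I'$, with $\mathrm{cost}(X)\le\opt+\sum_{i\in\mathcal B_{j^*}}2\|p_i-\pi_i\|\le\opt+4\rho\,|\mathcal B_{j^*}|=(1+O(\eps))\opt$. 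For the extension: any feasible $Y$ of $\mathcal I'$ (in particular the DP optimum, of cost $\le\mathrm{cost}(X)$) visits every $\pi_i$; splice into $Y$ at $\pi_i$ an excursion walking along $C_\tau$ to each end of $i$ and back, of cost $\le 2|i|=2\rho$, which visits every segment covered by $i$; summing, we extend $Y$ to a feasible solution of $\mathcal I$ at extra cost $2\rho\,|\mathcal B_{j^*}|=O(\eps\cdot\opt)$. (This post-processing follows the DP, so it need not be portal-respecting.)

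The shadow guarantee is the delicate part, and it rests on a disjoint-span argument. Fix a region $R$ strictly between consecutive $G_{j^*}$-cover-lines $C_\tau$ (top) and $C_{\tau'}=C_{\tau+h}$ (bottom). A detour attached to a cover-line $C\in G_{j^*}$ stays within vertical distance $\rho$ of $C$; since every $G_{j^*}$-cover-line other than $C_\tau,C_{\tau'}$ is at distance $h\rho>\rho$ from $R$, only the detours attached to $C_\tau$ and $C_{\tau'}$ can meet $R$. For a detour for $i\in B_\tau$, both $x(p_i)$ and $x(\pi_i)$ lie in $\mathrm{span}(i)$, so the whole of $\overline{p_i\pi_i}$ has $x$-coordinate in $\mathrm{span}(i)$; thus the vertical line at a given $x_0$ meets this detour (two copies of $\overline{p_i\pi_i}$) exactly when $x_0\in\mathrm{span}(i)$, adding $2$ there. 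As the intervals of $B_\tau$ have pairwise-disjoint spans, at most one covers $x_0$, so all $C_\tau$-detours add at most $2$ to the shadow of $X$ in $R$ at $x_0$; likewise for $C_{\tau'}$; hence the shadow of $X$ in $R$ exceeds that of $\OPT$ by at most $4$.

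I expect the shadow bookkeeping to be the main obstacle: a detour to the merely ``nearest'' portal would have horizontal extent spanning several intervals and its spikes would pile up, so the crux is recognizing that choosing $\pi_i$ inside $i$ — possible only because we pushed the dissection down to squares of size $h\rho=\Theta(\rho/\eps)$, forcing $m\ge h$ — confines each detour to the disjoint span of its own interval. The remaining points are routine to check: that the random shift of Arora's dissection makes the relevant group $G_{j^*}$ behave like a uniformly random $j$ so that Lemma~\ref{lem:cover-line-group} applies, and that the spliced detours (which cross a $G_{j^*}$ dissecting line only at $\pi_i$) are compatible with the portal-respecting and $r$-light structure imposed later.
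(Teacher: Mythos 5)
Your proof is correct but takes a genuinely different route from the paper's. The paper detours each interval $b$ to the \emph{nearest level-$j$ portal} of the dissecting line $\Gamma=C_\tau$, and since that portal may be as far as $N/(2^j m)\gg\rho$ when $\Gamma$ is a shallow-level line, the cost analysis must be an expectation over the random level of $\Gamma$ (the sum $\sum_j \Pr[\Gamma\ \text{is level}\ j]\cdot|B_\tau|\cdot 2(\rho+N/(2^jm))$), followed by a short-cutting step that caps the number of detours incident to each portal at two in order to obtain the $+4$ shadow bound. You instead exploit that the finest (leaf-square) portal resolution $h\rho/m\le\rho$ guarantees a deterministic portal $\pi_i$ \emph{inside} each interval $i$, which at once gives a pointwise $O(\rho)$ bound on every detour (no level-wise expectation needed, only Lemma~\ref{lem:cover-line-group} for the random choice of $j^*$) and yields the $+4$ shadow bound directly from disjointness of the intervals. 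That is a cleaner argument, and it also makes $\mathcal I'$ a deterministic function of the instance rather than of $\OPT$.

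The one point you should make explicit is where $\pi_i$ lives in the DP hierarchy. Your $\pi_i$ is a portal at the finest level; it is generally \emph{not} among the $2m$ portals of the level-$j$ sub-squares that $\Gamma$ first separates, so it cannot be declared a required portal of those squares. It can only be declared required at the $\Theta(h\rho)$-sized leaf square(s) whose side on $\Gamma$ contains it, which means the required-portal guessing (and, if one wished, the extension re-covering the dropped segments) happens at the leaf level rather than at the level at which $\Gamma$ first appears, as the paper does. This is compatible with the paper's DP (which guesses required portals per square at that square's own portal resolution, and your out-and-back touch of $\Gamma$ at $\pi_i$ is not a crossing, so it does not disturb the $r$-light/portal-respecting bookkeeping higher up), but since the paper's text describes the required portals and the extension step as attached to the square that $\Gamma$ splits, this mismatch in bookkeeping is worth stating so a reader can check the outer DP still composes.
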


We say the edges of the bounding box are {\em level 0} dissecting lines, the first pair of dissecting lines are {\em level 1} dissecting lines, and so on.
Consider a square $S$ in our hierarchical decomposition and suppose
it is cut into four squares $S_1,S_2,S_3,S_4$ by two dissecting lines
where the horizontal one, line $\Gamma$, is the cover-line $C_\tau$ from $G_{j^*}$, 
and is a level $j$ dissecting line.
Recall that we place a total of $2m$ portals 
along $\Gamma$ inside $S$; $m$ portals on the common sides of $S_1,S_4$ and $m$ along the common side of $S_2,S_3$.
Define $B_\tau(S)$ to be the set of intervals in $B_\tau$ (intervals of $C_\tau$)
that cover a segment that lies inside $S$ (and so intersects with $\Gamma$) (see Figure \ref{fig:s1-s2-s3-s4}).

\begin{figure}[h]
    \centering
    \includegraphics[width=.55\textwidth]{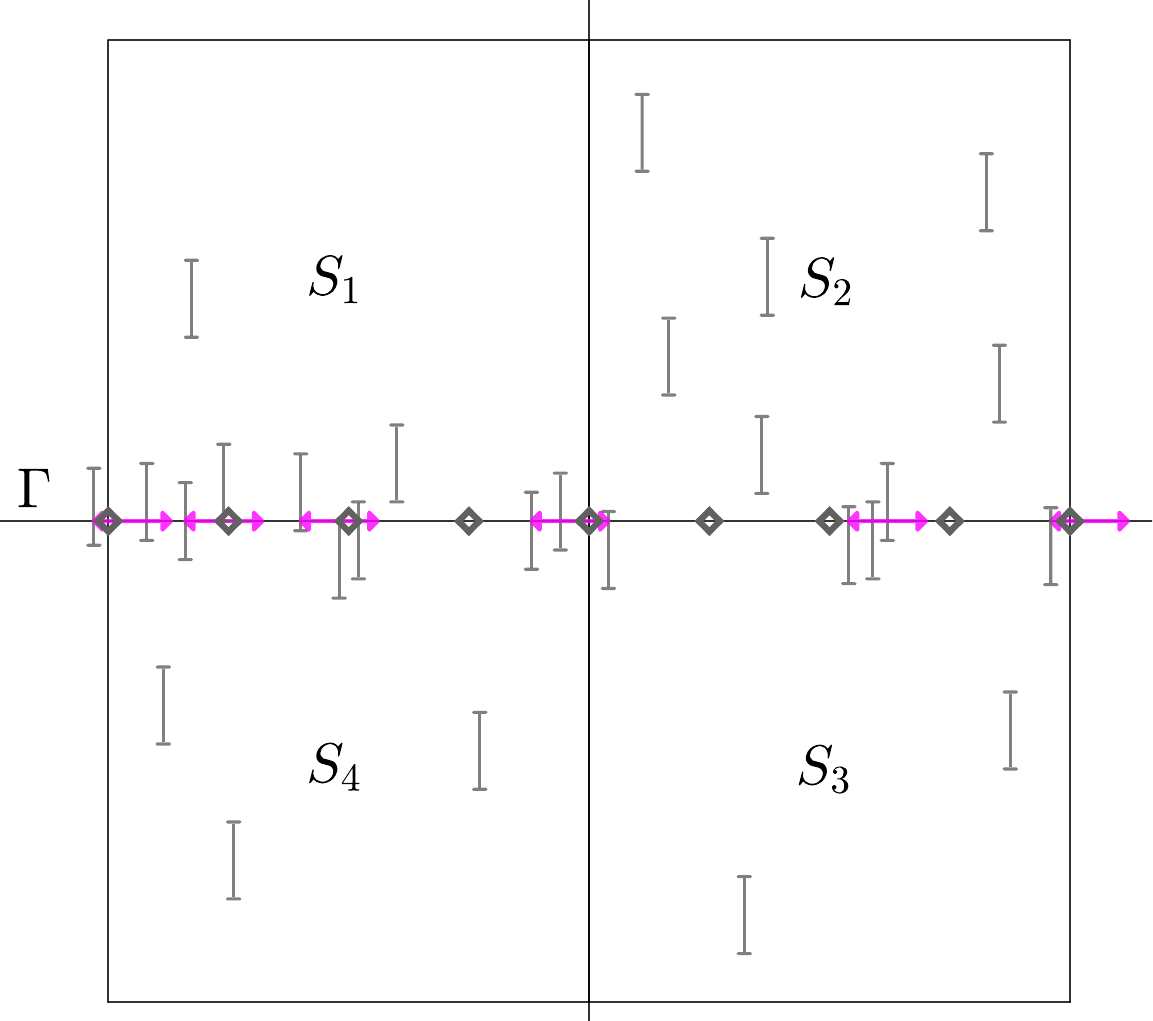}
    \caption[Breaking a square $S$ into $4$ smaller squares]{Breaking a square $S$ into $4$ smaller squares. The magenta parts on line $\Gamma$ (i.e. the cover-line $C_\tau$) show the interval set $B_\tau(S)$}
    \label{fig:s1-s2-s3-s4}
\end{figure}

For each $b\in B_\tau(S)$, suppose $p(b)$ is the nearest portal to it in $S$ among the portals on $\Gamma$, and let $s(b)$ be the left-most segment covered by $b$ that is in $S$.
We are going to modify $\OPT$ in the following way:
Consider a point $p_s$ on $s(b)$ visited by $\OPT$.
Insert the following "legs" to the path: travel from $p_s$ vertically along $s(b)$ until you arrive at its intersection with $\Gamma$,  i.e. arrive at interval $b$ (this length is at most $\rho$), then travel along $\Gamma$ to the right-most segment covered by $b$ (this is also at most $\rho$), and then travel to $p(b)$, and then travel back to $p_s$.
For every other segment $s'$ covered by $b$ in $S$, we are going to short-cut any point on $s'$ that was visited by $\OPT$ as all these segments are now covered by the newly added legs (see Figure \ref{fig:modified}). We also 
short-cut the second visit to $p_s$.

\begin{figure}[h]
 \centering
    \includegraphics[width=.4\textwidth]{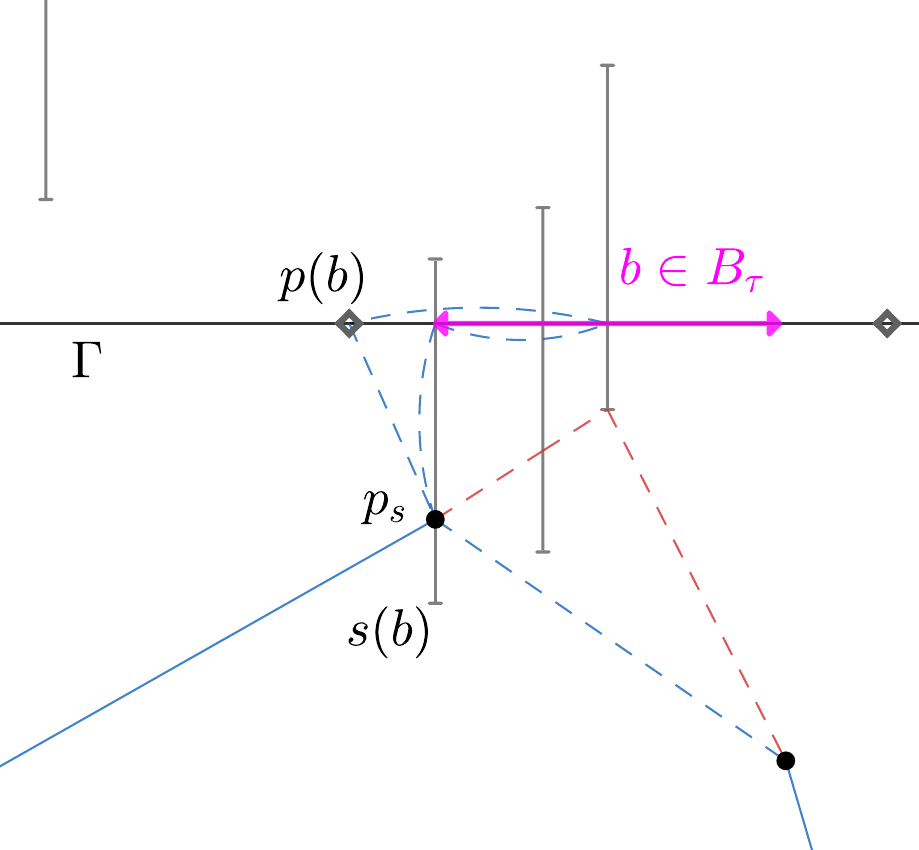}
    \caption[The modified solution for dropping segments on dissecting lines.]{The modified solution for dropping segments crossing horizontal dissecting lines: follow the blue dashed lines from $p_s$; red dashed lines are the discarded parts of the original path}
    \label{fig:modified}
\end{figure}

Using triangle inequality, the expected length of the new legs will increase the cost of the solution by at most $2\rho+2||p_sp(b)||\leq 2(\rho+\frac{N}{2^jm})$. We do this for all the intervals on $\Gamma$ and inside $S$, i.e. if $\OPT$ visits a segment covered by that interval $b$, we change $\OPT$ to make a detour to visit $p(b)$ as well.
Note that each interval $b\in B_\tau$ can belong to at most two $B_\tau(S)$'s (two adjacent squares that $b$ intersects with), and the intervals for which this modification can happen for, are at least $h\cdot\rho$ apart because that is the minimum size of a square of the dissection.
 
Given the random choice of our dissecting lines, since dissecting lines are $h\cdot\rho$ apart, are randomly chosen, and each interval has length $\rho$, the probability that an interval $b\in B_\tau$ appears in two $B_\tau(S)$'s (i.e. cut by a dissecting line), is at most $1/{h}=\eps$. Also, each cover-line in $G_{j^*}$ is a level $j$ dissecting line with probability $2^{j-1}/(N/\rho h)$.
Thus, the expected increase in the cost by this modification for all the
interval of $C_\tau$ is at most

\begin{align*}
    \sum_{j=1}^{\log (N/\rho h)}& \Pr[\mbox{$\Gamma$ is level $j$}]\cdot (1+\eps)\cdot|B_\tau|\cdot 2(\rho+\frac{N}{2^jm})\\
    &\le  2(1+\eps)\cdot|B_\tau|\cdot\sum_{j=1}^{\log(N/\rho h)} \frac{2^{j-1}}{N/\rho h}\cdot\left.(\rho+\frac{N}{2^j m}\right.)\\
    &\le  2(1+\eps)\cdot |B_\tau|\cdot \frac{\rho h}{N} \cdot \left(\frac{N}{h}+\frac{N\log(N/\rho h)}{2m}\right)\\
    &\le  (1+\eps)\cdot|B_\tau|\cdot\rho\cdot(1+\eps h)\\
    &\leq 4\rho\cdot|B_\tau|.
\end{align*}

Considering all cover-lines in $G_{j^*}$, this implies the total expected increase
in the cost is at most $\sum_{C_\tau\in G_{j^*}} 4\rho|B_\tau|$, which combined with Lemma 
\ref{lem:cover-line-group}, implies with probability at least $1/2$, the increase in total cost is at most $O(\eps\cdot\opt)$. Each portal $p$ that is visited by a detour as described above is called a {\em required portal}. 

In fact, we can short-cut more paths so that the number of detours to each portal is bounded by 2. Informally, only the left-most interval to the left of $p$ that has made a detour to $p$, along with the right-most interval to the right of $p$ that has made a detour to $p$ are sufficient to cover all the segments of the intervals in between them. More specifically, consider a portal $p$ on $\Gamma$ and let $b_L(p)$ be the left-most interval in $B_\tau(S)$ to the left of $p$ that covered a segment whose path was detoured to visit $p$ (null if there is no such interval). Similarly, let $b_R(p)$ be the right-most interval among $B_\tau(S)$ to the right of $p$ that covered a segment whose path was detoured to visit $p$ (this too can be  null if there is no such interval).
 In other words, there was a segment $s_L=s(b_L(p))$ and a segment $s_R=s(b_R(p))$ that
 were visited by $\OPT$, and we made a detour to $p$ when $\OPT$ visited $s_L$ and $s_R$. The detour from $s_L$ to $p$ covers all the segments of intervals between $b_L(p)$ and $p$. Similarly, the detour from $s_R$ to $p$ covers all the segments of interval between $p$ and $s_R$.
 Thus, for any interval $b'$ between $b_L(p)$ and $b_R(p)$, all the segments covered by $b'$ are also covered by the detours of $s_L$ and $s_R$. This means for
 all those intervals $b'$, we can short-cut the segments covered by them entirely (in particular, they don't need to make a detour to $p$). 
 Therefore, at most two intervals will have detours to $p$, namely $b_L(p)$
 and $b_R(p)$. And the detours to different portals are disjoint,
 so the added detours don't overlap on $\Gamma$, and since short-cutting doesn't increase the shadow, we only add a shadow of at most 2 per cover-line to the solution. This implies that if we focus on the modified solution restricted to the strip between two cover-lines in $G_{j^*}$, it still has a bounded shadow.
These arguments complete the proof of Lemma \ref{lemma:horizontal-cover-lines}. \qed

\subsection{Dynamic Program}\label{sec:DP}
As mentioned previously, the outer DP is responsible of combining the solutions of each "base-case" subproblem into a solution of the main instance of the problem; and the inner DP is responsible of solving each base-case (i.e. subproblems with a bounding box of size $O(1/\eps)$).

\subsubsection{Outer DP}\label{sec:outer-dp}

The outer DP based on the quad-tree dissection is similar to the classic PTAS for Euclidean TSP.
One can show that for $r=O(1/\eps)$, there is an $r$-light portal respecting tour for $\mathcal I'$ with cost
at most $(1+\eps)\cdot\opt'$, where $\opt'$ is the cost of an optimum solution for $\mathcal I'$. 
The base case of this DP will be instances with bounding box of size $\rho\cdot h$. For such instances,
we solve the problem using an inner DP that is described in the Subsection \ref{sec:inner-dp}.

We will use the "patching lemma" the same way it is described in Arora's approach. We show there is a near-optimum solution for $\mathcal I'$ that is
portal respecting and {\em $r$-light}, meaning each square in our quad-tree decomposition is crossed by the solution only $r$ many times on each side for a parameter $r=O(1/\eps)$. Then a DP similar to the point TSP (outer DP) will combine the solutions for the subproblems to find the solution for a bigger subproblem.
Since we don't know which portals for each square are supposed to be "required" in $\mathcal I'$ (so that the solution can be extended to cover the dropped line segments), for each such square we "guess" the set of required portals in our DP; i.e. we will have an entry for each guessed set of portals on the horizontal sides of a square as the set of required portals in our DP. Since the number of portals is logarithmic, this guessing remains polynomially bounded. For now, assume that we know all the required portals, and hence,
instance $\mathcal I'$ itself (even though $\mathcal I'$ is defined based on $\OPT$ which we don't know).

Consider instance $\mathcal I'$ and let $\OPT'$ be the optimum solution for it, and let the cost of that solution be $\opt'$. For each dissecting line $\Gamma$ (vertical or horizontal), let $t(\Gamma)$ be the number of intersections of $\OPT'$ with $\Gamma$ and $T = \sum_\Gamma t(\Gamma)$. 
\begin{lemma}[\cite{Arora98}]\label{lemma:T-less-2opt}
    $T\le 2\cdot\opt'/(\rho h)$.
\end{lemma}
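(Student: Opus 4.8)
The plan is to bound $T=\sum_\Gamma t(\Gamma)$ by charging each crossing of $\OPT'$ with a dissecting line to a portion of the tour of length at least $\rho h/2$, using the fact that consecutive parallel dissecting lines are exactly $\rho h$ apart. First I would observe that every dissecting line (horizontal or vertical) lies on a grid of lines spaced $\rho h$ apart in each of the two axis directions; so the horizontal dissecting lines partition the plane into horizontal slabs of height $\rho h$, and likewise the vertical ones into vertical slabs of width $\rho h$. Fix the horizontal dissecting lines first. Walk along $\OPT'$ according to its orientation and look at the sequence of horizontal dissecting lines it crosses. Between two \emph{consecutive} crossings of horizontal dissecting lines, the tour either stays within one slab (so it contributes nothing in between) or, if the two crossed lines are distinct, the tour has travelled vertical distance at least $\rho h$; if the two crossed lines are the same line, the tour has left the slab on one side and come back, again travelling vertical distance at least $2\cdot 0$— more carefully, one argues that each crossing can be paired with a disjoint piece of the tour of Euclidean length at least $\rho h/2$, because after crossing a horizontal dissecting line the tour must travel at least $\rho h/2$ before it can cross another horizontal dissecting line (the nearest one, on either side, is $\rho h$ away, so to return to the \emph{same} line it needs vertical excursion, and to reach a \emph{different} one it needs distance $\rho h$; in the worst case half of that $\rho h$ is charged to the crossing entering the segment between them and half to the crossing leaving it). Thus the number of crossings with horizontal dissecting lines is at most $2\cdot\opt'/(\rho h)$ — and here one must be slightly careful that the "length $\rho h/2$" pieces assigned to distinct horizontal crossings are edge-disjoint along the tour, which they are if we assign to the $i$-th horizontal crossing the half of the tour-arc between it and the next horizontal crossing on each side, giving total charged length $\le 2\cdot(\text{vertical length of }\OPT')\le 2\cdot\opt'$, hence at most $2\opt'/(\rho h)$ such crossings.

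The same argument applied to the vertical dissecting lines bounds the number of crossings with vertical dissecting lines by $2\cdot\opt'/(\rho h)$ as well — but we must not double count: the horizontal-excursion budget and the vertical-excursion budget are being used separately, so one clean way to phrase it is to note $\sum_{\Gamma\text{ horiz}}t(\Gamma)\le 2\opt'/(\rho h)$ using only the vertical component of the length of $\OPT'$, and $\sum_{\Gamma\text{ vert}}t(\Gamma)\le 2\opt'/(\rho h)$ using only the horizontal component, and since the vertical component plus the horizontal component of the tour's length is at most (in fact, by $\ell_1$ vs $\ell_2$, comparable to but not larger than a constant times) $\opt'$, one gets $T\le 2\cdot\opt'/(\rho h)$ as claimed in the cited bound. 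In fact the cleanest accounting, which is exactly the one in Arora's paper, is: assign to each crossing of $\OPT'$ with \emph{any} dissecting line the arc of $\OPT'$ of length $\rho h/2$ immediately following that crossing (in the tour's orientation); these arcs are pairwise disjoint because between any two consecutive crossings (of whatever orientation) the tour travels at least $\rho h$, so in particular at least $\rho h/2$; therefore $T\cdot(\rho h/2)\le \opt'$, i.e. $T\le 2\opt'/(\rho h)$.

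The main obstacle I expect is making the disjointness-of-charged-arcs argument fully rigorous, in particular handling the case where the tour crosses the \emph{same} dissecting line twice in quick succession (a small excursion across a line and back): one has to verify that even then, a crossing cannot be followed by another crossing (of any dissecting line, parallel or perpendicular) within tour-distance $\rho h/2$. For a perpendicular line this is immediate (perpendicular lines are also $\rho h$ apart, and the tour starts on the crossed line which is at distance $\rho h$ from the next perpendicular one in each direction — actually here one wants the crossed horizontal line and the nearest vertical line, which could be arbitrarily close). So the accounting has to be done per-orientation, not globally: charge each \emph{horizontal} crossing to the following arc of vertical-length $\rho h/2$ of $\OPT'$, noting consecutive horizontal crossings are $\ge \rho h$ apart in vertical travel; these arcs are disjoint and their total vertical length is at most the vertical length of $\OPT'\le\opt'$, giving $\le 2\opt'/(\rho h)$ horizontal crossings, and symmetrically for vertical crossings, and then — since the statement we need is the combined bound $T\le 2\opt'/(\rho h)$ rather than $4\opt'/(\rho h)$ — one uses that vertical length plus horizontal length of the tour is what gets split, and the constant in \cite{Arora98} is stated for this combined quantity; I would simply cite \cite{Arora98} for the precise constant, since the lemma is quoted from there, and present the above as the sketch of why a bound of this form holds.
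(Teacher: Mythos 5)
Your proposal takes a genuinely different route from the paper, and it has real gaps.

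The paper's proof is a per-leg computation: for each leg $\ell$ of $\OPT'$ with horizontal/vertical extents $\Delta x,\Delta y$, it bounds the number of dissecting lines crossed by $(\Delta x+\Delta y+2)/(\rho h)$, then applies Cauchy--Schwarz to get $(\sqrt 2\, L_\ell + 2)/(\rho h)$, and then uses the scaling fact that every leg has length $L_\ell\ge 4$ to absorb the additive $+2$ into $2L_\ell$, i.e.\ $\sqrt 2 L_\ell + 2 \le 2L_\ell$. Summing over legs gives $T\le 2\opt'/(\rho h)$ with the stated constant. You instead attempt a charging argument, assigning to each crossing a disjoint arc of the tour of length $\rho h/2$.

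Two concrete problems with the charging route. First, the disjointness you need does not hold even after restricting to one orientation: the tour can cross a horizontal dissecting line, immediately hit a segment tip just above it, and recross the same line a tiny distance later, so the ``vertical travel between consecutive horizontal crossings is at least $\rho h$'' step is false, and the arcs you assign can overlap. You notice the analogous issue for perpendicular lines but do not notice that the same-orientation version also fails. Second, even granting the per-orientation charging, you would get horizontal crossings $\le 2(\text{vertical length})/(\rho h)$ and vertical crossings $\le 2(\text{horizontal length})/(\rho h)$, hence $T\le 2\cdot(\ell_1\text{-length of }\OPT')/(\rho h)\le 2\sqrt 2\cdot\opt'/(\rho h)$, which is weaker than the claimed $2\opt'/(\rho h)$; you gesture at this discrepancy but do not close it. The ingredient that closes it in the paper is precisely $L_\ell\ge 4$, which your argument never uses. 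Finally, the ``assign each crossing the following arc of length $\rho h/2$'' accounting is not, as you assert, the one in Arora's paper; Arora's argument (and this paper's) is the per-leg Cauchy--Schwarz bound. Since you ultimately fall back on citing \cite{Arora98} for the constant, the proposal as written does not actually establish the lemma.
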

\begin{proof}
    Let $\ell = (x_1, y_1)\to (x_2, y_2)$ be any leg of $\OPT'$. Let $\Delta x = \abs{x_1 - x_2}$ and $\Delta y = \abs{y_1 - y_2}$. The contribution of $\ell$ to $\opt'$ is its length, i.e. $L_\ell = \sqrt{(\Delta x)^2 + (\Delta y)^2}$. Note that due to the scaling, we have $L_\ell \ge 4$. Since the dissecting lines are $\rho h$ apart, there are at most $(\Delta x + \Delta y + 2)/(\rho h)$ dissecting lines that intersect with $\Gamma$; so the contribution of $\Gamma$ to $T$ is at most the same amount. Using the Cauchy-Schwarz inequality, we have
    $2((\Delta x)^2 + (\Delta y)^2)\ge (\Delta x + \Delta y)^2$, which implies
    \begin{align*}
        (\Delta x + \Delta y + 2)/(\rho h) &\le (\sqrt{2((\Delta x)^2 + (\Delta y)^2)} + 2 )/(\rho h)= (\sqrt 2\cdot L_\ell + 2)/(\rho h).
    \end{align*}
    It suffices to show $ \sqrt 2\cdot L_\ell + 2 \le 2L_\ell$; this is easily seen to be true because $L_\ell \ge 4$. Therefore, if we add these inequalities for all legs $\ell$ of $\OPT'$, we get the lemma's statement as the result.
\end{proof}

The following lemma is essentially the same as the one in the  case of point TSP (except we have different stopping points):
\begin{lemma}[\cite{Arora98}]\label{lemma:portal-respecting}
    Considering the randomness of the dissecting lines, with probability of at least $\frac12$, there exists a portal-respecting solution for $\mathcal I'$ with cost at most $(1 + \eps)\cdot \opt'$ for portal parameter $m = O(\frac{1}\eps\cdot \log \frac N{\rho h})$
\end{lemma}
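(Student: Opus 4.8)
The plan is to run the proof of Arora's structure theorem \cite{Arora98} on our dissection, the only difference from the classical setting being that the quad-tree stops at squares of side $\rho h$, so it has depth $D=\log(N/(\rho h))$ and a square at level $i$ has side $N/2^i$. Fix the optimum solution $\OPT'$ of $\mathcal I'$, of cost $\opt'$. Two preliminary remarks make the extra requirements of $\mathcal I'$ harmless. First, every remaining segment of $\mathcal I'$ lies strictly inside a leaf square: vertical dissecting lines sit at odd coordinates, and all segments meeting a horizontal dissecting line were deleted in forming $\mathcal I'$; hence no client is visited at a point lying on a dissecting line. Second, every required portal of $\mathcal I'$ is itself a portal and is already visited by $\OPT'$, so no move made below will be needed there and that visit is automatically preserved.

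The core step is portal-snapping. A dissecting line $\Gamma$ introduced at level $i$ separates two level-$i$ children of the level-$(i-1)$ square it bisects, so a portal-respecting tour may cross $\Gamma$ only at one of the portals the adjacent level-$i$ squares place on $\Gamma$, which lie at spacing $N/(2^i m)$. Processing the dissecting lines in order of increasing level — so that, as in \cite{Arora98}, moves at one level disturb only crossings with strictly finer lines — snap each crossing of $\OPT'$ with a level-$i$ line to the nearest such portal; this displaces the crossing by at most $\tfrac12\cdot N/(2^i m)$ and, by the triangle inequality applied to the two legs incident to the crossing, lengthens the tour by at most $N/(2^i m)$ per crossing. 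When all levels are done, the tour is portal-respecting, is still a closed tour, still visits every remaining segment (whose visit-vertices were never on a dissecting line) and every required portal, and its cost has increased by at most $\sum_{\Gamma} t(\Gamma)\cdot N/(2^{\mathrm{level}(\Gamma)}m)$, where $t(\Gamma)$ is the number of crossings of $\OPT'$ with $\Gamma$.

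Now take the expectation over the random shift of the dissection. A fixed grid line is a level-$i$ dissecting line with probability $O(2^i\rho h/N)$ — exactly the estimate already used in Subsection~\ref{subsec:dropping-horizontal} — so the expected charge to a single crossing is $\sum_{i=1}^{D}O(2^i\rho h/N)\cdot N/(2^i m)=O(\rho h\,D/m)=O\!\big(\rho h\log(N/(\rho h))/m\big)$: the sparsening of portals at small $i$ exactly cancels the rarity of small levels, leaving the factor $D$. By Lemma~\ref{lemma:T-less-2opt}, $\sum_\Gamma t(\Gamma)\le 2\opt'/(\rho h)$, so the expected total increase is $O\!\big(\opt'\log(N/(\rho h))/m\big)$, which is at most $\tfrac\eps2\opt'$ once $m\ge c\cdot\tfrac1\eps\log\tfrac N{\rho h}$ for a suitable constant $c$ — the value of $m$ in the statement. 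Markov's inequality then gives, with probability at least $\tfrac12$, an increase of at most $\eps\opt'$, i.e.\ a portal-respecting solution of cost at most $(1+\eps)\opt'$; if one needs this simultaneously with the other constant-probability events of the construction, a union bound after adjusting constants suffices.

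The delicate point is the interaction between the snapping moves: a move on a coarse ($i$ small) line can displace a leg by as much as half of the large spacing $N/(2^i m)$, potentially creating new crossings with finer dissecting lines at non-portal points. This is handled exactly as in Arora's structure theorem by performing the moves in order of increasing level, so that each round introduces new crossings only with strictly finer lines, which are then snapped in a later round with a geometrically smaller budget; the resulting geometric series is what keeps the total within the bound above. Everything else is the routine arithmetic carried out in the previous paragraph, specialised to the scale $\rho h$ and depth $\log(N/(\rho h))$.
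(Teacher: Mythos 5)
Your proof is correct and follows essentially the same route as the paper's: snap each crossing of $\OPT'$ with a level-$j$ dissecting line to the nearest portal, bound the expected increase by $O(\rho h\,\log(N/\rho h)/m)$ per crossing by summing over the randomly chosen level, combine with Lemma~\ref{lemma:T-less-2opt}, and finish with Markov. Your extra remarks (segments of $\mathcal I'$ lie strictly inside leaf cells, required portals are already visited, processing levels coarse-to-fine to control cascading crossings) are details the paper delegates to the cited survey rather than spelling out, but they do not change the argument.
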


\begin{proof}
	The proof is similar to the survey in \cite{Vazirani}.
    Consider any dissecting line $\Gamma$ of level $j$ and focus on the intersections of $\OPT'$ with that line. Consider any leg $\ell = ab$ of $\OPT'$ which intersects $\Gamma$, say, at a point $q$ and suppose $p$ is the nearest portal of $\Gamma$ to $q$. Replace $\ell$ with
    with two new "legs" $\ell_1 = ap$ and $\ell_2 = pb$. Let $d$ be the distance of $q$ to $p$. Using triangle inequality, it can be seen that $\ell_1 + \ell_2 \leq \ell + 2d$; meaning the additional cost for going through portal $p$ is at most $2d$. The distances between the portals on level $j$ line $\Gamma$ are $d_j = \frac N{2^jm}$, and clearly $d\le d_j$. Recall that $\OPT'$ intersects with $\Gamma$, $t(\Gamma)$ times. Thus, the expected increase in cost for any dissecting line $\Gamma$ is at most
    \begin{align*}
        \sum_{j = 1}^{\log N/\rho h}\Pr[\Gamma {\rm \ is\ level\ } j]\cdot t(\Gamma)\cdot 2\cdot\frac N{2^jm} &\le \sum_{j = 1}^{\log N/\rho h}\frac{2^{j - 1}}{N/\rho h}\cdot t(\Gamma)\cdot2\cdot \frac N{2^j m}\\
        &= \frac{\rho h}m\cdot \sum_{i = 1}^{\log N/\rho h} t(\Gamma)\\
        &= \frac{\rho h}m \cdot \log \tfrac N{\rho h} t(\Gamma).
    \end{align*}
    For
    $m \ge \frac{4}\eps \log\frac N{\rho h}$, the last value above is at most $ \frac{\eps\rho h}{4}\cdot t(\Gamma)$. Adding all these inequalities over different $\Gamma$'s gives us $\frac{\eps\rho h}{4}\cdot T$, which according to Lemma \ref{lemma:T-less-2opt} is at most $\frac\eps2\cdot \opt'$. Using Markov's inequality the statement of the lemma follows.
\end{proof}

The patching Lemma (stated below) for classic Euclidean TSP holds in our setting as well.
\begin{lemma}[The patching Lemma \cite{Arora98}]\label{lemma:patching-lemma}
    For any dissecting line segment $\tau$ with length $L_\tau$, if a tour crosses $\tau$ more than twice, it can be altered to still contain the original tour, but intersect with $\tau$ at most twice with an additional cost not greater than $6L_\tau$. 
\end{lemma}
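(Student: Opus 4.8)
The plan is to reprove Arora's patching lemma \cite{Arora98} essentially verbatim (the exposition in \cite{Vazirani} is also a good template), since the geometry of a single dissecting segment is the same in our setting. I would begin by listing the points $q_1,\dots,q_t$, with $t\ge 3$, at which the tour meets $\tau$, and by cutting the tour at exactly these $t$ points. By maximality of this list of crossings, each resulting open sub-arc lies entirely in one of the two closed half-planes bounded by the line supporting $\tau$, so it touches $\tau$ only at its two endpoints. The goal is then to re-glue these $t$ sub-arcs into a single closed curve that meets $\tau$ at most twice, using only new segments that run alongside $\tau$, and crucially to do this without discarding any sub-arc, so that every point (in particular every segment/neighbourhood) visited by the original tour is still visited — this is exactly what ``still contain the original tour'' asks for.

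The core construction I would use is the standard one: order the crossing points $p_1\le p_2\le\dots\le p_t$ by their position along $\tau$, and add, infinitesimally to the left of $\tau$, a family of rail segments joining $p_i$ to $p_{i+1}$ for the appropriate residues of $i$ (so that every $p_i$ receives an odd number of left incidences), together with a symmetric family infinitesimally to the right; the loose ends of the cut sub-arcs are then spliced onto these rails so that at each $p_i$ the curve is re-routed to stay on one side rather than cross. Because the rails sit off $\tau$ on its two sides and because each rail is a union of disjoint sub-intervals of a copy of $\tau$, each traversed only a bounded number of times, the total length of all added segments, counted with multiplicity, is at most a small constant times $L_\tau$; a careful accounting gives $3L_\tau$, and in any case at most $6L_\tau$, which is all we need.

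The one genuinely delicate point — and the step I expect to be the main obstacle — is connectivity: after cutting at all $t$ points and re-gluing separately on the left and on the right, the resulting $2$-regular object need not be a single closed curve, and merging its components may force leaving a bridge across $\tau$, hence up to two residual crossings (one to cross, one to return). This is handled, as in Arora, by examining the auxiliary $2$-regular multigraph whose vertices are the crossing points and whose edges record which points are joined by an original sub-arc versus by a rail segment, adjusting the rail pairing to reduce the component count, and treating the parity of $t$ (the cases $t$ even and $t$ odd differ slightly, with an odd $t$ being what forces the residual two crossings). Once the rail structure and this connectivity/parity bookkeeping are fixed, the bound on the added length is a direct calculation, and the property that the new tour contains the original is immediate, since the construction only ever breaks and re-glues sub-arcs and never deletes one.
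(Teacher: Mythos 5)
Your proposal correctly reconstructs Arora's standard patching argument (cut at the crossings, add rail segments on both sides, fix up the Eulerian/parity bookkeeping to re-glue into one closed curve with at most two residual crossings, and bound the added length by a small multiple of $L_\tau$), which is precisely what the paper does: its proof is simply a citation to \cite{Arora98}, noting that the same argument carries over unchanged. So you take essentially the same approach, just spelled out in more detail.
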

\begin{proof}
    The same proof as in \cite{Arora98} applies here.
\end{proof}

\begin{observation}\label{cor:portal-twice}
	A single point can be seen as a $ 0 $-length segment. By using Lemma \ref{lemma:patching-lemma}, we get that at no additional cost (i.e. extra cost of $ 6\times 0 $), each portal is visited at most twice.
\end{observation}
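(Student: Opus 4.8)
The plan is to obtain this statement as an immediate consequence of the patching lemma (Lemma \ref{lemma:patching-lemma}) applied to a degenerate dissecting segment. A portal $p$ sits on some dissecting line, and we may regard $p$ itself as a dissecting line segment $\tau \subseteq \Gamma$ of length $L_\tau = 0$ (the one-point sub-segment of the dissecting line consisting of $p$ alone). With this viewpoint, each time the tour passes through $p$ it "crosses" $\tau$, so the number of visits the tour makes to $p$ is exactly the number of crossings of $\tau$ in the sense of Lemma \ref{lemma:patching-lemma}.

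I would then invoke Lemma \ref{lemma:patching-lemma} verbatim: if the current ($r$-light, portal-respecting) tour visits $p$ more than twice, it can be altered so that it still contains a re-routing of the original tour, crosses $\tau$ (i.e. visits $p$) at most twice, and incurs additional cost at most $6L_\tau = 6\cdot 0 = 0$. Hence the modification is free. Performing this local surgery at every portal visited more than twice produces a tour of the same cost in which every portal is visited at most twice, which is the claim.

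The only thing that needs checking is that patching at a single zero-length point does not damage the other invariants we are carrying along (portal-respecting-ness, $r$-lightness, bounded shadow in the base squares, and the set of required portals). Because the patching connectors placed "along $\tau$" have length $0$ and therefore coincide with the single point $p$, the operation is entirely local to $p$: every inter-square crossing still occurs at a portal, no \emph{other} dissecting line gains a crossing, and in particular the shadow in the base squares is unaffected; moreover patching never deletes $p$ from the tour (it leaves at least one visit), so any portal that was required or any segment that was covered before remains so. Since the operation is local to one point of zero length, no genuine obstacle arises — which is exactly why this is stated as an observation rather than a lemma.
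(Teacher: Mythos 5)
Your proposal is correct and takes essentially the same route as the paper: regard a portal as a zero-length dissecting segment and invoke Lemma \ref{lemma:patching-lemma} directly, so that reducing the number of visits to at most two costs $6\cdot 0=0$. The extra paragraph you add verifying that the surgery is local and preserves portal-respecting-ness, $r$-lightness, shadow, and coverage is a sensible elaboration but matches what the paper leaves implicit in stating this as an observation.
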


The next lemma shows the existence of a near-optimum solution that is 
$r$-light and portal respecting for $r=O(1/\eps)$:
\begin{lemma}\label{lemma:p-tour-r-light}
    Given the randomness in picking the dissecting lines, with probability at least $\frac 12$, there is an $r$-light portal respecting tour for $\mathcal I'$ with cost at most $(1+\eps)\cdot \opt'$ for 
    $r = O(\frac{1}\eps)$.
\end{lemma}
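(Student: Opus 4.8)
The plan is to follow Arora's structure theorem for Euclidean TSP essentially verbatim, using Lemma~\ref{lemma:portal-respecting} for the portal-respecting step and Lemma~\ref{lemma:patching-lemma} (the patching lemma) for the $r$-lightness step, and then merging the two probabilistic guarantees. Fix the random shift that determines the quad-tree dissection. By (the proof of) Lemma~\ref{lemma:portal-respecting}, the expected cost increase incurred by snapping every crossing of $\OPT'$ to the nearest portal is $O(\eps\cdot\opt')$, so there is a portal-respecting tour $\pi$ for $\mathcal I'$ whose cost, in expectation over the shift, exceeds $\opt'$ by $O(\eps\cdot\opt')$. Starting from $\pi$, I would process the dissecting line segments of the quad-tree and, whenever a segment $\tau$ is currently crossed more than $r$ times, apply Lemma~\ref{lemma:patching-lemma} to it: a single application reduces the number of crossings of $\tau$ to at most $2$ at an additive cost of at most $6L_\tau$, and after snapping the (at most two) new crossings back to portals via Observation~\ref{cor:portal-twice} the segment $\tau$ is $r$-light. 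Since the dissection is only carried down to squares of side $\rho h$, patching is applied only to dissecting lines of level strictly above the base level, so the interior of a base square is not disturbed by this process.

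For the cost analysis I would bound the expected total patching cost over the random shift, exactly as in Arora's proof. A patch on a level-$i$ segment $\tau$ costs at most $6L_\tau$, which is $O(N/2^{i})$, and a fixed grid line becomes a level-$i$ dissecting line with probability $2^{i-1}/(N/\rho h)$, the same estimate used in the proof of Lemma~\ref{lemma:portal-respecting}. Since a segment is patched only when it currently carries more than $r$ crossings, a charging argument bounds the total number of patching operations by $O(T/r)$, where $T=\sum_\Gamma t(\Gamma)$ counts all crossings of $\pi$ with dissecting lines; by Lemma~\ref{lemma:T-less-2opt} we have $T\le 2\opt'/(\rho h)$. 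Summing $(\text{prob.\ of level }i)\times(\text{number of level-}i\text{ patches})\times 6L_\tau$ over $i=1,\dots,\log(N/\rho h)$ then telescopes to an expected total patching cost of $O(\opt'/r)$; taking $r=\Theta(1/\eps)$ makes this $O(\eps\cdot\opt')$.

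To conclude I would add the two expected cost increases (portal snapping and patching), each $O(\eps\cdot\opt')$, and only then apply Markov's inequality: this yields, with probability at least $1/2$ over the random shift, a single tour for $\mathcal I'$ that is simultaneously portal-respecting and $r$-light for $r=O(1/\eps)$ and has cost at most $(1+\eps)\cdot\opt'$, after rescaling the hidden constants in $r$ and in the portal count $m=O(\tfrac1\eps\log\tfrac N{\rho h})$. Alternatively one can run both steps with parameter $\eps/c$ for a suitable constant $c$ and take a union bound over the two failure events. The step I expect to be the main obstacle is the bookkeeping inside the patching analysis: a patch on $\tau$ introduces a detour that runs alongside $\tau$ and therefore creates new crossings on the dissecting lines perpendicular to $\tau$, so one must either process the levels in the order used by Arora (finest first) or use the standard potential-function amortization to ensure that this propagation only contributes a constant factor to the number of patchings and hence keeps the total added cost $O(\opt'/r)$. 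Everything else is the standard quad-tree argument and transfers to $\mathcal I'$ unchanged, since the portal and patching lemmas have already been established in our setting.
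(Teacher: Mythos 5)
Your proposal is correct and takes essentially the same approach as the paper, which simply invokes Arora's Structure Theorem and states that the same proof carries over. You unpack that argument explicitly (portal-snapping via Lemma~\ref{lemma:portal-respecting}, iterated patching via Lemma~\ref{lemma:patching-lemma} with the standard finest-first processing to control propagated crossings, the $O(T/r)$ charging using Lemma~\ref{lemma:T-less-2opt}, and a final Markov bound), which is precisely the argument the paper's citation refers to.
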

\begin{proof}
    This is implied by the {\em Structure Theorem} in \cite{Arora98}, and the similar proof works here.
\end{proof}

\subsubsection*{Outer DP Table and Time Complexity} 

The outer DP is similar to the DP for classic Euclidean TSP except that we need to
take care of required portals that are going to be guessed and passed down to the subproblems.
Note that there are $O(n)$ subproblems in each level of the dissection tree, and so a total of $O(n\log n)$
squares to consider.
For each square $S$ with $4m$ portals around it, we guess a subset of portals on the horizontal sides of $S$
to be required. The number of such guesses is $2^{2m}$ where $m=O(\frac{1}{\eps}\cdot\log\frac{N}{\rho h})=O(\log n/\eps)$.
There are $(4m+1)^{4r}$ guesses for up to $4r$ portals to be chosen for an $r$-light portal respecting, and
at most $(4r)!$ for the pairings of these portals.
So the size of the DP table  is at most $O(n\log n\cdot 2^{2m}\cdot (4m+1)^{4r}\cdot (4r)!)=O(n\log^{O(r)} n)$.

The DP table is filled bottom up. The base cases are when we have a square of side length $\rho\cdot h$.
These subproblems are solved using the inner DP described in the next section.
For every other square $S$ that is broken into 4 squares $S_1,\ldots,S_4$, we solve the subproblem of $S$
after we have solved all subproblems for $S_1,\ldots,S_4$.
The way we combine the solutions from those of the sub-squares to obtain the solution for $S$ is very much like
the classic point TSP. However, we have to extend the solutions so that the line segments that were intersecting the horizontal dissecting line that split $S$, are now fully covered by the guessed required portals for $S_1,\ldots,S_4$.
More specifically, suppose $\Gamma$ is the horizontal dissecting line that corresponds to a cover-line $C_\tau$ from group $G_{j^*}$ (and hence we removed all the segments crossing $C_\tau$ and instead made some of the portals along $C_\tau$ as required). We add those segments of the instance back, and we extend the solutions from the require portals to travel left and right to cover these segments.
Similar to the classic TSP, the total time to fill in the outer DP table is $O(n \log^{O(r)} n)$.

\subsubsection{Inner DP}\label{sec:inner-dp}
Recall that each base case of the quad-tree decomposition is a subproblem defined 
on a square $S$ with size $\rho h \times \rho h$, and has $4m$ portals around it.
Since we assume the solution we are looking for is $r$-light, it means the instance defined by $S$ has also
a set $P$ of size at most $4r$ of portal pairs (where $r=O(1/\eps)$). Each pair $(p_i,q_i)\in P$ specifies that the solution restricted to $S$, has a $p_i,q_i$-path. We are also given a guessed subset $Q$ of the portals around $S$ (specifically  on the top and bottom side of $S$) as the required portals. The goal is to find a minimum cost collection of paths that start/end at the given set of portal pairs $P$ that cover all the line segments in $S$, as well as visit all the required portals in $Q$. Let us denote this instance by $(S,P,Q)$. Note
that by Theorem \ref{thm:onestrip-shadow}, Lemma \ref{lemma:horizontal-cover-lines}, and Lemma \ref{lemma:p-tour-r-light}, there is a near-optimum solution such that it is $r$-light for each square of the dissection, is portal respecting, covers all the required portals, and has shadow bounded by $O(1/\eps^2)$. Also using Lemma \ref{lemma:reflections}, the length
of any pure reflection sequence in it is bounded by $O(1/\eps)$. We describe the inner DP to find 
an optimum solution with bounded shadow (and pure reflection sequence bounded to $O(1/\eps)$ elements) restricted
to subproblem $(S,P,Q)$. For square $ S $, let us use $\OPT_S$ and $\opt_S$ to denote such a bounded shadow optimum solution and its value, respectively.

Informally, the DP is a (nontrivial) generalization of the DP for the classic (and textbook example) bitonic TSP in which the shadow is 2. In our case, the shadow is $O(1/\eps^2)$.
We are going to consider a sweeping vertical line $\Gamma$ in $S$ (that moves left to right) and "guess" the intersections of $\OPT_S$ with it. 

We define an {\em event point set} in the following way:
\begin{definition}[Event Point]
	Given a subproblem triplet $ (S, P, Q) $, each line segment in $S$ is in the event point set. Also, each portal that is on a horizontal side of $S$
	and is either in $Q$, or participates in a pair of $P$, is also in the event point set.
\end{definition}
We consider an ordering of all the elements in the event point set from left to right (i.e. increasing $x$-coordinate), say $v_1,v_2,\ldots,v_{n_S}$, where $n_S$ is the number of event points; note that $n_S=O(n)$.
There are $n_S - 1$ equivalent classes for positions of $\Gamma$, where each class corresponds to when $\Gamma$ is located between $v_i,v_{i+1}$. A sweep line between $v_i,v_{i+1}$ is denoted by $\Gamma_i$.
Since the shadow of $\OPT_S$ is bounded, the intersection of $\Gamma_i$ with $\OPT_S$ has a low complexity. We will give a more concrete explanation of that complexity below.

Recall Observation \ref{obs:legs} and the types of points in a solution (straight point, break point, or reflection point). Also recall the definition of a pure reflection point (a reflection point that  is not
at a tip of a segment of the instance). Consider the global optimum solution that is $r$-light and portal respecting with bounded shadow
and bounded pure reflection sequence that also covers the required portals of each square.
Suppose $p_{a_1},p_{a_2},\ldots,p_{a_k}$ is the sequence of points in $S$ visited by $\OPT_S$ in this order that are 
{\em not} a straight point nor a pure reflection point; so each of them is a break point (tip of a segment) or perhaps a required portal in $Q$, or a portal in $ P $ (i.e. is an entry or exit point in some pair belonging to $ P $).
So any point visited by $\OPT_S$ between $p_{a_i},p_{a_{i+1}}$ (if there is any) is either a straight point
or a pure reflection point. We define subpaths of $ \OPT_S $ named {\em large legs} as follows:

\begin{definition}[Large Leg]
	The path of $ \OPT_S $ from $p_{a_i}$ to $p_{a_{i+1}}$ is a large leg. Each large leg starts and ends from a portal or a tip of a segment, and all the points in between are either straight points or pure reflection points.
\end{definition}

It follows from Lemma \ref{lemma:reflections} that the number of pure reflection points in each large leg is bounded by $O(1/\eps)$. Each large leg can be guessed by making at most $O(1/\eps)$ guesses for segments or points: guess the two end-points of the large leg (which are either portals or tips of segments), then guess at most $O(1/\eps)$ segments that have pure reflection points on them; once we guess the two end-points and the segments for pure reflections, the pure reflection points are uniquely determined. Since there are $O(n^2)$ choices for the end-points and $O(n^{1/\eps})$ choices for the segments of pure reflection points, the total number of possible large legs is bounded by $n^{O(1/\eps)}$. 
Now since we assume $\OPT_S$ has bounded shadow of $O(1/\eps^2)$, for any sweep line $\Gamma_i$, there are at most $O(1/\eps^2)$ large legs of $\OPT_S$ that can cross $\Gamma_i$. 

So for a fixed $i$ (and sweep the line $\Gamma_i$), let
$\mathcal L_i= L_1,\ldots,L_\sigma$ be the sequence of large legs ($\sigma=O(1/{\eps^2})$) of $\OPT_S$
that cross $\Gamma_i$; where each large leg is specified by the end-points as well as the intermediate segments for pure reflections (if there are any). Then the number of possible choices for $\mathcal L_i$ is $n^{O(1/\eps^3)}$. 
Given $i$ and $\mathcal L_i$, let $S^L_{i},S^R_i$ be the left and right part of $S$ (cut by $\Gamma_i$).
If we ignore the segments covered by $\mathcal L_i$ in $S^L_i$, and consider
the end-points of each $L_j$ as portals too, then the restriction of $\OPT_S$ to $S^L_i$ is a collection of paths that start/end at portals of $P$ in $S^L_i$ or end-points of $L_j$'s in $S^L_i$
that cover all the segments in $S^L_i$ not already covered by $\mathcal L_i$, as well as points in $Q\cap S^L_i$.
More specifically, each part of $\OPT_S$ in $S^L_i$ is a path that starts at a $p_j$ for a pair $(p_j,q_j)\in P$, or  at an end-point of $L_k$ that is in $S^L_i$ and ends at a point $p_{j'}$ (or $q_{j'}$) of another pair
in $P$ that is also in $S^L_i$, or at another end-point of some $L_{k'}$ that is in $S^L_i$. So this induces
some pairs of points, denoted by $P^L_i$:
\begin{definition}[Path-wise Pairing $ P^L_i $]
	Set $ P^L_i $ of pairs of points is said to be the \textbf{path-wise pairing} for $ S^L_i $, if there is a path in $ S^L_i $ between the two points of any given pair $ (a, b)\in P^L_i $. 
	Furthermore, each point in a pair $(a,b)\in P^L_i$ is either a portal in $S^L_i$ that is part of a pair in $P$,
	or is an end-point of a large leg $L_j$ that is in $S^L_i$. 
	
	For any such point in $ S^L_i $, say $ p $, there must be a pair in $ P^L_i $ containing that point. We also assume $ (p, p)\in P^L_i $, and if $ p $ is an end point of a large leg in $ \mathcal L_i $ or $ S_i^L $, and if $ q $ is the other end point of that large leg, then $ (p, q)\in P^L_i $.
\end{definition}

We say a set of pairs $P^L_i$ is not {\em promising} if given $\mathcal L_i$, there is no feasible solution in the
entire $S$ whose restriction to $S^L_i$ defines subpaths consistent with $P^L_i$ 
(i.e. they start and end on the same pairs as specified by $P^L_i$). Otherwise, we consider it promising.
For example if $(p_j,q_j)\in P$,
both $ p_j, q_j $ belong to $S^L_i$, and if $(p_j,u),(q_j,v)\in P^L_i$ where $u$ is one end of a long leg $L_1$
and $v$ is one end of a long leg $L_2$, it must be the case that it is possible to have a path from the other end of $L_1$ to the other end of $L_2$. This would be impossible if, for instance, those other ends of $L_1,L_2$ are paired up with other portals in $P^L_i$. 
Note that since there are at most $4r$ pairs in $P$ and $O(1/\eps^2)$ end-points in $\mathcal L_i$,
the number of possible choices for $P^L_i$ is $(1/\eps)^{O(1/\eps)}$. Also, a given $P^L_i$  (together with $\mathcal L_i$), it can be checked if $ P^L_i $ is promising or not in poly-time in $ n $.

This suggests how we can break the instance $(S,P,Q)$ into polynomially many sub-instances.
For a fixed $i$, guess $\mathcal L_i$ among all those with shadow $O(1/\eps^2)$, break $S$ into $S^L_i,S^R_i$, let $Q^L_i=Q\cap S^L_i$, and guess
the new pairs $P^L_i$ (for $S^L_i$) that are promising. We solve $(S^L_i,P^L_i,Q^L_i)$ for each $S^L_i,P^L_i,Q^L_i$ obtained this way. 
We can solve each such subproblem assuming we have solved all subproblems defined by each $\Gamma_j$ for $j<i$. So formally, let us define a {\em configuration}:
\begin{definition}[Configuration]
 A \textbf{configuration} is a vector $ (i, \mathcal L_i, P^L_i) $ where the components are:
\begin{itemize}
    \item $i$ (indicating $\Gamma_i$ and defining $S^L_i$),
    \item The large legs of $\OPT_S$ crossing $\Gamma_i$, denoted by $\mathcal L_i$, $|\mathcal L_i|=O(1/\eps^2)$,
    \item The pairing $P^L_i$ defined by $\mathcal L_i$, $P$, and the restriction of $\OPT_S$ to $S^L_i$.
\end{itemize}
\end{definition}

This configuration (see Figure \ref{fig:event-point}), defines a subproblem: Suppose $\mathcal L_i$ is a given set of large legs crossing $\Gamma_i$.
Find a collection of paths in $S^L_i$ such that $P^L_i$ specifies the start/end of these paths (and is promising), such that these paths cover all the segments in $S^L_i$ (excluding those already covered by $\mathcal L_i$), and also cover all the points in $Q\cap S^L_i$, with shadow at most $O(1/\eps^2)$. 

\begin{figure}
    \centering
    \includegraphics[width=.4\textwidth]{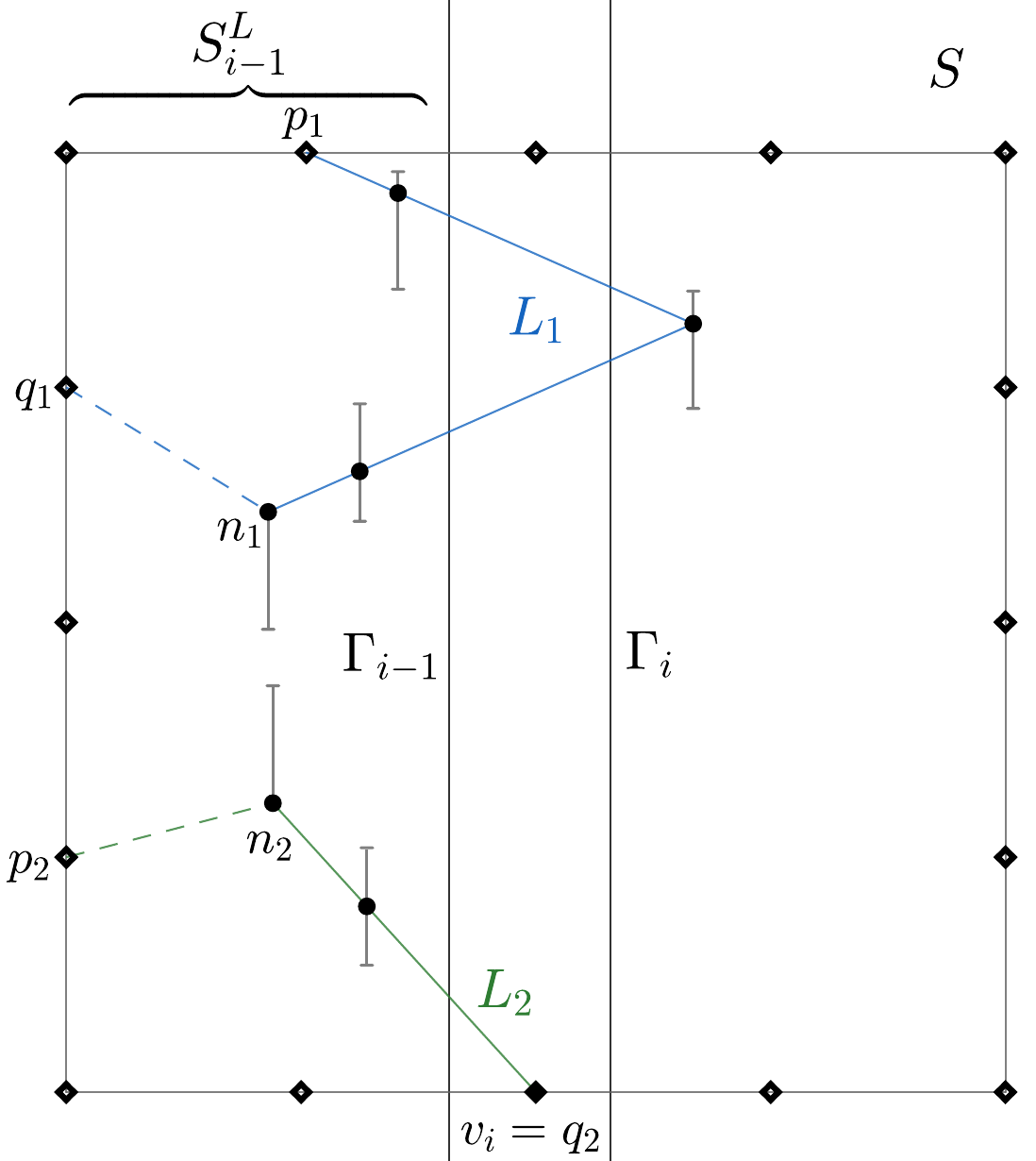}
    \caption[An example of an event point $v_i$ and consecutive vertical lines $\Gamma_{i - 1}, \Gamma_i$.]{An example of an event point $v_i$ and vertical lines $\Gamma_{i - 1}, \Gamma_i$ from two consecutive equivalent classes in square $S$. In this figure, $\mathcal L_{i - 1} = L_1, L_2$ and $\mathcal L_i = L_1$; plus, it is the case that $(p_2, n_2)\in P_{i - 1}^{L}, (q_1, n_1)\in P_{i - 1}^{L},$ and $(q_1, n_1)\in P_{i}^{L}$}
    \label{fig:event-point}
\end{figure}

\noindent The cost of this solution is defined to be the sum of the costs of all the edges that are entirely (i.e. both end-points) in $S^L_i$ (including those legs of a large leg in $\mathcal L_i$ that are entirely in $S^L_i$, but not those that are crossing $\Gamma_i$).
Entry $A[i,\mathcal L_i,P^L_i]$ of the inner DP, stores the minimum cost of such a solution. Recall that there are $n_s=O(n)$ choices for $i$ (and so for $\Gamma_i$), $n^{O(1/\eps^3)}$ choices for $\mathcal L_i$, and $(1/\eps)^{O(1/\eps)}$ choices for $P^L_i$. So there are $n^{O(1/\eps^3)}$ possible configurations, which is the size of our DP table as well.

We fill in the entries of this table $A[.,.,.]$ for increasing values of $i$.
For $i=O(1)$, $A[i,.,.]$ can be computed exhaustively in $O(1)$ time.

For any other value of $i$, we compute $A[i,\mathcal L_i,P^L_i]$ by considering various subproblems 
$(i-1,\mathcal L_{i-1},P^L_{i-1})$ that are {\em consistent} (formally explained soon) with $(i,\mathcal L_{i},P^L_{i})$.
Consider event point $v_{i-1}$; it is either a segment or a portal that is between $\Gamma_{i-1}$ and $\Gamma_i$; which  means it does not belong to $S^L_{i-1}$, but belongs to $S^L_i$. Consider the solution for $(i,\mathcal L_i,P^L_i)$, and the legs (in that solution) that visit $v_i$. In case $v_i$ is a start/end
portal in $P$, there is one leg incident to $v_i$; if $v_i\in Q$ there are two legs incident to $v_i$, and
if $v_i$ is a segment, there are two legs that are incident to a point $v'_i$ on that segment. 
If there is one leg only ($v_i$ is a start/end portal), call that leg $\ell_i$, and if there are two legs, call them $\ell_{i-1},\ell_i$. Depending on whether these legs cross $\Gamma_{i-1}$ or $\Gamma_i$, we have the following situations, which are the {\em consistent} outcomes:
\begin{enumerate}
    \item \textbf{$v_i$ is a start/end portal}, we consider 2 different subcases:
    \begin{enumerate}
        \item \textbf{$\ell_i$ crosses $\Gamma_{i-1}$ but not $\Gamma_i$}:
        Say $\ell_i=v_iu$, where $u$ is a point in $S^L_{i-1}$. 
        In this case, there is a large leg $L\in\mathcal L_{i-1}$ with one end-point $v_i$. Then if $L$ crosses $\Gamma_i$, it means $L$ is a large leg in $\mathcal L_i$. If $L$ does not cross $\Gamma_i$, then $\mathcal L_i=\mathcal L_{i-1} \setminus L$. We consider both possibilities and in each case, consider $P^L_{i-1}$'s that are consistent 
        with $P^L_i$ and 
        set $A[i,\mathcal L_i,P^L_i]=\min_{P^L_{i-1},\mathcal L_{i-1}} \{A[i-1,\mathcal L_{i-1},P^L_{i-1}]\}+||\ell_i||$.
        
        \item \textbf{$\ell_i$ crosses $\Gamma_i$ but not $\Gamma_{i-1}$:}
        In this case, there is a large leg $L\in\mathcal L_i$ that starts with $\ell_i$ and does not cross $\Gamma_{i-1}$, so does not belong to $\mathcal L_{i-1}$. All the other large legs in $\mathcal L_{i-1}$ and $\mathcal L_i$ are the same (as there is no other event point between $\Gamma_{i-1}$ and $\Gamma_i$), and $P^L_i$ and $P^L_{i-1}$ are consistent. Then 
        $A[i,\mathcal L_i,P^L_i]=\min_{P^L_{i-1},\mathcal L_{i-1}} \{A[i-1,\mathcal L_{i-1},P^L_{i-1}]\}$.
    \end{enumerate}
    \item \textbf{$v_i\in Q$}, we consider 3 different subcases:
    \begin{enumerate}
        \item \textbf{$\ell_{i-1},\ell_i$ both cross $\Gamma_{i-1}$ but not $\Gamma_i$:} In this case,
        there are two large legs $L,L'\in\mathcal L_{i-1}$ that both end at $v_i$, say $L$ contains $\ell_i$ and $L'$ contains $\ell_{i-1}$. If $L$ crosses $\Gamma_i$, then $L$ is a large leg in $\mathcal L_i$ as well, similarly for $L'.$ The other large legs of $\mathcal L_i$ and $\mathcal L_{i-1}$ are the same, and $P^L_{i-1}$ is consistent with $P^L_i$. We set $A[i,\mathcal L_i,P^L_i]= \min_{P^L_{i-1},\mathcal L_{i-1}}\{A[i-1,\mathcal L_{i-1},P^L_{i-1}]\}+||\ell_i||+||\ell_{i-1}||$.

        \item \textbf{$\ell_{i-1},\ell_i$ both cross $\Gamma_i$ but not $\Gamma_{i-1}$:} This similar to the previous case. There are two legs $L,L'\in\mathcal L_i$ that both start at $v_i$, say $L$ contains $\ell_i$ and $L'$ contains $\ell_{i-1}$. If $L$ crosses $\Gamma_{i-1}$, then $L$ is a large leg in $\mathcal L_{i-1}$ as well, similarly for $L'$. The other large legs of $\mathcal L_i$ and $\mathcal L_{i-1}$ are the same and $P^L_{i-1}$ is consistent with $P^L_i$.  In this case, $A[i,\mathcal L_i,P^L_i]=\min_{P^L_{i-1},\mathcal L_{i-1}} \{A[i-1,\mathcal L_{i-1},P^L_{i-1}]\}$.
        
        \item \textbf{Exactly one of $\ell_{i-1},\ell_i$ crosses $\Gamma_{i-1}$ and one crosses $\Gamma_i$:}
        Say $\ell_{i-1}$ crosses $\Gamma_{i-1}$, and $\ell_i$ crosses $\Gamma_i$. So $\ell_{i-1}$ will be the last leg of a large leg $L\in\mathcal L_{i-1}$, and $\ell_i$ will be the first leg of a large leg $L'\in\mathcal L_i$. If $L$ does not cross $\Gamma_i$, then $L$ is not in $\mathcal L_i$ at all. Similarly, if $L'$ doesn't cross $\Gamma_{i-1}$, then $L'$ isn't a large leg in $\mathcal L_{i-1}$. We consider both possiblities (i.e. consider sets $\mathcal L_{i-1}$ that are consistent with one of these cases).
        $A[i,\mathcal L_i,P^L_i]=\min_{P^L_{i-1},\mathcal L_{i-1}} \{A[i-1,\mathcal L_{i-1},P^L_{i-1}]\}+||\ell_{i-1}||$.
    \end{enumerate}
    \item \textbf{$v_i$ is a segment}: Subcases are similar to the previous case; let $ v_i' $ be the intersection point of $ \OPT_S $ with $ v_i $:
    \begin{enumerate}
        \item \textbf{$\ell_{i-1},\ell_i$ both cross $\Gamma_{i-1}$ but not $\Gamma_i$:}
        If $v'_i$ is a tip, then $\ell_{i-1}$ is the last leg of a large leg $L\in\mathcal L_{i-1}$, and $\ell_i$ 
        is the last leg of another large leg $L'\in\mathcal L_{i-1}$. Depending on whether $L$ ($L'$) crosses $\Gamma_i$, it can be a large leg in $\mathcal L_i$ or not. We consider both possibilities.
        If $v'_i$ is not a tip, then it must be a pure reflection, so there must be a large leg $L\in\mathcal L_{i-1}$ that contains this as a pure reflection. That large leg may or may not belong to $\mathcal L_i$. We consider all these possibilities (i.e. those $\mathcal L_{i-1}$ consistent with these), and also for each case consider a $P^L_{i-1}$ consistent with $P^L_i$. Then set         
        $A[i,\mathcal L_i,P^L_i]=\min_{P^L_{i-1},\mathcal L_{i-1}} \{A[i-1,\mathcal L_{i-1},P^L_{i-1}]\}+||\ell_{i-1}||
        +||\ell_i||$.
        \item \textbf{$\ell_{i-1},\ell_i$ both cross $\Gamma_i$ but not $\Gamma_{i-1}$:}
        If $v'_i$ is a tip, then $\ell_{i-1}$ is the first leg of a large leg $L\in\mathcal L_{i}$, and $ \ell_i$ 
        is the first leg of another large leg $L'\in\mathcal L_{i}$. Depending on whether $L$ ($L'$) crosses $\Gamma_{i-1}$, it can be a large leg in $\mathcal L_{i-1}$ or not. We consider both possibilities.
        If $v'_i$ is not a tip, then it must be a pure reflection, so there must be a large leg $L\in\mathcal L_{i}$ that contains this as a pure reflection. That large leg may or may not belong to $\mathcal L_{i-1}$ depending on whether it crosses $\Gamma_{i-1}$ or not. We consider all these possibilities, and also for each case consider a $P^L_{i-1}$ consistent with $P^L_i$. Then set         
        $A[i,\mathcal L_i,P^L_i]=\min_{P^L_{i-1},\mathcal L_{i-1}} \{A[i-1,\mathcal L_{i-1},P^L_{i-1}]\}$.
        
        \item \textbf{Exactly one of $\ell_{i-1},\ell_i$ crosses $\Gamma_{i-1}$ and one crosses $\Gamma_i$:}
        In this case, $v'_i$ must be a tip or a straight point. Say $\ell_{i-1}$ crosses $\Gamma_{i-1}$,
        and $\ell_i$ crosses $\Gamma_i$. If $v'_i$ is a tip, then $\ell_{i-1}$ is the last leg of a large leg
        $L\in\mathcal L_{i-1}$, and $\ell_i$ is the first leg of a large leg $L'\in\mathcal L_i$.
        $L$ may cross $\Gamma_i$ (in which case it also belongs to $\mathcal L_i$), also $L$ may cross $\Gamma_{i-1}$ in which case belongs to $\mathcal L_{i-1}$. We consider these possibilities. If $v'_i$ is a straight point, then both $\ell_{i-1},\ell_i$ are part of a large leg $L\in\mathcal L_{i-1}$, and $L$ belongs to $\mathcal L_{i}$ as well. We consider all these cases and consistent $P^L_{i-1},P^L_i$ and set
        $A[i,\mathcal L_i,P^L_i]=\min_{P^L_{i-1},\mathcal L_{i-1}} \{A[i-1,\mathcal L_{i-1},P^L_{i-1}]\}+||\ell_{i-1}||$.
    \end{enumerate} 
\end{enumerate}
\subsubsection*{Consistent Subproblems}\label{def:consistent}
	The consistency of a subproblem by configuration $ (i, \mathcal L_i, P^L_i) $, with a previous subproblem by configuration $ (i-1, \mathcal L_{i - 1}, P^L_{i - 1}) $, comes down to one of the cases mentioned in the previous section. In each subcase, we only need to define what we mean by consistent between $ P^L_i $ and $ P^L_{i - 1} $.
	
	We say $ P^L_i $ as a part of the configuration $ (i, \mathcal L_i, P^L_i) $, and $ P^L_{i - 1} $ as a part of the configuration $ (i-1, \mathcal L_{i - 1}, P^L_{i - 1}) $ are consistent if for any pair $ (a, b)\in P^L_i $:
	\begin{itemize}
		\item If both $ a, b $ are in $ S_{i - 1}^L $, then either:
			\begin{itemize}
				\item $ (a, b) \in P^L_{i-1} $, or
				\item (When $ v_i \in Q $ or when $ v_i $ is a segment containing a pure reflection) There is a large leg $ L_j\in \mathcal L_{i - 1} \cup \mathcal L_{i} $ with end points $ p_1, p_2 $ corresponding to (i.e. having an intersection with) the event point $ v_i $,
					such that $ (a, p_1), (b, p_2)\in P^L_{i - 1} $, or
				\item (When $ v_i\in P $ or $ v_i $ is a segment containing a non-pure reflection or a break point)
				There are two large legs (in $ \mathcal L_{i - 1} \cup \mathcal L_{i} $) that have $ v_i $ as an end point, and have another end point, say respectively $ p_1 $ and $ p_2 $, such that  $ (a, p_1), (b, p_2)\in P^L_{i-1} $.
				
			\end{itemize}

		\item If both $ a, b $ are not in $ S_{i - 1}^L $, then it means that either $ a $ or $ b $, say $ a $, corresponds to the event point $ v_i $. This means either $ a $ is a portal ($ \in P\cup Q $) between $ \Gamma_{i-1} $ and $ \Gamma_i $, or $ a $ is a tip of the segment corresponding to $ v_i $. In either case, there is at least a large leg $ L_j \in \mathcal L_{i - 1} \cup \mathcal L_{i} $ that has $ a $ as one of its end points. There can be at most two such large legs; say $ p_1 $ and possibly $ p_2 $ are the other ends of these at most two large legs.
		Then 
		it must be the case that
		(either) $ (b, p_1) \in P^L_{i - 1} $ (or $ (b, p_2) \in P^L_{i - 1} $).
	\end{itemize}

\subsection{Wrap-up: Proof of Theorem \ref{thm:main}}\label{sec:main-theo-proof}

We finalize the proof of our algorithm for the case of unit length segments, and then extend the proof to the case of similar-length segments.

\begin{theorem}\label{thm:main2}
    There is a $ (1+\eps) $-approximation algorithm for TSPN over $ n $ vertical unit-length line segments that runs in time $n^{O(1/\eps^3)}$.
\end{theorem}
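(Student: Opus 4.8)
Below is the plan.

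The strategy is to assemble all the structural pieces developed in Sections \ref{sec:properties}--\ref{sec:inner-dp} into a single correctness-and-running-time argument. First I would fix $\eps>0$ and recall the reductions from Subsection \ref{sec:prem} and Subsection \ref{sec:dissection}: after perturbing the segments to distinct grid-aligned $x$-coordinates and rescaling, we may work with instance $\mathcal I$ on a bounding box of side $N=O(n^2/\eps)$, with segments of length $\rho$, pairwise at least $4$ apart. By Theorem \ref{thm:onestrip-shadow} (rescaled) together with Lemma \ref{lemma:reflections}, there is a near-optimum solution $\OPT$ of cost $(1+\eps)\opt$ whose shadow in every strip of height $\rho$ is $O(1/\eps)$ and whose pure-reflection sequences have length $O(1/\eps)$; over $h=\lceil1/\eps\rceil$ consecutive strips the shadow is therefore $O(1/\eps^2)$.

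Next I would chain the three ``light-tour'' guarantees. Choosing the dissecting lines and the group offset $j\in[1..h]$ randomly, Lemma \ref{lem:cover-line-group} bounds the expected total length of the dropped (cover-line $G_{j^*}$) segments by $O(\eps\cdot\opt)$; Lemma \ref{lemma:horizontal-cover-lines} then produces the modified instance $\mathcal I'$ together with a solution of cost $(1+O(\eps))\opt$ that visits the required portals and can be extended back to a feasible solution of $\mathcal I$ at an extra $O(\eps\cdot\opt)$, while the shadow in each $G_{j^*}$-strip grows by at most $4$. Applying Lemma \ref{lemma:p-tour-r-light} (the Structure Theorem of \cite{Arora98}), with probability at least $1/2$ there is an $r$-light portal-respecting solution for $\mathcal I'$ of cost $(1+\eps)\opt'$ with $r=O(1/\eps)$; combining with the patching observation (Observation \ref{cor:portal-twice}) each portal is visited at most twice. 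Intersecting all these events (each failing with probability $<1/2$, so a union bound over a constant number of them after boosting, or simply rerunning) gives, with constant probability, a single solution that is simultaneously $r$-light, portal-respecting, covers the required portals, has shadow $O(1/\eps^2)$ inside every base square, and has pure-reflection sequences of length $O(1/\eps)$. This is exactly the object $\OPT_S$ that the inner DP is designed to recover.

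Then I would argue the DP computes such a solution. The outer DP is the classical Arora quad-tree DP augmented with a guessed set of required portals on the horizontal sides of each square; since $m=O(\log n/\eps)$ the number of guesses per square is $2^{2m}=n^{O(1/\eps)}$, there are $O(n\log n)$ squares, $(4m+1)^{O(r)}$ portal choices and $(4r)!$ pairings, so the outer table has size $n^{O(1/\eps)}$ and is filled in time $n^{O(1/\eps)}$, combining four subsquare solutions and re-inserting the dropped segments via detours to the guessed required portals exactly as in Lemma \ref{lemma:horizontal-cover-lines}. Each base square $(S,P,Q)$ is solved by the inner DP: event points are the $O(n)$ segments and $O(r)=O(1/\eps)$ relevant portals; a configuration $(i,\mathcal L_i,P^L_i)$ has $O(n)$ choices for $i$, $n^{O(1/\eps^3)}$ choices for the $O(1/\eps^2)$ large legs in $\mathcal L_i$ (each large leg being $O(n^2)$ endpoint-pairs times $O(n^{1/\eps})$ choices of pure-reflection segments), and $(1/\eps)^{O(1/\eps)}$ promising pairings $P^L_i$, for a table of size $n^{O(1/\eps^3)}$; each transition ranges over the constantly many consistent-subcase templates listed in Subsection \ref{sec:inner-dp} and over consistent $(\mathcal L_{i-1},P^L_{i-1})$, costing $n^{O(1/\eps^3)}$, so the inner DP runs in $n^{O(1/\eps^3)}$. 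Correctness follows because (a) the optimal bounded-shadow, bounded-pure-reflection solution $\OPT_S$ induces a valid sequence of configurations whose DP value is at most $\opt_S$, so the DP minimum is $\le\opt_S$; and (b) every value stored in a promising entry corresponds to an actual feasible collection of paths (promising-ness is checked in $\mathrm{poly}(n)$ time and guarantees global extendability), so the DP minimum is $\ge$ the true optimum over such structured solutions. Multiplying the per-square cost by the $O(n\log n)$ squares keeps the total at $n^{O(1/\eps^3)}$, and the accumulated $(1+O(\eps))$ factors are rescaled to $(1+\eps)$ by running the whole algorithm with $\eps'=\Theta(\eps)$.

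The main obstacle I anticipate is verifying that the inner DP's notion of \emph{consistency} between $(i,\mathcal L_i,P^L_i)$ and $(i-1,\mathcal L_{i-1},P^L_{i-1})$ is both sound and complete: soundness means every chain of consistent configurations stitches together into a genuine feasible solution of $(S,P,Q)$ with the claimed cost (no spurious low-cost entries arising from locally-consistent but globally-impossible pairings), and completeness means the true $\OPT_S$, restricted successively to $S^L_1\subset S^L_2\subset\cdots$, yields a chain of configurations that are pairwise consistent and promising. This requires a careful case analysis mirroring the six subcases (event point a portal in $P$, a required portal in $Q$, or a segment, crossed against $\Gamma_{i-1}$ versus $\Gamma_i$) and a clean invariant that the large legs in $\mathcal L_i$ together with $P^L_i$ encode precisely the ``interface'' of $\OPT_S$ along $\Gamma_i$; establishing that invariant and checking the transition rules preserve it is where the real work lies. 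The remaining extension to lengths in $[1,\lambda]$ — where cover-lines are spaced so that each segment still touches at most two of them, shadows scale by a further $O(\lambda)$ factor, and the exponent becomes $O(\lambda/\eps^3)$ — is routine once the unit-length case is in hand.
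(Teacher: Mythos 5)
Your proposal follows the same blueprint as the paper's own wrap-up proof: perturb and rescale, apply the shadow and pure-reflection-sequence bounds (Theorem \ref{thm:onestrip-shadow}, Lemma \ref{lemma:reflections}), chain Lemma \ref{lem:cover-line-group}, Lemma \ref{lemma:horizontal-cover-lines}, and Lemma \ref{lemma:p-tour-r-light} to get an $r$-light portal-respecting near-optimum, then count the inner-DP table size $n^{O(1/\eps^3)}$ and multiply by the quad-tree squares. Your flagging of the soundness/completeness of the inner DP's consistency checks as the residual work accurately reflects what the paper leaves at the definitional level in Subsection \ref{sec:inner-dp}, and your outer-table estimate $n^{O(1/\eps)}$ (from the $2^{2m}$ required-portal guesses) is in fact the correct bound, which the running time $n^{O(1/\eps^3)}$ absorbs anyway.
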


\begin{proof}
	Take any instance of the problem.
	As described at the beginning of this section,  we first scale the instance (at a loss of $(1+\eps)$)
	so that all segments have integer coordinates. We employ the hierarchical decomposition of Arora using dissecting lines
	as described in Subsection \ref{sec:dissection}, and drop the line segments crossing horizontal dissecting lines as described in Subsection \ref{subsec:dropping-horizontal}. We require a subset of
	portals around each square $ S $ of the dissectioning to be covered in the subproblems as described in the outer DP in Subsection \ref{sec:outer-dp}. Lemma \ref{lemma:horizontal-cover-lines} shows that we lose at most another $(1+\eps)$ factor in doing so. At the leaf level of our decomposition, we need to solve instances where
	each square has sides of length $\rho\cdot h$. Note that as discussed in the first paragraph of Subsection \ref{sec:inner-dp}, for any base square of the dissection, using Theorem \ref{thm:onestrip-shadow}, Lemma \ref{lemma:horizontal-cover-lines}, Lemma \ref{lemma:p-tour-r-light}, and Lemma \ref{lemma:reflections}, there is a near-optimum solution such that it is portal respecting, $r$-light for $ r = O(\frac1\eps) $, covers all the required portals, has a shadow bounded by $O(1/\eps^2)$, and the length
	of any pure reflection sequence in it is bounded by $O(1/\eps)$. The inner DP describes how to find such a solution. Note that the size of the inner DP table is $n^{O(1/\eps^3)}$. To compute each entry, we may consider (at worst) all other entries, and so the time complexity of computing the table for each
	square $S$ is at most $n^{O(1/\eps^3)}$. Given that the number of squares at the leaf nodes of the decomposition is $O(n\log^{O(r)} n)$, the total time for the inner and outer DP is $n^{O(1/\eps^3)}$.
\end{proof}


	We discuss how the result presented for unit-length line segments in Theorem \ref{thm:onestrip-shadow} can be extended to the case that line segments have
	length ratio $\lambda=O(1)$, and obtain a PTAS for it.
	In the case of segments with lengths in $ [1, \lambda] $, for every strip of height 1, we still have some top and bottom segments and we might have some line segments
	that completely span the height of the strip. Let's call these segments {\em full segments} of a strip.
	We claim that whenever we change the solution in the proof of Theorem \ref{thm:onestrip-shadow} to one that has a bounded shadow, the full segments of the strip remain covered. These changes are done in Lemmas \ref{lemma:reflections} and \ref{lem:strip-bounded-shadow}. For each of these cases, any new subpath (with smaller shadow) that replaces a subpath of larger shadow, will travel the same interval in the $x$-coordinate, and hence any full segment covered by the original path, remains covered by the new path.
	
	Next, when we scale the instance, we get line segments with length between $[\rho,\lambda\rho]$.
	Now we do our hierarchical decomposition until base squares have side length of $\lambda\rho h$, so the space between
	two cover-lines in the same group is $\lambda \rho h$ instead of $\rho h$. 
	Lemma \ref{lemma:box-size-opt-ineq} holds with  bound $\opt\geq\frac{\rho\cdot|\mathcal B|}{6\lambda}$.
	This implies Lemma \ref{lem:cover-line-group} holds if $j$ is chosen from $[1\dots h\lambda]$.
	It is straight-forward to check that Lemma \ref{lemma:horizontal-cover-lines} holds with the same ratio.
	For the inner DP, noting that the instance we start from has height $\rho\lambda h$, the shadow is bounded
	by $O(\lambda/\eps^2)$. The same DP works but the runtime will be $n^{O(\lambda/\eps^3)}$. This implies
	we get a PTAS with the same run time which completes the proof of Theorem \ref{thm:main}. \qed

\section{Proofs of the Major Lemmas}
\label{sec:proofs}

In this section, we provide the proofs of the four major lemmas we had, namely Lemmas \ref{lemma:zig-zag}, \ref{lem:strip-bounded-shadow}, \ref{lemma:reflections}, and \ref{lemma:overlapping-loops-ladders}; as well
as the proof of Theorem \ref{thm:exact-answer}.

\subsection{Properties of an Optimum Solution}\label{sec:zig-zag-lem}
The goal of this subsection is to prove Lemma \ref{lemma:zig-zag}. Before that,
we need to state some further lemmas and definitions. We combine these in Subsection \ref{sec:lemma-zigzag} to build a complete proof.

\begin{definition}\label{def:exclusively-cover}
    Let $\OPT_\tau$ be the restriction of $\OPT$ to any strip $S_\tau$.
    We say a segment $s \in S_\tau$ is \textbf{exclusively covered} by some path $P\in \OPT_\tau$ if $P$ covers $s$ but no other subpath of $\OPT_\tau$ intersects with $s$, i.e. $\OPT_\tau/P$ doesn't intersect with it.
\end{definition}

\begin{lemma}\label{lem:cover-line-loop}
    Each loop with entry points on $C_{\tau+1}$ in $\OPT_\tau$ (i.e. bottom cover-line of $S_\tau$) must exclusively cover a top segment, or else it must be a cover-line loop. Analogous argument holds for loops that have entry points on $C_\tau$.
\end{lemma}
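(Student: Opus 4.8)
Let $P$ be a loop of $\OPT_\tau$ whose two entry points $e_j, o_j$ both lie on the bottom cover-line $C_{\tau+1}$. Since all segments in $S_\tau$ have distinct $x$-coordinates and $\OPT$ is non-self-crossing, $P$ together with the segment of $C_{\tau+1}$ between $e_j$ and $o_j$ bounds a closed region $R$ in the strip. The plan is to argue that either $P$ lies entirely on $C_{\tau+1}$ (making it a cover-line loop), or $P$ must rise into the interior of $S_\tau$, in which case I will produce a top segment that $P$ exclusively covers.

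**The dichotomy.** First suppose $P$ does not rise above $C_{\tau+1}$; then $P$ is contained in $C_{\tau+1}$, and by the earlier observation (a cover-line loop cannot backtrack without self-crossing), $P$ travels straight from $e_j$ to $o_j$, so it is a cover-line loop. Otherwise $P$ has a point strictly above $C_{\tau+1}$, so $P$ attains a positive maximum $y$-coordinate inside the strip; let $r$ be a highest point of $P$. I would like $r$ to be the intersection of $P$ with a top segment. The key facts to invoke: by Lemma~\ref{obs:shadow-increase} the only points where the topmost profile of $P$ can "turn around" are reflection points, so the highest point $r$ is a reflection point (a descending-then-ascending configuration, i.e. a sink-like apex). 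By Lemma~\ref{obs:reflection}, the segment $s$ containing $r$ has no other intersection with $\OPT$, so whatever $s$ is, it is exclusively covered by $P$. It remains to show $s$ is a \emph{top} segment: since $r$ is strictly above $C_{\tau+1}$ and $r$ is a reflection point on $s$, the segment $s$ reaches height $y(r)>y(C_{\tau+1})$; but every bottom segment has its relevant tip on $C_{\tau+1}$ and, having length~$1$ (the height of the strip), a bottom segment that is not also a top segment cannot contain a reflection point strictly above $C_{\tau+1}$ without its other tip leaving the strip — more carefully, a bottom segment is covered by $C_{\tau+1}$, meaning $C_{\tau+1}$ is the first cover-line hitting it, so its top tip is at height $\le y(C_\tau)$ and if it is not a top segment its top tip is $<y(C_\tau)$; I then check that a reflection point of $\OPT$ on such a segment would contradict either Observation~\ref{obs:legs} (the reflection legs stay within the strip) or the exclusive-cover/shortcut argument. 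Hence $s$ is a top segment, and $P$ exclusively covers it.

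**The symmetric case and the main obstacle.** The analogous statement for loops with both entry points on $C_\tau$ follows by the vertical reflection of the whole argument (swapping the roles of $C_\tau$ and $C_{\tau+1}$, "highest" with "lowest", top with bottom). The step I expect to be the real obstacle is showing that the apex segment $s$ is genuinely a \emph{top} segment rather than merely "some segment reaching above $C_{\tau+1}$" — in a strip of height exactly~$1$ with unit segments, a segment touching $C_{\tau+1}$ that also reaches upward could in principle touch $C_\tau$ too, but then by Definition~\ref{def:strip/top-bottom} it is classified as a top segment anyway, so the argument is really about ruling out the degenerate case where the highest point of $P$ lies on a segment that does \emph{not} touch $C_\tau$; I would handle this by noting that the highest point $r$ being a reflection point on $s$ forces $s$ to extend at least slightly above where the two incident legs meet, and combined with the distinct-$x$-coordinate assumption and the fact that in a height-$1$ strip a non-top segment's topmost extent is bounded away from $C_\tau$, a short case analysis (paralleling Lemmas~\ref{obs:reflection} and \ref{obs:break-on-tip}) closes the gap. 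For the case of similar-length segments this is where the length parameter $\lambda$ and the notion of \emph{full segments} would need to be threaded through, but for the unit-length case treated here the bound is clean.
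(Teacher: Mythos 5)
Your proposal takes a genuinely different route from the paper, but it has a real gap.

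The paper proves this lemma by contradiction via a shortcutting argument: if a loop $P$ from $C_{\tau+1}$ does not exclusively cover any top segment, then every top segment it touches is also covered elsewhere, so one can replace $P$ with a cover-line loop along $C_{\tau+1}$ spanning the $x$-extent of the bottom segments $P$ covers. This is never longer than $P$ (and strictly shorter if $P$ rises above $C_{\tau+1}$), contradicting optimality of $\OPT$ unless $P$ already was a cover-line loop. Your proposal instead looks at the highest point $r$ of $P$ and tries to argue directly that $r$ is a reflection on a top segment. Two steps in that direct approach do not go through.

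First, the highest point of $P$ need not be a reflection point. The mechanism behind Lemma~\ref{obs:further-point-reflection} is that the \emph{left-most} point of a path whose endpoints are to its right has both incident legs pointing right, i.e.\ on the same side, which by Observation~\ref{obs:legs} forces a reflection. For a \emph{highest} point the two incident legs merely both point downward; they may still lie on \emph{opposite} sides of the segment, in which case $r$ is a break point at a tip, not a reflection. Lemma~\ref{obs:shadow-increase} is about intersections with vertical sweep lines and does not convert vertical extremality into a reflection. So your appeal to Lemma~\ref{obs:reflection} for exclusivity is unsupported in exactly the case that matters.

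Second, and more fundamentally, even if $r$ were a reflection point, the segment $s$ containing it need not be a top segment. A bottom segment is covered by $C_{\tau+1}$ but its top tip lies strictly \emph{inside} the strip (below $C_\tau$); your assertion that ``every bottom segment has its relevant tip on $C_{\tau+1}$'' is not what Definition~\ref{def:strip/top-bottom} gives you. Nothing prevents the loop from having its apex at a reflection (or break) point on a bottom segment that pokes partway up into the strip; in that situation your argument produces an exclusively covered segment, but a \emph{bottom} one, which does not prove the lemma. The paper's optimality-based shortcut sidesteps both difficulties: it never needs to classify the apex at all, only to observe that without an exclusively covered top segment, a strictly cheaper cover-line loop does the same job.
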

\begin{proof}
    Suppose $P$ is a loop with entry points $e_1,o_1$ on $C_{\tau+1}$ that does not exclusively cover a point on a top segment. This implies if we change it to cover only bottom segments in $S_\tau$, then the solution remains feasible. Let $s_\ell$ and $s_r$ be the left-most and right-most bottom segments that $P$ covers, let $q_\ell,q_r$ be intersections of $s_\ell$ and $s_r$ with $C_{\tau+1}$, respectively. Replace $P$ with $e_1,q_\ell,q_r,o_1$ and then short-cut $e_1,o_1$ like the way we argued for cover-line loops after Definition \ref{def:loops/ladders}.
    So we obtain a path that is shorter than the original, but is a cover-line loop and covers all the (bottom) segments $P$ was covering.
\end{proof}
\begin{corollary}\label{cor:non-cover-line-loop-top-segment}
    If $P$ is a non-cover-line loop with entry points on the bottom cover-line of some strip $S_\tau$, then $P$ has to exclusively cover some top segment in $S_\tau$. Similar argument holds for bottom segments and non-cover-line loops with entry points on the top cover-line.
\end{corollary}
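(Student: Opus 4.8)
The plan is to obtain this as an immediate logical consequence of Lemma \ref{lem:cover-line-loop} by ruling out one of its two alternatives. First I would recall the dichotomy established in Lemma \ref{lem:cover-line-loop}: any loop $P$ of $\OPT_\tau$ whose two entry points $e_1,o_1$ lie on the bottom cover-line $C_{\tau+1}$ satisfies \emph{exactly one} of the following — either $P$ exclusively covers some top segment of $S_\tau$, or $P$ is a cover-line loop (i.e. it travels straight along $C_{\tau+1}$ from $e_1$ to $o_1$). Since by hypothesis $P$ is a non-cover-line loop, the second alternative cannot hold, and therefore the first must: $P$ exclusively covers some top segment in $S_\tau$. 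That is the whole argument for the first half of the statement.

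For the symmetric claim — a non-cover-line loop $P'$ with both entry points on the top cover-line $C_\tau$ must exclusively cover a bottom segment — I would appeal to the mirror-image version of Lemma \ref{lem:cover-line-loop}, whose proof is the vertical reflection of the one given: if $P'$ covered no bottom segment exclusively, one could replace it by the straight segment of $C_\tau$ joining the intersections with $C_\tau$ of the left-most and right-most top segments that $P'$ touches, then short-cut $e_1,o_1$ exactly as described after Definition \ref{def:loops/ladders}. This produces a strictly shorter path that is a cover-line loop and still covers every top segment $P'$ covered, contradicting optimality of $\OPT$ unless $P'$ was already a cover-line loop. Ruling out the cover-line-loop alternative for the non-cover-line loop $P'$ then forces it to exclusively cover some bottom segment.

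There is essentially no obstacle here: the corollary is pure bookkeeping on top of Lemma \ref{lem:cover-line-loop}. The only point worth a moment's care is verifying that the dichotomy in that lemma is genuinely exhaustive for a loop anchored entirely on one cover-line — that ``exclusively covers a segment on the opposite cover-line'' and ``is a cover-line loop'' are the only two possibilities — but this is precisely what Lemma \ref{lem:cover-line-loop} (via the short-cutting/replacement argument) already guarantees, so nothing further is needed.
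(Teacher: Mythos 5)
Your proposal is correct and matches the paper's intent exactly: the corollary is the direct contrapositive of Lemma \ref{lem:cover-line-loop} (rule out the cover-line-loop alternative, which the hypothesis excludes), and the paper offers no separate proof beyond this. Your remark about the mirrored argument for loops anchored on $C_\tau$ is also what the lemma's "analogous argument" clause refers to, so nothing is missing.
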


\begin{lemma}\label{lem:nested-vertical-loops}
    Suppose that $\OPT_\tau$ is crossing a vertical line $\Gamma$ at least two times. Let $p_1,p_2$ be two such crossings
    and, $L_1$ be a subpath of $\OPT_\tau$ from $p_1$ to $p_2$ with no other crossings with $\Gamma$. Then there cannot be any other crossings of 
    $\OPT_\tau$ with $\Gamma$ on the section $p_1p_2$ of $\Gamma$.
\end{lemma}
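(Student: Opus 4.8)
The plan is to argue by contradiction, combining planarity (the Jordan curve theorem) with the optimality of $\OPT$. Suppose that $\OPT_\tau$ meets $\Gamma$ at some point strictly between $p_1$ and $p_2$ on $\Gamma$. Since $L_1$ has no crossings with $\Gamma$ other than at its endpoints $p_1,p_2$, and $\OPT$ has no vertical legs, the interior of $L_1$ lies strictly on one side of $\Gamma$; say it lies in the open right half-plane $H^+$ (the left case is symmetric). Let $\sigma$ be the closed sub-segment of $\Gamma$ between $p_1$ and $p_2$. Then $\mathcal C:=L_1\cup\sigma$ is a Jordan curve contained in the strip $S_\tau$; let $R$ be its bounded complementary region, so that $R\subseteq S_\tau$ and, since $\mathcal C$ lies in the closed right half-plane with $\mathcal C\cap\Gamma=\sigma$, we in fact have $R\subseteq H^+$.

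First I would observe that any point of $\OPT_\tau$ lying in the interior of $\sigma$ is a transversal crossing of $\Gamma$: the interior of $\sigma$ contains no vertex of $\OPT_\tau$ (a vertex lies on an input segment or on a cover-line, but $\Gamma$ carries no input segment by our general-position assumption, and the interior of $\sigma$ lies in the interior of $S_\tau$), so $\OPT_\tau$ passes through it along a single non-vertical leg and hence crosses $\Gamma$ there. Pick such a crossing; following $\OPT$ from it in the direction that enters the right half-plane, we enter $\text{int}(R)$ (near the interior of $\sigma$ the $R$-side is the right side). By simplicity of the closed curve $\OPT$ this subpath cannot meet $L_1$, so to leave $R$ it must re-cross the interior of $\sigma$; among all such ``dips'' I would take an innermost one, obtaining a subpath $\pi$ of $\OPT$ with both endpoints $c,c'$ in the interior of $\sigma$, with interior contained in $R$, and such that $\pi$ together with the sub-segment $\sigma_0\subseteq\sigma$ between $c$ and $c'$ bounds a region $R_0\subseteq R$ with $\text{int}(R_0)\cap\OPT=\emptyset$.

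The key step is the claim that every input segment covered by $\pi$ is also covered by $L_1$. Indeed, such a segment $t$ is vertical and has length at least $1$, the height of the strip, so $t\cap S_\tau$ reaches a cover-line at some point $q$; but $\pi\subseteq R$ lies in the open right half-plane, while $R\subseteq S_\tau$ cannot reach a cover-line, so $q\notin\text{int}(R)$. Since $t$ is vertical and lies strictly to the right of $\Gamma$, it cannot meet $\sigma$, so the portion of $t$ between its intersection with $\pi$ and $q$ must cross $\partial R\setminus\sigma=L_1$; hence $L_1$ covers $t$ (the borderline case $q\in L_1$ being immediate). Consequently, replacing the subpath $\pi$ by the straight segment $\sigma_0$ on $\Gamma$ yields a feasible tour: no coverage is lost. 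But $\pi$ has its interior strictly in $H^+$ while $\sigma_0$ lies on $\Gamma$, so $||cc'||=\text{length}(\sigma_0)<\text{length}(\pi)$ and the new tour is strictly shorter, contradicting the optimality of $\OPT$. I expect the main obstacle to be making the topology precise — extracting the innermost dip and doing the Jordan-curve bookkeeping correctly (ruling out tangencies via general position, and handling the degenerate case where $L_1$ grazes a cover-line so that $R$ touches $\partial S_\tau$); once that is in place, the exchange argument above closes the proof.
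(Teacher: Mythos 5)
Your proof is correct in its essentials and takes a genuinely different route from the paper's, though both hinge on the same geometric fact. The paper chooses one dip $L_2$, takes its extremal point $r_2$, invokes Lemma~\ref{obs:further-point-reflection} to identify $r_2$ as a reflection point on a segment $s_{r_2}$, observes that $s_{r_2}$ (being a top or bottom segment of the strip) must cross $L_1$ to reach a cover-line, and concludes with a contradiction to Lemma~\ref{obs:reflection} (a segment with a reflection point has no other intersection with $\OPT$). You instead take an innermost dip $\pi$, show that \emph{every} segment $\pi$ visits must likewise cross $L_1$ to reach a cover-line (so $L_1$ already covers them), and shortcut $\pi$ to $\sigma_0$ to obtain a strictly cheaper feasible tour — contradicting optimality directly. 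The shared geometric core is that a vertical segment of length at least the strip height, with a point in the bounded region $R$, must exit $R$ through $L_1$ because it is parallel to $\Gamma$ and hence cannot exit through $\sigma$. What each buys: the paper's version is shorter since it leans on Lemmas~\ref{obs:further-point-reflection} and \ref{obs:reflection}, which are already in place; yours is more self-contained (it does not need to classify the extremal point as a reflection point) and essentially re-derives the relevant instance of Lemma~\ref{obs:reflection} via a bare exchange argument, at the price of the Jordan-curve/innermost-dip bookkeeping you flag. Two small things worth making explicit if you flesh this out: (i) after the shortcut you should note that a dummy vertex can be inserted on $L_1$ at its crossing with any segment $t$ previously covered only by $\pi$, at zero cost, so feasibility is preserved in the strict sense of the tour containing a point on each segment; and (ii) the innermost dip exists because $\OPT$ is a simple closed polygonal curve, so distinct dips bound nested or disjoint regions and a minimal one is well-defined — the same non-self-crossing property the paper uses to force the second crossing onto $\sigma$.
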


\begin{proof}
    Without loss of generality, since $ L_1 $ doesn't intersect with $ \Gamma $ other than at points $ p_1 $ and $ p_2 $, assume that $L_1$ is on the left of $\Gamma$.
    By way of contradiction, suppose $q_1$ is another crossing of $\OPT_\tau$ with $\Gamma$ such that $y(p_1)<y(q_1)<y(p_2)$. This implies that there is a subpath of $\OPT_\tau$ inside the region $A=L_1\cup p_1p_2$ with one end-point being $q_1$. So there must be another crossing of $\OPT_\tau$ with the region $A=L_1\cup p_1p_2$; and since $\OPT_\tau$ is not self-crossing, that other crossing point with $A$ must be on $p_1p_2$, call it $q_2$. Let us denote the subpath of $\OPT_\tau$ inside $A$ with end-points $q_1,q_2$ by $L_2$.
    Let $r_1$ be the left-most point on $L_1$. Since $L_1$ is a path from a point on $\Gamma$ to the left of $\Gamma$ and back to a point on $\Gamma$, using Lemma \ref{obs:further-point-reflection}, $r_1$ must be a right reflection point.
    Similarly, if $r_2$ is the left-most point on $L_2$ then $r_2$ must be a right reflection point, say on segment $s_{r_2}$ (see Figure  \ref{fig:fig6}). But since $r_2$ is inside $A$, then regardless of whether $s_{r_2}$ is a top segment or a bottom segment it will intersect with $L_1$, contradicting Lemma \ref{obs:reflection}.
\end{proof}

\begin{figure}[h]
    \centering
    \includegraphics[width=0.24\tw]{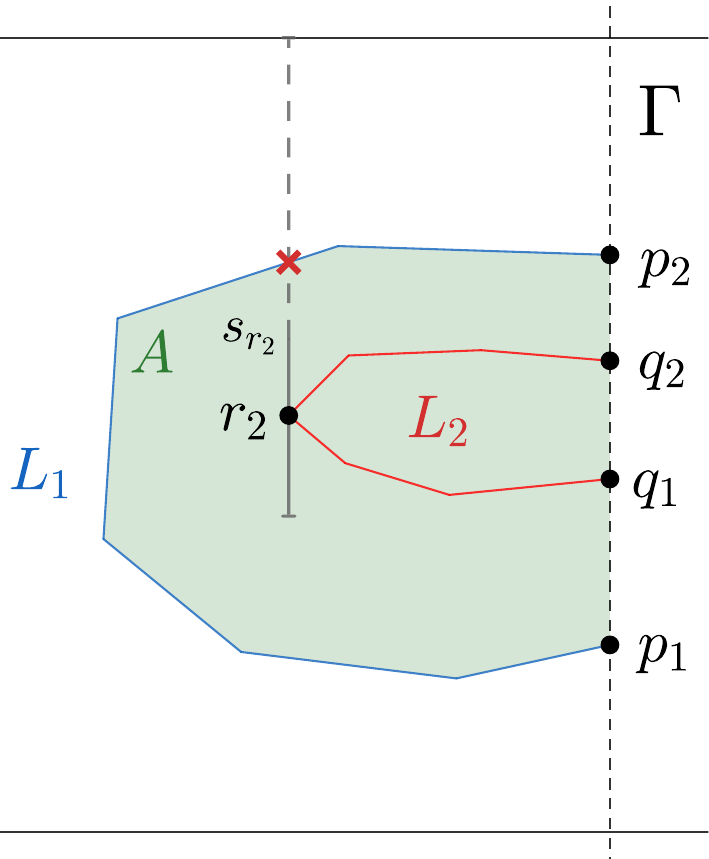}
    \caption{Configuration for Lemma \ref{lem:nested-vertical-loops}}\label{fig:fig6}
\end{figure}

The following lemma is a special case of Lemma \ref{lem:nested-vertical-loops}, but since it is used frequently, we state it as a separate lemma.

\begin{lemma}\label{claim:reflection}
	Consider a strip $S_\tau$ and $\OPT_\tau$ (the restriction of $\OPT$ within this strip). Let $s$ be any segment in this strip which has a reflection point $p_j$ on it. Without loss of generality, assume $s$ is a top segment and $p_j$ is a left reflection point. Let $\ell_u$ and $\ell_l$ be the upper and lower legs of $\OPT_\tau$ incident with $p_j$.
	Then the subpath of $\OPT_\tau$ starting at $p_j$ and travelling on $\ell_u$, will not reach to the right side of $s$.
\end{lemma}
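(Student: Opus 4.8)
The plan is to argue by contradiction, mirroring the proof of Lemma \ref{lem:nested-vertical-loops}. Suppose the subpath $P_u$ of $\OPT_\tau$ that starts at $p_j$ and departs along $\ell_u$ does reach the right side of $s$. Set $\Gamma := \{x = x(s)\}$ and let $q$ be the first point of $P_u$ after $p_j$ lying in the closed half-plane $\{x \ge x(s)\}$; then $q \in \Gamma$ and the arc $L := P_u[p_j,q]$ has relative interior strictly to the left of $\Gamma$, so $L$ meets $\Gamma$ only at $p_j$ and $q$. First I would locate $q$: since $s$ is a top segment, the part of $\Gamma$ above $s$ lies outside $S_\tau$ and is unreachable by $P_u\subseteq\OPT_\tau$, and by Lemma \ref{obs:reflection} no point of $\OPT$ other than $p_j$ lies on $s$; hence $q$ lies strictly below the bottom tip $s^b$ of $s$, and in particular strictly below $p_j$.

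Next I would set up the enclosed region. Since $\OPT$ is non-self-crossing, $L$ is a simple arc, so $L\cup\overline{p_jq}$ is a Jordan curve bounding a region $A$; its boundary lies in the convex strip $S_\tau$ and in $\{x\le x(s)\}$, so $A\subseteq S_\tau\cap\{x\le x(s)\}$. The key local observation is that, because $q$ is below $p_j$, near $p_j$ the region $A$ is the angular sector bounded by the ray along $\ell_u$ (which points up-and-left, as $\ell_u$ is to the left of $s$ and is the \emph{upper} leg) and the downward ray along $\Gamma$; this sector does not contain the rightward direction, which is forced since $A\subseteq\{x\le x(s)\}$. Since $\ell_l$ is the \emph{lower} leg it points down-and-left, hence lies inside this sector, so the subpath $P_l$ of $\OPT_\tau$ starting at $p_j$ along $\ell_l$ enters $A$.

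Now I would trap a reflection point strictly inside $A$. The path $P_l$ cannot cross $L$ (both are subarcs of the non-self-crossing $\OPT$, meeting only at $p_j$), and cannot meet the subsegment $\overline{p_js^b}\subseteq s$ of $\partial A$ except at $p_j$ (Lemma \ref{obs:reflection}); so $P_l$ can leave $A$ only through $\overline{s^bq}\subseteq\Gamma$. Let $q_l$ be the first point of $P_l$ after $p_j$ on $\Gamma$ (if instead $P_l$ ends at an entry point inside $\overline A$, I would run the same argument with that endpoint, which lies on a cover-line). By the argument of Lemma \ref{obs:further-point-reflection} applied to $P_l[p_j,q_l]$, its leftmost point $r$ is a right reflection point, and $r\notin\partial A$, so $r$ lies strictly inside $A$. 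Let $s_r$ be the segment through $r$, so $x(s_r)<x(s)$. Moving along $s_r$ from $r$ toward either tip, $s_r$ cannot cross $\overline{p_jq}\subseteq\Gamma$ (wrong $x$-coordinate), so if $s_r$ ever left $A$ it would cross $L\subseteq\OPT$, giving a point of $\OPT$ on $s_r$ other than $r$ — impossible by Lemma \ref{obs:reflection}. Hence $s_r\subseteq\overline A$; but $s_r$ is a top or bottom segment, so one of its tips lies on $C_\tau$ or $C_{\tau+1}$, and since $\overline{p_jq}$ meets $C_\tau\cup C_{\tau+1}$ in at most the two points $p_j,q$ (neither a tip of $s_r$, as their $x$-coordinates differ from $x(s_r)$), that tip lies on $L\subseteq\OPT$ — again a second point of $\OPT$ on $s_r$, contradicting Lemma \ref{obs:reflection}. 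This contradiction proves the lemma.

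The step I expect to be the main obstacle is making rigorous the local claim that $A$ is exactly the sector containing $\ell_l$ (hence that $P_l$ enters $A$) and that $P_l$ can leave $A$ only through $\overline{s^bq}$, together with the handful of degenerate configurations — legs or segment tips aligned with cover-lines, $q$ landing exactly on $C_{\tau+1}$, reflection points occurring at tips — which must be checked separately or ruled out by the earlier general-position reductions; once those are pinned down, the remainder is a routine Jordan-curve argument driven by Lemmas \ref{obs:reflection} and \ref{obs:further-point-reflection}.
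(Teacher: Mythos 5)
Your proof is correct and the underlying geometric idea is the same one the paper uses, but you take a longer route. The paper's own proof observes (as it also announces immediately before the lemma statement) that this is literally a special case of Lemma~\ref{lem:nested-vertical-loops}: it sets $\Gamma = \{x = x(s)\}$, takes $p$ to be the first crossing of $P_u$ with $\Gamma$, identifies $L_1 = P_u[p_j,p]$ and $p_1,p_2 = p_j,p$, notes that $P_l$ is trapped in the Jordan region bounded by $L_1$ and $\overline{p_jp}$ and can only exit through the part of $\Gamma$ strictly between $s^b$ and $p$ (it cannot go through $s$ by Lemma~\ref{obs:reflection}, nor through $L_1$ by non-self-crossing), and then simply cites Lemma~\ref{lem:nested-vertical-loops} to rule out that extra crossing of $\Gamma$. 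What you do instead is inline the \emph{proof} of Lemma~\ref{lem:nested-vertical-loops}: you locate the first exit $q_l$ of $P_l$ through $\Gamma$, apply the argument of Lemma~\ref{obs:further-point-reflection} to get a right reflection $r$ strictly interior to $A$ on some segment $s_r$, and then argue $s_r$ must reach a cover-line while trapped in $\overline A$, hence meets $L$, contradicting Lemma~\ref{obs:reflection}. This is exactly the mechanism inside the paper's proof of Lemma~\ref{lem:nested-vertical-loops} (finding $r_2$ strictly inside $A$ and showing $s_{r_2}$ must hit $L_1$), so you have redone that derivation rather than invoking the lemma. Both arguments are sound; the paper's version is shorter and avoids having to re-examine the degenerate configurations (tangencies with cover-lines, $q$ landing on $C_{\tau+1}$, etc.) that you correctly flag as loose ends, because those were already absorbed in the proof of Lemma~\ref{lem:nested-vertical-loops}. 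Citing Lemma~\ref{lem:nested-vertical-loops} directly, as the paper does, would have shortened your write-up to a few lines.
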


\begin{proof}
	Suppose the subpath of $\OPT_\tau$ starting at $p_j$ and travelling along $\ell_u$, call it $P_u$, reaches
	the right side of $s$ while entirely within strip $S_\tau$. So $ P_u $ crosses the vertical line $x=x(s)$ at a point $p$ inside $S_\tau$ (different from $p_j$). This path will be $L_1$ in the setting of Lemma \ref{lem:nested-vertical-loops} and $p_j,p$ will be $p_1,p_2$ of the lemma. Consider the subpath $P_l$ of $\OPT_\tau$ starting at $p_j$ and following $\ell_l$. This subpath is in the region defined by $P_u$ and the vertical line at $x=x(s)$. Since $\OPT$ is non-self-crossing, $P_l$ has to exit this area between the lower tip of $s$ and point $p$. But this will violate Lemma \ref{lem:nested-vertical-loops}. This contradiction results in the statement of the lemma.
\end{proof}

\begin{figure}[h]
	\centering
    \includegraphics[width=.32\textwidth]{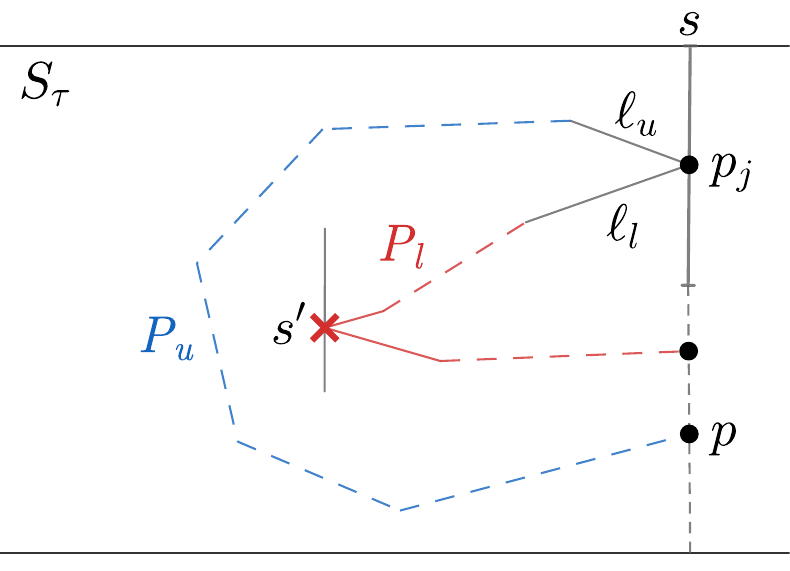}
    \caption{In a strip $S_\tau$, the path from the upper leg of a left reflection on a top segment, can't reach to the right of that segment}
    \label{fig:l_u-l_l}
\end{figure}

\begin{lemma}\label{lemma:left-path-is-above}
    Suppose $P_1$ and $P_2$ are two ladders of $\OPT_\tau$ in $S_\tau$ with entry points $e_1$ and $e_2$ on the bottom cover-line and entry points
    $o_1,o_2$ on the top cover-line, respectively, such that $x(e_1) < x(e_2)$, $x(o_1)<x(o_2)$ and both intersect a vertical line $\Gamma$ to the right of $e_1,e_2$. Then $P_1$ is above $P_2$ to the left of $\Gamma$. (symmetric arguments apply to the top cover-line, as well as entry points to the right of $\Gamma$)
\end{lemma}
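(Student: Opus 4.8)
The plan is to argue by contradiction, combining planar topology with the non‑self‑crossing property of $\OPT$. First I would record that, since $x(e_1)<x(e_2)$, the ladders $P_1,P_2$ have distinct entry points and, being subpaths of the non‑self‑crossing curve $\OPT$, are disjoint. Viewing the closed strip $S_\tau$ as an (infinite) horizontal slab, each of $P_1,P_2$ is a simple arc joining the bottom cover‑line $C_{\tau+1}$ to the top cover‑line $C_\tau$; because their feet occur in the same left‑to‑right order on both cover‑lines ($x(e_1)<x(e_2)$ and $x(o_1)<x(o_2)$), the two arcs cut $S_\tau$ into three regions: a left region $R^-$ whose right boundary is $P_1$, a middle region $M$ lying between $P_1$ and $P_2$, and a right region $R^+$ whose left boundary is $P_2$. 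In particular $P_1$ is exactly the common boundary of $R^-$ and $M$, and $P_2$ the common boundary of $M$ and $R^+$.

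Next I would reduce the statement to a single claim about one sweep line. Fix a vertical line $\Gamma'$ in the relevant range (to the right of $e_1,e_2$, to the left of $\Gamma$) that meets both paths, and walk down $\Gamma'\cap S_\tau$ from $C_\tau$: the first point of $P_1\cup P_2$ encountered is the topmost intersection, so by Definition~\ref{def:path-above-below} it suffices to show this point lies on $P_1$. The open sub‑segment of $\Gamma'$ above that point lies in a single one of $R^-,M,R^+$, and its upper endpoint is the point of $C_\tau$ with $x$‑coordinate $x(\Gamma')$, which belongs to $\overline{R^-}$ precisely when $x(\Gamma')<x(o_1)$. Hence if $x(\Gamma')<x(o_1)$ this top sub‑segment lies in $R^-$; as $\overline{R^-}\setminus P_1$ contains no point of $P_2$, the first crossing going down is with $P_1$ and we are done. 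So the only case to handle is $x(\Gamma')\ge x(o_1)$, where the top sub‑segment lies in $M\cup R^+$ and a priori could abut a point of $P_2$.

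For that remaining case I would use the hypothesis that both $P_1$ and $P_2$ cross $\Gamma$, which lies to the right of $\Gamma'$, together with Lemma~\ref{lem:nested-vertical-loops}. Assume for contradiction that the topmost intersection $q$ of $\Gamma'$ with $P_1\cup P_2$ lies on $P_2$. Following each of $P_1,P_2$ from its foot until it first reaches $\Gamma'$ and running a ``corner region'' Jordan‑curve argument in the half‑slab cut off by $\Gamma'$ (using the order of the feet, in the spirit of the proof of Lemma~\ref{obs:further-point-reflection}), one sees that such a first $\Gamma'$‑crossing of $P_2$ cannot lie above the corresponding first $\Gamma'$‑crossing of $P_1$. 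To upgrade this to a statement about the \emph{topmost} crossings, I would show that $q$ lying above every crossing of $P_1$ with $\Gamma'$ forces $P_2$ to cross $\Gamma'$ again and return — producing two crossings on $\Gamma'$ (of $P_2$, or of $P_2$ and of $P_1$ together) that are nested in the precise sense of Lemma~\ref{lem:nested-vertical-loops}: joined by a subpath of $\OPT_\tau$ with no intermediate $\Gamma'$‑crossing, yet with another crossing of $\OPT_\tau$ strictly between them on $\Gamma'$. This contradicts Lemma~\ref{lem:nested-vertical-loops}. (This is where optimality of $\OPT$ genuinely enters, via Lemma~\ref{obs:reflection} and the distinctness of the segments' $x$‑coordinates, which together underlie Lemma~\ref{lem:nested-vertical-loops}.) The symmetric statements — entry points on the right of $\Gamma$, and the roles of the two cover‑lines swapped — then follow by reflecting the plane. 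I expect the main obstacle to be exactly this last step: setting up the right vertical line and the right pair of crossings so that Lemma~\ref{lem:nested-vertical-loops} applies; the rest is routine planar topology.
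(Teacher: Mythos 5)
Your argument for the case $x(\Gamma')<x(o_1)$ is correct and is, in substance, the paper's proof. The paper takes the arc $P'_2\cup\Gamma''$ (which is exactly the boundary between your $R^-$ and $M$, cut off at $\Gamma'$ and completed upward to $C_\tau$), observes that $e_1$ lies on one side of it and $o_1$ on the other, and concludes that $P_1$ would have to cross $\Gamma''$ above the assumed topmost point --- a contradiction; you phrase the same fact without contradiction by noting that the top piece of $\Gamma'$ lies in $R^-$, whose only bounding arc among $P_1,P_2$ is $P_1$. Same approach, same case.

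The remaining case $x(\Gamma')\ge x(o_1)$ is where your write-up has a genuine gap. The observation about first $\Gamma'$-crossings (a standard non-crossing-arcs-in-a-disk argument) is fine, but the upgrade from ``first'' to ``topmost'' via Lemma~\ref{lem:nested-vertical-loops} is only sketched: you never actually exhibit the two crossings, joined by a $\Gamma'$-avoiding subpath, that would flank a third crossing --- and you concede this is the obstacle. As written the case is not closed. That said, you should notice that the paper's own one-paragraph proof carries the same restriction silently: the assertion ``$e_1$ is on one side, and $o_1$ on the other'' of $P'_2\cup\Gamma''$ holds precisely when $x(o_1)>x(\Gamma')$, since the upper endpoint of $\Gamma''$ sits on $C_\tau$ at abscissa $x(\Gamma')$; when $x(o_1)\le x(\Gamma')$, both $e_1$ and $o_1$ lie on the same side of that arc and no crossing of $P_1$ is forced. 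The paper gets away with this because the lemma is only ever invoked (in the proof of Lemma~\ref{lemma:overlapping-loops-ladders}, for the Type~2 ladders $L^l_m,L^l_{m'}$) with $o_1,o_2$ both strictly to the right of $\Gamma$, so that $x(\Gamma')<x(\Gamma)<x(o_1)$ and only your Case~1 ever arises. In short: your Case~1 reproduces the paper's argument; your Case~2 is an unfinished attempt to repair a lacuna that the paper never needs to repair. The cleanest fix is to add the hypothesis $x(o_1),x(o_2)>x(\Gamma)$ (which is all the paper uses) and drop Case~2 entirely, rather than trying to push Lemma~\ref{lem:nested-vertical-loops} through.
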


\begin{proof}
	By way of contradiction, suppose $P_1$ is not above $P_2$ on the left of $\Gamma$, so there is a vertical line $\Gamma'$ to the left of $\Gamma$ whose top-most intersection is with $P_2$, say point $p$ on $\Gamma'$.
	Consider the (vertical) segment of $\Gamma'$ from $p$ to the top cover-line, call it $\Gamma''$ and let
	the subpath of $P_2$ from $e_2$ to $p$ be called $P'_2$. If we cut the strip $S_{\tau}$ along $P'_2\cup \Gamma''$, then $e_1$ is on one side, and $o_1$ on the other, which implies $L_1$ must be crossing $P'_2\cup\Gamma''$, which would be a contradiction (as it would have an intersection point on $\Gamma'$ higher than $p$ or has to cross $P'_2$).
\end{proof}

\begin{lemma}\label{lemma:in-between-reflections}
	Let $P$ be any ladder or loop of $\OPT_\tau$ in strip $S_\tau$. Let $r_{i_1}$ (on segment $s_{m_1}$) and $r_{i_2}$ (on segment $s_{m_2}$) and $r_{i_3}$ (on segment $s_{m_3}$) be any three consecutive reflections in the orientation of $\OPT_\tau$ in that order. If $x(r_{i_2}) < x(r_{i_1}) < x(r_{i_3})$ and $r_{i_2}$ is an ascending reflection, then $s_{m_1}$ is a bottom segment and $r_{i_1}$ is an ascending reflection. Symmetric argument applies for $r_{i_2}$ being a descending reflection (for which case $s_{m_1}$ will be a top segment and $r_{i_1}$ will be descending). 
\end{lemma}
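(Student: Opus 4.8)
\emph{Plan.} By reflecting the whole configuration about a horizontal line, the case ``$r_{i_2}$ descending'' reduces to ``$r_{i_2}$ ascending'', so I would assume $r_{i_2}$ is ascending. Write $Q_1$ for the subpath of $\OPT_\tau$ from $r_{i_1}$ to $r_{i_2}$ and $Q_2$ for the subpath from $r_{i_2}$ to $r_{i_3}$. Since $r_{i_1},r_{i_2},r_{i_3}$ are consecutive reflections, neither $Q_1$ nor $Q_2$ contains a reflection point, so by Lemma \ref{obs:shadow-increase} each has shadow $1$ and is therefore $x$-monotone (no leg is vertical). From $x(r_{i_2})<x(r_{i_1})$ and $x(r_{i_2})<x(r_{i_3})$, the $x$-coordinate decreases along $Q_1$ toward $r_{i_2}$ and increases along $Q_2$ away from $r_{i_2}$, so both legs incident to $r_{i_2}$ lie in $\{x\ge x(r_{i_2})\}$; hence $r_{i_2}$ is a right reflection, and by Corollary \ref{cor:alter} $r_{i_1}$ and $r_{i_3}$ are left reflections. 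Since $r_{i_1}$ is a left reflection, both legs incident to it lie in $\{x\le x(r_{i_1})\}$, and $x$-monotonicity then forces $Q_1\subseteq\{x\le x(r_{i_1})\}$, and likewise $Q_0\subseteq\{x\le x(r_{i_1})\}$, where $Q_0$ is the reflection-free subpath of $P$ ending at $r_{i_1}$ (its stretch back to the previous reflection of $P$, or to the entry point of $P$).

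Next I would exploit that $r_{i_2}$ is an \emph{ascending} right reflection: its outgoing leg (first leg of $Q_2$) lies above its incoming leg (last leg of $Q_1$), so Lemma \ref{obs:reflection-paths-above-below}, applied with the vertical line $x=x(r_{i_1})$ (which both $Q_1$ and $Q_2$ reach, since $x(r_{i_2})<x(r_{i_1})<x(r_{i_3})$), gives that $Q_2$ is above $Q_1$ on the whole range $I=[x(r_{i_2}),x(r_{i_1})]$. In particular $Q_2$ meets the line $x=x(r_{i_1})$ at a point $p^\star$ with $y(p^\star)>y(r_{i_1})$ (equality would make $\OPT$ self-crossing). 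Now, if $s_{m_1}$ were a top segment, then $s_{m_1}$ is a top segment in the range $I$ that $Q_1$ intersects (at $r_{i_1}$), so Observation \ref{cor:path-above-top-segment} would force $Q_2$ to intersect $s_{m_1}$ as well; but that intersection point is a point of $\OPT$ on $s_{m_1}$ distinct from $r_{i_1}$, contradicting Lemma \ref{obs:reflection} since $r_{i_1}$ is a reflection point on $s_{m_1}$. Hence $s_{m_1}$ is a bottom segment, which is the first assertion.

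For the second assertion I would argue by contradiction: suppose $r_{i_1}$ is descending, so the leg of $P$ entering $r_{i_1}$ (last leg of $Q_0$) lies above the leg leaving $r_{i_1}$ (first leg of $Q_1$). Form the closed curve $\gamma = Q_1 \cup Q_2[r_{i_2}\to p^\star] \cup V$, where $V$ is the vertical segment from $r_{i_1}$ up to $p^\star$; this curve is simple because $\OPT_\tau$ is non-self-crossing and $Q_1$ and $Q_2[r_{i_2}\to p^\star]$ each meet the line $x=x(r_{i_1})$ only at their endpoint on it. Let $\mathcal A$ be its interior, i.e. the ``lens'' lying between $Q_1$ (below) and $Q_2$ (above) over $x\in I$. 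Since $r_{i_1}$ is descending and $Q_2$ lies strictly above $r_{i_1}$, the leg of $Q_0$ entering $r_{i_1}$ points into the interior of $\mathcal A$ (it lies above the first leg of $Q_1$ but, being incident to $r_{i_1}$, stays below $Q_2$ near $r_{i_1}$, else $\OPT$ self-crosses). Hence the stretch $P'$ of $P$ lying before $r_{i_1}$ (ending at $P$'s entry point $e$ on a cover-line) starts inside $\mathcal A$ and must leave it. It cannot cross $Q_1$ or $Q_2$ (non-self-crossing), nor the part of $V$ lying on $s_{m_1}$ (Lemma \ref{obs:reflection}); and — the delicate point — it cannot cross the remaining, ``free'' part of $V$ either, because the subpath $Q_2[p^\star\to r_{i_2}]\cup Q_1[r_{i_2}\to r_{i_1}]$ meets the line $x=x(r_{i_1})$ exactly at $p^\star$ and $r_{i_1}$, so Lemma \ref{lem:nested-vertical-loops} forbids any crossing of $\OPT_\tau$ with that line strictly between them, i.e. anywhere on the interior of $V$. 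Finally $P'$ cannot pass through $r_{i_2}$ without making $\OPT$ self-cross. So $P'$ never escapes $\mathcal A$, forcing $e\in\overline{\mathcal A}\subseteq S_\tau$; but $e$ lies on a cover-line, which is disjoint from the interior of $S_\tau$, and $e$ cannot lie on $\partial\mathcal A$ either without self-crossing or touching $s_{m_1}$. This contradiction shows $r_{i_1}$ is ascending, and the symmetric case follows by the initial reflection.

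The step I expect to be hardest is exactly the one flagged above: showing that the predecessor stretch $P'$ cannot leave the lens $\mathcal A$ through the free part of its vertical cap $V$. The cleanest route I see is to invoke the nested-loops Lemma \ref{lem:nested-vertical-loops} for the loop $Q_2[p^\star\to r_{i_2}]\cup Q_1[r_{i_2}\to r_{i_1}]$, together with Lemma \ref{obs:reflection} to block the part of $V$ lying on $s_{m_1}$; everything else is bookkeeping with $x$-monotonicity, Corollary \ref{cor:alter}, and the non-self-crossing property.
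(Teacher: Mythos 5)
Your proposal is correct, and for the first assertion (that $s_{m_1}$ is a bottom segment) it follows essentially the same route as the paper: establish $Q_2$ above $Q_1$ on $[x(r_{i_2}),x(r_{i_1})]$ (the paper argues this by containment in a region rather than by citing Lemma~\ref{obs:reflection-paths-above-below}, but the content is the same), then observe that a top segment through $r_{i_1}$ would have to be pierced again by $Q_2$, contradicting Lemma~\ref{obs:reflection}.

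For the second assertion (that $r_{i_1}$ is ascending), you take a genuinely different and substantially longer route. The paper finishes in one stroke: if $r_{i_1}$ were descending, then the \emph{outgoing} leg of $r_{i_1}$ — i.e.\ the first leg of $P_{1,2}$ — is its lower leg, so $P_{1,2}\cup P_{2,3}$ is the subpath emanating from the lower leg of a left reflection on a bottom segment, and it reaches $r_{i_3}$ to the \emph{right} of $s_{m_1}$; this directly violates the (mirrored form of) Lemma~\ref{claim:reflection}. Your argument instead looks at the \emph{predecessor} stretch $Q_0$, constructs the lens region $\mathcal A$ bounded by $Q_1$, $Q_2$, and the vertical cap $V$, and shows $Q_0$ is trapped there, using Lemma~\ref{lem:nested-vertical-loops} on the free part of $V$ and Lemma~\ref{obs:reflection} on the part lying on $s_{m_1}$. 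This is correct — it is, in effect, a re-derivation of Lemma~\ref{claim:reflection} in the reverse direction — but it carries topological bookkeeping (simplicity of $\gamma$, the incoming leg pointing into $\mathcal A$, ruling out each exit route) that you could have avoided by simply applying Lemma~\ref{claim:reflection} to the forward path $Q_1\cup Q_2$. The forward view also sidesteps some of the edge cases you have to wave at (whether $\overline{\mathcal A}$ can touch a cover-line, whether $e$ can lie on $\partial\mathcal A$). The takeaway: your proof buys nothing over the paper's here; you reproved an available lemma rather than applying it.
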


\begin{proof}
	See Figure \ref{fig:lemma-3reflection}. According to Lemma \ref{obs:consecutive-reflections}, since $r_{i_1}$ and $r_{i_2}$ are consecutive reflections with $x(r_{i_2}) < x(r_{i_1})$, then $r_{i_1}$ is a left reflection and $r_{i_2}$ is a right reflection. Let $P_{1,2}$ be the subpath of $P$ from $r_{i_1}$ to $r_{i_2}$, and $P_{2,3}$ be the subpath of $P$ from $r_{i_2}$ to $r_{i_3}$. Since $r_{i_2}$ is an ascending reflection, then $P_{1,2}$ contains the lower leg of $r_{i_2}$, and $P_{2,3}$ contains the upper leg of $r_{i_2}$.\\
	Since $r_{i_1}$ and $r_{i_2}$ are two consecutive reflections with $x(r_{i_1}) > x(r_{i_2})$, this means that $P_{1,2}$ cannot reach to the left of $r_{i_2}$ or to the right of $r_{i_1}$; because otherwise, due to the difference in the $x$-coordinates, $P_{1,2}$ would require an additional reflection between $r_{i_1}$ and $r_{i_2}$, which isn't possible.\\
	This implies that the entirety of $P_{1,2}$, and specifically $r_{i_1}$, are in the region defined by $x=x(r_{i_2})$, $x=x(r_{i_1})$, and the path $P_{2,3}$. So $P_{1,2}$ is below $P_{2,3}$ in $I=[x(r_{i_2}),x(r_{i_1})]$.

\begin{figure}[h]
	\centering
	\includegraphics[width=.34\textwidth]{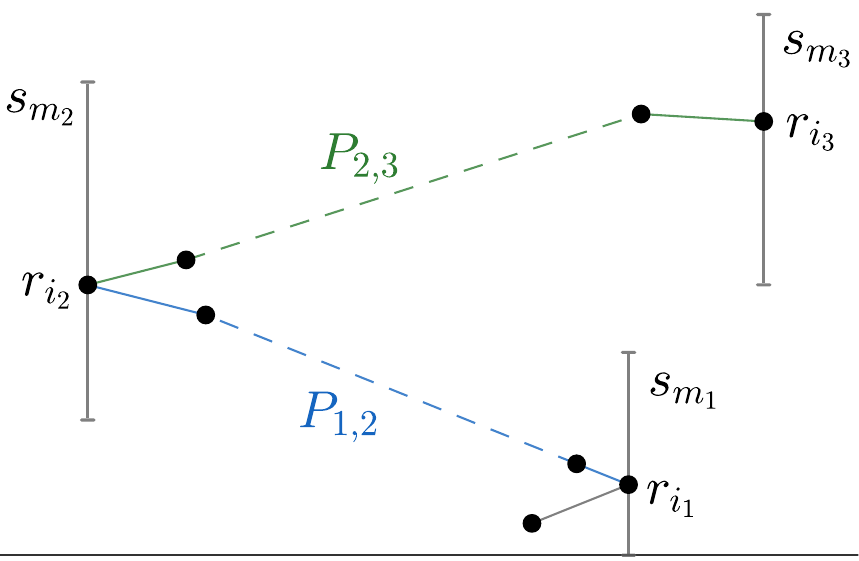}
	\caption[Valid arrangement of three consecutive reflections given a specific configuration.]{Valid arrangement of three consecutive reflections provided the $x$-coordinate of $r_{i_1}$ is between the $x$-coordinates of $r_{i_2}$ and $r_{i_3}$, and $r_{i_2}$ is an ascending reflection. Segments $s_{m_2}$ and $s_{m_3}$ could either be top or bottom segments in this strip, but $s_{m_1}$ must be a bottom segment}\label{fig:lemma-3reflection}
\end{figure}
	
	Since $x(r_{i_2}) < x(r_{i_3})$ and $P_{2,3}$ is a path between these two reflections, we get that for any $x_0 \in [x(r_{i_2}), x(r_{i_3})]$, there is an intersection between $x = x_0$ and $P_{2,3}$. Now, for the sake of contradiction, assume $s_{m_1}$ is a top segment. Since $r_{i_1}$ is below $P_{2,3}$, this would imply that $s_{m_1}$ is intersecting with $P_{2,3}$. But this is in violation with Lemma \ref{obs:reflection}. Thus, $s_{m_1}$ must be a bottom segment. According to Lemma \ref{claim:reflection}, $r_{i_1}$ cannot be a descending reflection, because otherwise, $P_{1,2}$ would contain the lower leg of $r_{i_1}$; therefore, the path $P_{1,2}\cup P_{2,3}$ is a path that contains the lower leg of $r_{i_1}$ and reaches to the right of segment $s_{m_1}$, which isn't possible.
	So we conclude that $s_{m_1}$ is a bottom segment and furthermore, $r_{i_1}$ is an ascending reflection.
\end{proof}
\begin{lemma}\label{lem:reflection-M}
    Suppose $P$ is a loop or ladder of $\OPT_\tau$ for a strip $S_\tau$ and $r_{i_1},r_{i_2},r_{i_3}$ are three reflection points
    visited in this order, but not necessarily consecutively (following orientation of $\OPT$), all are ascending (or all are descending) and are on segments $s_{m_1},s_{m_2},s_{m_3}$, respectively. Assume that $r_{i_1},r_{i_3}$ are left reflections and $r_{i_2}$ is a right reflection
    and $r_{i_2}$ is to the left of both $r_{i_1}$ and $r_{i_3}$, i.e. $x(s_{m_2})<x(s_{m_1})$ and $x(s_{m_2})< x(s_{m_3})$.\\
    Let $P_{0,1}$ be the subpath of $P$ up to $r_{i_1}$, $P_{1,2}$ be the subpath of $P$ from $r_{i_1}$ to $r_{i_2}$, $P_{2,3}$ be the subpath of $P$
    from $r_{i_2}$ to $r_{i_3}$, and $P_{3,4}$ be the subpath of $P$ from $r_{i_3}$ to the end of $P$.
    Then we cannot have both $P_{0,1}$ and $P_{3,4}$ reach to the left of $x(s_{m_2})$.
\end{lemma}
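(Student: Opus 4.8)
The plan is to derive the whole statement from a single application of (a mirror of) Lemma~\ref{claim:reflection} to the middle reflection $r_{i_2}$; the other two reflections and the stated $x$-ordering only serve to fix the configuration in which that lemma is invoked. First I would reduce to the ascending case: reflecting the whole strip $S_\tau$ about a horizontal line turns ascending reflections into descending ones and top segments into bottom segments while preserving all $x$-coordinates and the orientation of the path, so it suffices to treat the case where $r_{i_1},r_{i_2},r_{i_3}$ are all ascending. Then I would record the elementary fact that, because $r_{i_2}$ is ascending, its \emph{upper} leg is the leg of $\OPT_\tau$ traversed after $r_{i_2}$ and its \emph{lower} leg is the one traversed before; hence the subpath of $\OPT_\tau$ emanating from $r_{i_2}$ along its upper leg is precisely $P_{2,3}\cup P_{3,4}$, while the subpath emanating along its lower leg contains $P_{0,1}$. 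Both of these subpaths lie entirely inside $S_\tau$ since $P$ is a loop or ladder of $\OPT_\tau$, which is exactly what lets us invoke Lemma~\ref{claim:reflection}.

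Next I would split on whether $s_{m_2}$ is a top or a bottom segment of $S_\tau$ (it is one or the other). If $s_{m_2}$ is a top segment, then $r_{i_2}$ is a right reflection on a top segment, and the left--right mirror of Lemma~\ref{claim:reflection} (right reflection on a top segment, upper leg) says that the subpath emanating from $r_{i_2}$ along its upper leg never reaches to the left of $x(s_{m_2})$; since $P_{3,4}$ is a suffix of that subpath, $P_{3,4}\subseteq\{x\ge x(s_{m_2})\}$. If instead $s_{m_2}$ is a bottom segment, the corresponding mirror of Lemma~\ref{claim:reflection} applied to the lower leg of the right reflection $r_{i_2}$ gives that the subpath emanating from $r_{i_2}$ along its lower leg stays in $\{x\ge x(s_{m_2})\}$, and since that subpath contains $P_{0,1}$ we get $P_{0,1}\subseteq\{x\ge x(s_{m_2})\}$. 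In either case at least one of $P_{0,1},P_{3,4}$ is confined to $\{x\ge x(s_{m_2})\}$, so they cannot both reach to the left of $x(s_{m_2})$, which is the assertion.

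The main (and essentially the only) subtlety is the bookkeeping of the symmetries of Lemma~\ref{claim:reflection}: one must consistently match the terms \emph{upper leg} and \emph{lower leg} with the after-leg and before-leg of the ascending reflection $r_{i_2}$, match \emph{right reflection} with the subplane $\{x\ge x(s_{m_2})\}$ on which the conclusion confines the path, and keep track that passing from a top to a bottom segment changes which of the two legs is the constrained one, which is exactly why the constrained tail switches between $P_{3,4}$ and $P_{0,1}$. I would also remark that the hypotheses on $r_{i_1},r_{i_3}$ (that they are left reflections with $s_{m_2}$ strictly to the left of $s_{m_1}$ and $s_{m_3}$) are not actually needed for the confinement itself; they merely pin down the geometric configuration in which this lemma is applied elsewhere, ensuring that $P_{1,2}$ and $P_{2,3}$ are genuine subpaths of a single loop or ladder oriented as described.
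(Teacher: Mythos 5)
Your proposal matches the paper's own proof: both reduce without loss of generality to the case where $r_{i_2}$ is ascending, split on whether $s_{m_2}$ is a top or a bottom segment, and in each case apply the appropriate mirror of Lemma~\ref{claim:reflection} to the leg of $r_{i_2}$ on the constrained side to conclude that one of $P_{0,1}$ or $P_{3,4}$ cannot reach to the left of $x(s_{m_2})$. Your closing observation that the hypotheses on $r_{i_1}$ and $r_{i_3}$ are not used in the confinement argument itself is also consistent with the paper's proof, which likewise never invokes them.
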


\begin{proof}
    Each of $P_{1,2}$ and $P_{2,3}$ include a leg of $r_{i_2}$; Without loss of generality, assume that the lower leg of $r_{i_2}$ is in $P_{1,2}$, and its upper leg is in $P_{2,3}$ (i.e. assume that $r_{i_2}$ is an ascending reflection). We take two cases based on whether $s_{m_2}$ is a top segment or a bottom segment:
    \begin{itemize}
        \item \textbf{$s_{m_2}$ is a top segment:}
            Path $P^u_2 = P_{2,3} \cup P_{3,4}$ includes the upper leg of $r_{i_2}$ (a right reflection) on a top segment $s_{m_2}$, so we can use the result of Lemma \ref{claim:reflection} to conclude that $P^u_2$ and particularly $P_{3,4}$ can't reach to the left of $s_{m_2}$.
            
        \item \textbf{$s_{m_2}$ is a bottom segment:}
            Path $P^l_2 = P_{0,1}\cup P_{1,2}$ includes the lower leg of $r_{i_2}$ (a right reflection) on a bottom segment $s_{m_2}$. Again, using Lemma \ref{claim:reflection}, we get the same result that $P^l_2$ and consequently $P_{0,1}$ can't reach to the left of $s_{m_2}$.
    \end{itemize}
\end{proof}

\begin{figure}[h]
    \centering
    \includegraphics[width=.24\tw]{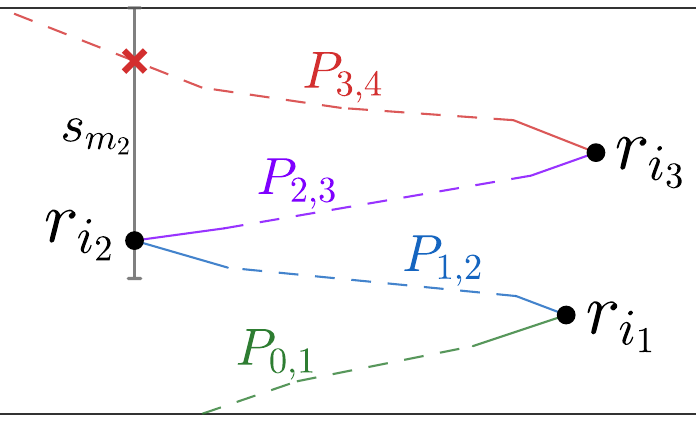}
    \caption{Configuration of Lemma \ref{lem:reflection-M} when $s_{m_2}$ is a top segment}
\end{figure}

\subsubsection{Proof of Lemma \ref{lemma:zig-zag}: Vertically Partitioning the Solution in Each Strip} \label{sec:lemma-zigzag}
This subsection is dedicated to the proof of Lemma \ref{lemma:zig-zag}. 
As mentioned before, we show that the vertical line at which the largest shadow for $P$ happens, can intersect with at most two sinks and a zig-zag. So the shadow of $P$ is within $O(1)$ of the maximum shadow of the zig-zags and sinks along it.

If $q\le 2$, then the correctness of lemma follows easily; so let's assume $q > 2$.
Starting from $i=1$, find the largest $j$ such that the sequence
$r_i,\ldots,r_j$ are all monotone, i.e. are all ascending or all are descending reflections. This will be the first part. We set $i=j+1$ and, again, find the largest $j$ such that $r_i,\ldots,r_j$ are monotone; this becomes the 2nd part. We repeat this procedure.

So we find a partition into maximal sub-sequences of consecutive reflection points $r_a$'s such that each sub-sequence contains only ascending or only descending reflections; each part might have just one point and the subpath path between the last reflection point of a part and the first reflection point of the next part has shadow 1 (since change in shadow can only happen if there is a reflection point by Lemma \ref{obs:shadow-increase}).

The proof has two parts, we first show that each part can only have up to two sinks and possibly a zig-zag in between them, and then we show that for any vertical line, there is at most one part intersecting with it. Since the subpath from one part to another is a path between two consecutive reflections (and has shadow 1) the statement of the lemma will follow.
We will use the following claim throughout this proof:
\begin{claim}\label{claim:bottom-bottom}
    For any subpath $P$ of $\OPT_\tau$ that is either a loop or a ladder, let $r_j$ (on segment $s_m$) and $r_{j'}$ (on segment $s_{m'}$) be any two consecutive reflections along $P$. Without loss of generality assume that $r_j$ is an ascending left reflection and in the orientation of $\OPT_\tau$, $r_j$ comes before $r_{j'}$.
    If both $s_m$ and $s_{m'}$ are bottom segments,
    then the subpath of $P$ up to $r_j$ (which  includes the lower leg $\ell_{j}$) can only contain reflection points lying on bottom segments. Analogous statement holds when both $s_m,s_{m'}$ are top segments.
\end{claim}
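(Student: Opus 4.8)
The plan is to begin by confining the subpath of $P$ up to $r_j$ --- call it $P_{\mathrm{pre}}$ --- to a half-plane. Since $r_j$ is an ascending left reflection on the \emph{bottom} segment $s_m$, its lower leg is exactly the last leg of $P_{\mathrm{pre}}$, so $P_{\mathrm{pre}}$ traversed backwards is a subpath of $\OPT_\tau$ leaving $r_j$ along that lower leg; the $y$-reflected form of Lemma~\ref{claim:reflection} (a bottom segment carrying a left reflection, travelled along the lower leg) then yields $P_{\mathrm{pre}}\subseteq\{x\le x(s_m)\}$. I would then argue by contradiction: assume $P_{\mathrm{pre}}$ contains a reflection on a top segment and let $r^{\ast}$, on top segment $s^{\ast}$, be the \emph{last} such reflection in the orientation of $\OPT_\tau$. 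By the previous step $x(s^{\ast})<x(s_m)$, and by maximality every reflection of $P_{\mathrm{pre}}$ strictly after $r^{\ast}$ --- in particular $r_j$ --- lies on a bottom segment. Writing $\sigma_0=r^{\ast},\sigma_1,\dots,\sigma_p=r_j$ for the reflections of $P$ from $r^{\ast}$ to $r_j$ in order, we have that $\sigma_1,\dots,\sigma_p$ are all on bottom segments, they alternate between left and right reflections by Corollary~\ref{cor:alter}, and their $x$-coordinates follow the zig-zag pattern forced by Lemma~\ref{obs:consecutive-reflections}.

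Next I would record a ``barrier'' property of $s^{\ast}$: as a top segment its top tip lies on the cover-line $C_\tau$, and $P\subseteq S_\tau$, so any point of $P$ on the vertical line $\{x=x(s^{\ast})\}$ other than $r^{\ast}$ must lie strictly below the bottom tip of $s^{\ast}$ --- Lemma~\ref{obs:reflection} forbidding a second intersection with $s^{\ast}$ and the strip boundary forbidding a crossing above $C_\tau$. I would then split on the type of $r^{\ast}$. If $r^{\ast}$ is an ascending \emph{left} reflection, its outgoing leg is its upper leg, so by Lemma~\ref{claim:reflection} itself (top segment, left reflection) the part of $P$ after $r^{\ast}$, which starts with that upper leg, cannot reach to the right of $s^{\ast}$; but this part contains $r_j$ with $x(r_j)=x(s_m)>x(s^{\ast})$, a contradiction. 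In the remaining cases ($r^{\ast}$ descending, or $r^{\ast}$ a right reflection) this does not close immediately, and one must walk down the chain $\sigma_1,\dots,\sigma_p$: using Lemma~\ref{claim:reflection}, Lemma~\ref{lem:reflection-M}, and Lemma~\ref{lemma:in-between-reflections} to control on which side of each $s_{\sigma_i}$ the incident subpaths may venture (and to force the $\sigma_i$ to be ascending), and using that $\OPT$ is non-self-crossing together with the barrier property, to conclude that $P$ cannot re-cross $\{x=x(s^{\ast})\}$ and still reach $r_j$. This is also where the hypothesis ``$r_{j'}$ lies on a bottom segment'' enters: it pins down the type of $\sigma_{p-1}$ (the reflection just before $r_j$) through the appropriately reflected form of Lemma~\ref{lemma:in-between-reflections} applied to the consecutive triple $\sigma_{p-1},r_j,r_{j'}$, which is one of the ingredients that lets the chain argument close. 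Finally, the variant of the claim for two top segments is obtained by applying the top--bottom (equivalently $y\mapsto -y$) reflection to the whole argument, which also swaps ascending and descending.

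The step I expect to be the real obstacle is the case analysis of the second paragraph: when $r^{\ast}$ is descending or a right reflection one is forced to traverse the entire chain $\sigma_1,\dots,\sigma_p$ and rule out every combination of their left/right and ascending/descending patterns, and this is cleanest to organize as an inner induction on $p$ with the ascending-left situation above as the base case. The delicate bookkeeping is keeping track of which side of $s^{\ast}$ (and of each intermediate $s_{\sigma_i}$) the successive subpaths are confined to --- and lining up the $x$-orders of the relevant triples so that Lemma~\ref{lemma:in-between-reflections} and Lemma~\ref{lem:reflection-M} are actually applicable --- so that the barrier property and Lemma~\ref{obs:reflection} can be invoked to reach a contradiction; by comparison the half-plane reduction, the barrier property, and the top-segment version via reflection are routine.
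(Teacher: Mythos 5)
Your proof is incomplete, and you say so yourself: the entire case where $r^{\ast}$ is a descending reflection or a right reflection is left as a plan for a chain-walking induction you have not carried out, and it is precisely in that case that the argument has to do all the work. Sketching that ``one must walk down the chain $\sigma_1,\dots,\sigma_p$ \ldots to conclude that $P$ cannot re-cross $\{x=x(s^{\ast})\}$'' is a statement of the difficulty, not a resolution of it. As written, the proposal proves the claim only when the last top-segment reflection on $P_{\mathrm{pre}}$ happens to be an ascending left reflection.

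What you are missing is a stronger confinement than the half-plane $\{x\le x(s_m)\}$. You used the $y$-reflected Lemma~\ref{claim:reflection} at $r_j$ to trap $P_{\mathrm{pre}}$ on one side, but you never exploit the hypothesis that the \emph{next} reflection $r_{j'}$ also lies on a bottom segment $s_{m'}$, except in a deferred and vague way at the very end. The paper's proof uses that hypothesis immediately and geometrically: by Lemma~\ref{obs:consecutive-reflections}, $s_{m'}$ is to the left of $s_m$ and $r_{j'}$ is a right reflection, so the subpath $P_j$ from $r_j$ to $r_{j'}$, together with the two bottom segments $s_m$ and $s_{m'}$ and the bottom cover-line, bounds a region $A_j$ whose ``roof'' is $P_j$. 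The lower leg $\ell_j$ of $r_j$ enters $A_j$, and the prefix $P'_j$ (your $P_{\mathrm{pre}}$) cannot escape $A_j$: it cannot cross $s_m$ or $s_{m'}$ (Lemma~\ref{obs:reflection}) and cannot cross $P_j$ (non-self-crossing). Hence $P'_j$ sits below $P_j$ inside $A_j$, so by Observation~\ref{cor:path-above-top-segment} any top segment meeting $P'_j$ also meets $P_j$, contradicting Lemma~\ref{obs:reflection}. This one-step, two-sided trap replaces your entire case analysis and the induction you anticipated, and it is where the hypothesis on $s_{m'}$ actually does its job. Without an argument of this strength (or a completed version of your chain induction), the claim is not established.
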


\begin{proof}[Proof of claim]
    According to Lemma \ref{obs:consecutive-reflections}, $r_{j'}$ is a right reflection and $s_{m'}$ is to the left of $s_m$. Let $P_j$ be the subpath of $P$ from $r_j$ to $r_{j'}$. Refer to the area of $S_\tau$ enclosed by $s_m$ and $s_{m'}$ and below $P_j$ by $A_j$; then $\ell_j$ lies inside $A_j$ (see Figure \ref{fig:bottom-bottom-claim}). 
	Let the subpath of $P$ ending with leg $\ell_j$ be called $P'_j$. So $P'_j$ is entirely within $A_j$
	as it cannot intersect with either of $s_m,s_{m'}$ (due to Lemma \ref{obs:reflection}, since they both have reflection points) and $P'_j$ cannot
	intersect $P_j$ other than at $r_j$ (since the solution is not self-crossing).
	So $P'_j$ is below $P_j$ within $A_j$. This implies any top segment that intersects $P'_j$ must also intersect $P_j$ due to Observation \ref{cor:path-above-top-segment}. So $P'_j$ cannot have a reflection on a top segment by Lemma \ref{obs:reflection}. So $P'_j$ can have
	reflection points only on bottom segments.
\end{proof}

\begin{figure}[h]
    \centering
    \includegraphics[width=0.25\textwidth]{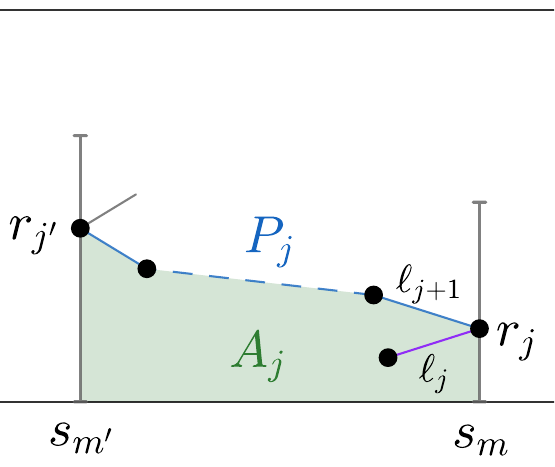}
    \caption[Area $ A_j $ used in the proof of Claim \ref{claim:bottom-bottom}.]{(Claim \ref{claim:bottom-bottom}) $A_j$ is the area in $S_\tau$ surrounded by segments $s_m, s_{m'}$ and the subpath $P_j$}
    \label{fig:bottom-bottom-claim}
\end{figure}

Now back to the proof of the lemma; we prove the following two parts:
     
\subsubsection*{Each partition can have up to two sinks and a zig-zag}
        Consider any part in the partition we defined before, which is a maximal sub-sequence of consecutive reflections that are all ascending or all descending. Our goal is to show this part is concatenation of a sink (possibly empty), followed by a zig-zag (possibly empty), followed by a sink (possibly empty), where the last
        point of the first sink is common with the first
        point of the zig-zag, and the last 
		point of the zig-zag is common with the first 
		point of the last sink.
        For simplicity, suppose the sequence of this part is ${\mathcal R} = r_1,\dots, r_k$. Without loss of generality, assume $\mathcal R$ contains only ascending reflections and that the first one, $r_1$, is on a bottom segment. If all $r_i$'s belong to bottom segments, then $\mathcal R$ is a sink and we are done. Otherwise, let $j$ be the first index such  that $r_{j}$ is on a top segment (i.e. $r_1,\ldots,r_{j-1}$ are all on bottom segments). If $j=2$, i.e. $r_1$ was a bottom and $r_2$ is a top segment, then the first sink is empty and this part starts with a zig-zag. If $j > 2$, then $r_1, \dots, r_{j - 1}$ is a sink. We argue that starting at 
        $j' = \max\{1, j - 1\}$, we can form a zig-zag. Let $m$ be the largest index such that $r_{j'}, r_{j' + 1},\dots, r_m$ is a zig-zag, i.e. the reflection points alternate between top and bottom segments. If no such $m$ exists, it means $r_{j'}, \dots, r_k$ all belong to top segments, giving us a sink; so this together with the first possible sink gives us two sinks at most, concluding the lemma.
        If $m = k$, then the partition has (up to) a single sink followed by a zig-zag, and we're done. Otherwise, $m<k$, meaning $r_{m+1}\in \mathcal R$. Since any zig-zag has at least 2 reflections, we have $m \ge 2$, meaning $r_{m - 1} \in \mathcal R$ and since alternation between top and bottom ends at $r_m$, it means $r_m$ and $r_{m + 1}$ are both either on top segments or both on bottom segments. 
        We show that they can't be both on bottom segments. For the sake of contradiction, assume otherwise, i.e. both $r_m$ and $r_{m+1}$ are on bottom segments (and we assumed they are ascending). Also, we know $r_{m - 1}$ is on a top segment (as it must be different from $r_m$). This violates Claim \ref{claim:bottom-bottom}; because $r_{m - 1}$ is on a top segment and is on the subpath of $\OPT_\tau$ reaching $r_{m}$.
        Thus, both $r_m,r_{m+1}$ are on top segments.\\
	   Without loss of generality, assume that $r_m$ is a left reflection; Lemma \ref{obs:reflection} implies $r_{m + 1}$ is right reflection with $x(r_{m + 1}) < x(r_m)$. Let the path from $r_m$ to $r_{m+1}$ be $P_m$.
        Let $A_m$ denote the area (of $S_\tau$) bounded by $P_m$ and between the segments containing $r_m$ and $r_{m + 1}$.
        If $\ell_{m''},\ell_{m'' + 1}$ are the legs incident to $r_{m + 1}$ in the orientation of $\OPT$, then $\ell_{m'' + 1}$ lies inside $A_m$ (see Figure \ref{fig:top-top1}).
        Once again, using Claim \ref{claim:bottom-bottom}, we get that there can't be any reflections in the subpath in $P$ starting at $r_{m + 1}$ through $\ell_{m'' + 1}$ that lie on a bottom segment. This implies all of $r_m, r_{m + 1},\dots, r_k$ lie on top segments.
        Since all the remaining reflections are on top segments and all are ascending, this by definition means they form a sink. Thus, in total, we have up to a (bottom) sink, a zig-zag, and a (top) sink in this partition, concluding the first part of the proof.
\begin{figure}[H]
    \centering
    \includegraphics[width=.3\textwidth]{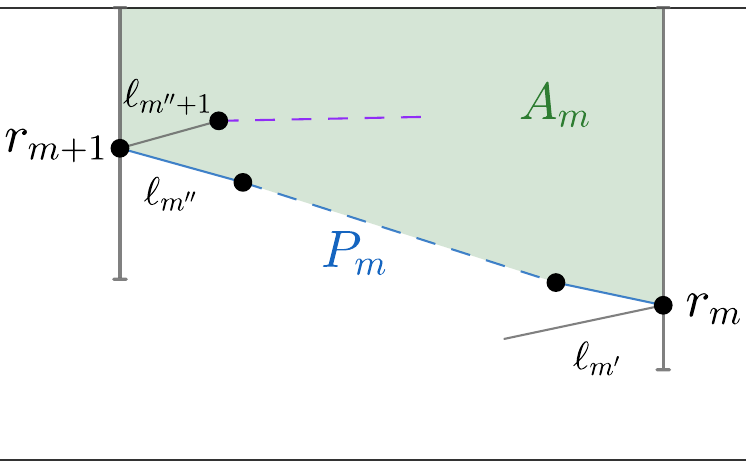}
    \caption{The upper leg of $r_{m + 1}$ lies inside $A_m$, and therefore, so does the rest of the path of $\OPT_\tau$ until $r_k$}
    \label{fig:top-top1}
\end{figure}

\subsubsection*{Any vertical line can intersect at most one part}
	Recall that $r_1, r_2, \dots, r_q$ denotes the sequence of {\em all} the reflection points on $P$ (in strip $S_\tau$). Let's call this $\mathcal R$.
	If $\mathcal R$ is made of only ascending or only descending reflections, we are done as it will have only one part in the partition. Otherwise, there must be two consecutive reflections $ r_i, r_{i + 1} \in \mathcal R$ such that one is an ascending reflection, but the other is descending. Suppose $i$ is the first index that this happens. So the subpath from $r_1$ to $r_i$ is one part, and $r_{i+1}$ is the start point of another part. Note that the path from $r_i$ to $r_{i+1}$ has no reflection points; hence has shadow 1 because of Lemma \ref{obs:shadow-increase}. 
	Without loss of generality, assume $r_i$ is a right reflection and is an ascending reflection.
	Then $r_{i+1}$ is descending and according to Lemma \ref{obs:consecutive-reflections}, it must be a left reflection as well with $x(r_i) < x(r_{i + 1})$. Let $P_i$ be the subpath of $P$ from $r_i$ to $r_{i + 1}$. 
	Since $q > 2$, we either have $ i > 1 $ or (if $i=1$ then) $ i + 1 < q $; meaning $r_{i - 1}\in \mathcal R$ or $r_{i + 2}\in \mathcal R$. In other words, there either is a reflection in $\mathcal R$ before $r_i$, or there is a reflection after $r_{i + 1}$. Assume the first case holds, similar argument applies to the second one. Since $r_i$ is a right reflection, using Lemma 
	\ref{obs:consecutive-reflections}, we get that $r_{i - 1}$ is a left reflection with $x(r_i) < x(r_{i - 1})$. We claim that we must have $x(r_{i - 1}) < x(r_{i + 1})$. For the sake of contradiction, assume otherwise. This means we have $x(r_i) < x(r_{i + 1}) < x(r_{i - 1})$. So we can use the result of Lemma \ref{lemma:in-between-reflections} with parameters being $i_1 = i + 1,\; i_2 = i,\; i_3 = i - 1$ and following the points in reverse order of orientation, i.e. $r_{i + 1}\to r_i \to r_{i - 1}$ (this is the mirrored setting of Lemma \ref{lemma:in-between-reflections}). This implies $r_{i + 1}$ must be a descending reflection in the reverse orientation, which means it must be ascending in the original orientation (that ravels $r_i$ to $r_{i+1}$). But we assumed $r_{i+1}$ is descending.  This contradicts  Lemma \ref{lemma:in-between-reflections} and proves our initial claim that $x(r_{i - 1}) < x(r_{i + 1})$. (see Figure \ref{fig:overlapping-partition1}).
	\begin{figure}[h]
		\centering
		\includegraphics[width=.25\textwidth]{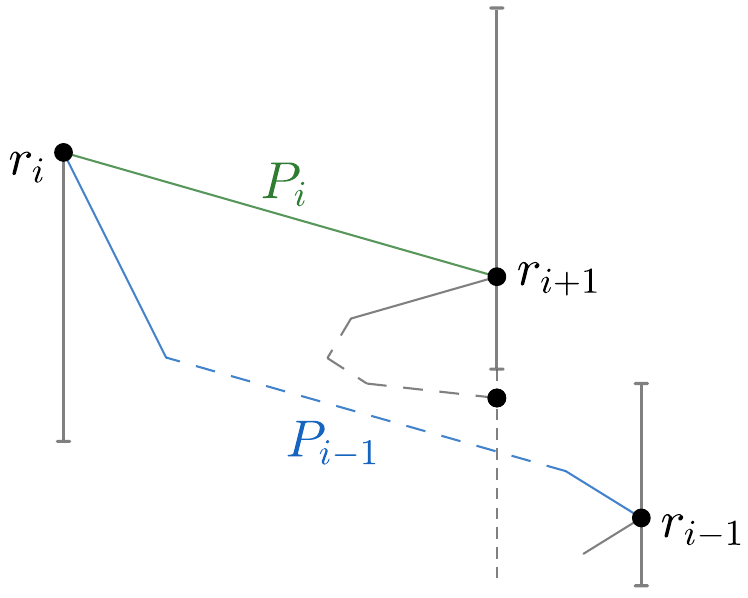}
		\caption{If between $r_i$ and $r_{i + 1}$ one is ascending and the other is descending, then $r_{i - 1}$ (or $r_{i + 2}$) must have an $x$-coordinate between $x(r_i)$ and $x(r_{i + 1})$}
		\label{fig:overlapping-partition1}
	\end{figure}
	Thus, $x(r_i) < x(r_{i - 1}) < x(r_{i + 1})$. Let $s_j$ be the segment of the instance that $r_{i - 1}$ lies on. Once again, using Lemma \ref{lemma:in-between-reflections}, we get that $r_{i - 1}$ is an ascending reflection and $s_j$ is a bottom segment (see Figure \ref{fig:overlapping-partition2}).
	
	According to Lemma \ref{claim:reflection}, we get that the subpath of $P$ from $r_{1}$ to $r_{i - 1}$ (since it contains the lower leg of $r_{i - 1}$ due to it being an ascending reflection) can't reach to the right of $s_{j}$.      
	
	We will show that the subpath of $P$ from $r_{i + 1}$ to $r_k$ will not reach to the left of $s_j$ either. This implies no vertical line can at the same time cross the first part that ends at $r_i$ and the other parts starting at $r_{i+1}$ onward. Repeating this argument implies no vertical line can intersect two parts as wanted.
        Consider the area surrounded by the line $x = x(s_j)$ and $P_{i - 1}\cup P_{i}$, and refer to it by $A_{j}$. 
        Now consider the subpath of $P$ from $r_{i + 1}$ to $r_{k}$ and refer to it as $\mathcal P_{i + 1}$. Similar to the proof of Lemma \ref{claim:reflection}, $\mathcal P_{i+1}$ can't enter $A_j$, because in order to exit from $A_j$, it has to reflect at some point inside $A_j$. 
        But for such a reflection point to exists, there has to be a segment containing it,
        and that segment will intersect with $P_{i-1}$ or $P_{i}$, 
        which contradicts Lemma \ref{obs:reflection}. 
        So we conclude that if $\mathcal P_{i + 1}$ were to go to the left of $s_{j}$, it has to do so from outside of $A_{j}$, i.e. from above $P_i$ (since $P_i$ is the upper hull of $A_j$).

    Take two cases based on whether the segment $s_{j'}$ that contains $r_{i+1}$ is a top segment or a bottom segment:
\begin{itemize}
    \item \textbf{$s_{j'}$ is a top segment}:\\
        The area of $S_\tau$ is cut into two parts by $P_{i-1} \cup P_i \cup s_j\cup s_{j'}$. Since $r_{i+1}$ is a descending reflection, then the lower leg of $r_{i + 1}$ is in the same part as the bottom tip of $s_{j'}$; refer to this part by $A_1$ and let $A_2$ be the other area. Since $\mathcal P_{i + 1}$ includes this leg, it means that if $\mathcal P_{i + 1}$ is going to reach to the left of $s_j$, it has to reach from $A_1$ to $A_2$. This would require it to either intersect with $P_i$ or with $s_{j'}$. The former isn't possible because it would make $\OPT$ self-crossing, and the latter isn't possible because of Lemma \ref{obs:reflection}.

    \item \textbf{$s_{j'}$ is a bottom segment}:\\
        The lower leg of $r_{i + 1}$ is in the area $A_{j'}$ surrounded by $P_{i - 1}\cup P_i\cup s_j\cup s_{j'}$. Since we mentioned $\mathcal P_{i + 1}$ can't reach inside of $A_j$, then it needs to exit $A_{j'}$ and go over $P_i$. This means $\mathcal P_{i + 1}$ has to either intersect with $P_i$ or $s_{j'}$, which gives us the same contradictions as above.
\end{itemize}

	\begin{figure}[H]
	    \centering
	    \includegraphics[width=.25\textwidth]{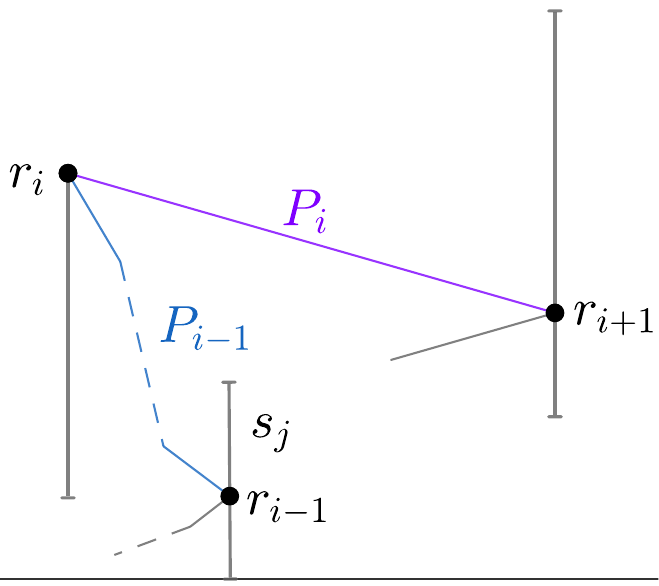}
	    \caption{If $r_i$ is an ascending reflection and $r_{i + 1}$ is a descending reflection, then $r_{i - 1}$ must be an ascending reflection lying on a bottom segment. $r_i$ and $r_{i+1}$ can be either on top or bottom segments}
	    \label{fig:overlapping-partition2}
	\end{figure}
        So we conclude that there is no vertical line $\Gamma$ that intersects both subpath $\mathcal P^-_i = \bigcup_{u = 1}^i P_u$ and subpath $\mathcal P_{i + 1}=\bigcup_{u=i+1}^k P_u$. Thus, $r_1,\dots, r_i$ gives us a partition as desired. By continuing this process for the rest of the reflections, we get that no vertical line can intersect two parts. Since the path between two conseutive parts (last reflection of one part and the first reflection of the next part) has shadow 1, this completes the proof of the last part of Lemma \ref{lemma:zig-zag}. \qed

\subsection{Properties of a Near-Optimum Solution}

In this section we start with some lemmas that are needed to prove major Lemmas \ref{lem:strip-bounded-shadow}, \ref{lemma:reflections}, and \ref{lemma:overlapping-loops-ladders}.
 
\begin{lemma}\label{obs:x-order}
	Let $ \mathcal R = r_1, r_2, \dots, r_k $ denote the reflection points for any zig-zag or sink in a strip $S_\tau$ where $k\geq 3$. Without loss of generality, assume $r_1$ is on a bottom segment and is an ascending left reflection point. Then:
    \begin{itemize}
        \item If $\mathcal R$ is a zig-zag, then
            $x(r_1) < x(r_3) < \dots < x(r_{2i-1})<\ldots$ and $x(r_2) < x(r_4) < \dots < x(r_{2i})\ldots$.\\
            All the inequalities hold in the other direction if $r_1$ is a right reflection point.

        \item If $\mathcal R$ is a sink then
            $x(r_1) < x(r_3) < \dots < x(r_{2i-1})<\ldots$ and $x(r_2) > x(r_4) > \dots > x(r_{2i})\ldots$.\\ 
            Again, all the inequalities hold in the other direction if $r_1$ is a right reflection point.  
    \end{itemize}
\end{lemma}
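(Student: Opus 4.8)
The plan is to read both chains of inequalities off Lemma~\ref{claim:reflection}, applied in each of its four mirror forms, together with the left/right alternation of Corollary~\ref{cor:alter}; no length or packing estimate should be needed. First I would record the normalizations. Under the stated assumption $r_1$ is an ascending left reflection on a bottom segment, so by Corollary~\ref{cor:alter} the odd-indexed $r_1,r_3,r_5,\dots$ are left reflections and the even-indexed $r_2,r_4,\dots$ are right reflections; by the definition of a zig-zag the carrier segments alternate top/bottom, so the odd-indexed $r_i$ lie on bottom segments and the even-indexed $r_i$ on top segments, while in a sink all of them lie on bottom segments. I would also note that, by the preprocessing and by Lemma~\ref{obs:reflection} (a segment carrying a reflection meets $\OPT$ nowhere else), the $r_i$ have pairwise distinct $x$-coordinates, so every ``weak'' inequality below is automatically strict.

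For the odd-indexed reflections I would argue uniformly for zig-zags and sinks. Fix $r_{2j+1}$: it is an ascending left reflection on a bottom segment, and the vertical mirror of Lemma~\ref{claim:reflection} says that the subpath of $\OPT_\tau$ leaving such a point along its \emph{lower} leg never crosses to the right of that segment. Since the reflection is ascending, its lower leg is the leg by which the path enters it, so this subpath runs backwards and contains $r_1,\dots,r_{2j}$; hence $x(r_i)<x(r_{2j+1})$ for all $i\le 2j$, and in particular $x(r_{2j-1})<x(r_{2j+1})$. Chaining over $j$ gives $x(r_1)<x(r_3)<x(r_5)<\cdots$.

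For the even-indexed reflections the argument must branch on zig-zag versus sink, and this is exactly where the two opposite directions in the statement come from. In a zig-zag $r_{2j}$ is an ascending right reflection on a \emph{top} segment, so the horizontal mirror of Lemma~\ref{claim:reflection} controls the subpath leaving it along its \emph{upper} leg, which for an ascending reflection is the outgoing leg; that subpath runs forwards, contains $r_{2j+1},\dots,r_k$, and stays weakly to the right of $x(r_{2j})$, whence $x(r_{2j})<x(r_{2j+2})$ and $x(r_2)<x(r_4)<\cdots$. In a sink $r_{2j}$ is an ascending right reflection on a \emph{bottom} segment, so the combined horizontal-vertical mirror controls the subpath leaving it along its \emph{lower} leg, which for an ascending reflection is the incoming leg; that subpath runs backwards, contains $r_1,\dots,r_{2j-1}$, and stays weakly to the right of $x(r_{2j})$, so applying it to $r_{2j+2}$ (whose preceding portion contains $r_{2j}$) gives $x(r_{2j})>x(r_{2j+2})$, i.e.\ $x(r_2)>x(r_4)>\cdots$. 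Finally, the ``other direction'' when $r_1$ is a right reflection follows by applying the reflection $x\mapsto -x$ to the whole configuration, which swaps left/right reflections but fixes ascending/descending, the top/bottom type of every segment, and the order along $\mathcal R$, hence reverses every $x$-inequality.

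The step I expect to demand the most care is the bookkeeping: for each of the four patterns (top versus bottom carrier, left versus right reflection) one must select the correct mirror image of Lemma~\ref{claim:reflection} and then decide correctly whether it constrains the portion of the path \emph{before} or \emph{after} the reflection point. That decision rests only on the elementary observation that for an ascending reflection the ``upper'' incident leg is the outgoing one and the ``lower'' incident leg is the incoming one; once it is pinned down the inequalities chain together mechanically.
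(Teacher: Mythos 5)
Your proof is correct, and it takes a genuinely cleaner route than the paper's. The paper establishes each inequality by contradiction: assuming, say, $x(r_3)<x(r_1)$, it invokes Lemma~\ref{obs:consecutive-reflections} to pin down the sandwich $x(r_2)<x(r_3)<x(r_1)$ and then calls Lemma~\ref{lemma:in-between-reflections} (read in the reverse orientation) to deduce that $s_{i_3}$ would have to be a top segment, contradicting the zig-zag/sink typing; it then iterates this argument index by index. You instead apply Lemma~\ref{claim:reflection} directly, once at each $r_i$, in the mirror form dictated by the four-way typing (top/bottom carrier $\times$ left/right reflection), using the elementary fact that for an ascending reflection the lower incident leg is the incoming one and the upper incident leg is the outgoing one to decide whether the lemma constrains the prefix or the suffix of the path. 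This collapses the paper's two-stage ``alternation $\Rightarrow$ in-between-reflection $\Rightarrow$ contradiction'' chain into a single uniform appeal to the non-crossing constraint, and it makes the sink/zig-zag divergence transparent: in both cases the odd-indexed (left) reflections push the prefix to their left, while the even-indexed (right) reflections push the suffix to their right when they sit on top segments (zig-zag) and push the prefix to their right when they sit on bottom segments (sink). The bookkeeping you flag as the delicate step is indeed the only place to slip, but your account of it — four mirror images of Lemma~\ref{claim:reflection}, combined with the incoming/outgoing determination for an ascending reflection and with strictness supplied by the distinct-$x$-coordinate preprocessing together with Lemma~\ref{obs:reflection} — is exactly right, and the final reflection $x\mapsto -x$ disposes of the ``other direction'' cases correctly.
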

\begin{proof}
	Assume $r_1,\ldots,r_k$ lie on segments $s_{i_1},s_{i_2},
	\ldots,s_{i_k}$, respectively. Also, let's denote the path (following the orientation of $\OPT$) from $r_m$ to $r_{m+1}$ by $P_m$. By definition, all $P_m$'s are monotone in the $x$-coordinates (see Lemma \ref{obs:shadow-increase}).
	
	First, consider the case that $\mathcal R$ is a zig-zag.
	Since $r_1$ is a left reflection and $s_{i_1}$ is a bottom segment,
	and all reflection points are ascending, it means $r_2$ is a right reflection to the left of $r_1$ (because of Lemma \ref{obs:consecutive-reflections}), and $s_{i_2}$ is a top segment. This implies $P_1$ is a decreasing path in the $x$-coordinate. 
	Once again using Lemma \ref{obs:consecutive-reflections}, since $ r_2 $ is a right reflection, we have $x(r_3) > x(r_2)$.
	We claim that $x(r_3) > x(r_1)$. If this is not the case, then we have $x(r_2) < x(r_3) < x(r_1)$. 
	Using Lemma \ref{lemma:in-between-reflections} for parameters $r_{i_1} = r_3, r_{i_2} = r_2, $ and $r_{i_3} = r_1$ in the order $r_3 \to r_2 \to r_1$ (which makes these reflections descending), implies that $s_{i_3}$ must be a top segment, which is a contradiction. So we get $x(r_3) > x(r_1)$. 
	Analogous argument shows that we must have $x(r_2)<x(r_4)$. Iteratively applying this argument establishes the inequalities.
	        
	Now consider the case that $\mathcal R$ is a sink. The argument is very similar to the case of zig-zag. Note that in this case, all the segments $s_{i_1},\ldots,s_{i_k}$ are now bottom segments, all the reflection points are ascending, and they must alternate between left and right reflection points.
	Since $r_1$ is a left reflection, $r_2$ is a right reflection with $x(r_2)<x(r_1)$ (due to Lemma \ref{obs:consecutive-reflections}). 
	We again have $x(r_3) > x(r_2)$
	because of Lemma \ref{obs:consecutive-reflections}.
	Again, if we have $x(r_3) < x(r_1)$, then we have $x(r_2) < x(r_3) < x(r_2)$.
	Using Lemma \ref{lemma:in-between-reflections} for parameters $r_{i_1} = r_3, r_{i_2} = r_2,$ and $r_{i_3} = r_1$ in the order $r_3 \to r_2 \to r_1$ (which means the reflections are descending) implies $s_{i_3}$ is a top segment, a contradiction. Thus, we get $x(r_3) > x(r_1)$.
	A similar argument shows that $x(r_2)>x(r_4)$, otherwise by an application of Lemma \ref{obs:consecutive-reflections},  $s_{i_4}$ must be a top segment, which contradicts the assumption of a sink.
	By iteratively applying the same argument, we obtain the inequalities stated.
\end{proof}

\begin{lemma}\label{lem:cons-ref}
    If $p_j,p_{j'}$ are consecutive reflection points in $\OPT$, and both are pure reflections  and all the other points of $\OPT$ in between them (if any) are straight points, then either both $p_j,p_{j'}$ are ascending or both are descending.
\end{lemma}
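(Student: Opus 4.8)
The plan is to follow a single direction vector along the subpath from $p_j$ to $p_{j'}$ and read off the ascending/descending type of each endpoint from the sign of the vertical component of that vector. The payoff is that this sign is shared by both endpoints, so they must have the same type.

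First I would observe that the subpath of $\OPT$ from $p_j$ to $p_{j'}$ is traversed with a \emph{constant} direction. By hypothesis every intermediate point $q$ on it is a straight point, so by Observation \ref{obs:legs} the two legs incident to $q$ are collinear and meet at angle $\pi$; hence the travel direction (in the orientation of $\OPT$) is preserved across $q$. Chaining this over all intermediate straight points, the leg $\ell^+$ leaving $p_j$ and the leg $\ell^-$ entering $p_{j'}$ have the same direction vector $d=(d_x,d_y)$ (in fact the whole subpath is one straight segment, but we only need $\ell^+\parallel\ell^-$ with the same orientation). Next I would use purity at $p_j$: both legs incident to $p_j$ lie on one side of the vertical segment carrying $p_j$, and $\ell^+$ is the physical reflection of $\ell^-$ off that (vertical) segment. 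Writing out the law of reflection — the outgoing travel direction is the incoming one with its horizontal component flipped — shows that, as rays emanating from $p_j$, the two legs have equal horizontal components and opposite vertical components; equivalently, they are symmetric about the horizontal line through $p_j$, with one going "up" and the other "down". In particular $d_y\neq 0$ (if $d_y=0$ the two legs coincide, impossible), so $p_j$ is genuinely ascending or descending, and since the travel direction along $\ell^+$ away from $p_j$ is exactly $d$, we get: $p_j$ is ascending $\iff$ $\ell^+$ is the upper leg $\iff d_y>0$.

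Then I would run the symmetric argument at $p_{j'}$. There $\ell^-$ is the \emph{incoming} leg, so as a ray from $p_{j'}$ it points in direction $-d$, and purity mirrors it to give $\ell^+$ pointing in direction $(-d_x,d_y)$; hence the upper leg at $p_{j'}$ is $\ell^+$ precisely when $-d_y<0$, i.e. $p_{j'}$ is ascending $\iff d_y>0$. Combining the two equivalences, $p_j$ is ascending iff $p_{j'}$ is ascending, and since $d_y\neq 0$ exactly one of "both ascending"/"both descending" occurs — which is the claim. (Note the argument never needs to separate left reflections from right reflections: the sign of the horizontal component of $d$ changes between the two cases, but the conclusion depends only on the sign of $d_y$; Corollary \ref{cor:alter} is therefore not even required here.)

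I expect the only genuinely delicate step to be the reflection computation: verifying that at a \emph{pure} reflection point the two incident legs, viewed as rays from that point, have equal horizontal and opposite vertical components (equivalently, are mirror images across the horizontal line through the point). Everything else is bookkeeping. I would also flag one boundary subtlety when invoking the notion of a leg being "on top of" another from Definition \ref{def:ascending}: the upper and lower legs at a reflection point share the point $p_j$ itself, so "all points of $\ell^-$ have larger $y$-coordinate than all points of $\ell^+$" must be understood up to that one shared endpoint — which is the intended reading, since $\OPT$ is non-self-crossing so the two legs are otherwise strictly separated in $y$ near $p_j$.
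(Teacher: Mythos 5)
Your proposal is correct and uses essentially the same idea as the paper's proof: the travel direction is preserved across the intermediate straight points, and purity at a reflection on a vertical segment forces the two incident legs to be mirror images across the horizontal line through the point, so the ascending/descending type at each end is read off from the sign of the common vertical component $d_y$ (with $d_y\neq 0$). The paper phrases this as a contradiction (opposite types force $d_y=0$, hence horizontal overlapping legs, hence self-crossing) while you phrase it as a direct computation, but the geometric content is identical.
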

\begin{proof}
	By  way  of contradiction, suppose 
	$p_j$ is ascending and $p_{j'}$ is descending. Note that one is a left reflection and the other is a right reflection (as reflection points must alternate). Suppose $p_j$ is a point on segment $s_i$, and $p_{j'}$ is on segment $s_{i'}$. From the assumption, the path  from $p_j$ to $p_{j'}$ is a straight line.
	Let $\ell_j,\ell_{j+1}$ be the two legs incident to $p_j$ and $\ell_{j'},\ell_{j'+1}$ be the two legs incident to $p_{j'}$. From the definition of pure reflection, we need to have the angle between $\ell_j$ and $s_i$ and the angle between $\ell_{j+1}$ and $s_i$ be the same, and the angle between $\ell_{j'}$ and $s_{i'}$ and the angle between $\ell_{j'+1}$ and $s_{i'}$ be the same. The only way this is possible is when 
	$\ell_j,\ell_{j+1},\ell_{j'+1}$ are all horizontal but this means $\OPT$
	is self-crossing. This contradiction yeilds the result of the lemma.
\end{proof}

\begin{lemma}\label{lemma:consecutive-reflection-shadow}
    Let $\mathcal R = r_0, r_1,\dots, r_k$ be any sequence of reflections that form a sink or zig-zag in a strip $S_\tau$. For $1\le j \le k$ let $P_j$ be the subpath of $\mathcal R$ between $r_{j - 1}$ to $r_{j}$. Let $ \mathcal P = \{P_1, P_2,\dots, P_k\} $. Take any vertical line $\Gamma$ and let $P_\Gamma = \{P_{j_1}, P_{j_2},\dots, P_{j_m}\}\ (j_1< j_2< \dots < j_m)$ be the maximal subset of $\mathcal P$ that each $ P_j\in P_\Gamma $ intersect with $\Gamma$. Then $P_\Gamma$ must be a consecutive subset of $\mathcal P$. In other words, $P_\Gamma = \{P_{j_1}, P_{j_1 + 1}, P_{j_1 + 2},\dots, P_{j_1 + m - 1}\}$.
\end{lemma}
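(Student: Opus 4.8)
The plan is to reduce the claim to an elementary order-theoretic fact about a family of intervals on the real line. Let $x_0 := x(\Gamma)$. First I would observe that each $P_j$, being the piece of $\mathcal R$ strictly between two consecutive reflection points, contains no reflection point, so by Lemma \ref{obs:shadow-increase} it has shadow $1$ throughout its length, i.e.\ it is monotone in the $x$-coordinate; hence its projection to the $x$-axis is exactly the closed interval $I_j := \bigl[\min(x(r_{j-1}),x(r_j)),\ \max(x(r_{j-1}),x(r_j))\bigr]$, and $P_j$ intersects $\Gamma$ if and only if $x_0\in I_j$. Consequently it suffices to show that $\{\,j\in\{1,\dots,k\} : x_0\in I_j\,\}$ is a set of consecutive indices.

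Next I would determine the two endpoints of each $I_j$. If $k\le 1$ there is at most one subpath and the claim is trivial, so assume $k\ge 2$; then $\mathcal R$ has at least three reflection points and Lemma \ref{obs:x-order} applies. After possibly reflecting the plane left-to-right (which sends sinks to sinks and zig-zags to zig-zags and leaves the statement unchanged), assume $r_0$ is a left reflection, so that by Corollary \ref{cor:alter} the reflections $r_0,r_2,r_4,\dots$ are exactly the left ones and $r_1,r_3,r_5,\dots$ exactly the right ones. For odd $j$ the consecutive pair $(r_{j-1},r_j)$ is a left reflection followed by a right one, and for even $j$ it is a right reflection followed by a left one; in each case Lemma \ref{obs:consecutive-reflections} tells us which endpoint of $I_j$ is smaller, and a short check gives, writing $I_j=[L_j,R_j]$ with $L_j\le R_j$,
\[
    L_{2i-1}=L_{2i}=x(r_{2i-1}),\qquad R_1=x(r_0),\qquad R_{2i}=R_{2i+1}=x(r_{2i})
\]
(truncated at $j=k$). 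In particular $(L_j)_{j=1}^{k}$ lists the $x$-coordinates of the right reflections of $\mathcal R$ in traversal order (each value once or twice), while $(R_j)_{j=1}^{k}$ lists those of the left reflections; so $(L_j)_j$ is monotone provided the $x$-coordinates of the right reflections of $\mathcal R$ are monotone along the path, and $(R_j)_j$ is monotone provided those of the left reflections are. By Lemma \ref{obs:x-order}, whether $\mathcal R$ is a sink or a zig-zag both of these hold, so $(L_j)_j$ and $(R_j)_j$ are each monotone sequences.

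Finally I would finish the argument: $\{\,j : x_0\in I_j\,\}=\{\,j : L_j\le x_0\,\}\cap\{\,j : x_0\le R_j\,\}$, and since $(L_j)_j$ is monotone the first set is either a prefix $\{1,\dots,\beta\}$ or a suffix $\{\alpha,\dots,k\}$ of $\{1,\dots,k\}$, and likewise for the second set. The intersection of any two sets each of which is a prefix or a suffix of $\{1,\dots,k\}$ is again a (possibly empty) block of consecutive indices; since $P_\Gamma=\{P_j : x_0\in I_j\}$, this is exactly the assertion that $P_\Gamma$ is a consecutive subset of $\mathcal P$.

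The part I expect to require the most care is the index bookkeeping in the middle step: reading off the correct endpoints $L_j,R_j$ from Lemma \ref{obs:consecutive-reflections} with the parities aligned, and verifying that the short end pieces (the single occurrence of $x(r_0)$ in $(R_j)$ and whichever of $L_k,R_k$ occurs only once) do not break monotonicity --- they do not, being the first or last term of an already-monotone subsequence --- together with checking that the monotonicity conclusion of Lemma \ref{obs:x-order} is indeed available in all sub-cases (sink versus zig-zag, and both orientations of $\mathcal R$). The concluding step is then a one-line observation about prefixes and suffixes.
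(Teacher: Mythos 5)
Your proof is correct and takes a genuinely more explicit route than the paper's. The paper argues by contradiction: it defines a notion of ``included'' reflections, proves a small claim (no included left reflection lies left of $\Gamma$, and symmetrically for right reflections), and then, assuming a gap in $P_\Gamma$, extracts three right reflections $r_{j_a-1}, r_{j_a+1}, r_q$ visited in that order with $x(r_{j_a+1}) \ge x(\Gamma) \ge x(r_{j_a-1}), x(r_q)$, which contradicts the monotonicity guaranteed by Lemma~\ref{obs:x-order}. You instead make the underlying interval structure explicit: each $P_j$ projects to the closed interval $I_j$ with endpoints $x(r_{j-1}), x(r_j)$ (valid because $P_j$ has no interior reflections, hence is $x$-monotone by Lemma~\ref{obs:shadow-increase}); Lemma~\ref{obs:consecutive-reflections} identifies which endpoint is the smaller one; and Lemma~\ref{obs:x-order} makes the sequences of left endpoints $(L_j)$ and right endpoints $(R_j)$ each monotone (as they enumerate, in traversal order, the $x$-coordinates of the right reflections and of the left reflections respectively). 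The prefix/suffix observation then finishes. Both proofs rest on exactly the same key input, Lemma~\ref{obs:x-order}; your version replaces the ``included reflection'' claim and the contradiction with a direct description of the projection intervals, which is arguably more transparent and needs no case split on which side of $\Gamma$ the gap sits. The index bookkeeping you flag as delicate checks out: with $r_0$ a left reflection, Lemma~\ref{obs:consecutive-reflections} gives $x(r_{2i-1})<x(r_{2i-2})$ and $x(r_{2i-1})<x(r_{2i})$, yielding $L_{2i-1}=L_{2i}=x(r_{2i-1})$ and $R_1=x(r_0),\ R_{2i}=R_{2i+1}=x(r_{2i})$ as you state, and each of $(L_j)$, $(R_j)$ is then (weakly) monotone by Lemma~\ref{obs:x-order}, regardless of whether $\mathcal R$ is a sink or a zig-zag and regardless of the direction of that monotonicity.
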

\begin{proof} 
    We say a reflection $r_j$ is {\em included} in $P_\Gamma$ if $P_j\in P_\Gamma$ or $P_{j + 1}\in P_\Gamma$. 
    We prove the following claim to use throughout this proof:
    \begin{claim}\label{claim:included-reflections}
    	There are no included left reflections to the left of $\Gamma$ (and similarly no included right reflections to the right of it). 
    \end{claim}
    \begin{proof}[Proof of Claim]
    	Assume the contrary, that there is some left reflection $r_{j_i}$ included in $P_\Gamma$ that is to the left of $\Gamma$. Without loss of generality, assume that $P_{j_i}\in P_\Gamma$. Similar to the proof of Lemma \ref{obs:change-in-x}, the points on $P_{j_i}$ are monotone in the $x$-coordinate. This means the path from $r_{j_i}$ on $P_{j_i}$, is decreasing in the $x$-coordinate (because $r_{j_i}$ has its legs facing left), implying $P_{j_i}$ is completely to the left of $r_{j_i}$. Since $\Gamma$ is to the right of $r_{j_i}$, this means that $P_{j_i}$ can't intersect with $\Gamma$, contradicting the assumption that $P_{j_i}\in P_\Gamma$. This proves the claim.
	\end{proof}
	
    Now back to the statement of the lemma;
    Without loss of generality, assume that reflections in $\mathcal R$ are all ascending. 
    For the sake of contradiction, assume that there is an index $1\le a < m$ for which $P_{j_a}$ and $P_{j_{a+1}}$ aren't consecutive. This means $j_a < j_{a + 1} - 1$, and so we conclude that the subpath $P' = \bigcup_{j' = j_a + 1}^{j_{a + 1} - 1}P_{j'}$ of $\mathcal R$ from $r_{j_a}$ to $r_{j_{a+1}} - 1$
    is on one side of $\Gamma$ (or else there will be another $P_{j'}\in P_\Gamma$ with $j_a < j' < j_{a+1}$). So there is at least one reflection point from $\mathcal R$ that is in $P'$. Let $r_{i}$ be the first reflection on $P'$ after $r_{j_a}$.
    Without loss of generality, assume that $r_{j_a}$ (and therefore the entirety of $P'$) is on the right side of $\Gamma$. So $r_{j_a}$ is a left reflection because of Claim \ref{claim:included-reflections}. 
    By Lemma \ref{obs:consecutive-reflections}, both $r_{i}$ and $r_{j_a - 1}$ (the reflection in $\mathcal R$ before $r_{j_a}$) are right reflections. 
    
    Let $r_{q}$ be the end-point of $P_{j_{a+1}}$ that is to the left of $\Gamma$ (either $r_q = r_{j_{a+1}}$ or $r_q = r_{j_{a+1} - 1}$). Once again, using Claim \ref{claim:included-reflections}, we get that $r_{q}$ is a right reflection.
    So we have three right reflections $r_{j_a - 1}, r_{i},$ and $r_{q}$ such that $x(r_i) \ge x(\Gamma) \ge \{x(r_{j_a - 1}), x(r_q)\}$ and the order they're visited in $\mathcal R$ is $r_{j_a - 1}$, then $ r_i$, and then $r_q$. According to Lemma \ref{obs:x-order}, based on whether $\mathcal R$ is a sink or a zig-zag, we either must have $x(r_{j_a - 1}) < x(r_i) < x(r_q)$ or the reversed inequality; which neither are the case here. This contradiction implies the statement of the lemma.
\end{proof}

\subsubsection{Proof of Lemma \ref{lem:strip-bounded-shadow}: Bounding the Shadow of each Sink/Zig-zag}\label{sec:lemma-shadow-of-sink/zigzag}

This subsection is dedicated to the proof of Lemma \ref{lem:strip-bounded-shadow}. 

Let $\sigma=\lceil 1/\eps\rceil + 1$  and
consider any loop or ladder $P\in \OPT_\tau$ and let $\mathcal R$ be an arbitrary zig-zag/sink along $P$ with shadow larger than $\sigma$ at some vertical line $x=x_0$. Without loss of generality, assume that following the orientation of $\OPT$ along $P$, reflection points on $\mathcal R$ are ascending. Suppose that the subpath of $P$ following reflection points $r_j,r_{j+1},\ldots,r_k$ of $\mathcal R$ is crossing $x=x_0$ (note that, using Lemma \ref{lemma:consecutive-reflection-shadow}, the reflection points must be consecutive).
Let this subpath of $P$ starting at $r_j$ and ending at $r_k$ be $\mathcal R'$
and let $s_{a_j}, s_{a_{j+1}},\dots, s_{a_k}$ denote the segments that contain reflections $r_{j}, r_{j + 1},\dots, r_{j + k}$, respectively.	Also let $P_i$ (for $1\leq i\leq k-j$) be the subpath of $\mathcal R'$ from $r_{j+i-1}$ to $r_{j+i}$. 
Note that there might be several straight points or break points between $r_{j'},r_{j'+1}$ on $P$ (for each $j'$); 
the segments of these points are all covered by the shadow one path (due to Lemma \ref{obs:shadow-increase}) from $r_{j'}$ to $r_{j'+1}$.
According to Lemma \ref{obs:reflection-paths-above-below}, since all reflections are ascending, if $m_1 < m_2$, then $P_{m_1}$ is below $P_{m_2}$ (in the range that $P_{m_1}$ is defined on the $x$-axis).
So this specifically implies $P_1$ is below any other $P_m$ (in the range that $P_1$ is defined), and similarly, $P_k$ is above any other $P_m$ (in the range that $P_k$ is defined).
Note that $\mathcal R'$ is part of a zig-zag/sink itself (the only difference with the definition of zig-zag/sink is that $\mathcal R'$ is no longer necessarily maximal in $ \OPT_\tau $). 
Also, note that for each path $P_i$, the $x$ value of the points it visits
between the two reflection points $r_{j+i-1}$ to $r_{j+i}$ are monotone
increasing or decreasing (see Lemma \ref{obs:shadow-increase}).
Let $\Psi$ denote the cost of legs of $\mathcal R'$.
It follows that $r_j,r_{j+2},r_{j+4},\ldots$ are on one side of $x_0$
(say to the right) and $r_{j+1},r_{j+3},\ldots$ are on the other side (say left of $x_0$). Since the number of reflections to the right of $x = x_0$ differs from the number of reflections to the left of $x = x_0$ by at most $1$, then on each side of $x = x_0$ we have at least $(\sigma-1)/2 = \lceil1/2\eps\rceil$ reflections. Let $\sigma'=\lceil 1/2\eps\rceil$.
The idea of the proof is to show that aside from the  $2\sigma'$ reflections at the end of $\mathcal R'$ (i.e. the last $2\sigma'$ paths $P_{j}$), we can replace the paths between the rest of the reflection points so that it reduces the shadow of the entire $\mathcal R'$ to $O(1/\eps)$ while increasing the cost of the path by at most $O(\eps\cdot\Psi)$. 

Note that using Lemma \ref{obs:shadow-increase}, each subpath $P_j$ of $\mathcal R'$ is between two consecutive reflection points and so has a shadow of $1$. This implies that $P_{k-2\sigma'} \cup \ldots\cup P_{k-1}$ has a shadow of $O(\frac1\eps)$  as it has $2\sigma'$ consecutive reflections. We replace the rest of $\mathcal R$ (as we describe below) with a new path of a shadow of $O(1)$; this will yield the result of the lemma.

\subsubsection*{When $\mathcal R'$ is a part of a zig-zag}
    \begin{figure}[h]
        \centering
        \includegraphics[width=0.7\textwidth]{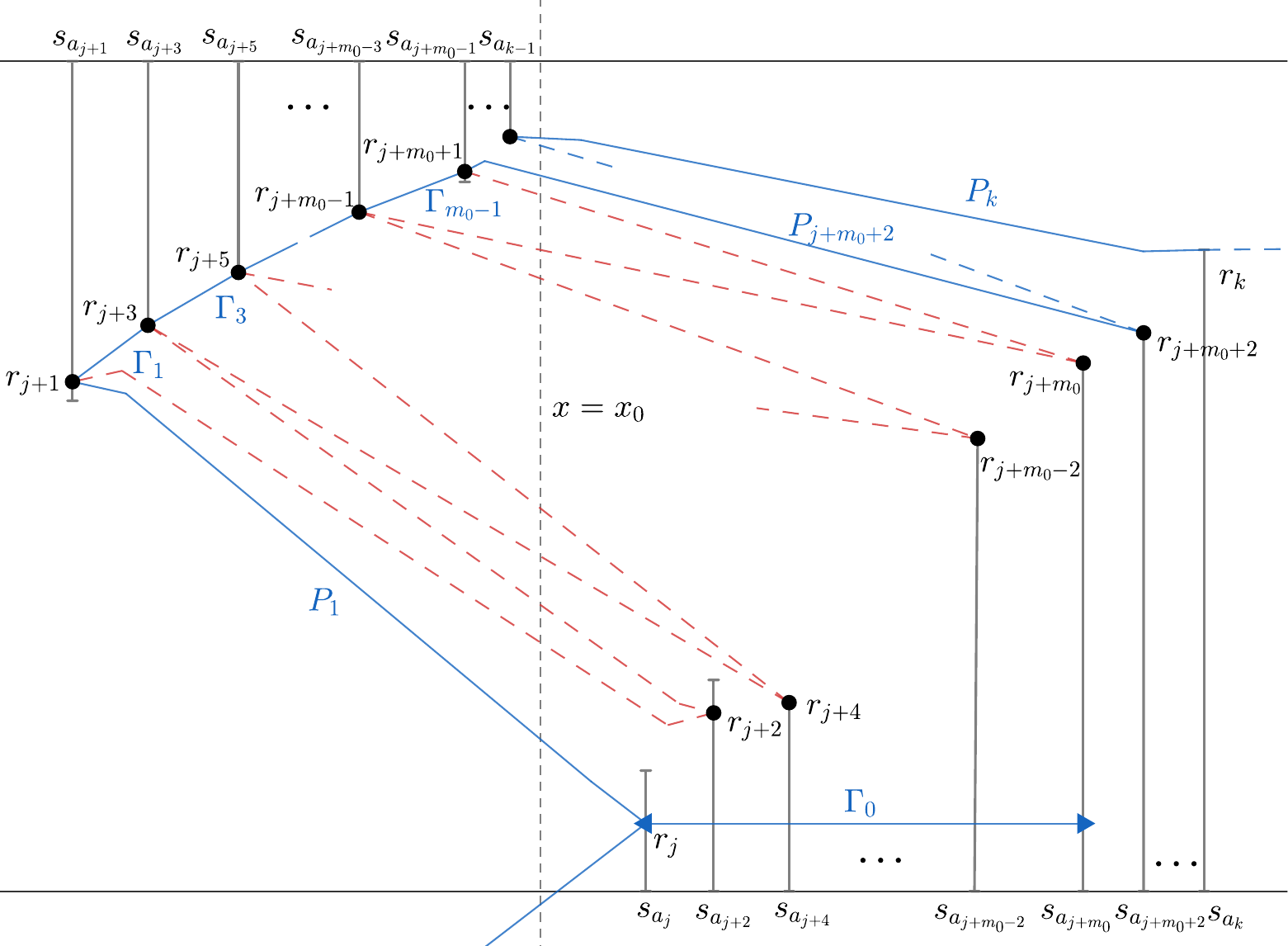}
        \caption[The charging scheme in Lemma \ref{lem:strip-bounded-shadow} for zig-zags.]{Alternative path for a zig-zag; The red dashed parts are discarded. There are further details about $\Gamma_m$'s that are explained throughout the proof of Lemma \ref{lem:strip-bounded-shadow}}
        \label{fig:zig-zag-charging}
    \end{figure}

    Without loss of generality, assume that $s_{a_j}, s_{a_{j + 2}}, \dots$ are bottom segments and to the right of $x=x_0$, and consequently, $s_{a_{j + 1}}, s_{a_{j + 3}},\dots$ are  top segments and to the left of $x=x_0$.
    Let $d_{m}$ for $m = 0, 1,\dots, k-j$ denote $ \abs{x(r_{j + m}) - x_0} $. Using Lemma \ref{obs:x-order}, we have $ d_0 < d_2 < \cdots $ and $d_1 > d_3 > \cdots$.
    Let's focus on the right side of $x = x_0$ (where the bottom segments are), so
    $r_{j}, r_{j + 2},\dots, r_{j + 2q}$ are all the reflections of $\mathcal R'$ on this side where $ q\ge \sigma'-1$. 
    We claim that except for at most the $\sigma'$ largest values in $d_2,d_4,d_6,\ldots$, all other values of $d_{2m}$'s are at most $\eps\cdot\Psi$ and this is done by an averaging argument. 
    More specifically, we show that
    the largest integer $m_0 \in \{0,2,4,\dots, 2q\}$ for which we have $d_{j+m_0} \le \eps\cdot\Psi$, has value $m_0 \ge 2(q - (\sigma'-1))$. To see why this is the case, assume otherwise, that for all even integers $m \ge 2(q - (\sigma'-1))$, we have $d_{j+m} > \eps\cdot\Psi$. Adding these inequalities for $m = 2(q-(\sigma'-1)),\; 2(q - (\sigma'-2)),\; \dots,\; 2q$ gives us
    
\begin{eqnarray*}
        d_{j+2(q - (\sigma'-1))}  + d_{j+2(q - (\sigma'-2))} + \dots + d_{j+2(q)}  &>& \sigma'\cdot (\eps\cdot\Psi)\\
        &=& \lceil1/2\eps\rceil\cdot \eps\cdot\Psi \\
        &\ge& \Psi/2,
\end{eqnarray*}

    which clearly isn't possible, due to $2\sum_{m\in\{2(q-(\sigma'-1)),\ldots,2q\}} d_{j+m}\leq\Psi$; this inequality holds because paths $P_{j+m},P_{j+m+1}$ ($m\in\{2(q-(\sigma'-1)),\ldots,2q\}$) have to travel the $x$-distance from $x_0$ to $s_{a_{j+m}}$ to the reflection points $r_{j+m}$, and all these paths are part of $\mathcal R'$. This contradiction shows our initial claim, that for some $m_0 \ge 2(q - (\sigma'-1))$, we have all of $d_{j},d_{j+2},\ldots,d_{j+m_0} \le \eps\cdot\Psi$.
        
        We are going to change $\mathcal R'$ from $r_{j}$ up to $r_{j + m_0}$, but keep $P_{j + m_0+1}$ and after; this change will result in another feasible solution with an $O(1)$ shadow up to $r_{j+m_0}$, and cost increase will be at most $O(\eps\cdot\Psi)$. 
        Our modification of $\mathcal R'$ is informally as follows (skipping some details to be explained soon). Starting at $r_j$ instead of following $P_1$ to $r_{j+1}$, we first travel horizontally to the right until we hit $s_{a_{j+m_0}}$ (the bottom segment which $r_{j+m_0}$ is located on), and travel back to $r_j$. Let's call this horizontal back and forth subpath $\Gamma$. 
	    This subpath $ \Gamma $ will ensure that all the bottom segments that $\mathcal R'$ covers between $x=x_0$ and $s_{a_{j+m_0}}$ are covered (we may need to deviate from $\Gamma$ further down if $\mathcal R'$ goes further below $\Gamma$ at some point; will formalize this soon). The shadow of $\Gamma$ will easily be shown to be 2.
	    Then from $r_j$, we follow $P_1$ and go to $r_{j+1}$ which is the left-most reflection on a top segment (to the left of $x=x_0$).
	    Now instead of following $P_2$ to go to $r_{j+2}$ and then $P_3$ to go to $r_{j+3}$, we go straight from $r_{j+1}$ to $r_{j+3}$ (with some little details skipped here), then to $r_{j+5}$ and so on until
	    we get to $r_{j+m_0+1}$, and from there we follow $\mathcal R'$. One observation is that the shadow of the new path from $r_{j+1}$ to $r_{j+m_0+1}$ is also 1 since it won't have any reflection points. 
	    The rest of the path from $r_{j+m_0+1}$ to $r_k$ that follows $\mathcal R'$ has at most $O(\sigma')$ reflection points and hence the shadow is $O(1/\eps)$.
	    We show that the new path hits all the segments $\mathcal R'$ were hitting; and so we still have a feasible solution where the overall increase in the cost is at most $O(d_{j+m_0})$, which is bounded by $O(\eps\cdot\Psi)$. Hence, we
	    find a modification of the path $\mathcal R'$ with shadow bounded by $O(1/\eps)$, and cost increase is at most $O(\eps\cdot\Psi)$.
		Note that any bottom segment (if any) to the left of $x=x_0$ that was covered by $\mathcal R'$, must intersect $P_1$;
        as $P_1$ is below the rest of $\mathcal R'$ to the left of $x=x_0$. 
        Thus, any bottom segment to the left of $x_0$ that is covered by any of the $P_{b>1}$, is also covered by $P_1$. There are some details missing in this informal description that are explained below.
	    
        We will introduce a new subpath $\Gamma_0$, responsible for covering all bottom segments in $\mathcal R'$ to the right of $x = x_0$ until $s_{j+m_0}$; and we introduce a collection of subpaths $\Gamma_m$ for odd $m$ in $\{1,\dots, m_0\}$ for covering the top segments to the left of $x = x_0$. All of $\Gamma_m$'s, will have a shadow of $1$. We ensure that any bottom segment hit by $\mathcal R'$ between $s_{j_0}$ and $s_{j+m_0}$, is also hit by $\Gamma_0$ between $x=x(s_{j})$ and $x=x(s_{j+m_0})$; and also
        any top segments that $\mathcal R'$ was hitting in the range that each 
        $\Gamma_m$ is defined, is hit by $\Gamma_m$, for $m\geq 1$.

        Consider the horizontal line $y = y(r_j)$ from $s_{a_j}$ to $s_{a_{j+m_0}}$. Refer to this horizontal portion as $\Gamma$. For reflection points $r_j,r_{j+2},\ldots,r_{j+m_0}$ (on bottom segments $s_{a_{j}},s_{a_{j+2}},\ldots,s_{a_{j+m_0}}$),
        the two paths that contain a leg incident to $r_{j+m}$ are $P_{j+m}$ and $P_{j+m + 1}$ for each $0\leq m\leq m_0$. Recall that using Lemma \ref{obs:reflection-paths-above-below}, $P_{j+m}$ is below $P_{j+m+1}$ between $s_{a_j+m-1}$ and $s_{a_j+m}$. 
		Consider the area $A_{\Gamma}$ of the strip bounded by $\Gamma\cup s_{a_j} \cup s_{a_{j+m_0}}$. Then $\mathcal R'\cap A_{\Gamma}$ are (possibly empty) subpaths that start and end at $\Gamma$. These subpaths form the lower-envelope of $\mathcal R'\cup\Gamma$ in $A_{\Gamma}$ (for e.g. in Figure \ref{fig:gamma-0}, paths $P_4,P_5$ that reach $r_{j+4}$, cross $\Gamma$ at points $q^4_1,q^4_2$.).
          \begin{figure}[H]
            \centering
            \includegraphics[width=.55\textwidth]{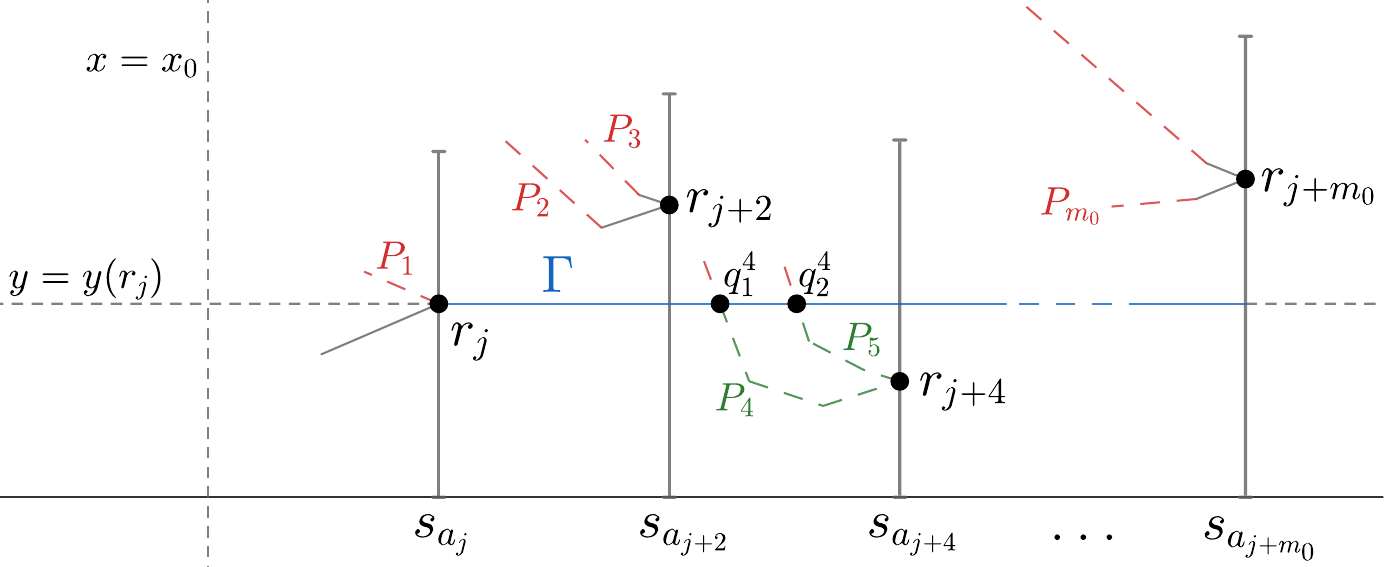}
            \caption[An illustration of subpath $ \Gamma_0 $.]{$\Gamma_0$ is the lower envelope of the blue line (the segment $\Gamma$) together with the green parts (portions of $\OPT_\tau$ that go below $\Gamma$)}
            \label{fig:gamma-0}
        \end{figure}
        
        We define $\Gamma_0$ to be a path starting at $r_j$ that travels right along $\Gamma$ and the lower envelope of $\mathcal R'$ in this area, i.e. whenever travelling right on $\Gamma$, if we arrive at an intersection of $\mathcal R'$ with $\Gamma$ (say a path $P_{j+m}$) then we travel along $P_{j+m}$ inside $A_{\Gamma}$ until we hit back at $\Gamma$, and then continue travelling right. For instance, in
        Figure \ref{fig:gamma-0}, when travelling on $\Gamma$ from $r_j$ to right, once we arrive at $q^4_1$, we follow $P_4$ to $r_{j+4}$, then follow $P_5$ to $q^4_2$, and then continue right on $\Gamma$. Once we arrive at $s_{a_{j+m_0}}$, we travel $\Gamma$ horizontally back to $r_{j}$. The length of $\Gamma_0$ can be bounded by the length of $\mathcal R'\cap A_{\Gamma}$ plus $2d_{j+m_0}\leq 2\eps\Psi$.
        Also, it can be seen that any bottom segment that was covered by $\mathcal R'$ in between $x(r_j)$ and $x(r_{j+m_0})$, is covered by $\Gamma_0$ (since we travel the lower envelope of $\mathcal R'\cup A_\Gamma$ in the range we're defining $ \Gamma_0 $). Any top segment that is covered by $\mathcal R'$ within $[x(r_j),x(r_{j+m_0})]$, must be also covered by $P_{m_0+1}$; as that path is above all other $P_m$'s in the range of $[x_0, x(r_{j+m_0})]$.   After travelling $\Gamma_0$, we travel along $P_1$ to $r_{j+1}$.
        Now we're going to define $\Gamma_m$ for odd $ 1\le m \le m_0 $. Each $\Gamma_m$ goes from $r_{j + m}$ to $r_{j + m + 2}$ 
        until we arrive at $r_{j+m_0+1}$; after which we follow along $\mathcal R'$ (i.e. $P_{j+m_0+2}$, then $P_{j+m_0+3}$ and so on). Path $\Gamma_1$ will replace $P_2+P_3$, $\Gamma_3$ will replace
        $P_4+P_5$, and so on. Note that $\Gamma_m$'s are all to the left of $x = x_0$.
	   For any two reflections $r_{j + m}$ and $r_{j + m + 2}$ that lie on top segments $s_{a_{j+m}}$ and $s_{a_{j+m + 2}}$, let $\gamma_m$ be the subpaths of $\mathcal R'$ restricted to the area of the strip cut by segment $r_{j + m}r_{j + m + 2}$ and $s_{a_{j+m}}$ and $s_{a_{j+m + 2}}$ (i.e. the area between $s_{a_{j+m}}$ and $s_{a_{j+m + 2}}$ and above $r_{j+m}r_{j + m + 2}$). Path $\Gamma_m$ is obtained by starting at $r_{j+m}$ and following line $r_{j + m}r_{j + m + 2}$ and whenever we hit $\mathcal R'$, i.e. a subpath of $\gamma_m$ (see Figure \ref{fig:gamma-m}) we follow that subpath until we arrive back to $r_{j+m}r_{j+m+2}$ again; we continue until we reach $r_{j+m+2}$. In other words, we follow the upper envelope
    of $r_{j+m}r_{j+m+2}\cup \mathcal R'$ between $r_{j+m}$ and $r_{j+m+2}$.
    \begin{figure}[h]
    	\centering
    	\includegraphics[width=.35\textwidth]{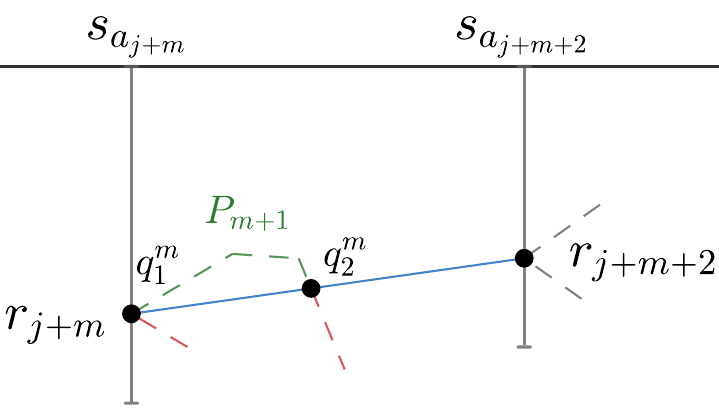}
    	\caption[An illustration of subpaths $ \Gamma_m $.]{$\Gamma_m$ is travelling the upper envelope of the blue line (the segment $r_{j + m}\left.r_{j + m + 2}\right.$) and the green parts (portions of $\OPT_\tau$ that go above $r_{j + m}\left.r_{j + m + 2}\right.$)}
        \label{fig:gamma-m}
    \end{figure}
	   If we define $q^m_1, \dots, q^m_{2i_m}$ to be the intersections of $\mathcal R'$ with $\Gamma_m$, ordered in the direction of $r_{j + m} \to r_{j + m + 2}$, then if $\mathcal R'$ intersects with $r_{j+m}r_{j+m+2}$ and goes above it, it has to be at a point $q^m_{u}$ where $u$ is odd, and otherwise $u$ has to be even. 
	   
	Overall, we have changed the subpaths of $\mathcal R'$ from
	$r_j$ to $r_{j + m_0}$ as follows (see Figure \ref{fig:zig-zag-charging}):
		
    \label{alg:zig-zag-path}
    \begin{itemize}
        \item Follow $\Gamma$ from $r_{j}$ towards $s_{a_{j+m_0}}$, such that every time an intersection point with $\mathcal R'$ (say point $q^0_{2u - 1}$) is reached, then follow along $\mathcal R'$ until the next intersection of $\mathcal R'$ with $\Gamma$ (say point $q^0_{2u}$) is reached; then continue along $\Gamma$. Repeat this process until we reach $s_{a_{j+m_0}}$, then follow along $\Gamma$ from right to left directly back to $r_{j}$; this is subpath $\Gamma_0$

	    \item From $r_{j}$, follow $P_1$ to reach $r_{j + 1}$.
	    
	    \item From $r_{j + m}$ (initially $m = 1$), follow $\Gamma_m$ similar to the first step; meaning follow the segment $r_{j + m}\left. r_{j + m + 2}\right.$, and when an intersection point $q^m_{2u - 1}$ with $\mathcal R'$ is reached, follow $\mathcal R'$ instead, until you reach the next intersection point $q^m_{2u}$ on $r_{j+m}r_{j+m+2}$. Repeat this process (for $m = 1, 3, \dots$) until $r_{j + m_0 + 1}$ is reached. 
	    \item From $r_{j + m_0 + 1}$, follow $P_{j+m_0+2}$ and the rest of $\mathcal R'$ to the end.
	\end{itemize}
        First, we show that we still have a feasible solution, i.e.
        every segment that $\mathcal R'$ used to cover, will have an intersection with the new solution. To see why this is the case, first note that $\Gamma_0$ by definition is always on or below $\mathcal R'$ in the ranges it's defined; this means that (using Observation \ref{cor:path-above-top-segment}) $\Gamma_0$ covers any bottom segment that $\mathcal R'$ covers between $s_{a_{j}}$ to $s_{a_{j+m_0}}$. Also, all the top segments in this range  will be intersecting the path $P_{m_0+1}$, since $P_{m_0+1}$ is above all $P_{\leq m_0}$ in this range.
        Similarly, each $\Gamma_m$ is on or above $\mathcal R'$ in the range $[x(r_{j+m}),x(r_{j+m+2})]$, meaning they cover all the top segments that $\mathcal R'$ used to cover between $s_{a_{j+1}}$ to $s_{a_{j+m_0+1}}$. Also, any bottom segment that was covered in this range is covered by $P_1$. 

		Next, note that from $r_{j}$ to $r_{j+m_0+1}$, the shadow is at most 3. That is because shadow of $\Gamma_0$ is 2, shadow of each $\Gamma_m$ ($1\leq m$) is 1, and shadow of $P_1$ is 1.

        Now we show the new solution has an additional cost of at most $3\eps\Psi$. All parts of $\Gamma_0$ and the rest of $\Gamma_m$'s that used portions of $\mathcal R'$ can be charged onto $\mathcal R'$ itself. So we only have to properly charge the line segment $\Gamma$ along with its duplicate (part of $\Gamma_0$) and line segments $r_{j + m}r_{j + m + 2}$ (part of $\Gamma_m$). We know that $\norm \Gamma = x(r_{j + m_0}) - x(r_{j}) < x(r_{j + m_0}) - x_0 = d_{m_0} \le \eps\cdot\Psi$. So we pay at most $2\eps\Psi$ extra (compared to $\OPT_\tau$) for travelling $\Gamma_0$. We consider one additional copy of $\Gamma$ for the extra cost we pay elsewhere in $\Gamma_m$ ($m\geq 1$), and we are going to use this for our charging scheme. So at the end, the total extra cost is going to be bounded by $3\eps\Psi$.
        
        For each two reflections $r_{j + m}$ and $r_{j + m + 2}$ that lie on top segments, note that $\mathcal R'$ had
        two subpaths paths $P_{j+m + 1},P_{j+m + 2}$ whose concatenation makes a path from $r_{j+m}$ to $r_{j+m+2}$; but now, it is possible that some portions of $P_{j+m+1},P_{j+m + 2}$ are used in $\Gamma_0$ during our alternate solution (those that belonged to $A_\Gamma$). But having that additional copy of $\Gamma$ that we accounted for, we can use it to short-cut the missing parts of $P_{j+m + 1}\cup P_{j+m + 2}$ to again make a path from $r_{j + m}$ to $r_{j + m + 2}$. Overall, the total length of $P_1+\sum_{m\geq 0} \Gamma_m$ that is replacing $P_1+P_2+\ldots+P_{j+m_0+1}$ is at most $3\eps\Psi$ larger than length of $P_1+P_2+\ldots+P_{j+m_0+1}$.
        Thus, we conclude the lemma for the case of zig-zags.

\subsubsection*{When $\mathcal R'$ is a part of a sink}
    The proof is analogous to the case of zig-zags. Without loss of generality, assume all reflections in $\mathcal R'$ are on bottom segments. Define $d_{m} = \abs{x(r_{j + m}) - x_0}$ like before. If $r_{j}, r_{j+2},\dots, r_{j + 2q}$ are all the reflections to the right of $ x = x_0 $, then with the same arguments as the case of zig-zags, we will find an integer $m_0 \ge 2(q - (\sigma' -1))$ $(\sigma' = \lceil \frac{1}{2\eps}\rceil)$ for which we have $d_{m_0} \le \eps\cdot\Psi$. 
    
    We will replace the subpath of $\mathcal R'$ from $r_j$ to $r_{j + m_0}$ in the same fashion as before. Let $\Gamma$ be the segment on the line $y = y(r_j)$ in the range $[x(r_j),\left. x(r_{j + m_0}) \right.]$. Define $\Gamma_0$ to be the union of $\Gamma$ with the portions of $\mathcal R'$ that go below it. Define each $\Gamma_m$ for a reflection $r_{j + m}$ to the left of $x = x_0$ to be the union of segment $r_{j+m}\left. r_{j + m + 2}\right.$ with the portions of $\mathcal R'$ that go below it.

    The same arguments as before hold, that each of $\Gamma$ or $r_{j+m}\left. r_{j + m + 2}\right.$ that we defined above, will have an even number of intersections with $\mathcal R'$. Define the new path between these reflections in
    the same way as we did for zig-zags.
	
    The cost arguments still hold, implying that the new path has an additional cost of $O(\eps\cdot\Psi)$. Also, the new path is always on or below $\mathcal R'$, so it covers all the bottom segments that $\mathcal R'$ used to cover previously. But one can see that $P_{m_0}$ is above all of $\mathcal R'$ in the path between $r_{j}$ to $r_{j + m_0}$. Thus, $P_{m_0}$ alone will cover all top segments that $\mathcal R'$ used to cover. Once again, the new solution has a shadow of at most $3$ in the subpath between $r_j$ and $r_{j+m_0}$ and shadow $O(1/\eps)$ afterwards.
    This concludes the proof for the case of sinks and the proof of 
    Lemma \ref{lem:strip-bounded-shadow}.
    \qed

\subsubsection{Proof of Lemma \ref{lemma:reflections}: Bounding the Size of Pure Reflection Sequences}\label{sec:lemma-pure-reflection-size}
In this subsection, we will prove Lemma \ref{lemma:reflections}. 

We prove the lemma by showing how to change each ladder or loop (i.e. any path of $\OPT_\tau$ that starts and ends on one of the cover-lines) so that the size of each pure reflection sub-sequence is bounded without increasing the cost by more than $(1+\eps)$ factor. Consider any loop or ladder $P\in \OPT_\tau$ and any maximal pure reflection sequence  $P'=r_0, r_2, \dots, r_k $ in $P$ where $k> \frac1\epsilon $. Let $\Psi$ be the length of subpath of $\OPT_\tau$ from $r_0$ to $r_k$.
We show how we can modify this subpath to another one whose length is at most $(1+O(\eps))\Psi$ such that the length  of each
pure reflection sub-sequence is bounded by $O(1/\eps)$.
Note that using Lemma \ref{lem:cons-ref},  all $r_i $'s are ascending or all are descending. 
This also implies that the $ y $-coordinates of $r_i$'s are monotone. i.e. either $ y(r_0)\le y(r_1)\le \dots \le y(r_k) $ or the other way around. Without loss of generality, assume it is the former case and so all are ascending reflection points.
Proof of Lemma \ref{lemma:zig-zag}, shows if we have a maximal monotone (i.e. all ascending or all descending) sequence of reflection points, then it consists of at most a sink followed by a zig-zag, followed by a sink. Therefore, it suffices to bound the size of pure reflection sequence in a single sink or a zig-zag alone as a function of $1/\eps$. So let's assume all $ r_i $'s form a single sink or all form a single zig-zag.

Recall from the definition of pure reflection sequence that there might be straight points in $P$ between two consecutive reflection points.
For any reflection point $r_i$ on a segment $s$, let $ d_i^+ $ and $ d_i^- $ be the distances of $r_i$ to the top and bottom tips of $s$, respectively. 
By the definition of a pure reflection sequence, $d^-_i > 0 $ and $ d_i^+ > 0 $ for all $0\leq i\leq k $ (because the reflections are not at the tips). 
With  $\sigma=\lceil\frac 1\epsilon\rceil$, we break $P'$ into $m$ subpaths $G_1,\ldots,G_m$ where $G_j$ is the subpath of $P'$ from $r_{{j-1}\sigma}$ to $r_{j\sigma}$,
except that the last group ends at $r_k$. Note that the concatenation of these paths is $P'$, and each subpath has at most $\sigma+1$ reflections.
Consider any group $G_j$ and let $\mathcal G_j$  be the cost of the legs of $P$ between the reflection points of $G_j$ and let $D_j$ be the smallest value among minimum of $d^+_a,d^-_a$ among all reflection points $r_a\in G_j$, i.e. $D_j=\min_{(j-1)\sigma\leq a\leq j\sigma}\{d^+_a,d^-_a\}$.

\begin{claim}
	For each $1\leq j\leq m$:	$D_j \leq \frac{2\eps}{1 - \eps}\cdot \mathcal G_j$.
\end{claim}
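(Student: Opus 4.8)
The plan is to argue by contradiction: suppose $D_j > \frac{2\eps}{1-\eps}\mathcal{G}_j$, which means every reflection point $r_a \in G_j$ sits at distance more than $\frac{2\eps}{1-\eps}\mathcal{G}_j$ from \emph{both} tips of the segment it lies on. The idea is that when all reflection points in a group are "comfortably interior" to their segments, we can locally perturb the pure reflections (slide each $r_a$ slightly along its segment) to obtain a strictly cheaper feasible path, contradicting optimality of $\OPT$. Concretely, since the $r_a$'s are all pure reflections (the ray bounces off the segment with equal angles) and all ascending (by Lemma \ref{lem:cons-ref}), the subpath $G_j$ is a "zig-zag-like" or "sink-like" polyline bouncing between top and bottom segments. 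A pure reflection is exactly the configuration that is \emph{locally length-minimizing} given the two neighbouring turning points are fixed; but here all turning points are free to move simultaneously within their segments, and the only obstruction to shortening is hitting a tip. So if no tip is within $D_j$ of any $r_a$, there is slack of at least $D_j$ to move each point, and I would quantify how much total length can be saved.

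The key estimate is geometric: for a pure reflection point $r_a$ on segment $s$ with incident legs making angle $\theta_a$ with $s$, the two legs have total $y$-extent contributions, and the "vertical progress" the path makes near $r_a$ is governed by $\theta_a$. The legs of $G_j$ together traverse a bounded vertical range (at most the strip height, which is $1$, or $\rho$ after scaling — but crucially bounded), yet their total length is $\mathcal{G}_j$; so if $\mathcal{G}_j$ is large compared to the vertical range, the legs must be nearly horizontal, i.e. the reflection angles $\theta_a$ are small. First I would make this precise: the sum over legs in $G_j$ of (leg length) is $\mathcal{G}_j$, while the sum of their vertical extents is $O(1)$ (bounded by the strip height), forcing $\sum (\text{leg length}) \cdot \sin\theta \le O(1)$, so on average legs are shallow. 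Then I would show that straightening/shifting the shallow bounces toward horizontal saves a $(1-\cos\theta)$-type fraction of each leg's length, and summing gives a saving proportional to $\mathcal{G}_j$ minus the vertical range; as long as each $r_a$ has room $D_j$ to slide, we realize a saving of order $\mathcal{G}_j \cdot \frac{D_j / \text{(something)}}{\ldots}$ contradicting optimality unless $D_j \le \frac{2\eps}{1-\eps}\mathcal{G}_j$. The bookkeeping with the factor $\frac{2\eps}{1-\eps}$ will come from matching: a total slide budget of $D_j$ per point against the per-leg savings, with $\sigma = \lceil 1/\eps \rceil$ points in the group, so $\sigma \cdot D_j$ of "straightening distance" available versus $\mathcal{G}_j$ of length, giving roughly $D_j \lesssim \eps \mathcal{G}_j$.

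The main obstacle I expect is making the "simultaneous perturbation saves length" argument rigorous while keeping the bound clean: one has to choose the direction (up or down) to slide each $r_a$ consistently so that the perturbation doesn't cause self-crossings (using that $\OPT$ is non-self-crossing and the structural Lemmas \ref{obs:x-order}, \ref{lemma:in-between-reflections} which control the $x$-ordering within a zig-zag/sink), and so that the savings across different legs add up rather than cancel. A clean way to do this is to note that in a zig-zag or sink the reflection points, ordered along the path, have monotone $x$-coordinates in the odd-indexed and even-indexed subsequences (Lemma \ref{obs:x-order}), so shifting all top-segment bounces down by $\delta$ and all bottom-segment bounces up by $\delta$ uniformly "flattens" the polyline in a controlled way; the feasibility (still covering each segment) is automatic since we only slide the visited point along its own segment, and the length change is $-\Theta(\delta) \cdot (\text{total horizontal travel}) / (\text{vertical scale})$ to first order. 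Choosing $\delta = D_j$ (the maximal legal uniform slide) and comparing to $\mathcal{G}_j$ yields the claimed inequality; I would carry out the first-order computation carefully and check the second-order error term is absorbed, which is where the precise constant $\frac{2\eps}{1-\eps}$ is pinned down.
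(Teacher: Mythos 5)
Your proposal takes a genuinely different route from the paper, but it has a concrete gap that I do not believe can be closed as stated.

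The paper's actual proof is a direct averaging argument, not a variational one. It observes that in a sink (resp.\ zig-zag), the straight segment $r_{i+1}\to r_{i+2}$ cannot cross the segment $s_{a_i}$ containing $r_i$ (Lemma~\ref{obs:reflection}), and combined with the $x$-ordering (Lemma~\ref{obs:x-order}) this forces $r_{i+2}$ to lie strictly above the top tip of $s_{a_i}$ (resp.\ the analogous statement with top/bottom alternating). Hence $d_i^+\le y(r_{i+2})-y(r_i)\le \norm{r_ir_{i+1}}+\norm{r_{i+1}r_{i+2}}$. Summing these $\sigma-1$ inequalities charges each leg at most twice, giving $(\sigma-1)D_1\le 2\mathcal G_1$, and the constant $\tfrac{2\eps}{1-\eps}$ drops out arithmetically from $\sigma=\lceil 1/\eps\rceil$. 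No perturbation, no contradiction, no first-order calculus.

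Your variational approach---slide the reflection points and show the path gets strictly shorter, contradicting optimality---fails in the sink case, which is one of the two cases you must handle. In a sink, \emph{all} the reflection points in $\mathcal R$ sit on segments of the same type (say bottom segments), so the move you describe (``shift top-segment bounces down by $\delta$ and bottom-segment bounces up by $\delta$'') degenerates to a uniform vertical translation of the whole polyline, which changes no leg length at all. There is no ``flattening'' to exploit: in a sink the $y$-coordinates of $r_0,r_1,\dots$ are already monotone, and the total vertical travel is fixed by the endpoints, so the path length is not being wasted on vertical oscillation. This is precisely why the paper's bound for sinks comes from Lemma~\ref{obs:reflection} (the reflected segment cannot be re-crossed, forcing $r_{i+2}$ above $s_{a_i}^t$), and not from any length-saving argument. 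A secondary issue: your claim that feasibility after perturbation is ``automatic'' is too quick; each $P_i$ between consecutive reflections is a straight polyline through \emph{straight points} on other input segments, and after you slide two endpoints the connecting line generally no longer passes through those segments. The paper sidesteps this because it only moves a \emph{single} point per group and uses the fact that the two legs incident to that point touch no other segment. Finally, even if you could salvage the zig-zag half, the informal first-order computation is unlikely to reproduce the exact factor $\tfrac{2\eps}{1-\eps}$ without the clean double-counting the paper uses. I would recommend abandoning the variational plan and instead bounding $d_i^\pm$ directly via the non-crossing constraint of Lemma~\ref{obs:reflection} and then averaging over the $\sigma+1$ points of the group.
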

\begin{proof}[Proof of Claim]
For	simplicity of notation of indices, we prove this for $j=1$, i.e. $G_1=r_0,\ldots,r_\sigma$.
	As mentioned above, it suffices to show the claim for $G_1$ being part of a sink or a zig-zag.

    \vspace{.5cm}
    
	\noindent
	$\bullet$ \textbf{$G_1$ is part of a sink}:\vspace{.25cm}\\
	Without loss of generality, assume all reflections in $G_1$ are on bottom segments.
	Consider the three consecutive reflection points $r_0, r_1$, and $r_2$.
	Using Lemma \ref{obs:reflection}, we get that subpath $r_1\to r_2$ (which, according to the definition of pure reflection sequence, is a straight line) does not intersect with the segment containing $r_0$. Since we assumed the $y$-coordinates of $r_i$'s are increasing, this implies that $r_1r_2$ and consequently $r_2$ lie above the segment containing $r_0$. This along with triangle inequality yields us
            $d^+_0 \leq y(r_2)-y(r_0)\leq ||r_2r_0|| \le ||r_0r_1||+||r_1r_2||$. With 
 the same argument, we get $d_i^+ \le y(r_{i + 2}) - y(r_i) \le ||r_ir_{i + 1}|| + ||r_{i + 1}r_{i + 2}||$ for all $0\le i \le \sigma - 2$ (see Figure \ref{fig:sink-d_1+}).
  Considering these inequalities for different $r_i$'s ($1\leq i\leq\sigma -2 $) and summing them up
  for all $r_i$'s in group $G_1$, using the fact that $D_1 \le d_i^+$, we obtain
  \begin{align*}
      (\sigma - 1)\cdot D_1 &\le \sum_{i = 1}^{\sigma - 2}d_i^+
      \le \sum_{i = 1}^{\sigma - 2}(\norm{r_ir_{i + 1}} + \norm{r_{i + 1}r_{i + 2}}) \le 2\cdot \mathcal G_1\\
      \implies D_1 &\le \frac2{\sigma - 1}\cdot \mathcal G_1.
  \end{align*}
  
This implies in a sink, $D_1 \le \frac2{1/\eps - 1}\cdot \mathcal G_1 = \frac{2\eps}{1 - \eps}\cdot \mathcal G_1$ and in general, $D_j \le \frac{2\eps}{1-\eps}\cdot \mathcal G_j$ for all $1\le j \le m$.
\begin{figure}[H]
	\centering
    \includegraphics[width=.3\textwidth]{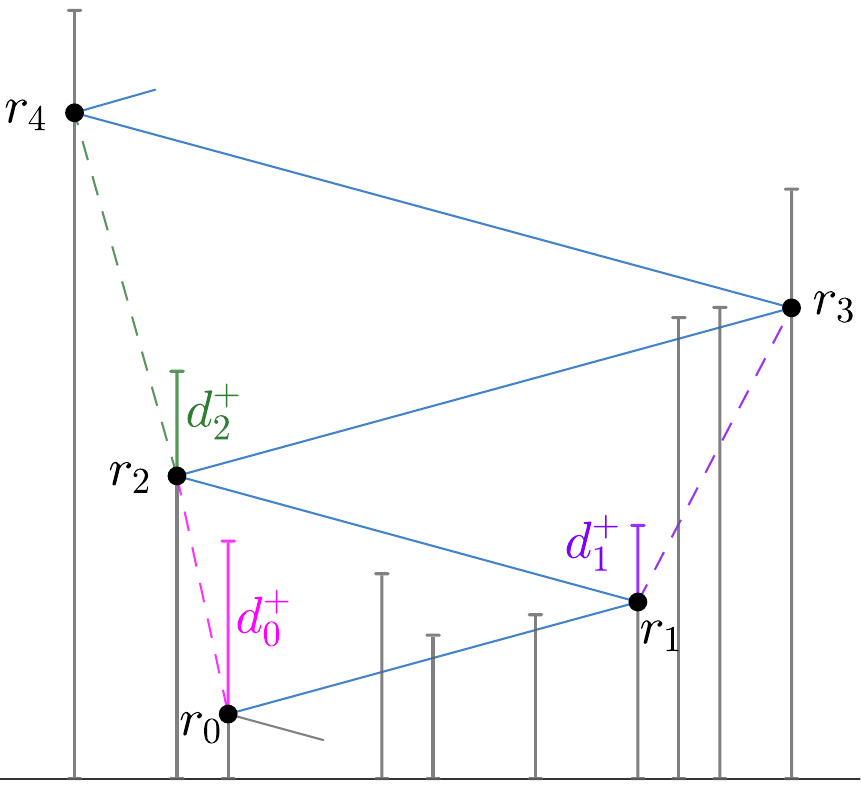}
    \caption{Each $d_i^+\ (i\le \sigma - 2)$ can be charged into the lines reaching the next two reflections}
    \label{fig:sink-d_1+}
\end{figure}

\noindent $\bullet$ \textbf{$G_1 $ is part of a zig-zag}:\vspace{.25cm}\\
        The inequalities are almost analogous, but there are two of them. Without loss of generality, assume that $r_1, r_3,\dots$ are on bottom segments, and therefore $r_0,r_2, r_4,\dots$ are on top segments. We give inequalities for $d_i^+$ on bottom segments, and for $d_{i}^-$ on top segments.
        For $i = 1, 3,\dots$ with $i\le \sigma - 2$, similar to the case of $G_1$ being a sink, we have
            $d_i^{+} \le y(r_{i + 2}) - y(r_i) \le ||r_ir_{i + 1}|| + ||r_{i + 1}r_{i + 2}||$.
        For $i = 2,4,6,\dots$, we have 
            $d_i^- \le y(r_i) - y(r_{i - 2}) \le \norm{r_ir_{i - 1}} + \norm{r_{i - 1}r_{i - 2}}$ (see Figure \ref{fig:zigzag-d_1+}).

    Now if we add these inequalities (with proper selection between $d_i^+$ and $d_{i'}^-$) we get
    \begin{align*}
        (\sigma - 1)\cdot D_1 &\le (d_1^+ + d_3^+ +\cdots) + (d_2^- + d_4^- +\cdots)\\
        &\le \hspace{-.25cm}\sum_{i{\rm \; is\; odd,\;} i\geq 1} (\norm {r_ir_{i + 1}} +  \norm{r_{i + 1}r_{i + 2}}) + \hspace{-.35cm}\sum_{i {\rm\; is\; even,\; } i \ge 2}(\norm{r_ir_{i - 1}} + \norm{r_{i - 1}r_{i - 2}})\\
        &\le 2\cdot \mathcal \mathcal G_1\\
        \implies D_1 &\le \frac2{\sigma - 1}\cdot \mathcal G_1
    \end{align*}
    
    And like before, this implies that in a zig-zag, $D_1 \le \frac{2\eps}{1 - \eps}\cdot \mathcal G_1$, and in general, $D_j \le \frac{2\eps}{1 - \eps} \cdot \mathcal G_j$.

\begin{figure}[h]
    \centering        
	\includegraphics[width=.3\textwidth]{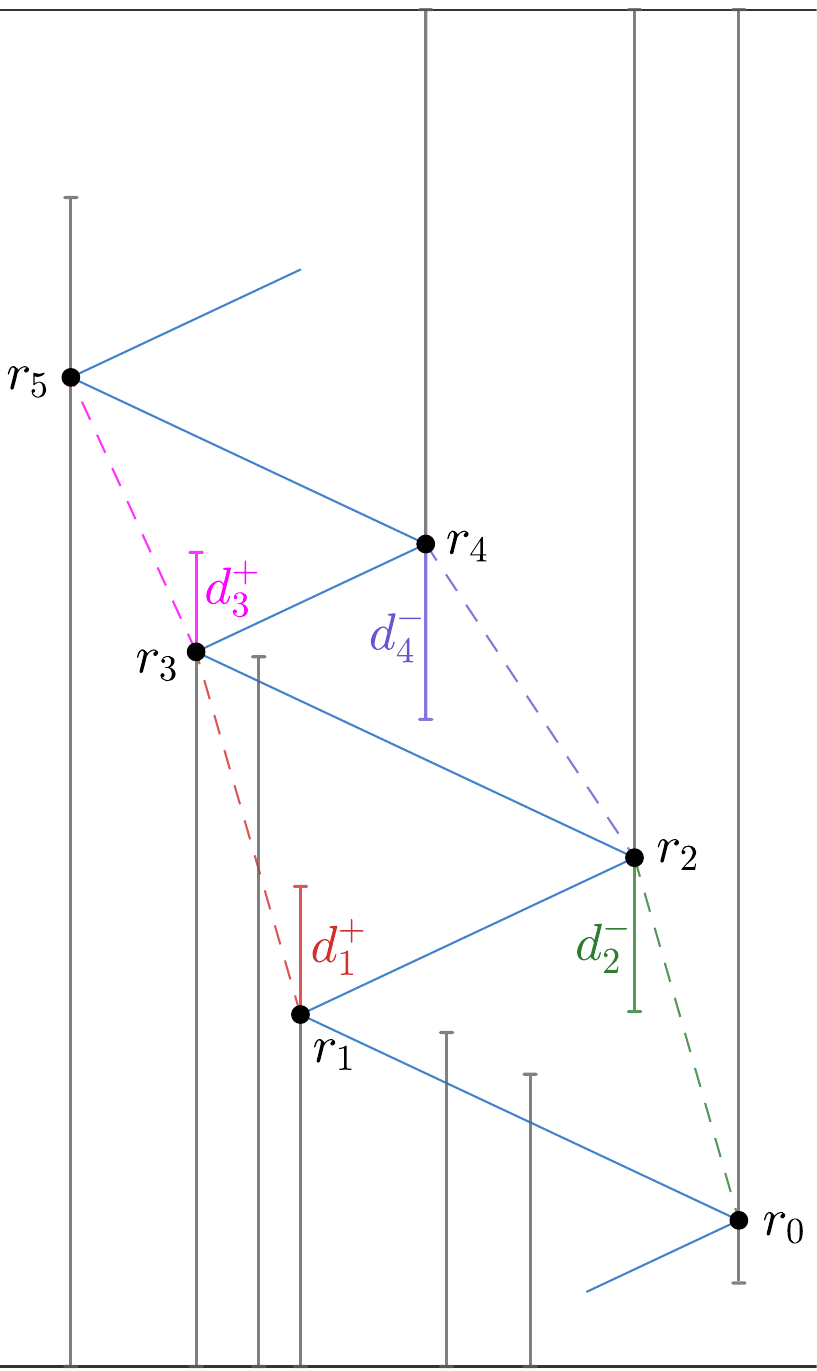}
    \caption{$d_i^+\ (i\le \sigma - 2)$ on bottom segments and $d_{i'}^-\ (i'\ge 2)$ on top segments can be charged into the lines reaching the next/previous two reflections}
    \label{fig:zigzag-d_1+}
\end{figure}

So we see that the claim holds for loops and ladders.
\end{proof}
Also note that if we have any three consecutive points $p,r_j,q$ on $\OPT_\tau$ where $r_j$ is a reflection on segment $s$ (with $s^t$ being its top tip), then $||p r_j||+||r_j q||+ 2d^+_j\geq ||p s^t||+||s^t q||$ using triangle inequality. Now consider any group $G_j$ ($1\leq j\leq m$) and assume $r_{j^*}$
is a reflection point in $G_j$ that lies on the segment $s$ for which $d^+_{j^*}=D_j$ (if $d^-_{j^*} = D_j$, then consider the bottom tip, $s^b$ instead). Consider the two legs of $\OPT_\tau$ incident to $r_{j^*}$,  namely
$\ell_{j^*-1}$ and $\ell_{j^*}$. Let $\ell_{j^* - 1} = p_{a}r_{j^*}$ and $\ell_{j^*} = r_{j^*}p_{b}$ where $p_a$ and $p_b$ are points on $\OPT_\tau$. Suppose we move $r_{j^*}$ from its current location to $s^t$, i.e. replace the two legs with $p_as^t$ and $s^t p_b$. Note that this will remain a feasible solution, as $\ell_{j^*-1},\ell_{j^*}$ have no other intersections with any other segment (as the definition of legs). The new cost is upper bounded by $||p_a r_{j^*}||+||r_{j^*} p_b||+2d^+_j$,
which means the increase is bounded by $2d^+_j=2D_i \le \frac{4\eps}{1 - \eps}\cdot\mathcal G_i$.

Note that in this new solution in each group $G_i$, one of the points is moved to be a tip of the segment it lies on. This implies the maximum size of a pure reflection sequence is now bounded by $2\sigma = 2\lfloor 1/\eps \rfloor$ and the 
total increase in the cost (over all groups) is bounded by $\sum_{j}\frac{4\eps}{1-\eps}\cdot \mathcal G_j = O(\eps\cdot \sum_j\mathcal G_j)=O(\eps\Psi)$.
This completes the proof of Lemma \ref{lemma:reflections}. \qed

\subsubsection{Proof of Lemma \ref{lemma:overlapping-loops-ladders}: Bounding the Number of Overlapping Loops/Ladders}\label{sec:lemma-overlapping-loops/ladders}

Our goal in this subsection is prove Lemma \ref{lemma:overlapping-loops-ladders}, that shows that any given vertical line can intersect with at most $O(1)$ many loops or ladders of
$\OPT_\tau$ in a strip $S_\tau$. This together with Corollary \ref{cor:loop-bounded-shadow} implies there is a $(1+\eps)$-approximate solution where the shadow in each strip $S_\tau$ is $O(1/\eps)$.

We will show that there are at most $12$ overlapping loops, and at most $7$ overlapping ladders in $ \OPT_\tau $. 
    Suppose there is a vertical line $\Gamma$ and a number of loops and ladders are all crossing $\Gamma$.
    We bound the number of loops separately from the number of ladders.
    
\subsubsection*{Overlapping Loops}
    
    For each of the cover-lines of $S_\tau$, we will show that there are at most $6$ overlapping loops that have both their entry points on that cover-line. This will imply that there are at most $12$ overlapping loops in total. So from this point onward, let's focus on all overlapping loops on the bottom cover-line. This holds for all the claims and proofs that we introduce in this subsection, unless stated otherwise.
    
    Recall that $\OPT$ is not self-crossing, so it cannot have two overlapping cover-line loops.
    We say a loop $L_1$ with entry points $e_1,o_1$ is {\em nested} over loop $L_2$ with entry points $e_2,o_2$
    if both $e_2,o_2$ are between $e_1,o_1$. 

\begin{lemma}\label{claim:nested-loop-above}
    Let $L_1$ and $L_2$ be any two loops such that $L_1$ is nested over $L_2$. Let $p^2_r$ and $p^2_l$ be the right-most and left-most points on $L_2$, respectively. Then, in the range $I = [x(p^2_l), x(p^2_r)]$, $L_1$ is above $L_2$. For simplicity, in this case, we say $L_1$ is above $L_2$.
\end{lemma}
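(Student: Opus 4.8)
The plan is to exploit the fact that $\OPT$ is non‑self‑crossing in order to trap $L_2$ inside the topological disk cut out by $L_1$, and then to read the ``above/below'' relation directly off this containment. Say both entry points lie on the bottom cover‑line $C_{\tau+1}$ (the other case is symmetric). Since $L_1$ is a loop, its relative interior lies strictly inside $S_\tau$ and hence strictly above $C_{\tau+1}$, so $L_1$ together with the segment $[e_1,o_1]$ of $C_{\tau+1}$ joining its two entry points is a simple closed curve; let $A_1$ be the bounded region it encloses. As nothing on this curve lies below $C_{\tau+1}$, the whole half‑plane below $C_{\tau+1}$ is in the unbounded component, so $A_1$ lies above $C_{\tau+1}$, with $[e_1,o_1]$ as its lower boundary edge and $L_1$ the rest of its boundary; in particular the region just above any relative‑interior point of $[e_1,o_1]$ belongs to the interior of $A_1$.

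The first step is to prove $L_2\subseteq A_1$. Nestedness puts $e_2,o_2$ in the relative interior of $[e_1,o_1]$, hence on $\partial A_1$, and (by the dummy‑point construction of $\OPT_\tau$) $L_2$ does not meet $C_{\tau+1}$ anywhere in its interior, so $L_2\setminus\{e_2,o_2\}$ never touches $[e_1,o_1]$; it also cannot cross $L_1$ because $\OPT$ is non‑self‑crossing and $L_1,L_2$ have distinct entry points. Thus $L_2\setminus\{e_2,o_2\}$ lies entirely inside or entirely outside $A_1$, and since $L_2$ leaves $e_2$ going upward into the strip it starts in the interior of $A_1$, giving $L_2\subseteq A_1$. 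By connectedness, the $x$‑extent of $L_2$ is exactly $I=[x(p^2_l),x(p^2_r)]$, so $I$ is contained in the $x$‑extent of $A_1$.

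The second step is the ``above'' argument itself. Fix a vertical line $\Gamma'$ with $x(\Gamma')\in I$; then $\Gamma'$ meets $L_2$, and let $p$ be the topmost point of $\Gamma'\cap L_2$. Since $\OPT$ is non‑self‑crossing and the two loops have distinct entry points, $p\notin L_1$, so either $p$ lies in the interior of $A_1$ or $p\in\{e_2,o_2\}$ — the only points of $L_2$ on $\partial A_1$, and both in the relative interior of $[e_1,o_1]$. In either case the ray from $p$ going straight up starts inside $A_1$, and since $\overline{A_1}$ is compact it exits at a first boundary point $p'$ with $y(p')>y(p)$; being strictly above $C_{\tau+1}$, the point $p'$ cannot lie on $[e_1,o_1]$, so $p'\in L_1$. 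Hence $\Gamma'$ meets $L_1$ strictly above the topmost point of $\Gamma'\cap L_2$, i.e. the topmost intersection of $\Gamma'$ with $L_1\cup L_2$ is on $L_1$. As $\Gamma'$ was an arbitrary vertical line with $x$‑coordinate in $I$, this is exactly the statement that $L_1$ is above $L_2$ in $I$ in the sense of Definition \ref{def:path-above-below}.

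I expect the only delicate part to be the boundary bookkeeping in the second step — confirming that $L_2$ touches $C_{\tau+1}$ only at its entry points, that $p$ is never on $L_1$, and that the exit point $p'$ genuinely lands on $L_1$ and not on the cover‑line segment — but each of these reduces to the two structural facts already available, namely that $L_1\subseteq S_\tau$ lies above $C_{\tau+1}$ and that $\OPT$ is non‑self‑crossing. The statement for loops with entry points on the top cover‑line follows by reflecting the argument vertically.
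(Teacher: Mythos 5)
Your proposal is correct and takes essentially the same route as the paper: both arguments trap $L_2$ inside the bounded region cut out by $L_1$ together with the segment of the bottom cover-line between $e_1$ and $o_1$, and then read off the ``above'' relation from that containment. You spell out the final step (the upward ray from the topmost point of $\Gamma'\cap L_2$ must exit through $L_1$) more carefully than the paper, which essentially asserts it; the only very minor thing to add would be a one-line remark handling the degenerate case where $L_2$ is itself a cover-line loop, since then $L_2\setminus\{e_2,o_2\}$ does lie on $[e_1,o_1]$, though the conclusion is immediate there anyway.
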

\begin{proof}[Proof of Lemma \ref{claim:nested-loop-above}]
    For $L_j, j=1,2$, let $e_j$ and $o_j$ be its entry points and without loss of generality, assume that $x(e_1) \le x(e_2) \le x(o_2) \le x(o_1)$. So $L_1$ is a path  from $e_1$ to $o_1$; meaning it crosses the vertical lines $x = x(e_2)$ and $x = x(o_2)$ at some point. This implies if $\mathcal L_1$ is the area of strip $S_\tau$ bounded by $L_1$ and the bottom cover-line, then $L_2$ is entirely inside $\mathcal L_1$. This means if the left-most and right-most points on $L_1$ are $p^1_l$ and $p^1_r$, then $x(p^1_l) \le x(p^2_l)$ and $x(p^1_r) \ge x(p^2_r)$. So we conclude that $L_1$ is defined in the range $I' = [x(p^1_l), x(p^1_r)]$ and that 
    $I \subseteq I'$. Therefore, in particular, $L_1$ is defined in the range $I$ and is above $L_2$.
\end{proof}

\begin{lemma}\label{claim:loop-above-loop}
    Suppose $L_1$ with entry points $e_1,o_1$ and $L_2$ with  entry points $e_2,o_2$
    are overlapping such that $x(e_1)<x(e_2)<x(o_1)$. Then $L_1$ must be nested over $L_2$ and
    $L_2$ is a cover-line loop. 
\end{lemma}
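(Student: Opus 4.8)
The plan is to prove Lemma~\ref{claim:loop-above-loop} by combining the non-self-crossing property of $\OPT$ with the structural facts already established about loops in $\OPT_\tau$. The statement has two conclusions: that $L_1$ is nested over $L_2$, and that $L_2$ is a cover-line loop. I would establish the nesting first, and then show that if $L_2$ were \emph{not} a cover-line loop we would reach a contradiction with Corollary~\ref{cor:non-cover-line-loop-top-segment} (or equivalently Lemma~\ref{lem:cover-line-loop}).

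\textbf{Step 1: Setting up and establishing nesting.}
Since both loops have their entry points on the same cover-line (say $C_{\tau+1}$, the bottom cover-line, by the reduction made at the start of Subsection~\ref{sec:lemma-overlapping-loops/ladders}), $L_1$ is a path inside $S_\tau$ from $e_1$ to $o_1$, both on $C_{\tau+1}$, and similarly for $L_2$. The hypothesis gives $x(e_1) < x(e_2) < x(o_1)$, so $e_2$ lies strictly between the two endpoints of $L_1$ on the cover-line. Let $\mathcal L_1$ be the closed region of the strip bounded by $L_1$ and the segment of $C_{\tau+1}$ between $e_1$ and $o_1$. Because $e_2$ is in the open interval $(x(e_1),x(o_1))$ on the cover-line, $e_2$ lies in the interior of this boundary arc, hence $L_2$ starts inside $\mathcal L_1$. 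Since $\OPT$ is non-self-crossing, $L_2$ cannot cross $L_1$, so $L_2$ stays inside $\mathcal L_1$ throughout; in particular $o_2$ must also be on the $C_{\tau+1}$-portion of the boundary between $e_1$ and $o_1$, i.e. $x(e_1) \le x(o_2) \le x(o_1)$. Combined with $x(e_1)<x(e_2)$, this shows both $e_2$ and $o_2$ lie between $e_1$ and $o_1$, which is exactly the definition of $L_1$ being nested over $L_2$. (By Lemma~\ref{claim:nested-loop-above}, $L_1$ is then above $L_2$ in the $x$-range of $L_2$.)

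\textbf{Step 2: Showing $L_2$ is a cover-line loop.}
Suppose for contradiction that $L_2$ is \emph{not} a cover-line loop. By Corollary~\ref{cor:non-cover-line-loop-top-segment}, since $L_2$ has both entry points on the bottom cover-line, $L_2$ must exclusively cover some top segment $s$ in $S_\tau$; "exclusively" meaning no other subpath of $\OPT_\tau$ — in particular not $L_1$ — intersects $s$. But $s$ is a top segment with one endpoint on $C_\tau$; since $L_2$ reaches $s$, the $x$-coordinate $x(s)$ lies in the $x$-range of $L_2$, which by Step~1 is contained in the $x$-range of $L_1$. Since $L_1$ is above $L_2$ in this range and $L_2$ touches the top segment $s$, Observation~\ref{cor:path-above-top-segment} forces $L_1$ to also intersect $s$. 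This contradicts the exclusivity of the coverage of $s$ by $L_2$. Therefore $L_2$ must be a cover-line loop, completing the proof.

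\textbf{Main obstacle.} The one point requiring care is verifying that the $x$-range of $L_2$ is genuinely contained in that of $L_1$ and that $L_1$ is defined over all of $x(s)$ — this is precisely what Lemma~\ref{claim:nested-loop-above} delivers once nesting is known, so the argument is clean provided Step~1 is carried out carefully. A secondary subtlety is handling the degenerate possibility $x(o_2)=x(e_1)$ or $x(o_2)=x(o_1)$ (endpoints coinciding), but since all segments and hence entry points have distinct $x$-coordinates after the perturbation in Subsection~\ref{sec:prem}, and cover-line loops are the only way two such points can share a cover-line arc endpoint, these cases either do not arise or immediately force $L_2$ itself to be degenerate. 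Overall I expect the proof to be short, with the substance residing entirely in the interplay between non-self-crossing, the nesting region $\mathcal L_1$, and the "exclusive coverage" consequence of Corollary~\ref{cor:non-cover-line-loop-top-segment}.
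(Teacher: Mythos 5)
Your proof is correct and follows essentially the same approach as the paper's: both establish nesting from the non-self-crossing property, then invoke Lemma~\ref{claim:nested-loop-above} to get $L_1$ above $L_2$, and then use Observation~\ref{cor:path-above-top-segment} together with the cover-line-loop characterization (Lemma~\ref{lem:cover-line-loop}, or equivalently your use of its Corollary~\ref{cor:non-cover-line-loop-top-segment} by contraposition) to conclude. The only difference is cosmetic: the paper phrases nesting as a quick case analysis and concludes ``$L_2$ covers only bottom segments, hence is a cover-line loop,'' while you run the same facts through an explicit enclosing-region argument and a contradiction.
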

\begin{proof}[Proof of Lemma \ref{claim:loop-above-loop}]
    If $L_1,L_2$ are not nested (i.e. $x(e_1)<x(e_2)<x(o_1)<x(o_2)$) and none is a cover-line loop, then they are intersecting inside $S_\tau$, a contradiction.
    If they are not nested and one (say $L_2$) is a cover-line loop, then again they are intersecting at one of the entry  points. So they must be nested, 
    say $x(e_1)<x(e_2)<x(o_2)<x(o_1)$. Thus, using Lemma \ref{claim:nested-loop-above}, $L_1$ is above $L_2$; and if $L_2$ intersects with any top segment, $L_1$ would already be intersecting with it because of Observation \ref{cor:path-above-top-segment}. So $L_2$ should only cover bottom segments, which means it
    must be a cover-line loop by Lemma \ref{lem:cover-line-loop}.
\end{proof}

	Using these lemmas, it follows that there are at most $2$ overlapping loops with entry points on opposite sides of $\Gamma$. Furthermore, if there are two such loops, then one of them is a cover-line loop.    
    
    We will finally show that there are at most $2$ overlapping loops that have both their entry points on the same side, say left of $\Gamma$. This will imply the result of the lemma for loops, because on each of the cover-lines, there are at most $2$ loops with entry points on the left of $\Gamma$, $2$ with entry points on the right, and $2$ with entry points on the opposite sides.
    Between the loops with both entry points to the left of $\Gamma$, none can be a cover-line loop because such a loop cannot intersect $\Gamma$ ($\Gamma$ needs to be between the two entry points of a cover-line loop). We will show that there will be at most $2$ (non-cover-line) overlapping loops with entry points to the left of $\Gamma$.
    
    For the sake of contradiction, assume that there are at least $3$ loops with entry points on the left of $\Gamma$ that none are cover-line loops and all cross $\Gamma$. Let $L_1, L_2$, and $L_3$ be any $3$ consecutive loops with this property. 
    Without loss of generality let $x(e_1) \le x(o_1)$, $x(e_2)\leq x(o_2)$, and $x(e_3)\leq x(o_3)$, and assume an order for the entry points of $L_m$'s, say $x(e_1) \le x(e_2) \le x(e_3)$. We must have $x(e_2) \ge x(o_1)$; or else $L_1,L_2$ must be nested by Lemma \ref{claim:loop-above-loop}, implying $L_2$ should be a cover-line loop which contradicts the assumption.
    So we get that $x(e_2) \ge x(o_1)$. Similarly, we have $x(e_3) \ge x(o_2)$. 
    These imply that $e_1,o_1,e_2,o_2,e_3,o_3$ appear in this order on the bottom cover-line.
    Corollary \ref{cor:non-cover-line-loop-top-segment} implies each of $L_1, L_2,$ and $L_3$ must exclusively cover some top segment.
    Let $r_1,r_2$ be the right-most point on $L_1,L_2$, respectively. 
    Since each $L_1,L_2$ starts and ends on the left of $\Gamma$ and travels to the right of $\Gamma$, by Lemma \ref{obs:further-point-reflection}, the right-most point on each is a reflection point, which implies it must be exclusively covered by using Lemma \ref{obs:reflection}. Let $s_{i_1}$ be the segment that
    reflection point $r_1$ lies on, and similarly $s_{i_2}$ the segment for $r_2$ (see Figure \ref{fig:fig4}).

\begin{lemma}\label{claim:top-segments-on-right}
    $s_{i_1},s_{i_2}$ are top segments and $x(s_{i_1})<x(s_{i_2})$.
\end{lemma}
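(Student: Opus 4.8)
The plan is to establish the inequality $x(s_{i_1})<x(s_{i_2})$ first and then deduce that $s_{i_1},s_{i_2}$ are top segments from it. Throughout, the key objects are the right-most point $r_1$ of $L_1$ — which is a left reflection point, so by Lemma~\ref{obs:reflection} the segment $s_{i_1}$ carrying it is touched by $\OPT$ only at $r_1$ — and the Jordan region $\mathcal{L}_1$ bounded by $L_1$ together with the sub-arc $[e_1,o_1]$ of $C_{\tau+1}$. Since $e_2$ and $o_2$ lie on $C_{\tau+1}$ strictly to the right of $o_1$, they are outside $\mathcal{L}_1$; and since $\OPT$ is non-self-crossing, $L_2$ can cross neither $L_1$ nor the arc $[e_1,o_1]$, so $L_2$ is disjoint from the interior of $\mathcal{L}_1$. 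I would also record the \emph{confinement} fact: the portion of $C_{\tau+1}$ between $x(e_1)$ and $x(o_1)$ together with the arc of $L_1$ directly above it forms a barrier that no path avoiding $\mathcal{L}_1$ can pass, so $L_2$ is confined to $x>x(o_1)$; moreover within $(x(o_1),x(r_1)]$ the loop $L_2$ must stay strictly below the arc of $L_1$ (equivalently, $L_1$ is above $L_2$ there), since escaping above that arc and returning to $o_2$ would force a crossing of $L_1$, and leaving the range on the right would require $L_2$ to get past the right-most point $r_1$ of $L_1$.

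For the ordering, suppose toward a contradiction that $x(r_2)<x(r_1)$. Then $L_2$ never reaches the vertical line $x=x(r_1)$, so by the confinement fact $L_2$ lies entirely in $(x(o_1),x(r_1))$ and $L_1$ is above $L_2$ over the whole $x$-extent of $L_2$. By Corollary~\ref{cor:non-cover-line-loop-top-segment}, $L_2$ must exclusively cover some top segment $t_2$; but $t_2$ lies in the common range of $L_1$ and $L_2$, so Observation~\ref{cor:path-above-top-segment} forces $L_1$ to meet $t_2$ as well, contradicting the exclusivity of $t_2$. Hence $x(r_1)<x(r_2)$, i.e.\ $x(s_{i_1})<x(s_{i_2})$. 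Applying the same argument to the pair $(L_2,L_3)$ (whose base $[e_3,o_3]$ likewise lies to the right of $o_2$, and with $e_3$ below the arc of $L_2$) gives $x(r_2)<x(r_3)$.

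To see that $s_{i_1}$ is a top segment, suppose it is a bottom segment; then $s_{i_1}$ runs from $r_1$ down to its lower tip on $C_{\tau+1}$, so it contains the entire portion of the line $x=x(r_1)$ between $C_{\tau+1}$ and $r_1$. Since $x(r_2)>x(r_1)$, the loop $L_2$ does reach the line $x=x(r_1)$, and by confinement every point of $L_2$ on that line lies strictly below $r_1$ (and strictly above $C_{\tau+1}$), hence inside $s_{i_1}$; so $L_2$ meets $s_{i_1}$, contradicting the exclusivity of $s_{i_1}$ from Lemma~\ref{obs:reflection}. Therefore $s_{i_1}$ is a top segment, and the identical argument with $(L_2,L_3)$ in place of $(L_1,L_2)$ shows $s_{i_2}$ is a top segment as well.

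I expect the one genuinely delicate point to be the rigorous justification of the confinement fact in the first paragraph — that $L_2$, being disjoint from the open region $\mathcal{L}_1$ and unable to reach $x\le x(o_1)$, cannot slip above the arc of $L_1$ inside $(x(o_1),x(r_1)]$. This is a plane-topology argument (using that $L_1\cup[e_1,o_1]$ is a simple closed curve, that the base blocks the left side, and that $r_1$ blocks the right side within that range), and it may be convenient to phrase it through Lemma~\ref{lem:nested-vertical-loops} to control how $L_1$ meets vertical lines. Once confinement is in place, both halves of the statement follow mechanically by combining Observation~\ref{cor:path-above-top-segment}, Corollary~\ref{cor:non-cover-line-loop-top-segment}, and Lemma~\ref{obs:reflection} exactly as above.
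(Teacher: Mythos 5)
Your high-level plan matches the paper's — both derive a contradiction from Corollary~\ref{cor:non-cover-line-loop-top-segment} via an ``$L_1$ is above $L_2$'' claim and Observation~\ref{cor:path-above-top-segment} — but you invert the order (ordering first, then top-segment), and both halves of your argument hang on a ``confinement fact'' that is not actually established. That fact is where the gap is, and it is a genuine one.

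Your confinement claim asserts that within $(x(o_1),x(r_1)]$ the loop $L_2$ stays strictly below $L_1$, justified by ``escaping above that arc and returning to $o_2$ would force a crossing of $L_1$, and leaving the range on the right would require $L_2$ to get past the right-most point $r_1$.'' This reasoning is incomplete in two ways. First, the exterior of the Jordan curve $L_1\cup[e_1,o_1]$ is a single connected region, so being disjoint from $\mathcal L_1$ does not by itself prevent $L_2$ from reaching the region above $L_1$: topologically $L_2$ could snake around the \emph{left-most} point of $L_1$ (which is a reflection point interior to the strip, not pinned to a cover-line) and re-enter $(x(o_1),x(r_1))$ from above; your sketch only rules out escape past $r_1$ on the right. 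Second, and more critically, in the step that deduces $s_{i_1}$ is a top segment you invoke confinement at $x=x(r_1)$ under the hypothesis $x(r_2)>x(r_1)$ — but in that regime $L_2$ \emph{does} pass to the right of $r_1$, which is precisely the escape route your own justification of confinement says must be blocked. So the appeal to ``confinement'' there is circular: it cannot be invoked exactly in the case where its stated justification fails. Lemma~\ref{lem:nested-vertical-loops} alone will not close either hole, since it only forbids crossings \emph{between} two paired crossings of a single subpath and says nothing about the top of the stack.

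The paper avoids all of this by doing the top-segment part first, \emph{without} any a priori ordering assumption: assuming $s_{i_1}$ is a bottom segment, it bounds $L_2$ in the closed Jordan domain whose boundary is $P^1_r$ (the $e_1\!\to\!r_1$ half of $L_1$), the segment $s_{i_1}$ from $r_1$ down to $C_{\tau+1}$, and the arc of $C_{\tau+1}$. That boundary is genuinely impenetrable — $P^1_r$ because $\OPT$ is non-self-crossing, $s_{i_1}$ by Lemma~\ref{obs:reflection}, and $C_{\tau+1}$ because $L_2$ only touches it at its own entry points — and crucially $s_{i_1}$ plugs the right side so there is no way around $r_1$. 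Once the top-segment claim is in hand, the ordering follows from a clean separation argument with the arc $P^1_r\cup s_{i_1}$ running from $C_{\tau+1}$ to $C_\tau$. If you want to keep your ordering-first structure, you would need to prove the confinement fact rigorously, and I expect a correct proof would essentially rediscover the paper's barrier-region argument; it is simpler to follow the paper's order.
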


\begin{proof}[Proof of Lemma \ref{claim:top-segments-on-right}]
	By way of contradiction, assume $s_{i_1}$ is a bottom segment. 
	Consider the two subpaths of $L_1$ between the entry points $e_1,o_1$ and $r_1$, let us denote them by $P^1_r: e_1 \to r_1$ and $P^1_l:r_1 \to o_2$. 
	$L_2$ (starting at $e_2$) is in the region bounded by $P^1_r\cup s^1_r$ and the bottom cover-line, which means $L_1$ will intersect any top segment
	$L_2$ intersects with (i.e. $L_2$ cannot exclusively cover any top segment), which implies $L_2$ is a cover-line loop, a contradiction.
	This implies that $s_{i_1}$ is a top segment.
	A similar argument (for $L_2,L_3$) implies $s_{i_2}$ is a top segment.
	
	We show that $x(s_{i_1}) \le x(s_{i_2})$. Similar to before, define the subpath $P^1_r$ of $L_1$ that goes from $e_1$ to $r_1$ and $P^2_r$ from $e_2$ to $r_2$. Considering the two areas of strip $S_\tau$ separated by $P^1_r \cup s_{i_1}$, if segment $s_{i_2}$ is on one side and the entry points $o_1, e_2$ on the other side, then path  $P^2_r$ must either intersect $P^1_r$ or $s_{i_1}$, which is not possible (due to Lemma \ref{obs:reflection}). So, $s_{i_2}$ and $e_2,o_2$ are on the same part of $S_\tau$ cut by $P^1_r\cup s_{i_1}$. This implies $s_{i_2}$ is to the right of $s_{i_1}$ i.e $x(s_{i_1}) \le x(s_{i_2})$. 
\end{proof}
    
    We can reuse the same arguments in the second part of the proof to conclude the following lemma:
    \begin{lemma}\label{claim:no-bottom-cover}
        Neither of $L_1$ or $L_2$ exclusively cover a bottom segment on the right of $\Gamma$.
    \end{lemma}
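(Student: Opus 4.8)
The plan is to mirror the trapping argument already used in the proof of Lemma \ref{claim:top-segments-on-right} (the part forcing $s_{i_1},s_{i_2}$ to be top segments), exactly as the phrase "reuse the same arguments" suggests. Suppose toward a contradiction that $L_1$ exclusively covers some bottom segment $s$ with $x(s)>x(\Gamma)$; the case of $L_2$ is identical with $L_3$ playing the role of $L_2$. Since $L_1$ meets $s$, let $q$ be the first point of $L_1\cap s$ met when traversing $L_1$ from $e_1$, and let $P$ be the subpath of $L_1$ from $e_1$ to $q$. By the choice of $q$ we have $P\cap s=\{q\}$, and since $\OPT$ meets the bottom cover-line only at entry points of loops, $P$ meets $C_{\tau+1}$ only at $e_1$.

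Next I would form the region $R$ enclosed by the simple closed curve consisting of $P$, the portion of $s$ from $q$ down to $s\cap C_{\tau+1}$, and the portion of $C_{\tau+1}$ from $x(s)$ back to $e_1$. Because $x(e_1)\le x(o_1)\le x(e_2)\le x(o_2)<x(\Gamma)<x(s)$, the entry points $e_2,o_2$ lie in the interior of the bottom edge of $\partial R$, so $L_2$ starts and ends on $\partial R$; since $L_2$ cannot cross $P$ (a subpath of the non-self-crossing $\OPT$), cannot cross $s$ (which $L_1$ covers exclusively), and touches $C_{\tau+1}$ only at $e_2,o_2$, and since $R$ lies just above $C_{\tau+1}$ near $e_2$, we get $L_2\subseteq\overline R$. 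Over the whole $x$-range of $L_2$ the topmost point of $\partial R$ on each vertical line belongs to $P$, so $P$ is above $L_2$ in the sense of Definition \ref{def:path-above-below}. By Observation \ref{cor:path-above-top-segment}, every top segment that $L_2$ intersects is also intersected by $P\subseteq L_1$, so $L_2$ exclusively covers no top segment. But $L_2$ is a non-cover-line loop with both entry points on $C_{\tau+1}$, so Corollary \ref{cor:non-cover-line-loop-top-segment} forces $L_2$ to exclusively cover some top segment — a contradiction. Repeating the argument verbatim with $(L_2,L_3)$ in place of $(L_1,L_2)$ (now using $x(e_2)\le x(o_2)\le x(e_3)\le x(o_3)<x(\Gamma)<x(s)$, so that $e_3,o_3$ land in the analogous region and $L_3$ is the trapped loop) shows $L_2$ likewise cannot exclusively cover a bottom segment on the right of $\Gamma$.

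The step I expect to require the most care is verifying that $L_2$ is genuinely trapped inside $R$ and lies below $P$: one must check that $\partial R$ is a Jordan curve even when $P$ is not monotone in $x$ (it may swing past $x(s)$, since $x(r_1)$ can exceed $x(s)$), that $e_2$ lies in the interior of the bottom edge of $\partial R$ (using that the relevant entry points have distinct $x$-coordinates), that $P$ has a point on every vertical line in the $x$-range of $L_2$ (via the intermediate value theorem applied to $x$ along $P$ from $e_1$ to $q$), and hence that the "above" relation of Definition \ref{def:path-above-below} holds on all of $L_2$'s range rather than merely on $[x(e_2),x(o_2)]$. Everything else is a direct reuse of the non-crossing/trapping reasoning together with Observation \ref{cor:path-above-top-segment} and Corollary \ref{cor:non-cover-line-loop-top-segment}, exactly as in the earlier part of this proof.
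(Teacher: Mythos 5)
Your proof is correct and follows the same trapping argument as the paper: form a Jordan region bounded by part of the offending loop, the exclusively-covered bottom segment, and the bottom cover-line, confine a neighbouring non-cover-line loop inside it (so it lies below a portion of the offending loop), and contradict Corollary~\ref{cor:non-cover-line-loop-top-segment}. The only cosmetic difference is that you trap $L_2$ for $L_1$'s case (and $L_3$ for $L_2$'s) using the half of the loop from $e_j$ to the \emph{first} intersection with the bottom segment, whereas the paper uniformly traps $L_3$ using the half from the \emph{last} intersection $p_j$ to $o_j$; both choices yield an equivalent confinement.
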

    
    \begin{proof}[Proof of Lemma \ref{claim:no-bottom-cover}]
        Let $L_j$ be either one of $L_1$ or $L_2$. Assume the contrary, that there is some bottom segment $s_j$ to the right of $\Gamma$ that $L_j$ exclusively covers. So $L_3$ does not intersect with this segment. Let $p_j$ be the last intersection point of $L_j$ with $s_j$. Consider the subpath $P_j: p_j \to o_j$ on $L_j$. Similar to the proof of Lemma \ref{claim:top-segments-on-right}, we get that both entry points of $L_3$ are surrounded by $P_j \cup s_j$ from the right or above; which means any top segment that $L_3$ intersects with, is already intersecting with $P_j$. This requires $L_3$ to be a cover-line loop, giving us a contradiction. 
    \end{proof}
    
    Now we define an alternate path that replaces $L_1$ and $L_2$ with two new loops that no longer overlap at $\Gamma$, and overall the shadow does not increase but also costs less than the cost of the current solution. The idea of this change (which will be made precise soon) is to follow $L_1$ from $e_1$ to the right-most point on $L_1$ (which must be a reflection on $s_{i_1}$), then from that point follow a horizontal line until it hits $s_{i_2}$; if there are portions of $L_2$ that are above this horizontal line, we follow the  upper envlope of those portions of $L_2$ and the horizontal line (similar to how we reduced the shadow in the case of zig-zag or sink), and then from the intersection point on $s_{i_2}$, follow the horizontal line back to the right-most reflection on $L_1$ and continue to follow $L_1$ to $o_1$; $L_2$ is going to be simply replaced with a smaller subset of its projection on the bottom cover-line. We show we will have a cheaper feasible solution with smaller shadow at $\Gamma$, a contradiction. Now we describe this more precisely.

    Again, let $r_1,r_2$ be the right-most points on $L_1,L_2$, respectively. Lemmas \ref{claim:top-segments-on-right} and \ref{obs:further-point-reflection} imply that they are reflection points on segments $s_{i_1}, s_{i_2}$, respectively, which both are top segments. Consider the horizontal line $y = y(r_1)$, and let $q$ be the intersection point of this line with the vertical line $x = x(s_{i_2})$.
    Define the subpath $P^2_r$ on $L_2$ as $P^2_r: e_2\to r_2$ (assuming that $e_2$ is to the left of $o_2$), In other words, between the two paths from $r_2$ to the two entry points of $L_2$, $P^2_r$ is the one that is above the other. Let $U_2$ be the portion of 
    $P^2_r$ in the region bounded by lines $s_{i_1}\cup (y = y(r_1)) \cup s_{i_2}$, and the top cover-line. 
So these are the portions of $P^2_r$ that go above the line segment $r_1q$ (see Figure \ref{fig:fig4}). 
    Let $L''_1$ be the upper envelope of
    $U_2\cup r_1q$ plus the line $r_1q$. 

    So $L''_1$ consists of a path that goes on the upper envelope of
    $U_2\cup r_1q$ from $r_1$ to $q$ and then goes straight back to $r_1$.
    We now define the replacements for $L_1$ and $L_2$.

\begin{figure}[h]
	\centering
    \includegraphics[width=.5\textwidth]{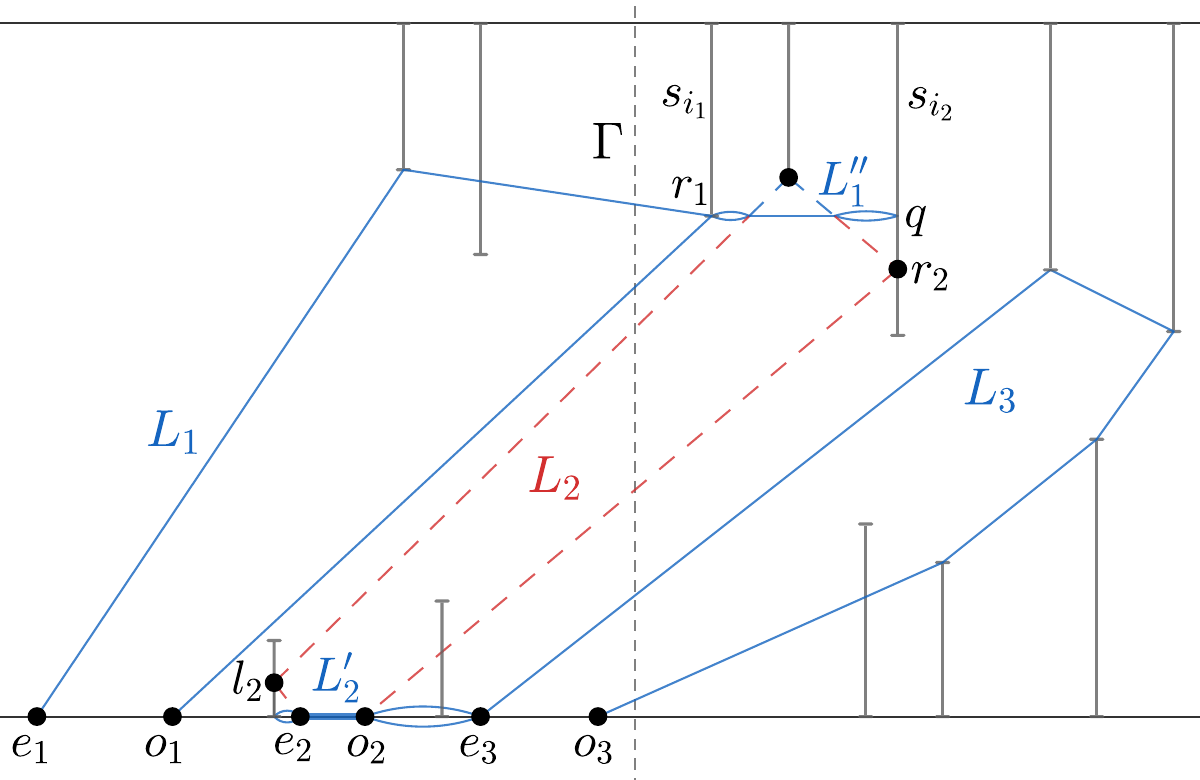}
    \caption[Alternative solution for 3 overlapping loops.]{Alternative solution for 3 overlapping (non-cover-line) loops. Pairs of arcs represent doubled segments}
    \label{fig:fig4}
\end{figure}
We replace $L_1$ with $L'_1$ as follows:
    \begin{itemize}
        \item Take the subpath $P^1_r: e_1\to r_1$ on $L_1$.
        \item From $r_1$, follow $L''_1$ and thus, get back to $ r_1 $.
        \item From $r_1$, follow the rest of $L_1$ to $o_1$.
    \end{itemize}
 
    So $L'_1$ is obtained by adding $L''_1$ to $L_1$ at $r_1$.
    If $l_2$ is the left-most point that $L_2$ travels, then let
    $L'_2$ be a cover-line loop that travels from $e_2$ left to $x(l_2)$, then right to $e_3$ and then back to $o_2$ (this is essentially the projection of the portions of $L_2$ to the left of $e_3$ and hence to the left of $\Gamma$ on the bottom cover-line); recall that we can reduce $L'_2$ to remove the possible overlapping of its legs.     
     Now replace $L_2$ with $L'_2$. We will show that these two loops in total cost strictly less than $L_1$ and $L_2$, the shadow does not increase (and in fact shadow decreases at $\Gamma$) and we still have a feasible solution. It's clear to see that between the loops $L'_1, L'_2,$ and $L_3$, only $L'_1$ and $L_3$ overlap at $\Gamma$. So we decreased the number of overlapping loops at $\Gamma$ by at least one.
    
    To prove all segments are still covered, note that $L'_1$ includes the entirety of $L_1$, and thus covers all the segments that $L_1$ used to cover. In order to show that all the segments that $L_2$ covered, are still covered, we only need to show that the segments that $L_2$ exclusively covered, are still covered. That is because in the new configuration we still have all the parts of $L_1$ and $L_3$.
    According to Lemma \ref{claim:no-bottom-cover}, there are no bottom segments that $L_2$ exclusively covers to the right of $\Gamma$. Also, it is easy to see that any bottom segment that was exclusively covered by $L_2$ to the left of $\Gamma$ must have an $ x $-coordinate between $x(l_2)$ and $x(e_3)$. All of those bottom segments are now covered by $L'_2$.
    Finally, for the top segments that $L_2$ exclusively covers, with the same arguments as in the second part of the proof in Lemma \ref{claim:no-bottom-cover}, we get that there are no such segments to the left of $s_{i_1}$. So it suffices to show that only the top segments that $L_2$ covers to the right of $s_{i_1}$, are covered. This is easy to see, because $L''_1$ includes the entire $U_2$; and it is always on or above $L_2$ in the range between $s_{i_1}$ and $s_{i_2}$. Thus, $L''_1$ will cover all the top segments that $L_2$ exclusively covers in that range.
    
    Also, the shadow does not increase: the shadow of $L''_1$ from $r_1$ to $r_2$ can be charged to the sections
    of $L_2$ between $x=x(r_1)$ and $x=x(r_2)$ and hence is no more than that; note that this portion is entirely to the right of $\Gamma$. The cover-line loop $L'_2$ is entirely to the left of $\Gamma$ and its shadow can be charged to the shadow of $L_2$ to the left of $\Gamma$ in the range $[x(l_2), x(e_3)]$. 
    
    Now let's prove that the new cost is decreased compared to $L_1$ and $L_2$. $L'_1$ includes $L_1$, so we set aside those parts and charge them on $L_1$. So it suffices to show that $L''_1$ along with $L'_2$ can be charged into $L_2$. Note that $L'_2$ is part of the projection of $L_2$ on the bottom cover-line to the left of $e_3$. So the cost of $L'_2$ is strictly less than the cost of $L_2$ to the left of $\Gamma$, since $L'_2$ extends at most to $e_3$ which is to the left of $\Gamma$. As for $L''_1$, note that $L_2$ travels back and forth between $x(s_{i_1}),x(s_{i_2})$; so $L''_1$ can be charged to these two sections of $L_2$ between $x(s_{i_1})$ and $x(s_{i_2})$.

    So at the end, we found a new solution with $1$ fewer overlapping loops at $\Gamma$, no increase of shadow elsewhere, and with a strictly less cost than $\OPT_\tau$. Applying this argument implies that at most two overlapping non-cover-line loops can exist to the left of $\Gamma$. So in total on each of the cover-lines of $S_\tau$, there are at most $2$ non-cover-line loops to the left of $\Gamma$, similarly $2$ to the right, plus at most 2 with entry points to opposite sides of $\Gamma$. In total, there are at most $6\times 2 = 12$ overlapping loops at $\Gamma$.

    \subsubsection*{Overlapping Ladders}
    Recall that by Definition \ref{def:loops/ladders}, ladders are subpaths of $\OPT_\tau$ (in strip $S_\tau$) that have one entry point on the bottom cover-line of $S_\tau$, and one on the top cover-line. 
    Depending on their orientation compared to $\Gamma$, there are two types of ladders (see Figure \ref{fig:type1-type2-ladders}):  
    
    \begin{itemize}
	    \item  \textbf{Type 1 Ladder:} Has both its entry points on the same side of $\Gamma$.
	    \item  \textbf{Type 2 Ladder:} Has its entry points on  opposite sides of $\Gamma$.
    \end{itemize}
    \begin{figure}[h]
        \centering
        \includegraphics[width=0.38\tw]{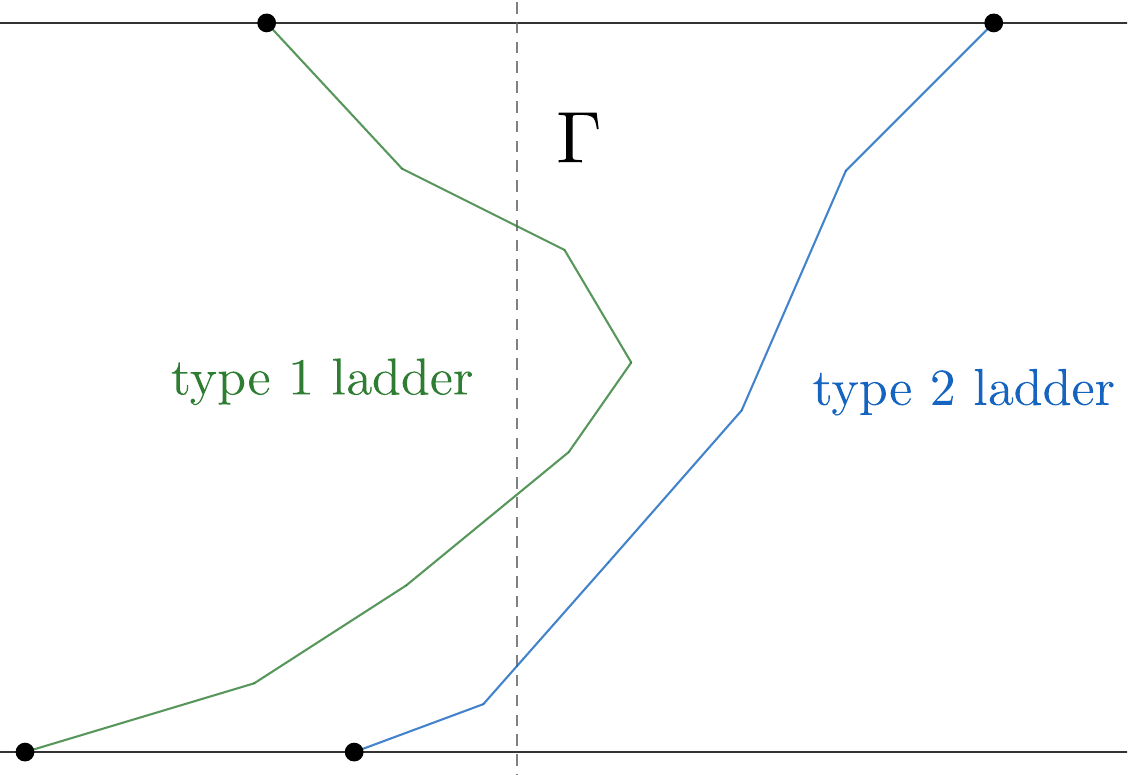}
        \caption{An example of type 1 and type 2 ladders}
        \label{fig:type1-type2-ladders}
    \end{figure}

    We will prove that there are at most $2$ overlapping Type 1 ladders, and at most $5$ overlapping type 2 ladders.

    \subsubsection*{Type 1 Ladders}
    In particular, we show that there is at most one Type 1 ladder with entry points to the left of $\Gamma$, and one with entry points to its right.
    To prove this, assume the contrary, that there are at least $2$ overlapping 
    Type 1 ladders with entry points to the same side, say right of $\Gamma$. Let $L_1$ and $L_2$ be two such ladders. 
   
    Let $(b_m, t_m),\; m = 1,2$ be the entry points of $L_m$ on the bottom cover-line and the top cover-line, respectively. Without loss of generality, assume that $t_1$ is to the left of $t_2$. 
   This implies $b_1$ is also to the left of $b_2$ (or else $L_1$ and $L_2$ intersect inside $S_\tau$). So if we consider cutting $S_\tau$ along $L_2$, $L_1$ is entirely in one of the two regions created, namely the one that contains $b_1,t_1$.     
    Since both $L_1$ and $L_2$ overlap at $\Gamma$ and are Type 1, and they both have their entry points on the same side of $\Gamma$, say  left, this means that both have to reach to the right of $\Gamma$. 
    First, we show that the top-most and bottom-most intersection point of $L_1,L_2$ with $\Gamma$ must be on $L_2$. By way of contradiction, suppose $p$ is a point on $L_1$ and is the bottom-most intersection of these two ladders on $\Gamma$. Consider the subpath of $L_1$ from $b_1$ to $p$, call it $L'_1$ and consider the region bounded by $L'_1\cup\Gamma$ and the bottom cover-line, call it $A$. Since $L_2$ starts at $b_2$ inside $A$ and $t_2$ is outside $A$, $L_2$ must either cross $\Gamma$ at a point lower than $p$, or cross $L'_1$, both of which are contradictions. A similar argument shows the top-most intersection point on $\Gamma$ is with $L_2$.
    
    Consider any two consecutive crossing of $L_1$ with $\Gamma$, say $p_1,p_2$, where the subpath of $L_1$ from $p_1$ to $p_2$ (denoted by $L'_1$) is to the right of $\Gamma$. Since $L_2$ crosses $\Gamma$ both above and below $p_1,p_2$ (the lowest and highst intersection points on $\Gamma$ are with $L_2$), there is a subpath of $L_2$ with end-points $q_1,q_2$ on $\Gamma$ with $q_1$ below $p_1,p_2$, and with $q_2$ above them, call it $L'_2$. We consider two cases based on whether $L'_2$ is on the left or right of $\Gamma$, and derive contradictions in each case. If $L'_2$ is on the right (like $L'_1$) then $L'_1$ is inside the
    region bounded by $L'_2\cup q_1q_2$ and this violates Lemma \ref{lem:nested-vertical-loops}.
    So let us assume $L'_2$ is on the left of $\Gamma$. Since $p_1,p_2$ are between $q_1,q_2$ there is subpath of $L_1$ starting from $p_1$ inside the region $L'_2\cup q_1q_2$ that crosses $q_1q_2$. This subpath with $L'_2$ violates Lemma \ref{lem:nested-vertical-loops} again.
    Thus, we conclude that there can be at most $1$ Type 1 ladder with entry points to the right of $\Gamma$, and similarly, at most $1$ with entry points to the left of $\Gamma$. 
    
    \subsubsection*{Type 2 Ladders}
    For each Type 2 ladder $L_m$ with entry points $(b_m, t_m)$ on bottom and top cover-lines, there are two cases:
    \begin{itemize}
        \item $b_m$ is to the left of $\Gamma$, therefore $t_m$ is to the right of $\Gamma$. We say $L_m$ is a {\em top-right/bottom-left} ladder.
        
        \item $b_m$ is to the right of $\Gamma$, therefore $t_m$ is to the left of $\Gamma$. We say $L_m$ is a {\em top-left/bottom-right} ladder.
    \end{itemize}
    There can't be two overlapping ladders that one is a top-right/bottom-left ladder, and the other is a top-left/bottom-right ladder (or else they intersect). So if we have a collection of Type 2 overlapping ladders they are all either top-right/bottom-left or all top-left/bottom-right. We show we can have at most 5 Type 2 overlapping ladders.
    For the sake of contradiction, assume there is a maximal set $\mathcal L = \{L_1, L_2, \dots, L_k\}$ of Type 2 ladders that all overlap at some vertical line $\Gamma$ with $k\ge 6$ and all are top-right/bottom-left. 
    Let $(b_m, t_m)$, $1\leq m\leq k$ denote the bottom and top entry points of ladder $L_m$. Without loss of generality, assume that $x(b_1) \le x(b_2) \le \dots \le x(b_k)$, which also implies $x(t_1) \le x(t_2) \le \dots \le x(t_k)$ (or else the ladders will be intersecting each other). Let $L^l_m$ be the subpath of of $L_m$ from $b_m$ to the first intersection of $L_m$ with $\Gamma$ (so $L^l_m$ is to the left of $\Gamma$), and $L^r_m$ be the subpath of $L_m$ from its last intersection with $\Gamma$ to $t_m$ (so it is to the right of $\Gamma$).
    Note that if $m<m'$ then $L^l_m$ is above $L^l_{m'}$ (in the range that $L^l_m$ is defined) and $L^r_{m'}$ is below $L^r_{m}$ (in the range that $L^r_{m'}$ is defined) due to Lemma \ref{lemma:left-path-is-above}.
    Using Observation \ref{cor:path-above-top-segment}, this implies $L^l_1$ covers all the top segments that $L^l_2, L^l_3,\dots, L^l_k$ cover to the left of $\Gamma$ and, similarly, $L^r_k$ covers all the bottom segments that $L^r_1, L^r_2,\dots, L^r_{k-1}$ cover to the right of $\Gamma$ (we will use this fact shortly).

    We will introduce an alternate set of ladders (and loops) that cover all the segments the ladders in $\mathcal L$ cover without increasing the shadow anywhere, with a cost strictly smaller cost, and with a smaller shadow at $\Gamma$. The set of ladders we introduce differ based on the parity of $k$. For odd $k$ we keep $L_1,L_{k-1},L_k$, and for even $k$ we keep $L_1,L_3,L_{k-2},L_k$. We also add some cover-line loops (possibly two 
    copies) to make sure we still have a tour that visits all the points $b_j,t_j$ and 
    all the top and bottom segments that $L_1,\ldots,L_k$ covered remain covered in the new solution
    \footnote{This change is somewhat similar to the proof of patching lemma used in the PTAS for Euclidean TSP
    that reduces the number of crossings into a region.}.

    Imagine a graph $G(V,E)$ where $V$ consists of all $b_j,t_j$'s and there are edges between two vertices if there is a subpath in $\OPT$ between them without visiting any vertex (so we have direct edge between $b_j,t_j$ and also an edge between $b_j,t_i$ if there is a path in $\OPT$ between them outside the strip  $S_\tau$).
    Note that $G$ is simply a cycle. In the new alternative solution, we keep $L_1,L_k$ and either $L_{k-1}$ or both $L_3,L_{k-2}$ (depending on the parity of $k$) and add
    cover-line loops between some consecutive $b_j$'s and consecutive $t_j$'s such that the resulting graph $G'$ defined based on these new paths still forms an Eulearian (connected) graph on $V$, all the segments covered in $S_\tau$ by  $L_1,\ldots,L_k$ are covered.
 Let $b'_2$ be the projection of the left-most point on $L_2$ on the bottom cover-line, and let $t'_{k - 1}$ be the projection of the right-most point on $L_{k - 1}$ on the top cover-line. 
    We add doubled segment $b_2b'_2$ and $t_{k-1}t'_{k-1}$. These intend to cover any bottom segment exclusively covered by $L_2$ to the left of $b_2$, and any top segment exclusively covered by $L_{k-1}$ to the right of $t_{k-1}$. 
The doubled segments $b_2b'_2$ and $t_{k - 1}t'_{k - 1}$ fully appear in the projection of $L_2$ and $L_{k - 1}$ on those cover-lines; meaning that they can be charged onto $L_2$ and $L_{k - 1}$ that travel left (and right)  to those segments, respectively. Add each of these two segments twice to the solution. Since we're adding these segments twice, the parity of the degree of nodes in $G'$ won't change. 
    We keep $L_1,L_k$ from $\mathcal L$ and add the following segments and ladders as well to the alternative solution (see Figure \ref{fig:fig5}):

       \begin{itemize}
               \item If $k=2m$ for some integer $ m \ge 3 $, then include $L_3$ and $L_{k - 2}$. We also add the following cover-line loops:                
            \begin{itemize}
                \item $ b_2b_3, \ b_{k - 2} b_{k - 1},\ b_{2q - 1} b_{2q}\ (2\le q\le m - 1) $
                \item $ t_2t_3,\ t_{k - 2} t_{k - 1},\ t_{2q - 1} t_{2q}\ (2\le q\le m - 1)$
            \end{itemize}           
        We add double the following cover-line loops (i.e. a path back and forth on the same pair of points):
            \begin{itemize}
                \item $ b_{k - 1} b_k, b_{2q} b_{2q + 1}\ (2\le q\le m - 2)$
                \item $ t_1 t_2,\ t_{2q} t_{2q + 1}\ (2\le q\le m - 2) $
            \end{itemize}
        \item If $ k = 2m + 1$ for some integer $ m\ge 3 $, then we include $L_{k - 2}$ and also
            the following cover-line loops:
            \begin{itemize}
                \item $ b_{k - 2}b_{k - 1},\ b_{2q} b_{2q + 1}\ (1\le q \le m - 2)$                
                \item $ t_{2q}  t_{2q + 1}\ (1\le q \le m - 2)$
            \end{itemize}
        We add double the following segments:
            \begin{itemize}
                \item $ b_{k - 1} b_k,\ b_{2q - 1} b_{2q}\ (2\le q\le m - 1) $
                \item $ t_{2q - 1} t_{2q}\ (1\le q\le m) $
            \end{itemize}
    \end{itemize}
    
\begin{figure}[H]
    \centering
    \includegraphics[width=.65\textwidth]{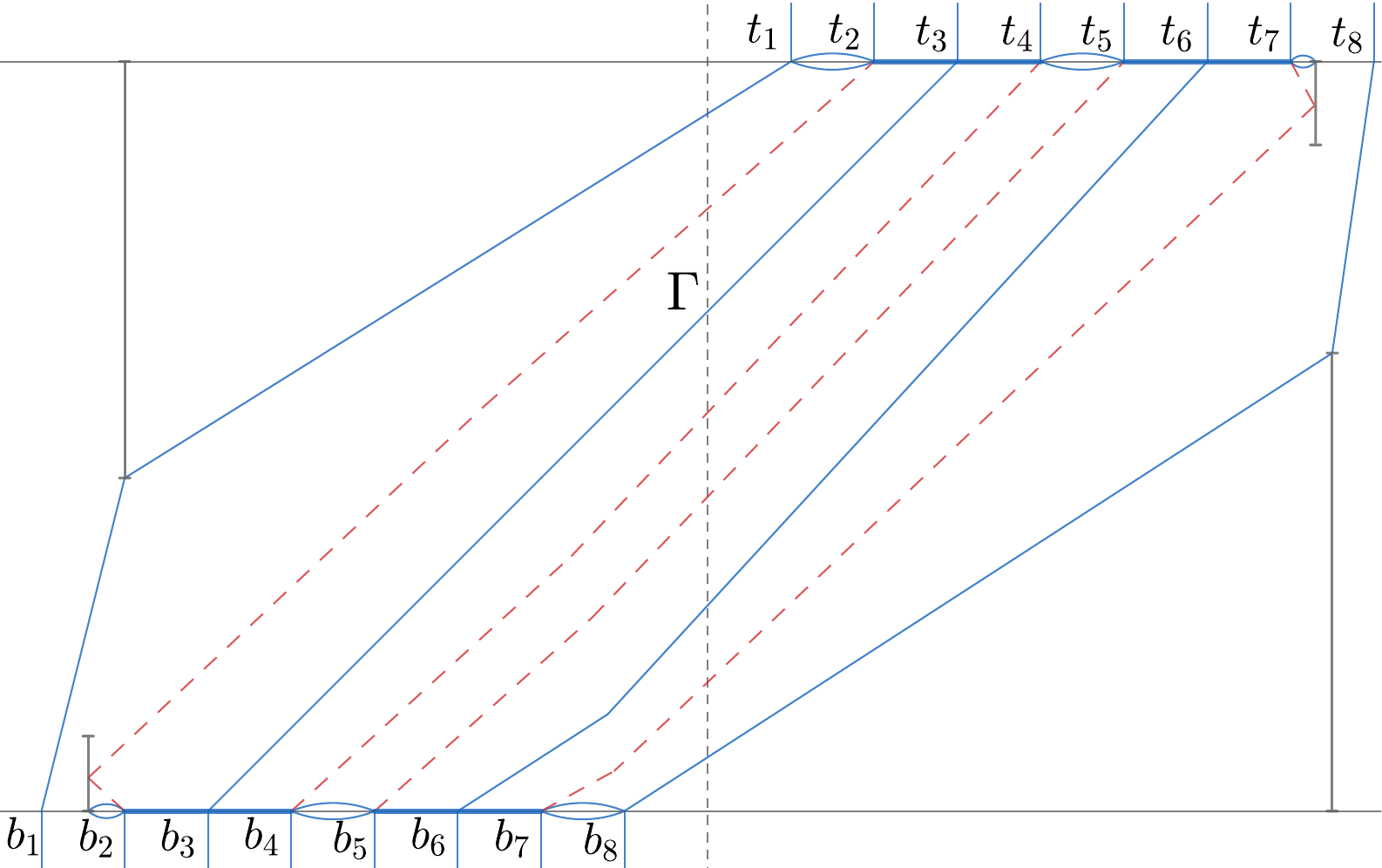}
    \caption[Alternative solution for $8$ overlapping ladders.]{Alternative solution for $8$ overlapping (bottom-left/top-right) ladders. Red dashed lines are discarded. The arcs represent the doubled segments}\label{fig:fig5}
\end{figure}

    It can be seen that with the above additions, if we build the graph $G'$ based on the new paths it is an Eulerian graph as each $b_j,t_j$ has even degree; also $G'$ remains connected since
    all the $t_1,\ldots,t_{k-1}$ are connected via cover-line loops added at the top and $b_2,\ldots,b_k$ are connected via cover-line loops at the bottom and we have $L_1,L_k$ and there is a path from $b_1$ to at least one of $b_2,\ldots,b_k$ in outside the strip, and similarly a path from $t_k$ to one of $t_1,\ldots,t_{k-1}$. Thus in the new solution we visit all the points $b_j,t_j$ and this tour can be short-cut over repeated points to obtain a new solution that visits all the $b_j,t_j$'s and covers all the segments outside the strip $S_\tau$. 

    Next we show all the segments that $L_1,\ldots,L_k$ were covering, remain covered. Recall that the portion of $L_1$ to the left of $\Gamma$ covers all the top segments that were covered by these paths to the left of $\Gamma$, and similarly $L_k$ covers all the bottom segments that were covered to the right of $\Gamma$. The bottom segments covered to the left of $\Gamma$ are covered by the 
    new cover-line loops added and similarly, the top segments covered to the right of $\Gamma$ are covered by the cover-line loops added. So the new solution remains feasible.
    
    Now we are going to bound the total cost of the new solution. We charge all the new parts that we added to some portion of the ladders that we have discarded. Note that in every case, $L_2, L_4, L_{k - 3},$ and $L_{k - 1}$ are discarded. We will use only these ladders to charge the new parts to.
    The doubled segments $b_2b'_2$ and $t_{k - 1}t'_{k - 1}$ are already charged to the portion of $L_2$ travelling in the interval $[x(b'_2), x(b_2)]$ and the portion of $L_{k - 1}$ travelling in $[x(t_{k - 1}), x(t'_{k - 1})]$. 
    Now consider the ranges $\beta_j = [x(b_{j - 1}), x(b_{j})]$, $3\leq j\leq k$ and $\theta_j = [x(t_{j - 1}), x(t_{j})]$, $2\leq j\leq k - 1$. These are disjoint, and all $\beta_j$'s lie under $L^l_2$ and $L^l_4$, while all $\theta_j$'s lie above $L^r_{k-3}$ and $L^r_{k-1}$. 
    Each of the new included segments (doubled or not) can be charged to one or two of the ladders $L_2, L_4, L_{k-3}, L_{k - 1}$.

    It can be seen that in the new configuration, there are at most 4 overlapping ladders and 2 overlapping loops (doubled segment loops that we added). This concludes the case for ladders.

    In general, when given a collection of loops and ladders, we first alter the ladders as described above (and might get some new cover-line loops in the process), then we apply the alteration on the loops. The statement of Lemma \ref{lemma:overlapping-loops-ladders} follows from this. \qed

\subsection{Proof of Theorem \ref{thm:exact-answer}}\label{sec:exact-answer}
    This section is dedicated to the proof of Theorem \ref{thm:exact-answer}.
    Let $ C_1,\dots, C_\sigma $ be the cover-lines (as defined in Subsection \ref{sec:bounded-shadow}) for an instance of the problem with $ H\le 3 $. It can be seen that $ \sigma \le 2 $; in other words, all the segments of the instance can be covered with only at most 2 cover-lines. If $ H \le 2 $, then the number of cover-lines is 1, and it can be seen that the portion on that cover-line itself (doubled from the left-most segment to the right-most segment) is an optimum solution. So let's assume $ 2 < H \le 3 $, therefore $ \sigma = 2 $, and that we have a single strip, $ S_1 $. Furthermore, there must be both top segments and bottom segments in $ S_1 $ (otherwise one of the cover-lines would intersect with all segments). We will essentially prove that the optimum solution must be a bitonic tour.
	
	Take any optimum solution $ \OPT $ for this instance of the problem, and let $ p^l $ and $ p^r $ be the left-most and right-most points on it, respectively. There is a path $ P_1 $ from $ p^l $ to $ p^r $, and there is a path $ P_2 $ in the other way. Since $ \OPT $ is not self-intersecting, and since both $ P_1 $ and $ P_2 $ cover the range $ I = [x(p^l), x(p^r)] $, then for any vertical line $ \Gamma $ with $ x(\Gamma) \in I $, they both will intersect with it at distinct points. 
	We can use Lemma \ref{lem:nested-vertical-loops} (for the concatenation of $ P_1 $ and $ P_2 $ restricted to the left of $ \Gamma $) to get that $ p^l $ is a right reflection. Similarly, $ p^r $ is a left reflection. 
	
	Without loss of generality, assume that $ P_1 $ includes the upper leg of $ p^l $, and thus $ P_2 $ includes its lower leg. Using Lemma \ref{obs:reflection-paths-above-below} for the reflection point $ p^l $ and the vertical line $ x = x(p^r) $, we get that $ P_1 $ is above $ P_2 $ in range $ I $, which is the entirety of $ \OPT $.
	Observation \ref{cor:path-above-top-segment} implies that all the top segments are covered by $ P_1 $, while all the bottom segments are covered by $ P_2 $.
	We claim that there are no reflection points other than $ p^r $ and $ p^l $ in $ \OPT $. To see why this is the case, assume the contrary, that there is some reflection point $ r $ on $ \OPT $ other than those two points.
	
	Without loss of generality, assume $ r \in P_1 $, and assume that $ r $ is the first such reflection point on $ P_1 $ after $ p^l $. According to Lemma \ref{obs:consecutive-reflections}, $ r $ is a right reflection. Let $ s $ be the segment of the instance that $ r $ lies on. If $ s $ is a bottom segment, then $ P_2 $ will be intersecting with it, and we get a violation of Lemma \ref{obs:reflection}. Thus, $ s $ is a top segment.
	
	Now, let $ \mathcal P_1 $ be the concatenation of $ P_1 $ (restricted to the subpath from $ p^l $ to $ r $) along with the entirety of $ P_2 $. $ \mathcal P_1 $ is a path that goes from $ r $ (a left reflection on a top segment $ s $) and reaches to the right of $ s $. The rest of the path of $ P_1 $ (from $ r $ to $ p^r $), refer to it as $ \mathcal P_2 $, is another path that goes from $ r $ and reaches to its right. Depending on whether the top leg of $ r $ belongs to $ \mathcal P_1 $ or $ \mathcal P_2 $, we get a violation of Lemma \ref{claim:top-segments-on-right}. This contradiction shows that such $ r $ cannot exist, and that both $ P_1 $ and $ P_2 $ are monotone paths with shadow 1, due to Lemma \ref{obs:shadow-increase}. So in total, $ \OPT $ has a shadow of 2, as was to be shown. 
    \qed

    \vspace{.5cm}
    \noindent \textbf{Note.} It can be shown that in these special cases, we can find an exact solution in poly-time. But since we made some assumptions about the $ x $-coordinates of the segments of the instance, we have to undo them to prove this claim. The resulting algorithm will be somewhat detailed for such a limited special case of the problem, because we have to cover cases such as vertical legs in an optimum solution. So we only settled on showing that an optimum solution has a constant shadow instead, as it's enough for the purposes of our main theorem in this paper.

\section{Extensions and Concluding Remarks}\label{sec:conclusion}

Using Theorem \ref{thm:main}, we can get a $(2+\eps)$-approximation for the setting where the input line segments are axis-parallel and have bounded aspect ratio. First we consider the unit-length axis-parallel segments:

\begin{theorem}
    Given $n$ unit length segments $s_1, s_2,\dots, s_n$ as an instance of TSPN that are each parallel to either the $x$-axis or the $y$-axis, there is a $(2+\epsilon)$-approximation with running time in $n^{O(1/\eps^3)}$.
\end{theorem}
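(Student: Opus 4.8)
Let $I$ be the given instance of $n$ axis-parallel unit segments, and let $\OPT$ (cost $\opt$) be an optimal tour. Partition the segments of $I$ by orientation into the vertical sub-instance $I_V$ and the horizontal sub-instance $I_H$. Since $\OPT$ visits a point on every segment of $I$, it is in particular a feasible tour for $I_V$ and for $I_H$, so $\opt_V\le\opt$ and $\opt_H\le\opt$, where $\opt_V,\opt_H$ denote the optima of the two sub-instances. Now $I_V$ is an instance of TSPN over parallel unit segments, so Theorem \ref{thm:main} with $\lambda=1$ applies; the same holds for $I_H$ after a $90^\circ$ rotation of coordinates. Running the PTAS on each (with accuracy parameter $\eps'=\eps/4$) produces tours $T_V,T_H$ visiting all vertical, resp.\ all horizontal, segments with $\mathrm{cost}(T_V)\le(1+\eps')\opt_V$ and $\mathrm{cost}(T_H)\le(1+\eps')\opt_H$. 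Summing, $\mathrm{cost}(T_V)+\mathrm{cost}(T_H)\le(1+\eps')(\opt_V+\opt_H)\le(2+2\eps')\opt=(2+\tfrac\eps2)\opt$; what remains is to stitch $T_V$ and $T_H$ into a single closed tour essentially for free.

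\textbf{The merge.} The difficulty is that the two tours already consume almost the entire $(2+\eps)\opt$ budget, so we cannot afford to add a connecting edge of nontrivial length; instead we force $T_V$ and $T_H$ to share a point. Pick a point $z$ lying on an input segment that is close to a point visited by $\OPT$ — for instance a tip of the leftmost segment, or, to make the overhead arbitrarily small, the best among an $\eps$-spaced grid of $\poly(1/\eps)$ candidates inside the bounding box. Add $z$ (a degenerate zero-length segment, which the PTAS of Theorem \ref{thm:main} handles as a single-point neighbourhood) to both $I_V$ and $I_H$ before running the two PTAS calls, so that $T_V$ and $T_H$ both pass through $z$. Then $T_V\cup T_H$ is a connected multigraph in which every vertex has even degree, hence admits an Euler tour; short-cutting its repeated points yields a single closed tour $T$ visiting every segment of $I$, with $\mathrm{cost}(T)\le\mathrm{cost}(T_V)+\mathrm{cost}(T_H)$.

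\textbf{The cost bound.} It remains to control the cost of the two augmented sub-instances. Writing $\opt_V'$ for the optimum of $I_V\cup\{z\}$, we bound it by taking $\OPT$ and inserting a back-and-forth detour to $z$ from the point at which $\OPT$ visits the input segment carrying $z$; this detour costs at most $2$ (or at most $2\sqrt2$ times the grid spacing for the grid guess), so $\opt_V'\le\opt+O(\eps\cdot\opt)+O(1)$, and symmetrically for $\opt_H'$. Therefore
\[
\mathrm{cost}(T)\le(1+\eps')\bigl(\opt_V'+\opt_H'\bigr)\le 2(1+\eps')\opt+O(\eps\cdot\opt)+O(1)\le(2+\eps)\opt,
\]
where the additive $O(1)$ is absorbed because we may assume $\opt=\Omega(1/\eps)$: if $\opt$ is smaller than this threshold the bounding box has side $O(1/\eps)$, and such a tiny (or degenerate) instance can be handled directly or by the fine grid guess so that the overhead becomes an $\eps$-fraction of $\opt$. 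The running time is that of two PTAS invocations of Theorem \ref{thm:main} plus a polynomial-time Euler-tour/short-cutting step, repeated over $\poly(1/\eps)$ choices of $z$, hence $n^{O(1/\eps^3)}$.

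\textbf{Main obstacle.} Everything except the merge is a routine reduction to Theorem \ref{thm:main}; the crux is that the two sub-tours together already cost up to $(2+2\eps')\opt$, leaving no room for a connecting leg, so the argument genuinely depends on making $T_V$ and $T_H$ intersect at no extra asymptotic cost. Pinning down the choice of the shared point $z$ so that the induced detour is only an $\eps$-fraction of $\opt$ (and disposing of the degenerate instances with sub-constant $\opt$, where multiplicative guarantees are most delicate) is the one place that needs care.
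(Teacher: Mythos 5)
Your high-level plan coincides with the paper's: split the axis-parallel instance by orientation, add a shared point $z$ to both the vertical and horizontal sub-instances, run the PTAS of Theorem~\ref{thm:main} on each augmented sub-instance, and stitch the two resulting tours into one via an Euler walk through $z$. You also correctly identify the one genuinely delicate step, namely choosing $z$ so that the augmentation costs only an $O(\eps)$-fraction of $\opt$.

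That is, however, exactly where the argument is not closed. Your ``tip of the leftmost segment'' choice incurs a flat additive detour of up to $2$, which you can only absorb when $\opt = \Omega(1/\eps)$; and your fall-back for the small-$\opt$ regime --- ``an $\eps$-spaced grid of $\poly(1/\eps)$ candidates'' --- does not stand on its own. A grid of spacing $\delta$ gives a detour of $O(\delta)$, which is an $\eps$-fraction of $\opt$ only when $\opt = \Omega(\delta/\eps)$; since $\opt$ can be an arbitrarily small positive number, no single $\poly(1/\eps)$-size grid handles all cases, and ``such a tiny instance can be handled directly'' is an assertion rather than an argument. To make the grid idea work you would first need a constant-factor estimate of $\opt$ (the $3\sqrt{2}$-approximation of Dumitrescu--Mitchell suffices) to set the grid scale, and you would also have to drop the framing ``detour from where $\OPT$ visits the input segment carrying $z$,'' since grid points need not lie on input segments. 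The paper avoids all of this by reading $z$ off the instance geometry: it constructs a bounding box $B$ that contains $\OPT$ and whose side lengths lower-bound $\opt$, and then argues that either some vertical side of $B$ contains a tip that $\OPT$ must visit \emph{exactly} (so zero detour), or some horizontal segment has a sub-segment of length $l_h \le \opt$ inside $B$ that $\OPT$ must cross, and $z$ is the best of $O(1/\eps)$ equally spaced candidates on that sub-segment, giving a detour of $O(\eps l_h) = O(\eps \opt)$ with no additive term. Everything else in your write-up --- the Euler-tour merge, the accounting once $z$ is fixed, and the running time --- is correct and matches the paper.
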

\begin{proof}
    Split the segments into two groups based on them being horizontal or vertical. Let the minimal bounding box for the vertical segments have sides $ L_v\times H_v $, and the one for horizontal segments have sides $ L_h \times H_h $. Similar to what was mentioned at the start of Section \ref{sec:main-alg}, if $ \opt $ is the cost of an optimum solution $ \OPT $, then $ \opt/2 \ge \max\{ L_v, H_v - 2, L_h - 2, H_h \}$.
		
    Consider the boxes of sizes $ L_v\times (H_v - 2) $ and $ (L_h - 2)\times H_h $ contained in the aforementioned minimal bounding boxes. Let $ B $ be the smallest bounding box that contains these two new boxes. So we get that $ \opt $ is at least as large as any sides of $ B $; and also it is the case that $ \OPT $ lies completely inside of $ B $. 
    
    The left side of $ B $, refer to it as $ B_l $, either has a vertical segment on it (in the case when the left side of the $ L_v\times (H_v - 2) $ box overlaps with $ B_l $), or it has the right-most point of the left-most horizontal line (in the case when the left side of the $ (L_h - 2)\times H_h $ box overlaps with $ B_l $). The same argument holds for $ B_r $, the right side of $ B $. If neither of $ B_l $ and $ B_r $ are on a side of the $ (L_h - 2)\times H_h $ box, then it means that all the horizontal segments of the problem have an intersection with the interior of $ B $. In this case, take any horizontal segment $ s_h $ that has a length of $ l_h $ inside of $ B $; we get that $ \opt \ge l_h $.
    
    Assuming $ \opt \ge l_h $, take the portion of $ s_h $ lying inside $ B $, and break it into $ 8/\eps $ parts of size $ l_h\cdot \eps/8 $. For each of these parts, consider their left-most points, and let them be $ p_1, p_2,\dots, p_{8/\eps} $. The $ \OPT $ must intersect this segment at one of these parts. Assume that $ p_i $ is the left-most point of the part that $ \OPT $ intersects with (we can check all the $ 8/\eps $ cases). 
    Add $ p_i $ to the set of vertical segments, and apply the result of Theorem \ref{thm:main} for parameter $ \eps/4 $ to get a solution covering all the vertical segments along with point $ p_i $.
    This solution is a lower bound for the restriction of $ \OPT $ on $ s_h $ along with the vertical segments; with the exception that the intersection on $ s_h $ itself can add at most $ l_h\cdot \eps/4 $ to the cost. So in total, this new solution, along with a doubled copy of the part containing $ p_i $, cost at most $ (1 +\eps/4)\cdot \opt + l_h\cdot\eps/4 \le (1 + \eps/2)\cdot \opt $.
    Do the same thing for horizontal segments, meaning find a solution covering $ p_i $ and all the other horizontal segments with parameter $ \eps/4 $ in Theorem \ref{thm:main}. We get another solution with cost at most $ (1 + \eps/2)\cdot \opt$.
    The conjunction of these two solutions, make a feasible solution for the main problem and cost at most $ (2 + \eps)\cdot \opt $, proving our claim.
    
    If we don't have $ \opt \ge l_h $, it must be the case that there is a point on a horizontal segment on one of the vertical sides of $ B $. This implies that $ \OPT $ must specifically contain that point. Similar to above, we can add that point to the set of vertical segments and the set of horizontal segments separately; we then find solutions using Theorem \ref{thm:main} with parameter $ \eps/2 $. Combining those two solutions will yield the same result.
\end{proof}

It is straightforward to extend the result  of this theorem to a $(2+\eps)$-approximation for when we have axis parallel line segments of size in $[1,\lambda]$, that runs in time $O(n^{\lambda/\eps^3})$ just as in proof of Theorem \ref{thm:main}.

There are several interesting open questions left. First, is Insight 2 actually correct or not? In other words, are there instances for which the shadow of an optimum solution when restricted to a bounded strip is {\em not} bounded? We have not been able to construct an explicit example extending the picture of Figure \ref{fig:insight}.
Another question is whether one can get a PTAS for the case of axis-parallel line segments or, more generally, when the line segments have only a bounded number of possible directions.

\bibliography{references}

\end{document}